\documentclass[11pt]{article}
\usepackage{graphicx} % Required for inserting images
\usepackage{amsmath,amsthm}
\usepackage{xcolor}
\usepackage[margin=1in]{geometry}
\usepackage{algorithmicx}
\usepackage{algorithm} 
\usepackage{algpseudocode}
\usepackage{xifthen}
\usepackage{graphicx}
\usepackage{wrapfig}
\usepackage{dblfloatfix}
\usepackage{cleveref}
\usepackage{paralist}
\usepackage{xspace}
\usepackage{comment}
\usepackage{url}
\usepackage{authblk}
\excludecomment{ignore}

\newtheorem{corollary}{Corollary}
\newtheorem{theorem}{Theorem}
\newtheorem{lemma}{Lemma}
\newtheorem{definition}{Definition}
\newtheorem{observation}{Observation}

\newcommand{\Faith}[1]{{\color{orange} \textbf{Faith: }#1}}
\newcommand{\Ayaz}[1]{{\color{cyan} \textbf{Ayaz: }#1}}

%%% defs %%%
\newcommand{\prediction}{prediction\xspace} 
\newcommand{\predictions}{predictions\xspace}
\newcommand{\honest}{honest\xspace}
%nonfaulty
%correct

\newcommand{\classification}{classification\xspace}
%%%%%%%%%%%%%

%%% vars/constants %%%
\newcommand{\threshMisclassifiedVar}{k}     % like t, max. number. \Ayaz{currently this is just a memo. not replacing k with this macro as there is too much...}
\newcommand{\actualMisclassifiedVar}{k_{A}} % like f, actual number
\newcommand{\faultyMisclassifiedVar}{k_{F}}

\newcommand{\allErrorsVar}{B}           % total number of wrong prediction bits
\newcommand{\faultyErrorsVar}{B_{F}}    % total number of wrong prediction bits regarding faulty processes
    % total number of wrong prediction bits regarding honest processes

%%%%%%%%%%%%%%%%%%%%%%%

\algnewcommand{\BlueComment}[1]{\textcolor{blue}{\hfill\(\triangleright\) #1}}
\title{Byzantine Agreement with %Failure 
Predictions}
\author[1]{Naama Ben-David}
\author[2]{Muhammad Ayaz Dzulfikar}
\author[3]{Faith Ellen}
\author[2]{Seth Gilbert}
\affil[1]{Technion}
\affil[2]{National University of Singapore}
\affil[3]{University of Toronto}

\begin{document}

\begin{titlepage}
    
\maketitle
% \Ayaz{temporary abstract for the abstract submission, to be updated before the paper submission}
\begin{abstract}    
    In this paper, we study the problem of \emph{Byzantine Agreement with predictions}.  Along with a proposal, each process is also given a prediction, i.e., extra information which is not guaranteed to be true.  For example, one might imagine that the prediction is produced by a network security monitoring service that looks for patterns of malicious behavior.
    
    Our goal is to design an algorithm that is more efficient when the predictions are accurate, degrades in performance as predictions decrease in accuracy, and still in the worst case performs as well as any algorithm without predictions even when the predictions are completely inaccurate.

    On the negative side, we show that Byzantine Agreement with predictions still requires $\Omega(n^2)$ messages, even in executions where the predictions are completely accurate. On the positive side, we show that \emph{classification predictions} can help improve the time complexity.  For (synchronous) Byzantine Agreement with classification predictions, we present new algorithms that leverage predictions to yield better time complexity, and we show that the time complexity achieved is optimal as a function of the prediction quality.
\end{abstract}

% We focus on a type of prediction we call \emph{classification prediction}, where the prediction predicts for each process, whether it is faulty or not. On the negative side, we show that any Byzantine Agreement with predictions cannot beat the $\Omega(n^2)$ lower bound on the message complexity. On the positive side, we show that classification prediction helps reduce the round complexity.

\end{titlepage}

\section{Introduction}
\label{section:intro}

\newcommand{\para}[1]{\vspace{0.5em}\noindent\textbf{#1}}

%The recent decade has seen a surge of interest in Byzantine agreement algorithms, as they form the backbone of secure reliable systems, including cryptocurrencies, financial applications, and other systems that handle sensitive data. Such applications must handle both severe hardware failures and malicious attacks, making Byzantine fault tolerance the perfect solution.

%The rise in popularity of secure reliable systems to handle financial and other sensitive data has led to a surge of interest in Byzantine fault-tolerant computing. Such applications must handle both severe hardware failures and malicious attacks, making Byzantine fault tolerance the perfect solution.

Imagine you are in charge of the security for a critical distributed system storing sensitive data, such as a financial application, a cryptocurrency exchange, or a healthcare management system.  It is quite likely that you would install a product like Darktrace~\cite{darktrace}, Vectra AI~\cite{vectraai}, or Zeek~\cite{zeek} (among many others) to monitor the network and flag suspicious behavior.  Many of these tools use AI or machine learning techniques to identify user patterns that typify malicious activity.  The big picture we are asking in this paper is: \emph{Can distributed algorithms productively use the information produced from these (AI-based) security monitoring systems to design more efficient algorithms?}

\para{Algorithms with predictions.} There has been 
a significant interest of late in the area of ``algorithm with predictions,'' also known as ``algorithms with advice'' or ``learning-augmented algorithms'' (see, e.g.,~\cite{mitzenmacher2020algorithmspredictions,mitzenmacher22predictionscomm}).  The basic idea is that some external module, likely based on AI or machine learning techniques, can provide algorithms with some predictive information.  These predictions might be based on historical data or other data unavailable to the algorithm designer.  (And the prediction software can evolve and improve independently from the algorithm in question.)  A classic example in this paradigm is searching for data in an array (or dictionary): we know that in the worst case, it will take $\Theta(\log{n})$ time to perform a binary search; but if we are given a ``prediction'' as to approximately where to start our search, we can complete the search significantly faster.  Other examples include learning-augmented algorithms for the ski rental problem~\cite{ANGELOPOULOS24onlineuntrustedadvice,gollapudi19aski,purohit18onlineviaml}, load balancing~\cite{ahmadian23loadbalancing}, and distributed graph algorithms~\cite{boyar2025distributedgraphalgorithmspredictions}. 

\para{Byzantine agreement.} In this paper, we consider the problem of Byzantine agreement where a collection of processes, some of which may be faulty or malicious, are trying to agree on a value.  The problem of agreement lies at the heart of many secure distributed systems such as blockchains, distributed databases, distributed file systems, distributed lock servers, etc.   

The key problem faced by these types of secure distributed systems is that Byzantine agreement is inherently expensive: in a (synchronous) system with $n$ processes, up to $t$ of which may be faulty, any (deterministic) algorithm requires at least $\Omega(t)$ time~\cite{DRS90} and $\Omega(n^2)$ messages \cite{dolev1985bounds}.

The question we address in this paper is whether we can reduce the cost of Byzantine agreement when devices are equipped with ``predictions'' that provide them useful information about the system, e.g., the type of information that security networking monitoring might produce.  More specifically, we model predictions as per-process input strings that provide some information about the execution, but may not be completely accurate. 

\para{Prediction accuracy.} If the predictions are completely accurate, then the problem is easy.  For example, if each honest process is given a ``prediction'' that completely identifies which other processes are malicious (and all the honest processes share the same accurate information), then the honest processes can simply execute Byzantine agreement among themselves---because there are no malicious processes present, it can complete quickly and efficiently.\footnote{More trivially, if the prediction is perfectly accurate, then it could even specify precisely what value to decide---at least in a model where the prediction can depend on the input. None of our results, either upper or lower bounds, depend on the distinction of whether the advice can be a function of the input.}  Conversely, when the predictions are totally wrong (e.g., casting suspicion on all the honest users and declaring all the malicious users to be honest), it seems nearly impossible to use the predictions to any benefit.

Unfortunately, we cannot assume that the predictions are always completely accurate.  For example, network monitoring software can certainly produce erroneous suspicions, and can sometimes fail to detect malicious behavior; these types of security systems tend to work very well in detecting typical attacks, but can fail when they are confronted by a novel type of attack or atypical behavior.\footnote{One of the authors of this paper has had the personal experience of being suspected of illicit network behavior by eBay, despite doing nothing more than browsing auction listings.}  

As such, our goal is to provide improved performance when the predictions are accurate, and to perform no worse than existing algorithms when the predictions are inaccurate.  Moreover, we want the performance to degrade gracefully with the quality of the prediction: a ``mostly good'' prediction should yield ``mostly good'' performance. 

\para{Our results.}  We first consider message complexity.  One might expect that in executions where the predictions are completely accurate, it should be easy to agree efficiently.  Perhaps surprisingly, we show that (potentially unreliable) predictions yield no improvement in message complexity---even in executions where the predictions are completely accurate!  That is, we show that there is still a lower bound of $\Omega(n^2)$ messages, extending the technique developed by Dolev and Reischuk~\cite{dolev1985bounds} (and generalized in several papers since then, e.g.,~\cite{abraham2019communication,abraham2023eclips,civit2023validity}).\footnote{In some ways, this result implies that verifying the output of an agreement protocol has the same message complexity as solving agreement!  In retrospect, this was also hinted at in lower bounds for graded consensus implied by \cite{abraham2023eclips}.}  More specifically, we show:
%In some ways, this seems counterintuitive: when predictions are accurate, it should be easy to solve agreement---all that we need to do is verify that predictions were correct or that the resulting agreement protocol succeeded.

\begin{theorem}[Restated from \Cref{theorem:message_lower_bound}]
    For every deterministic algorithm that solves Byzantine Agreement with predictions in a synchronous system with at most $t$ faulty processes, there is some execution in which the predictions are $100\%$ correct and at least $\Omega(n + t^2)$ messages are sent by correct processes.
\end{theorem}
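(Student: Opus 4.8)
\noindent The plan is to show that \emph{both} terms in $\Omega(n+t^2)$ are already forced by \honest (fault-free) executions, for which the classification prediction ``every process is \honest'' is trivially $100\%$ correct: the quadratic term by adapting the Dolev--Reischuk message lower bound~\cite{dolev1985bounds}, and the additive $\Omega(n)$ term by a short direct indistinguishability argument. The ``$100\%$ correct'' constraint costs essentially nothing here, for a structural reason. A Dolev--Reischuk-style proof assumes the protocol is cheap on \emph{all} executions and, using that together with the protocol's correctness (Agreement and Validity), builds a ``confusing'' execution that violates Agreement. The cheapness assumption is only ever invoked on executions that are \honest (or in which faulty processes merely omit messages), and every such execution admits a $100\%$-correct prediction; the confusing execution merely has to be one the protocol is obliged to handle, and the protocol must satisfy Agreement and Validity there whether or not its predictions are accurate.

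\textbf{The additive $\Omega(n)$ term.} Fix the prediction ``all \honest'' at every process and consider three \honest executions: $A_0$ and $A_1$ in which all processes propose $0$ and all propose $1$, respectively, and the mixed execution $M$ in which $\lceil n/2\rceil$ processes propose $0$ and the other $\lfloor n/2\rfloor$ propose $1$. By Agreement the correct processes decide a common value in $M$; say it is $0$ (the case $1$ is symmetric, using $A_0$). Each of the $\ge n/2$ processes proposing $1$ in $M$ must receive at least one message in $M$ or in $A_1$: if some such process $q$ received nothing in both, then its decision in the two executions would be identical (it is a deterministic function of $q$'s proposal and prediction, which agree), yet Agreement forces it to $0$ in $M$ and Validity forces it to $1$ in $A_1$. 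Since an execution carrying $k$ messages delivers to at most $k$ distinct processes, $M$ and $A_1$ together carry at least $n/2$ messages, so one of these two \honest, correctly predicted executions carries $\Omega(n)$ of them.

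\textbf{The $\Omega(t^2)$ term.} Argue by contradiction: suppose every execution whose predictions are $100\%$ correct has its correct processes sending fewer than $\lfloor (t-1)/2\rfloor^2$ messages. Invoke this on $A_1$ (with its correct ``all \honest'' prediction). A counting argument on those messages yields a correct process $p$ that is neither a ``heavy'' sender nor a ``heavy'' receiver, so the set $S$ of processes that $p$ ever exchanges a message with in $A_1$ has $|S|<t$. Now build the split-brain execution $\rho$: the processes of $S$ are Byzantine, $p$ is correct and still proposes $1$, and every other process is correct and proposes $0$. Toward $p$, the processes of $S$ replay exactly the messages $p$ received in $A_1$ and discard whatever $p$ sends them, so $p$'s view coincides with its view in $A_1$ and $p$ decides $1$; toward the remaining processes, $S$ behaves as in an all-$0$ \honest world in which $p$ (together with $S$) appears crashed, so the joint view of those processes is indistinguishable from that of the legal execution in which $S\cup\{p\}$ --- a set of at most $t$ processes --- are genuinely faulty and all correct processes propose $0$, whence by Validity they decide $0$. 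Since only $S$ is faulty, $\rho$ is legal, yet in it the correct process $p$ decides $1$ while the other correct processes decide $0$: Agreement fails, a contradiction. Hence $A_1$ itself --- an \honest execution with a perfectly accurate prediction --- sends $\Omega(t^2)$ messages, and taking the larger of $A_1$ and the execution from the previous paragraph gives an execution with $100\%$-correct predictions and $\Omega(n+t^2)$ messages sent by correct processes.

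\textbf{Main obstacle.} As in the original Dolev--Reischuk proof, the crux is making $\rho$ genuinely consistent, round by round under synchrony. One must ensure that in $\rho$ the process $p$ neither sends to nor receives from any process outside $S$: the first holds because $p$'s proposal and incoming messages in $\rho$ agree with $A_1$, so $p$ behaves exactly as in $A_1$ and there --- by the definition of $S$ --- talks only to $S$; the second is the delicate point, and closing it is where the choice of $p$ (hence of $S$) in the counting step and the fact that the other processes are driven to behave as in a world where $p$ is crashed must be combined carefully so that no stray message ever reaches $p$. One must also pin the message-count constant tightly enough to guarantee $|S|\le t-1$ (i.e., that $p$ can be chosen with both send- and receive-degree below $t/2$), and check that the two faulty simulations --- replaying $A_1$ toward $p$, and mimicking the crashed-$p$ world toward everyone else --- are mutually compatible. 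None of this bookkeeping interacts with the predictions: since every execution on which the cheapness hypothesis is invoked is \honest, the work beyond Dolev--Reischuk is confined to that single observation.
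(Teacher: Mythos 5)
Your $\Omega(n)$ argument is correct, and it is a genuinely different route from the paper's: the paper argues about Byzantine broadcast and swaps the prediction string of a process that is silent in both $E_0$ and $E_1$, whereas you work directly with agreement via the mixed-input execution $M$. That half is fine. The problem is the $\Omega(t^2)$ half, where you are attempting to prove something strictly stronger than the theorem needs and than the technique delivers: that the \emph{fault-free} execution $A_1$ itself carries $\Omega(t^2)$ messages. The paper never claims this. Its expensive execution $E_{good}$ contains $\lfloor t/2\rfloor$ Byzantine processes whose defining behavior is to \emph{ignore the first $\lfloor t/2\rfloor$ messages they receive} while pretending to hold a prediction under which silence makes them decide $0$; that execution still has $100\%$ correct predictions (the predictions truthfully label those processes as faulty), which is all the statement requires, but it is emphatically not fault-free.

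The message-ignoring device is exactly what your sketch omits, and its absence is fatal at the step you yourself flag as delicate. In your $\rho$, the correct processes outside $S\cup\{p\}$ are driven into the all-$0$ world $D$ in which $S\cup\{p\}$ appear crashed. But $D$ is a \emph{different} execution from $A_1$, and your counting argument only constrains the message pattern of $A_1$; nothing prevents the protocol from instructing some correct $q\notin S$ to send a message to $p$ in $D$ (e.g., a protocol that broadcasts to everyone upon detecting silence). Such a message cannot be suppressed, since $q$ is correct in $\rho$, and it breaks the indistinguishability of $p$'s view from $A_1$, so $p$ need not decide $1$. Choosing $p$ more cleverly cannot close this, because the offending messages are determined by behavior in $D$, not in $A_1$. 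In the paper's construction the issue cannot arise: the honest processes in $E_{bad}$ are, by design, in a view identical to $E_{good}$, so the only processes that ever send to $p$ are the now-corrupted set $A_p$, and $p$ --- which ignored those fewer than $\lceil t/2\rceil$ messages in $E_{good}$ anyway --- has the empty view in both executions. Repairing your argument essentially forces you to reintroduce the faulty, message-ignoring set $B$ and to accept that the expensive, correctly-predicted execution is one \emph{with} faults, i.e., to revert to the paper's proof.
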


We then turn out attention to time complexity.  To this end, we consider a specific type of \prediction, which we call \emph{\classification \prediction}: each process is provided one bit of information for each other process in the system indicating whether that process is predicted to be faulty or \honest in this execution. Notice that in a system with $n$ processes, this consists of $n^2$ bits of information.  (In the motivating context of security monitoring software, this is equivalent to the software identifying a list of processes that appear malicious, with the default assumption that the remainder are honest.)

We assume that out of the $n^2$ bits of predictive information, at most $B$ of those bits are ``incorrect.'' In the context of security monitoring software, this is equivalent to saying that there are a total of at most $B$ false positives or false negatives, i.e., honest processes that are incorrectly suspected or malicious processes that are missed by the detection mechanism.  

We then give an algorithm that solves Byzantine agreement with predictions.  The algorithm is deterministic, and we assume a synchronous system with $n$ processes, up to $t < n/3$ failures, and an execution where at most $f \leq t$ processes are Byzantine.  Our algorithm runs in $O(\min(B/n+1, f))$ time, when $B = O(n^{3/2})$; otherwise, when $B$ is larger than $\Theta(n^{3/2})$, the algorithm completes in time $O(f)$.  This algorithms has message complexity $O(n^2\log(n))$.  This is summarized as follows:

\begin{theorem}[Restated from \Cref{theorem:unauth_ba_predicitions}]
    There is an unauthenticated algorithm that solves Byzantine Agreement with predictions when there are at most $t < n/3$ faulty processes. Given a classification prediction with $B$ incorrect bits:
    \begin{itemize}
        \item If $B = O(n^{3/2})$, every honest processes decide in $O(\min\{(B/n)+1, f\})$ rounds and in total, honest processes send $O(n^2 \log(\min\{B/n, f\}))$ messages.
        \item Otherwise, every honest processes decide in $O(f)$ rounds and in total, honest processes send $O(n^2 \log(f))$ messages. 
    \end{itemize}
\end{theorem}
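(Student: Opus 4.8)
The plan is to build the algorithm in three layers and analyze each in turn: (i) a \emph{good-set phase}, in which the honest processes use the classification predictions to settle on a common set $\coreVar$ containing almost all honest processes and only a few faulty ones; (ii) an \emph{optimistic fast consensus}, parameterized by a guess $\threshMisclassifiedVar$ on how corrupted $\coreVar$ is, which terminates in $O(\threshMisclassifiedVar)$ rounds whenever the guess is correct; and (iii) a \emph{doubling loop} over $\threshMisclassifiedVar = 1,2,4,\dots$ together with a classical early-stopping Byzantine agreement run concurrently as a fallback, which is what makes correctness, validity, and the unconditional $O(f)$-round bound hold regardless of prediction quality. Throughout, the argument lives in the synchronous, unauthenticated, $t<n/3$ model, and the fault parameter that actually drives the running time will be the \emph{effective} number of misclassifications $\actualMisclassifiedVar := \honestMisclassifiedVar + \faultyMisclassifiedVar$ rather than $B$ itself.

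For the good-set phase, each honest process broadcasts its prediction string ($O(n^2)$ messages, $O(1)$ rounds) and then computes $\coreVar$ by a fixed threshold rule over the strings it received (roughly: $j \in \coreVar$ iff a large constant-fraction majority of the received strings classify $j$ as honest). The key lemma is $\actualMisclassifiedVar = O(B/n)$ --- this is where the $B/n$ term is born. Since the $n$ prediction strings have total Hamming error at most $B$, an averaging argument shows only $O(B/n)$ columns can carry enough honest-reporter error to move a membership decision out of the "definitely included / definitely excluded" regime; and the at most $f$ faulty reporters that equivocate about their own strings can swing any column's tally by only $f$, which is absorbed by choosing the threshold with an $f$-sided margin (using $f\le t<n/3$ to keep that margin $\Omega(n)$). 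Hence every honest process's local good set contains a common honest core of size $\Omega(n)$, has $\honestMisclassifiedVar = O(B/n)$ honest processes missing and $\faultyMisclassifiedVar = O(B/n)$ faulty processes present, so the resilience condition $\faultyMisclassifiedVar < |\coreVar|/3$ for an agreement restricted to $\coreVar$ holds whenever $B = o(n^2)$, in particular in the regime $B = O(n^{3/2})$.

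For the fast consensus and the outer loop: for a fixed $\threshMisclassifiedVar$, run a graded/locked Byzantine agreement among $\coreVar$ under the optimistic hypothesis that at most $\threshMisclassifiedVar$ of its members are faulty, with honest processes outside $\coreVar$ adopting the value decided by $\coreVar$; if the true count $\faultyMisclassifiedVar$ exceeds $\threshMisclassifiedVar$ the routine is aborted after $O(\threshMisclassifiedVar)$ rounds (a mismatch is detected consistently by the lock mechanism), and if $\faultyMisclassifiedVar \le \threshMisclassifiedVar$ then standard early-stopping analysis gives termination in $O(\threshMisclassifiedVar)$ rounds with validity (a majority of $\coreVar$ is honest, and honest processes outside $\coreVar$ copy $\coreVar$'s decision). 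We then run epochs $\threshMisclassifiedVar = 1,2,4,\dots$, each costing $O(n^2)$ messages and $O(\threshMisclassifiedVar)$ rounds, stopping at the first successful epoch (the lock guarantees all honest processes decide the same value, including any still in a later epoch). Success is forced once $\threshMisclassifiedVar \ge \faultyMisclassifiedVar = O(B/n)$, so the loop runs $O(\log(B/n))$ epochs and the geometric sum of round costs telescopes to $O(B/n+1)$. To get the $\min$ with $f$ and the unconditional second bullet, run a textbook early-stopping Byzantine agreement concurrently, biased to preserve any value the fast path has locked: it finishes in $O(f)$ rounds and $O(n^2)$ messages regardless, and because it terminates by round $O(f)$ it also cuts the fast path to $O(\log f)$ epochs when $f < B/n$; capping the doubling at $\threshMisclassifiedVar = \Theta(\sqrt n)$ (the point where $B = \Theta(n^{3/2})$) means that when $B = \omega(n^{3/2})$ the fast path has spent only $O(\sqrt n)$ rounds and $O(n^2\log n)$ messages before the fallback closes everything in $O(f)$ further rounds. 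Combining the two branches gives the stated bounds.

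The main obstacle is the good-set lemma together with its interaction with the restricted agreement. The adversary jointly chooses which $B$ prediction bits are wrong and how the faulty processes behave --- including equivocating about their own prediction strings and about messages inside the restricted agreement --- so the analysis must show both that corrupting $\coreVar$ is self-limiting (each flipped membership decision forces the adversary to spend $\Omega(n)$ of its $B$-budget, yielding the $B/n$ bound) and that a $\coreVar$ that is off by $O(B/n)$, and that honest processes may hold slightly different local copies of (because of the "ambiguous" columns), still supports fast and valid agreement with \emph{all} honest processes --- those wrongly excluded and the general population --- deciding consistently. Threading the lock/grading guarantee through every epoch and into the fallback, so that a decision taken in one epoch is honored everywhere afterward, is the other delicate point, but this is the standard machinery of multi-phase synchronous agreement and I expect it to go through cleanly once the good-set accounting is established.
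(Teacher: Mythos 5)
Your high-level architecture matches the paper's: a voting/classification round that reduces the prediction error to $O(B/n)$ misclassified processes (your averaging argument for $k_A = O(B/n)$ is essentially the paper's Lemma~\ref{Flemma:advice_preprocess_misclassify_bound}), a conditional agreement routine parameterized by a guess $k$, and a guess-and-double wrapper with an early-stopping fallback and a grade/lock mechanism to carry decisions across epochs (the paper uses graded consensus sandwiched around each sub-protocol and one extra helping phase). However, there is a genuine gap at the heart of step (ii). You propose to ``run a graded/locked Byzantine agreement among $\coreVar$'' with resilience $\faultyMisclassifiedVar < |\coreVar|/3$, treating $\coreVar$ as if it were a commonly agreed participant set. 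It is not: after the voting round, different honest processes hold classification vectors that disagree about up to $\Theta(k_A)$ processes --- the same order as the fault bound you are assuming inside the restricted agreement --- so standard quorum-based early-stopping machinery does not apply. This is precisely the difficulty the paper spends all of Section~\ref{section:un-byz-adv} resolving: it orders processes by $\pi(c_i)$, partitions the first $(2k+1)(3k+1)$ positions into $2k+1$ leader blocks of size $3k+1$, proves a core set of $2k+1$ honest processes is common to every honest process's block (Lemma~\ref{lemma:coresize}), proves each misclassified faulty process can pollute at most two phases (Lemma~\ref{lemma:un-ba-two}), and builds bespoke graded-consensus and conciliation subroutines that tolerate the residual disagreement. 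The constraint $(2k+1)(3k+1) \le n-t-k$, i.e.\ $k = O(\sqrt{n})$, is exactly where the $B = O(n^{3/2})$ threshold in the theorem comes from. Your claim that the restricted agreement works whenever $B = o(n^2)$ would strictly improve the paper's result and contradicts the authors' stated open problem of pushing the unauthenticated algorithm past $B = O(n^{3/2})$; this is a strong signal that the hard part has been assumed rather than proved.

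A second, smaller gap: you assert that when the guess is wrong ($\faultyMisclassifiedVar > \threshMisclassifiedVar$) the optimistic routine ``is aborted after $O(\threshMisclassifiedVar)$ rounds (a mismatch is detected consistently by the lock mechanism).'' Consistent detection of a bad epoch by all honest processes is itself an agreement-like task and is not justified. The paper sidesteps this entirely: the conditional algorithm is allowed to return garbage when its precondition fails, and correctness of the wrapper rests only on the graded-consensus instances before and after it (the paper explicitly notes that correctness of \textbf{ba-with-classification} is not needed for correctness of the overall protocol, only for its performance). You would either need to prove your consistent-abort claim or restructure as the paper does.
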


The natural question remains whether we can still gain a benefit from predictions even when $B = \Omega(n^{3/2})$.  We show that in an authenticated system---specifically, one with signatures---we can take advantage of whatever accuracy in the predictions is available.  In this case, we assume that $t < (1/2-\epsilon)n$.  We give an algorithm runs in time $O(\min\{(B/n)+1, f)\}$ for all values of $B$, and has message complexity $O(n^3\log{n})$.  This is summarized as follows:

\begin{theorem}[Restated from \Cref{theorem:auth_ba_predicitions}]
There is an authenticated algorithm that solves Byzantine Agreement with predictions when there are at most $t < (1/2-\epsilon)n$ faulty processes. Given a classification prediction with $B$ incorrect bits, every honest processes decide in $O(\min\{B/n, f\})$ rounds and honest processes send $O(n^3 \log (\min\{B/n, f\}))$ messages.
\end{theorem}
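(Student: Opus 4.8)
The plan is to wrap a single synchronous, authenticated, rotating‑leader Byzantine Agreement protocol inside a geometric epoch‑doubling loop, and to let the classification prediction enter only through a certificate‑based rule for \emph{skipping} leaders. Concretely, fix a public leader order and run a standard synchronous BFT in which views take $O(1)$ rounds, a view‑change is triggered when a view fails to commit, and each view‑change carries forward the latest locked value; such protocols are known to be safe and to tolerate $t<(1/2-\epsilon)n$ with signatures, and crucially their safety argument does not rely on large quorums but on unforgeable signatures plus the synchronous timing that lets every honest process see every relevant view‑change message. Since it is a \emph{single} instance, the usual ``once a value is locked by enough honest processes in some view, every later proposal must repeat it'' invariant gives agreement and validity regardless of how accurate the predictions are and regardless of how many leaders get skipped; the epochs and skips affect only termination time. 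The prediction is used as follows: honest process $p$ signs an \emph{accusation} of leader $\ell$ iff its classification bit for $\ell$ says ``faulty''; a leader is skipped in a view (treated exactly like a failed view, so locked values are preserved) iff a set of $(1/2-\epsilon/2)n$ signed accusations against it is produced. This threshold lies strictly above $f$ and strictly below the number of honest processes, so skipping is atomic (a skip‑certificate either exists, and all honest processes honor it, or it does not, and all honest processes follow the leader).

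The quantitative heart is an averaging argument over the at most $B$ erroneous prediction bits of honest processes. Writing the total error as $\sum_{p\ \mathrm{honest}}(\#\mathrm{false\ positives}_p+\#\mathrm{false\ negatives}_p)\le B$, one gets: an honest leader $\ell$ accumulates a skip‑certificate only if at least $(\epsilon/2)n$ honest processes have a false positive on $\ell$ (the remaining $(1/2-\epsilon)n\ge f$ accusations must come from faulty processes), so at most $O(B/(\epsilon n))=O(B/n)$ honest leaders are ever skipped; symmetrically, a faulty leader $\ell$ escapes its skip‑certificate only if more than $(3\epsilon/2)n$ honest processes have a false negative on $\ell$, so at most $O(B/n)$ faulty leaders are ever un‑skipped. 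Hence, running with skipping enabled, the number of ``bad'' views before a good one (honest, not skipped, synchronized leader) is $O(B/n)$, and a good view commits. In parallel we also need the $O(f)$ bound: running the \emph{same} protocol with skipping disabled is just plain rotating‑leader BFT, which commits at the first honest leader, hence within $f+1$ views. So epoch $i$ runs, inside the one instance, a skip‑off phase of $\Theta(2^i)$ views followed by a skip‑on phase of $\Theta(2^i)$ views, and commits whenever $2^i$ exceeds a constant times $\min\{B/n,f\}$; the geometric sum gives $O(\min\{B/n,f\})$ total rounds, and the process terminates by epoch $\lceil\log(\min\{B/n,f\})\rceil$. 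For messages, each view costs at most a view‑change broadcast of $O(n)$‑sized authenticated certificates, i.e.\ $O(n^2)$ messages; summing over the $\Theta(2^i)$ views of the $O(\log(\min\{B/n,f\}))$ epochs (each $2^i\le n$) yields the stated $O(n^3\log(\min\{B/n,f\}))$ bound, with decisions then flooded so late processes catch up within one extra round.

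I expect the main obstacle to be exactly the interaction between the \emph{global} quantity $B$ and the \emph{local, adversarially influenced} skip decisions: faulty processes may inject bogus accusations of honest leaders and withhold true accusations of faulty ones, so the skip threshold must be pinned inside the $\epsilon n$ slack of $t<(1/2-\epsilon)n$ so that both halves of the averaging argument go through simultaneously, and the atomicity of the certificate‑based skip must be airtight (a process must never follow a leader that another honest process has skipped, or vice versa). The second, more delicate‑than‑routine, point is threading the skip step through the view‑change/locked‑value machinery so that safety holds unconditionally no matter how many leaders are (wrongly) skipped — in particular, a skip must be indistinguishable from an ordinary timed‑out view for the purposes of the locking invariant — and the termination analysis that an honest, non‑skipped leader in a synchronized view commits even when some honest processes carry locked values from earlier failed or skipped views.
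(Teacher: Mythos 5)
Your architecture is genuinely different from the paper's: you run a single rotating-leader synchronous BFT and feed the predictions in only through signed ``accusation'' certificates that let processes skip leaders, with the averaging argument applied directly to the raw prediction bits. The paper instead first runs a one-round classification vote (so that honest classifications disagree on only $O(B/n)$ processes), then uses a guess-and-double wrapper in which each phase sandwiches a \emph{conditional} committee-based agreement protocol between graded-consensus calls. Your skip-threshold averaging argument (an honest leader can only be certified faulty if $(\epsilon/2)n$ honest processes hold a false positive on it, and symmetrically for faulty leaders) is a clean substitute for the paper's Lemma~\ref{Flemma:advice_preprocess_misclassify_bound} and could in principle support the same $O(B/n)$ count of ``troublesome'' leaders. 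However, there are two places where the proposal has a genuine gap rather than a routine detail.

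First, \emph{strong unanimity} is not established, and the standard rotating-leader template does not provide it in this parameter regime. The paper's Byzantine Agreement requires that if all honest processes start with the same input $v$ they decide $v$. With $t$ up to $(1/2-\epsilon)n$, the honest processes are not a two-thirds majority, so a Byzantine leader in the very first view can propose $v'\neq v$, justify it with the (claimed) inputs of the $t$ faulty processes, and get it locked and committed --- no majority-of-inputs or quorum-intersection argument rules this out, and ``every later proposal must repeat a locked value'' only protects values \emph{after} they are locked, not the honest inputs before any locking occurs. This is exactly why Algorithm~\ref{algo:final_ba} runs \textbf{graded-consensus} before every sub-protocol and has processes with grade $1$ ignore the sub-protocol's output: validity is protected \emph{outside} the leader-driven machinery. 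Your proposal has no component playing that role, so as written it proves agreement and termination but not the problem as defined.

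Second, the $O(\min\{B/n,f\})$ round bound quietly requires unanimously-skipped faulty leaders to cost \emph{zero} rounds: if each skipped view still consumes a round, then an adversary placing all $f$ faulty (and correctly predicted) processes at the front of the rotation forces $\Theta(f)$ rounds even when $B=0$. To avoid this you must compress the schedule, i.e.\ each process jumps to the first leader it has not certified as skipped. But you yourself note that honest processes can disagree on the skip status of up to $\Theta(B/n)$ leaders (the adversary reveals bogus certificates selectively), so their compressed schedules are misaligned by up to $\Theta(B/n)$ positions and the processes no longer agree on which leader owns which round --- the view synchronization that your locking/commit argument presupposes breaks exactly where the predictions are imperfect. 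This misalignment is the central technical obstacle the paper's machinery is built for (the position-shift lemmas for $\pi(c_i)$, the blocks of $3k+1$ candidates containing a common core of $2k+1$ honest ones, and the $t+1$-vote committee certificates), and the proposal offers no substitute for it; ``treat a skip like a timed-out view'' preserves safety but does not restore the round bound.
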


Finally, we show that for classification predictions, these time bounds are tight, i.e., any Byzantine agreement protocol with classification predictions requires $\Omega(\min\{B/n+1, f\})$ rounds.  Specifically, we show:
\begin{theorem}[Restated from \Cref{theorem:rndLowerBound}]
    For every deterministic Byzantine agreement algorithm with classification predictions
    and for every $f \leq t < n-1$, there is an execution with $f$ faults in which at least
    $\min\{f+2,t+1, \lfloor B/(n-f) \rfloor+2,\lfloor B/(n-t) \rfloor+1\}$ rounds are performed.
\end{theorem}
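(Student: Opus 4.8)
The plan is to extend the classical early-stopping round lower bound for synchronous Byzantine agreement --- which forces $\min\{f+2,\,t+1\}$ rounds when $f\le t$ of the processes actually fail~\cite{DRS90} --- by threading an accounting of the prediction error through its indistinguishability/chain argument. Write $R=\min\{f+2,\,t+1,\,\lfloor B/(n-f)\rfloor+2,\,\lfloor B/(n-t)\rfloor+1\}$ and suppose, for contradiction, that the algorithm always halts within $R-1$ rounds in the executions we construct. I would fix a single classification prediction $\pi$ --- namely ``everybody is honest'' together with the cost-free truthful predictions held by faulty processes (a faulty process can predict every other process correctly at no cost to the error budget) --- and build a finite sequence of executions $E_0,E_1,\dots,E_\ell$, all consistent with $\pi$, each with at most the permitted number of faulty processes and at most $B$ incorrect prediction bits, such that every honest process has input $0$ in $E_0$, every honest process has input $1$ in $E_\ell$, and any two consecutive executions are indistinguishable through round $R-1$ to some process that is honest in both. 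By Validity $E_0$ decides $0$ and $E_\ell$ decides $1$, while the indistinguishability steps and Agreement force every $E_i$ to decide the same value, which is the desired contradiction.

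The chain itself is built by the usual two-level construction. At the outer level one walks the honest inputs from all-$0$ to all-$1$ one coordinate at a time; at the inner level, between two input-adjacent configurations forced to decide differently, one splices in a sub-chain in which a single process turns faulty and ``reveals its true input'' at later and later rounds, recruiting extra faulty processes along the propagation front so that some honest process's $(R-1)$-round view is preserved across each rung. Following the classical bookkeeping, the critical rungs of this sub-chain use $R-O(1)$ simultaneously faulty processes: running out of faults at $f$ before the construction can finish is what produces the term $f+2$, not being allowed more than $t$ faults at all produces $t+1$, and the ``$+2$ versus $+1$'' gap is precisely the clean-round bonus available when the count is taken against $f$ rather than against $t$.

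The new ingredient, and where I expect the real work, is the prediction accounting. Since $\pi$ is a single fixed vector while the rungs of the chain have different fault sets, any process that is faulty in one rung and honest in another is necessarily mispredicted in the rung where it is faulty; with $\pi$ all-honest, such a process contributes one incorrect bit for each process honest in that rung, i.e.\ $n-|F|$ incorrect bits when the fault set is $F$. The heaviest rung of the sub-chain contains $R-O(1)$ such processes, so it carries on the order of $R\,(n-R)$ incorrect bits, and since that rung must itself be a legal execution we need roughly $(R-O(1))\,(n-R)\le B$. Bounding $n-R$ from below by $n-f$ in the regime $R\le f$, respectively by $n-t$ in the regime $R\le t$, turns this into $R\le\lfloor B/(n-f)\rfloor+2$, respectively $R\le\lfloor B/(n-t)\rfloor+1$ (the extra clean-round bonus being absorbed exactly as in the classical case), which are the remaining two terms of $R$. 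Thus, if the algorithm halted within $R-1$ rounds, all four feasibility constraints would hold at once, the chain could be assembled within budget, and we would reach the contradiction above; so some execution --- and, by padding the heavy rung out to exactly $f$ faults in the standard way, some execution with exactly $f$ faults --- must run for at least $R$ rounds.

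The main obstacle is exactly this tension: the classical construction is free to use a fresh fault set in every rung, whereas here one fixed prediction must stay inside the $B$-budget in all of them simultaneously, so the sub-chain has to be arranged so that swing processes are introduced as frugally as possible --- ideally reusing the same processes across many rungs --- and one must verify that the indistinguishability relations survive this frugality. A secondary nuisance is pinning down the constants: reproducing the classical ``$+2$ vs.\ $+1$'' asymmetry and making the final hard execution have exactly $f$ faults rather than merely at most $f$. I expect to handle the latter by carrying the clean-round/layering bookkeeping of the early-stopping proof through unchanged and simply overlaying the error count on top of it.
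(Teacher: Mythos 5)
There is a genuine gap in your budget accounting, and it sits exactly where you predicted the ``real work'' would be. You fix the single prediction ``everybody is honest'' and then need every rung of the chain to be a legal execution, i.e.\ to have at most $B$ incorrect bits. A rung whose fault set has size $m$ costs $m(n-m)$ incorrect bits under the all-honest prediction. Your heaviest rung has $m = R-O(1) \le f$ faults, so its per-fault cost is $n-m \ge n-f$, and its total cost is $m(n-m) = m(n-f) + m(f-m)$. The definition of $R$ only gives you $m(n-f)\le B$; the extra term $m(f-m)$ is not controlled by $B$ at all. Concretely, take $n=100$, $t=f=50$, $B=500$: then $R=11$, the heavy rung has about $9$ faults, and it costs $9\cdot 91 = 819 > 500$ incorrect bits, so it is not a legal execution and the chain cannot be assembled. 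Your sentence ``Bounding $n-R$ from below by $n-f$ \dots turns this into $R\le \lfloor B/(n-f)\rfloor+2$'' runs the implication in the wrong direction: legality of the rung would \emph{imply} that inequality, but the inequality (which holds by definition of $R$) does not imply legality of the rung. Padding rungs out to exactly $f$ faults only makes this worse, since $f(n-f)\le B$ is precisely the easy regime.

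The fix is to abandon the uniform all-honest prediction and spend the budget only on a carefully chosen subset of the faults, which is what the paper does---and it does so via a black-box reduction rather than by re-opening the chain argument of \cite{DRS90}. When $B\ge f(n-f)$ the all-honest prediction is affordable and the classical bound transfers directly. When $B< f(n-f)$, the paper sets $x=f-\lfloor B/(n-f)\rfloor$, gives every honest process a prediction that \emph{correctly} labels $x$ designated processes as faulty (zero cost) and labels everyone else honest, lets those $x$ processes crash immediately, and observes that the remaining system simulates a prediction-free algorithm on $n-x$ processes tolerating $t-x$ faults with exactly $(n-f)\lfloor B/(n-f)\rfloor\le B$ incorrect bits. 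Applying the known $\min\{f'+2,t'+1\}$ bound to that smaller instance yields the two $B$-dependent terms. If you want to keep your direct chain construction, you would need to import the same device: correctly predict $x$ of the chain's faulty processes as faulty and keep them silent in every rung, so that only $\lfloor B/(n-f)\rfloor$ mispredicted faults ever participate in the indistinguishability argument. At that point you are essentially re-proving the reduction inline, so the paper's route is both shorter and safer.
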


\section{Related Work}

The problem of Byzantine agreement has been long studied, first introduced by Pease, Shostak and Lamport~\cite{LamportSP82,PeaseSL80}.  Soon thereafter, it was shown to be inherently expensive, with a lower bound of $t+1$ rounds~\cite{dolev1983authenticated,FischerL82} and $\Omega(nt)$ messages~\cite{dolev1985bounds}.  

One key focus for developing more efficient Byzantine Agreement protocols has been the idea of \emph{early stopping}: in executions with only $f < t$ failures, we can terminate in $f+2$ rounds, which can be much faster in practice.  Chandra and Toueg~\cite{ChandraT90} introduced the idea of early stopping in the context of crash failures, and Dolev, Reischuk, and Strong~\cite{DRS90} gave the first round-optimal early stopping Byzantine agreement protocol.  

The early stopping protocol by Lenzen and Sheikholeslami~\cite{lenzen2022} was a significant inspiration to this paper, giving an improved early stopping protocol that is both time and communication efficient.  A notable aspect of their paper is the idea of using a \emph{validator} (a version of graded consensus) to check agreement before and after a ``king'' tries to lead agreement.  

Graded consensus is a key existing subprotocol that we rely on to ensure validity and to check for agreement.  It is a weaker form of Byzantine agreement which guarantees agreement only if all the honest processes already agree.  It returns a value and a \emph{grade}, and ensures that if all the honest processes start with the same value, then they all return that value with a grade of 1; if any honest process returns a value with a grade of 1, then every honest process returns that value (with a grade of 0 or 1).  Graded consensus was first introduced in the form of adopt-commit protocols~\cite{gafni98}, and has been used frequently ever since (e.g., recently in the Byzantine context to yield more efficient protocols~\cite{abraham23roundcrusader, abraham22bindingcrusader}).  Of note for this paper are two message-efficient variants: unauthenticated~\cite{civit2024efficient} and authenticated~\cite{momose2021}.

Over the last several years, there has been significant interest in the idea of algorithms with (unreliable) predictions (see, e.g.,~\cite{mitzenmacher22predictionscomm,predictionsworkshop}).  In the context of sequential algorithms, a large variety of problems have been considered in this framework, including Bloom Filters~\cite{mitzenmacher2020algorithmspredictions}, ski rental~\cite{purohit18onlineviaml}, load balancing~\cite{0001X21}, and on-line graph algorithms~\cite{AzarPT22}.  See, e.g., ~\cite{BoyarFKLM17,mitzenmacher2020algorithmspredictions} for surveys of more relevant work.

To date, there has been less work on distributed algorithms with (unreliable) predictions.  Recently, there has been progress on distributed graph algorithms with predictions~\cite{boyar2025distributedgraphalgorithmspredictions}.  There has also been work on contention resolution with predictions~\cite{gilbert21contentionprediction}.  These papers do not address Byzantine failures. 

If one focuses on reliable predictions (otherwise known as ``advice''), there is a much larger body of distributed algorithms.  There has been significant work in the distributed computing community on locally checkable labellings (LCLs) (introduced by~\cite{naor95lcl}), and there is a sense in which the ``labels'' can be seen as predictions.  This was further explored in~\cite{Balliu0K0ORS24}, which can be seen for further interesting related work on distributed algorithms with advice.  

One area of seemingly similar research has been work on failure detectors, first introduced by Chandra and Toueg~\cite{chandra96unreliabledetector}. Malkhi and Reiter~\cite{malkhi97intrusiondetection} generalized this notion to Byzantine failure detectors, which was further explored in many follow-up papers (e.g.,~\cite{BALDONI2003185,doudou90mutenessdetector,kihlstrom03byzantinedetectorconsensus}).  Failure detectors have a similar motivation in that they observe it may be possible to detect Byzantine behavior (and explore methods for accomplishing that).  Unlike in the ``predictions'' model, they typically assume that the failure detectors are eventually perfect, over time updating their predictions until they correctly identify every faulty process.  (E.g.,~\cite{malkhi97intrusiondetection} includes a property of ``strong completeness'' where ``eventually every quiet process is permanently suspected by every correct process.'')
By contrast, we are primarily interested in the case where predictions may never be perfectly complete or accurate, and where performance degrades gracefully with prediction quality. 
%(Also, significant technical differences, as most of the above focus on asynchronous systems.) %\Ayaz{\cite{chandra96unreliabledetector,malkhi97intrusiondetection,kihlstrom03byzantinedetectorconsensus,doudou90mutenessdetector,BALDONI2003185}}
    
Another related area of research has been the idea of \emph{accountability}, first introduced in distributed systems by~\cite{haeberlen2007peerreview}.  There has been significant work of late on accountable Byzantine agreement~\cite{civit21polygraph,civit22abc,CivitGGGKMS22,ShengWNKV21}.  The goal of these papers is to develop techniques for detecting malicious processes---not how to use such predictions to generate efficient protocols.

\section{Model}

\para{Processes.} We consider a system with $n$ processes $\{p_1, \dots, p_n\}$.  Up to $t$ of these processes may be faulty, and faulty processes may act in an arbitrary (``Byzantine'') manner.
%e.g., ignoring the protocol, sending incorrect messages, not sending messages at all, etc.  
The remaining processes are honest.  For a given execution, let $f \leq t$ be the actual number of faults.  (The parameter $f$ is not known to the protocol, while $t$ and $n$ are.) We use $H$ to denote the set of honest processes' identifiers, and $F = \{1, \dots, n\} - H$ as the set of faulty processes' identifiers.

\para{Synchronous Network.} The processes are connected by a synchronous network, and the algorithm is executed in a round-by-round manner: in each round, each process may transmit messages to other processes, receive messages, and update its state.

\para{Predictions.} Each process $p_i$ also receives as a prediction a binary string $a_i$.  Each process may receive a different string. %\Ayaz{In \Cref{sec:classification} a prediction is a vector.}

We consider a specific type of predictive information known as \emph{classification predictions}.  In that case, each $a_i$ is an $n$-bit string where
%$a_i[j]$ should be interpreted as the prediction:
\begin{itemize}
    \item[] $a_i[j] = 0$ means that $p_j$ is predicted to be faulty, and
    \item[] $a_i[j] = 1$ means that $p_j$ is predicted to be honest.
\end{itemize}

A prediction bit $a_i[j]$ is correct if the reality matches the implied prediction; otherwise, it is incorrect.  More specifically, for a given execution: 
%Let $H \subseteq \Pi$ denote the set of identifiers of honest processes  and let $F = \Pi \setminus H$ denote the set of identifiers of faulty processes. 
The number of bits that incorrectly predict a faulty process as honest is $B_F = \sum_{i\in H} \#\{j \in F\ |\ a_i[j] =1\}$; the number of bits that incorrectly predict an honest process as
faulty is $B_H = \sum_{i\in H} \#\{j \in H\ |\ a_i[j] = 0\}$.  The total number of incorrect bits in the \prediction{} is $B = B_F + B_H$.  (Note that incorrect bits in the \prediction{} given to faulty processes are not counted.) 
% \Ayaz{we now don't use $B_F$ and $B_H$ anywhere, so maybe we can just define $B$ here.}

\para{Byzantine Agreement.} 
The problem of Byzantine Agreement is defined as follows: each process $p_i$ begins with an input value $v_i$ and eventually produces a decision satisfying:
\begin{itemize}
    \item \emph{Agreement}: All honest processes decide the same value.
    %If any honest process decides value $v$, then every honest process decides $v$.
    \item \emph{Strong Unanimity (Validity)}: If every honest process has the same input value $v$, then they all decide $v$.
    %if every honest process proposes the same value $v$, then $v$ is the only value decided by honest processes.
    \item \emph{Termination}: Eventually every honest process decides.
\end{itemize}

The time complexity (or round complexity) of a Byzantine Agreement protocol is the number of rounds it takes until the last honest process decides.  The message complexity is the total number of messages sent by honest processes up until they decide.  (In both cases, we are interested in the worst-case over all possible inputs and all possible patterns of Byzantine behavior.)

\section{Technical Overview of the Algorithms}

In this section, we provide an overview of the technical results, focusing on the algorithms.  Lower bounds can be found in \Cref{sec:lowerbounds}.

\subsection{Classification}

\newcommand{\floor}[1]{\lfloor #1 \rfloor}
\newcommand{\ceil}[1]{\lceil #1 \rceil}

The agreement protocols presented rely on predictions, and they only succeed if honest processes trust the same processes.  Thus, we need their predictions to be (somewhat) similar to each other in order to guarantee success. However, even with relatively few error bits in the input \predictions $a_i$, the sets of locally trusted processes may diverge from each other by quite a bit. 
Thus, the first key ingredient in our algorithms is using the prediction strings $a_i$ to classify the processes as honest or faulty.   This protocol is presented in \Cref{section:classification}.
%\Faith{Should this be classifications?}

To that end, we rely on voting: each process shares its predictions with the others; process $p_i$ classifies $p_j$ as honest if it receives at least $\ceil{(n+1)/2}$ votes that $p_j$ is honest; otherwise, it classifies $p_j$ as faulty.  

We prove that after this voting phase, the newly created classifications at each honest process will not diverge too much from each other, and will not include too many misclassified processes.   Specifically, recall that $B$ is the number of total erroneous prediction bits over all \prediction strings of all honest processes in a given execution.  This classification ensures that there are at most $O(B/n)$ processes that are misclassified by at least one honest process (see \Cref{Flemma:advice_preprocess_misclassify_bound}).  That is, if $M_i$ is the set of processes misclassified by process $p_i$, then $|\cup M_i| = O(B/n)$.  (Note each misclassified process is counted only once, even if it is misclassified by multiple processes.)

We typically use the variable $k_A = |\cup M_i|$ to denote the number of \emph{classification errors}.  This approximate agreement on a classification and the bound on classification errors allows sufficient coordination to solve Byzantine agreement efficiently.

%Since the agreement subroutines only terminate if honest processes trust the same leaders at the same time, we need their \prediction strings to be (somewhat) similar to each other in order to guarantee termination. However, even with relatively few error bits in the input \predictions, the priority order that processes use may diverge from each other by quite a bit. Thus, before running the protocol, processes begin by running a \emph{voting} phase, in which they share their input \predictions with each other and use this information to create new \prediction strings that are more similar to one another. In particular, in this voting phase, processes collect each other's input \predictions and use a threshold $T$ to decide, for every process $p$, whether to consider $p$ honest or faulty in the new \prediction string they create. More specifically, a process will consider $p$ to be faulty if more than $T$ of its received \prediction strings (votes) labeled $p$ as faulty, and will consider it to be honest otherwise. We prove that after this voting phase, the newly created predictions at each honest process will not diverge too much from each other, and furthermore, will not include too many misclassified processes. 

\subsection{Adapting to the Quality of Classification}

A key requirement is an algorithm that is always correct, that works well when predictions are good, and that degrades in performance as the predictions get worse. 

To achieve this, we are going to design conditional Byzantine agreement algorithms that guarantee good performance in executions where the number of classification errors are at most some fixed $k$ (and provide minimal guarantees otherwise).   We present two conditional Byzantine agreement algorithms, one that does not use any authentication, and another that improves upon the performance of the first, but makes use of cryptographic signatures to do so.  

%Specifically, it will terminate in $O(k)$ rounds with $O(nk^2)$ messages.

We will then use the typical guess-and-double approach, executing phases where $k = 1, 2, 4, \ldots, $.  In each phase, we perform one instance of the conditional Byzantine agreement protocol; we also run $O(k)$ rounds of a standard \emph{early-stopping} Byzantine agreement protocol.  The early-stopping protocol ensures that if the number of actual failures is $< k$, then we will also terminate.  This prevents bad predictions from delaying the protocol longer than necessary.

To ensure validity (i.e., strong unanimity) and consistency among these various partial and conditional Byzantine agreement protocols, we use  \emph{graded consensus} protocol, which we discuss in more detail in~\Cref{sec:highlevel}. 
% \Ayaz{there is no discussion below, but by right we have discussed graded consensus in the related work}

The high-level ``guess-and-double'' wrapper protocol can be found in \Cref{sec:highlevel}.

\iffalse
To achieve that,
we are going to alternate executing two protocols: a conditional Byzantine agreement protocol
that succeeds when predictions meet a certain quality threshold, and an existing early stopping
Byzantine agreement protocol that succeeds when the number of failures is below some threshold.
To that end, we will construct a protocol ba-classifiction(vi, ai, kmax) (see Sections ??
and ??) that process i provides with value vi and advice ai—and that guarantees agreement in
O(kmax) rounds when there are at most kmax classification errors (i.e., number of processes that
are misclassified by even one other honest process). Most of the technical contribution of this paper
can be found here.
\fi

\subsection{Conditional Byzantine Agreement with Classification: Unauthenticated}

The first conditional Byzantine Agreement protocol we give does not use any authentication.  The details can be found in \Cref{section:un-byz-adv}, and the protocol is listed in \Cref{algo:un_ba_class}. The algorithm goes through phases, each with some leaders.

The protocol receives as input a proposed value $v_i$, a classification $c_i$ (resulting from the voting protocol), and an upper bound $k$ on the maximum number of classification errors to be tolerated.

Each process uses its (local) classification to determine how to prioritize potential leaders---ordering all the processes predicted to be honest before those that are predicted to be faulty, and then sort them by their identifiers.
% \Ayaz{It might be confusing that we suddenly say leaders here, maybe start by saying we want to do something a la phase-king?}  
%Intuitively, each process prioritizes leaders by their IDs, but trusts processes predicted to be honest first, pushing processes predicted to be faulty to the back of its ordering.

The protocol proceeds through a set of phases, each of which relies on a different subset of the potential leaders, chosen in order of priority.  Since processes do not agree exactly on the classification, their prioritization of leaders will be different.  Thus we cannot trust a single leader at a time, but rather we will choose a set of $\Theta(k)$ leaders in each round.  Of course, each (honest) process may have a different leader set.  

In each round, a ``conciliation'' subroutine is executed that attempts to help the processes converge on a single proposal.  The conciliation routine guarantees agreement when sufficiently many processes trust the same honest leaders in that round. More specifically, conciliation will succeed when: (i) all the honest processes choose honest leader sets, and (ii) there is a set of $2k+1$ honest processes that are included in the leader set of every honest process.  (These conditions may seem unlikely to occur, but in fact follow from the classification and the prioritization of leaders.)  During the conciliation phase, processes compute a ``leader graph'' to estimate the overlap in leader sets, compute a proposal based on the connectivity of the leader graph, and then choose a final proposal based on leader voting. 
% \Ayaz{We probably need to mention that these succeed conditions (which, seems hard to achieve) somehow can be achieved via the ordering obtained from the classification}

The final component of the protocol is to determine when convergence has occurred, i.e., when all the honest processes have the same proposed decision.  At the same time, we have to ensure that strong unanimity holds.
% \Ayaz{use strong unanimity instead of validity? to be consistent}

To accomplish this, we follow a similar paradigm as~\cite{lenzen2022}, using a \emph{graded consensus} protocol to detect agreement.  Graded consensus (sometimes referred to as \emph{adopt-commit}~\cite{gafni98}) is a weaker form of agreement that only guarantees complete agreement when all the honest processes initially agree.

More specifically, graded consensus outputs a value and a \emph{grade} $\in\{0, 1\}$, wherein: (i) if all honest processes begin with the same value, then they all output the same value with grade $1$, and (ii) if any honest process outputs a value with grade $1$, then every honest process outputs the same value (with either grade $0$ or grade $1$).  

We therefore run an instance of graded consensus before each conciliation (to ensure strong unanimity, when all processes begin with the same value), and after each conciliation (to detect agreement).

To achieve the desired performance, a small modification of the graded consensus protocol is needed: instead of using all-to-all communication, we rely on the leader set for the phase to drive the graded consensus protocol, reducing the message complexity.

\subsection{Conditional Byzantine Agreement with Classification: Authenticated}

The second conditional Byzantine Agreement protocol we give relies on authentication.  The details can be found in \Cref{section:auth_byz_using_class}, and the protocol is listed in \Cref{algo:auth_ba_class}.

The conditional algorithm, as described in the previous section,  ensures success only when $B = O(n^{3/2})$.  This is because when $B$ is larger, there remains too much disagreement in the classification, and hence we cannot achieve the conditions needed for the conciliation protocol to succeed, i.e., we cannot ensure we quickly reach a point where the honest processes have selected honest leaders.

Instead, by using authentication, we can relax the requirement: honest processes only need to choose leaders where a majority of the leaders are correct.  More specifically, if $k$ is the upper bound on classification errors, each honest process ``votes'' for the $2k+1$ processes it prioritizes the most to be leaders by sending signed messages to them.  Each leader that receives at least $t+1$ votes declares itself to be part of the leader committee---and uses those $t+1$ votes as a certificate that proves it is on the committee. We show that at least $k+1$ honest processes and at most $k$ faulty processes are members of the committee.

At that point, each process on the leader committee performs a ``Byzantine broadcast'' to send its proposed value to the other members of the committee.  This Byzantine broadcast ensures that all members of the committee receive the same set of values, and that it consists of precisely the values broadcast by the committee members.

Finally, the committee members can deterministically choose a decision value from that set and send the value to everyone.  Since processes only accept messages sent by committee members with certificates proving they are committee members, and since a majority of the processes with such certificates are honest, it is safe for processes to adopt the majority value and decide.

One final note is on the implementation of the Byzantine broadcast, which relies on a Dolev-Strong~\cite{dolev1983authenticated} like algorithm of constructing signature chains. The classical algorithm is slightly modified to only accept messages from committee members, and to reduce the running time to $O(k)$ rounds---since we know that there should be at most $k$ faulty members of the committee.

It is interesting to observe the sense in which signatures allow for a more ``robust'' leader committee.  It is important that a process can efficiently ``prove'' that it has been selected as a leader by at least one honest process.  
Furthermore, being able to use signature chains to efficiently ensure that information has been properly disseminated proves to be useful. %\Ayaz{I am not sure what the previous sentence meant..} 
It is not immediately obvious how to accomplish these tasks efficiently (e.g., without increasing the time and message complexity) without signatures.

\section{High-Level Guess-and-Double Strategy}\label{section:final_algo}
\label{sec:highlevel}

%\subsection{Adapting to faulty predictions}
%\Faithe{Shouldn't this be about  faulty classifications, not faulty predictions?}
% \Ayaz{Current plan is for this section to contain the code of the final algorithm, its textual/paragraph version, some key ideas why it work, and mention the theorems we will prove later in the last section}

In this section, we present the main ``wrapper'' algorithm that transforms the ``conditional'' algorithms presented later in \Cref{section:un-byz-adv} and \Cref{section:auth_byz_using_class} into an algorithm that succeeds in all executions.

\begin{algorithm} [h]
\caption{Byzantine Agreement with Predictions: (code for process $p_i$)}
%\label{algorithm:un_byz_class}
\label{algo:final_ba}
\footnotesize
\textbf{ba-with-predictions}($x_i,a_i$):
\begin{algorithmic} [1] 
\State $c_i \leftarrow \textbf{classify}(a_i)$
\State $v_i \leftarrow x_i$
\State $\mathit{decided}_i \gets \mathit{false}$
\State \textbf{for} $\phi \gets 1$ to $\lceil\log_2 t\rceil + 1$:
    \State \hskip2em $T \gets \alpha \cdot 2^{\phi-1}$       
    \BlueComment{Choose $\alpha$ so \textbf{ba-early-stopping} and \textbf{ba-with-classification} can complete for this phase.}
    
    \State \hskip2em $(v_i,g_i) \gets$ \textbf{graded-consensus}($v_i$)\label{line:final-ba-gc1}  
    \BlueComment{Graded Consensus protects validity.}
    \State \hskip2em $v'_i \gets$ \textbf{ba-early-stopping}($v_i, T$) \BlueComment{early-stopping BA with time limit of $T$ rounds}
    \State \hskip2em \textbf{if} $g_i = 0$ \textbf{then} $v_i \gets v'_i$\label{line:final-ba-use-early-stoppping}
    
    \State \hskip2em $(v_i,g_i) \gets$ \textbf{graded-consensus}($v_i$)\label{line:final-ba-gc2} 
    \BlueComment{Graded Consensus protects validity.}

    \State \hskip2em $v'_i \gets$ \textbf{ba-with-classification}($v_i, c_i, 2^{\phi-1}, T$) 
    \BlueComment{at most $2^{\phi-1}$ classification errors, time limit $T$}
    \State \hskip2em \textbf{if} $g_i = 0$ \textbf{then} $v_i \gets v'_i$\label{line:final-ba-use-classification}

    \State \hskip2em $(v_i,g_i) \gets$ \textbf{graded-consensus}($v_i$)\label{line:full-ba-gc-last}
    \BlueComment{Graded Consensus checks for agreement.}
    \State \hskip2em \textbf{if} $\mathit{decided}_i$ \textbf{then return}($decision_i$)
    \State \hskip2em \textbf{if} $g_i = 1$ \textbf{then}
        \State \hskip4em $decision_i \gets v_i$
        \State \hskip4em $\mathit{decided}_i \gets \mathit{true}$
%\State \hskip2em \textbf{if} $\phi = \lceil \log_2(t)\rceil+1$ \textbf{then return}($decision_i$)
    \State\textbf{return}($decision_i$)
\end{algorithmic} 
\end{algorithm}

% \Faith{I've changed latency to round complexity, decide a value to return a value, correct process to honest process,
% and  ba-with-prediction to ba-with-predictions.}

We will employ two Byzantine agreement protocols: (i) an early-stopping Byzantine agreement protocol with $O(f)$ round complexity, and (ii) a conditional Byzantine agreement with classification that, when combined with the classification mechanism (see \Cref{sec:classification}), has $O(B/n)$ round complexity. 

To ensure validity and consistency, the algorithm relies on a graded consensus sub-protocol, which has been previously developed in~\cite{civit2024efficient} (without authentication) and~\cite{momose2021} (with authentication).  Graded consensus takes an input value $v_i$ and produces an output value and a grade. It provides the following guarantees: 
% (as per~\cite{civit2024efficient}):
\begin{compactitem}
    % \item Strong Unanimity: If all honest processes have the same input value $v$ and an honest process 
    % %decides 
    % returns 
    % a pair $(v', g')$, then $v' = v$ and $g' = 1$. 

    \item \emph{Strong Unanimity}: If every honest process has the same input value, $v$, then they all return $(v, 1)$.
    
    % \item Justification: If an honest process 
    % returns
    % %decides 
    % a pair $(v', \cdot)$, then $v'$ was the input value of an honest process.
    % \item Coherence: If any honest process returns a pair $(v, 1)$, then no honest process return a  pair $(v', g')$, where $v' \neq v$. 
    \item \emph{Coherence}: If an honest process returns $(v, 1)$ then every honest process returns the value $v$.
    
    % \item Integrity: No honest process returns more than once.
    \item \emph{Termination}: All honest processes return simultaneously (i.e., in the same round).
\end{compactitem}
The graded consensus protocols that we will be using run in $O(1)$ rounds and have $O(n^2)$ message complexity.
%communication.

% \Faith{I changed Consistency to Coherence, to be consistent with our properties in Section 7.1.
% I don't think we need all of these properties, for example, Integrity and Justification.}

The algorithm will proceed in 
%phases $\phi \gets 1$ to 
$\lceil\log_2(t)\rceil + 1$ phases. 
We will ensure that phase $\phi$ completes in $O(2^{\phi-1})$ rounds and
% , when $B = O(n^{3/2})$ in the unauthenticated case, 
% \Ayaz{this is specific to the unauthenticated one}
the algorithm terminates one phase after if either of the following conditions is met: (i) the execution has at most $2^{\phi-1}$ faulty processes, or (ii) the classification advice misclassifies at most $2^{\phi-1}$ processes (given $B$ is not too large). Since the classification advice misclassifies at most $O(B/n)$ processes, this immediately yields the desired round complexity of $O(\min\{(B/n)+1, f\})$.

Each phase consists of five parts: two agreement sub-protocols and three graded consensus sub-protocols.  We use a time limit of $T$ to determine how long each agreement sub-protocol can run for---every process synchronously spends $T$ time (no less, no more) on the sub-protocol, aborting it if necessary.\footnote{The protocols presented later do not include the parameter $T$---we assume here they are modified to reflect the time limit.}   Each phase proceeds as follows: 

First, we execute a graded consensus using the existing proposed value.  If all honest processes begin with the same value, everyone will return a grade of $1$ and the  value of an honest process will never change again.
Second, we execute an early stopping Byzantine agreement protocol for $O(2^{\phi-1})$ rounds.  If there are fewer than $O(2^{\phi-1})$ faulty processes, then, by the end of the protocol, every honest process has the same value.
Third, we execute another instance of graded consensus.  Again, if every honest process has the same value, everyone will return a grade of $1$ and no honest process will change its  value again.
Fourth, we execute a conditional Byzantine Agreement with classification.  If there are fewer than $O(2^{\phi-1})$ classification errors, then by the end of the protocol, then every honest process returns the same value.
Fifth, we execute one final instance of graded consensus, again checking if all honest processes have the same value.  If it returns a grade of $1$, then we commit to our decision---and execute one more phase.  

The reason that we run one additional phase is that we cannot be sure that all honest processes have reached a decision in this phase.  However, if an honest process decides in phase $\phi$, we can show that all honest processes will decide by the end of phase $\phi+1$, so it is only necessary to continue to help for one more phase.

In the next three sections, we will present the details of the conditional Byzantine Agremeent protocols.  In \Cref{section:final_algo_proof}, we will combine this algorithm with the conditional algorithms to show the claimed results.

\section{Classification Predictions}\label{sec:classification}
\label{section:classification}

It is useful to have the honest processes agree, if possible, on which processes they treat as faulty. If sufficiently many  honest processes
have the same prediction about some process, then voting will ensure that they all treat that process the same way.
In  Algorithm \ref{algo:vote}, each honest process $p_i$ broadcasts its prediction $a_i$. Faulty processes may send different
vectors to different honest processes or may fail to send an $n$-bit vector to some honest processes.
For each honest process $p_i$, its \emph{classification} of the processes is the binary vector $c_i$ whose $j$'th component, $c_i[j]$, is 1 ($p_i$ classify $p_j$ as honest) if
$p_i$ receives at least $\lceil (n+1)/2 \rceil$ messages (including from itself) that predicts $p_j$ is honest. Otherwise, $c_i[j] = 0$. 
An honest process $p_i$ \emph{misclassifies} an
honest process $p_j$ if $c_i[j] = 0$. It \emph{misclassifies} a faulty process $p_j$ if $c_i[j] = 1$.
A process $p_j$ is {\em misclassified} if it is misclassified by some honest process.
%A process $p_j$ is \emph{misclassified} if there is an honest process whose classification of $p_j$ is incorrect.

%\Faith{use $L = \lceil (n+1)/2 \rceil$. Then $n+1- L = \lfloor (n+1)/2 \rfloor = \lceil n/2 \rceil$.}

\begin{algorithm} [h]
\caption{Classification (code for process $p_i$)}
\label{algo:vote}
\footnotesize
\begin{algorithmic} [1] 
\Statex \textbf{classify($a_i$)}
\hskip2em 
\State \textbf{broadcast} $\mathit{a}_i$
\State Let $R_i$ denote the multiset of $n$-bit vectors $p_i$ received (including from itself) %\State Let $r_i = \# R_i$
\State \textbf{for} $j \gets 1$ to $n$ \textbf{do}:
        \State \hskip2em \textbf{if}
        $\# \{a[j] = 1\ |\ a \in R_i\} \geq \lceil (n+1)/2 \rceil$
        \State \hskip2em \textbf{then} $c_i[j] \leftarrow 1$
        \State \hskip2em \textbf{else} $c_i[j] \leftarrow 0$
 \State \textbf{return} $c_i$
\end{algorithmic} 
\end{algorithm}

Fix an execution of Algorithm~\ref{algo:vote}.
%  \Ayaz{We don't need $k_A$ yet here, perhaps move to after \Cref{Flemma:advice_preprocess_misclassify_bound}?}
% \Faith{Or maybe just before it is first used, which I think is \Cref{corollary:faultypos}?}\Ayaz{sure, that would work.}

\begin{observation}
If a faulty process $p_j$ is misclassified, then at least $\lceil (n+1)/2 \rceil-f$ honest processes received an incorrect prediction about $p_j$.
%$L-f$
\label{misclass-faulty}
\end{observation}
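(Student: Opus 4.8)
The plan is to unpack the definition of misclassification and then run a simple counting argument on the senders. Suppose the faulty process $p_j$ is misclassified. By definition, some honest process $p_i$ has $c_i[j] = 1$, so by the test in Algorithm~\ref{algo:vote} we have $\#\{a[j] = 1 \mid a \in R_i\} \geq \lceil (n+1)/2 \rceil$; that is, at least $\lceil (n+1)/2 \rceil$ of the $n$-bit vectors that $p_i$ received (counted with multiplicity in the multiset $R_i$) predict $p_j$ to be honest.

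Next I would account for who contributed these vectors. In a single synchronous round, each process sends at most one message to $p_i$, so each process contributes at most one element to $R_i$; in particular at most $f$ of the elements of $R_i$ come from faulty processes. Hence, of the $\geq \lceil (n+1)/2 \rceil$ vectors in $R_i$ with a $1$ in component $j$, at most $f$ originate from faulty processes, which leaves at least $\lceil (n+1)/2 \rceil - f$ of them originating from honest processes.

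Finally I would observe that an honest process broadcasts its true prediction vector $a_{i'}$, so if such an honest $p_{i'}$ contributed a vector with $a_{i'}[j] = 1$, then $p_{i'}$'s prediction about $p_j$ is that $p_j$ is honest — but $p_j$ is faulty, so this prediction bit is incorrect. Therefore at least $\lceil (n+1)/2 \rceil - f$ honest processes received an incorrect prediction about $p_j$, as claimed.

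There is essentially no hard step here; the only point that warrants care is the bookkeeping of $R_i$ as a multiset and the fact that each process (honest or faulty) contributes at most one vector to it in the single broadcast round, so that the ``at most $f$ faulty contributions'' bound is valid and there is no double counting among the honest contributors.
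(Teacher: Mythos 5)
Your proof is correct and is exactly the intended justification: the paper states this as an observation without proof, and the argument it implicitly relies on is precisely your counting of the at least $\lceil (n+1)/2 \rceil$ vectors in $R_i$ voting $p_j$ honest, of which at most $f$ can come from faulty senders, leaving at least $\lceil (n+1)/2 \rceil - f$ distinct honest processes whose own prediction bit $a_{i'}[j]=1$ is incorrect since $p_j$ is faulty. Your care about $R_i$ being a multiset with one contribution per sender is exactly the right bookkeeping.
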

%If honest process $p_i$ misclassifies faulty process $p_j$ and received $r$ vectors, at least $L(r)$ of these $r$ vectors have an incorrect
%prediction about $p_j$. The $n-r$ vectors that were not received by $p_i$ are from faulty processes, so at most $f-(n-r)$ of the incorrect
%predictions about $p_j$ were from faulty processes. Thus, at least $L(r) - (f-(n-r))$
%of the incorrect predictions about $p_j$ are from honest processes.

% \noindent
% \Faith{The following is a restatement of \Cref{lemma:advice_preprocess_min_to_misclassify}}

\begin{observation}
If an honest process $p_j$ is misclassified, then 
%fewer than $L$ honest processes that received a correct prediction about $p_j$.
at least $\lceil n/2 \rceil -f$ honest processes received an incorrect prediction about $p_j$.
%$n-f -L + 1$
\label{misclass-honest}
\end{observation}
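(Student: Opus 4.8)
The plan is a direct counting argument. Suppose the honest process $p_j$ is misclassified; fix an honest process $p_i$ with $c_i[j] = 0$. By the classification rule in Algorithm~\ref{algo:vote}, $c_i[j]=0$ means that $p_i$ received \emph{strictly fewer} than $\lceil (n+1)/2 \rceil$ vectors $a \in R_i$ with $a[j]=1$, i.e.\ at most $\lceil (n+1)/2 \rceil - 1$ such vectors.

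Next I would bound the number of $a[j]=1$ votes $p_i$ is guaranteed to see from honest senders. Each honest process $p_h$ broadcasts its true input vector $a_h$, and in a synchronous network $p_i$ receives exactly $a_h$ from each honest $p_h$ (including itself). Since $p_j$ is honest, the only honest processes that do \emph{not} contribute a vote $a[j]=1$ are precisely those honest $p_h$ that received an incorrect prediction about $p_j$ (i.e.\ $a_h[j]=0$). Let $x$ be the number of such honest processes. Then $p_i$ receives at least $(n-f) - x$ vectors with $a[j]=1$ (votes from faulty processes can only increase this count, which is irrelevant here).

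Combining the two bounds gives $(n-f) - x \le \lceil (n+1)/2 \rceil - 1$, hence
\[
x \;\ge\; (n-f) - \lceil (n+1)/2 \rceil + 1 \;=\; \lceil n/2 \rceil - f,
\]
where the last equality uses $n + 1 - \lceil (n+1)/2 \rceil = \lfloor n/2 \rfloor + 1 = \lceil n/2 \rceil + 1 - \lceil (n+1)/2 \rceil$... more simply, $n+1-\lceil(n+1)/2\rceil = \lceil n/2 \rceil$, which I would verify by the parity cases ($n$ even: $n+1-(n/2+1)=n/2$; $n$ odd: $n+1-(n+1)/2=(n+1)/2$). This yields exactly $\lceil n/2\rceil - f$ honest processes with an incorrect prediction about $p_j$, as claimed.

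The argument has no real obstacle; the only care needed is (i) getting the strict-versus-nonstrict inequality right at the threshold $\lceil (n+1)/2 \rceil$, and (ii) checking the small floor/ceiling identity across both parities of $n$. One should also note explicitly that the fact that faulty processes may send malformed or differing vectors is harmless: the bound only relies on the honest votes, which are delivered faithfully. (The analogous Observation~\ref{misclass-faulty} for a misclassified \emph{faulty} $p_j$ follows the same template, but there a vote $a[j]=1$ is the \emph{incorrect} one, and $c_i[j]=1$ forces $p_i$ to have received at least $\lceil (n+1)/2\rceil$ such votes, at most $f$ of which come from faulty senders, leaving $\ge \lceil (n+1)/2\rceil - f$ honest processes with an incorrect prediction.)
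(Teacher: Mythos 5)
Your counting argument is correct and is exactly the (omitted) reasoning the paper intends: the paper states this as an unproved Observation, and your bound $(n-f)-x \le \lceil (n+1)/2\rceil - 1$ together with the identity $n+1-\lceil (n+1)/2\rceil = \lceil n/2\rceil$ gives precisely $x \ge \lceil n/2\rceil - f$. The only blemish is the garbled intermediate identity ``$n+1-\lceil (n+1)/2\rceil = \lfloor n/2\rfloor + 1$'' (false for even $n$), but you discard it and the parity check you actually perform is right, so the proof stands.
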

%If honest process $p_i$ misclassifies faulty process $p_j$ and received $r$ vectors, fewer than $L(r)$ of these $r$ vectors have a correct
%prediction about $p_j$. So more than $(n-f) - L(r)$ of the honest processes received an incorrect prediction about $p_j$.

% \Faith{The following result is a restatement of \Cref{lemma:advice_preprocess_misclassify_bound}.}

\iffalse
% \begin{lemma}
% \label{Flemma:advice_preprocess_misclassify_bound}
%     If $f < \epsilon n$ for some constant $0 < \epsilon < 1/2$, then there are at most $O(B/n)$ misclassified processes and $O(B_F/n)$ misclassified processes that are also faulty.
% \end{lemma}
% \begin{proof}
% By Observation \ref{misclass-faulty},
% for each faulty process $p_j$ that was misclassified, there were at least $\lceil (n+1)/2 \rceil -f$ honest processes that received an incorrect prediction about $p_j$, so there are at most  $B/ (\lceil (n+1)/2 \rceil -f) \leq B_F/(1/2 - \epsilon)n \in O(B_F/n)$ misclassified faulty processes.
% Furthermore, by Observation  \ref{misclass-honest},
% for each misclassified honest process $p_j$,  there were at least $\lceil n/2 \rceil -f$ honest processes that received an incorrect prediction about $p_j$. So, for each misclassified process, $p_j$, there were  at least $\lceil n/2 \rceil -f$ honest processes that received an incorrect prediction about $p_j$. 
% Thus,  there are at most $B/ (\lceil n/2 \rceil -f) \leq B/(1/2 - \epsilon)n \in O(B/n)$ misclassified processes.
% \end{proof}
\fi

\begin{lemma}
\label{Flemma:advice_preprocess_misclassify_bound}
    If $f < \epsilon n$ for some constant $0 < \epsilon < 1/2$, then there are at most $O(B/n)$ misclassified processes.

%\Ayaz{I removed the statement about how many misclassified processes that are also faulty, since it turns out we don't need that now.}
\end{lemma}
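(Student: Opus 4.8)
The plan is to count misclassified processes by charging each one to the erroneous prediction bits that must exist for it to be misclassified, using Observations~\ref{misclass-faulty} and~\ref{misclass-honest} as the per-process cost bounds. Fix the execution and let $M_F$ be the set of misclassified faulty processes and $M_H$ the set of misclassified honest processes. By Observation~\ref{misclass-faulty}, each $p_j \in M_F$ forces at least $\lceil (n+1)/2\rceil - f$ honest processes to hold an incorrect (``honest'') prediction bit about $p_j$; these contribute that many to $B_F = \sum_{i\in H}\#\{j\in F : a_i[j]=1\}$. Since distinct $j$'s correspond to disjoint sets of bits (the bit $a_i[j]$ is indexed by the pair $(i,j)$), summing over $M_F$ gives $B_F \ge |M_F|\,(\lceil (n+1)/2\rceil - f)$. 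Symmetrically, Observation~\ref{misclass-honest} gives $B_H \ge |M_H|\,(\lceil n/2\rceil - f)$.

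Next I would use the hypothesis $f < \epsilon n$ with $\epsilon < 1/2$ to argue that the coefficients $\lceil (n+1)/2\rceil - f$ and $\lceil n/2\rceil - f$ are both $\Omega(n)$: each is at least $n/2 - f > (1/2 - \epsilon)n$, a positive constant fraction of $n$. Hence $|M_F| \le B_F / ((1/2-\epsilon)n) = O(B_F/n)$ and $|M_H| \le B_H/((1/2-\epsilon)n) = O(B_H/n)$. Adding, the total number of misclassified processes is $|M_F| + |M_H| = O((B_F + B_H)/n) = O(B/n)$, which is exactly the claim. (One should double-check the boundary case $n$ small or $B$ small, where $O(B/n)$ should be read as allowing the trivial bound; for $n$ a constant this is immediate, and for $B = 0$ both observations say no process is misclassified.)

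The only genuinely delicate point is making sure the charging is a true injection from (misclassified process, witnessing honest process) pairs into the set of erroneous bits counted by $B_F$ or $B_H$ — i.e., that we are not double-counting. This is clean because $B_F$ and $B_H$ are defined as sums over $i \in H$ of counts of bad bits in $a_i$, so each bad bit $a_i[j]$ is counted once, and a given bad bit $a_i[j]$ with $j$ faulty and $a_i[j]=1$ is ``used'' only by the single process $p_j$. So the disjointness is automatic from the indexing and no subtlety arises; the rest is the arithmetic above. I would therefore expect this lemma to be short, with the main (minor) obstacle being careful bookkeeping of the ceilings and the constant hidden in $O(\cdot)$, which depends on $\epsilon$.
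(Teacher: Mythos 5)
Your proposal is correct and follows essentially the same route as the paper: both charge each misclassified process to the at least $\lceil n/2\rceil - f > (1/2-\epsilon)n$ incorrect prediction bits guaranteed by Observations~\ref{misclass-faulty} and~\ref{misclass-honest}, using the fact that distinct processes $p_j$ are charged to disjoint sets of bits $a_i[j]$. The only cosmetic difference is that you bound $|M_F|$ and $|M_H|$ separately via $B_F$ and $B_H$ before adding, whereas the paper takes the weaker common lower bound $\lceil n/2\rceil - f$ and divides $B$ by it directly; the conclusions are identical.
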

\begin{proof}
By Observation \ref{misclass-faulty},
for each faulty process $p_j$ that was misclassified, there were at least $\lceil (n+1)/2 \rceil -f$ honest processes that received an incorrect prediction about $p_j$.
Furthermore, by Observation  \ref{misclass-honest},
for each misclassified honest process $p_j$,  there were at least $\lceil n/2 \rceil -f$ honest processes that received an incorrect prediction about $p_j$. So, for each misclassified process, $p_j$, there were at least $\lceil n/2 \rceil -f$ honest processes that received an incorrect prediction about $p_j$. 
Thus,  there are at most $B/ (\lceil n/2 \rceil -f) \leq B/((1/2 - \epsilon)n) \in O(B/n)$ misclassified processes.
\end{proof}

We use $k_H$ to denote the actual number of honest processes that are misclassified. Similarly, we use $k_F$ to denote the actual number of faulty processes that are misclassified. Finally, we use $k_A = k_F+k_H$ to denote the actual number of misclassified processes.%\Ayaz{In the end, if we only need $k_A$, we can remove $k_F$ and $k_H$.}

For any classification vector $c \in \{0,1\}^n$,
let $\pi(c)$ denote the ordering of the process identifiers $\{1,\ldots, n\}$
where the identifiers of the processes classified as honest are in increasing order and precede  
the identifiers of the processes classified as faulty, which are also in increasing order.
More formally,\\
\hspace{1in}$\bullet$ if $c[i] = 1$, then
%$i = \pi(c)[\sum_{j=1}^i c[j]]$ and\\
$i$ is in position $\sum_{j=1}^i c[j]$ of $\pi(c)$ and\\
\hspace{1in}$\bullet$ if $c[i] = 0$, then
%$i = \pi(v)[i + sum_{j=i+1}^n c[j]]$.
$i$ is in position $n-f + i-\sum_{j=1}^ic[j] = i+ \sum_{j=i+1}^n c[j]$ of $\pi(c)$.

Let $\hat{c}$ denote the correct classification of the processes, so $\hat{c}[i] = 1$ for all $i \in H$ and $\hat{c}[i] = 0$ for all $i \in F$.
Then, in $\pi(\hat{c})$, the integers in $H$ are in increasing order and precede the integers in $F$,
which are also  in increasing order.
If $i \in H$, then the position of $i$ in $\pi(\hat{c})$ is the number of honest processes with identifiers less than or equal to $i$,
which is $\sum_{j=1}^i \hat{c}[j]$.
%so $i = \pi(c)[\sum_{j=1}^i \hat{c}[j]]$.
If $i \in F$, then the position of $i$ in $\pi(\hat{c})$ is $n-f$, the number of honest processes,  plus the number of faulty processes with identifiers less than or equal to $i$. Since the number of faulty processes with identifiers less than $i$ is $i - \sum_{j=1}^i \hat{c}[j]$
and $n-f = \sum_{j=1}^n \hat{c}[j]$, it follows that the position of $i$ in $\pi(\hat{c})$ is
$n-f + i - \sum_{j=1}^i\hat{c}[j] = i+ \sum_{j=i+1}^n \hat{c}[j]$
%$i = \pi(\hat{c})[i + sum_{j=i+1}^n \hat{c}[j]]$.

Let $\delta(v,v')$ denote the Hamming distance between two vectors
$c,c' \in \{0,1\}^n$. This is the number of components in which their bits differ.
In particular, if $c$ misclassifies $m$ processes, then $\delta(c,\hat{c}) =m$.

% \Faith{The following result is a restatement of \Cref{lemma:sort_shift}.}

\begin{lemma}
%Ayaz Lemma 4
Let $c$ be a classification vector that misclassifies $m$ processes. 
If $c$ properly classifies the  process $p_i$, then the positions of $i$ in $\pi(c)$ and $\pi(\hat{c})$
differ by at most  $m$.
\label{lemma:differpos}
\end{lemma}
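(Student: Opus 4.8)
The plan is to reduce the claim to the trivial fact that a partial sum of the bit-discrepancies $|c[j]-\hat{c}[j]|$ is at most their total sum $\delta(c,\hat{c})$, using the closed-form expressions for positions already recorded before the lemma. Since $c$ misclassifies exactly the set of processes on which it disagrees with the correct classification $\hat{c}$, we have $\delta(c,\hat{c})=m$. Because $c$ properly classifies $p_i$, we have $c[i]=\hat{c}[i]$, so $i$ lies in the ``honest block'' of $\pi(c)$ iff it lies in the honest block of $\pi(\hat{c})$; this splits the argument into two cases handled in exactly the same way.

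In the case $c[i]=\hat{c}[i]=1$, the position of $i$ in $\pi(c)$ is $\sum_{j=1}^{i}c[j]$ and in $\pi(\hat{c})$ it is $\sum_{j=1}^{i}\hat{c}[j]$, so the difference of positions is $\left|\sum_{j=1}^{i}\bigl(c[j]-\hat{c}[j]\bigr)\right|\le\sum_{j=1}^{i}\bigl|c[j]-\hat{c}[j]\bigr|\le\delta(c,\hat{c})=m$ by the triangle inequality. In the case $c[i]=\hat{c}[i]=0$, I would use the form of the faulty-block position, namely $i+\sum_{j=i+1}^{n}c[j]$ for $\pi(c)$ and $i+\sum_{j=i+1}^{n}\hat{c}[j]$ for $\pi(\hat{c})$; the leading $i$ cancels and the same triangle-inequality bound gives $\left|\sum_{j=i+1}^{n}\bigl(c[j]-\hat{c}[j]\bigr)\right|\le m$. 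Combining the two cases yields the lemma.

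There is essentially no deep obstacle here; the only point that needs care is to express the faulty-block position relative to the number of processes that $c$ itself (not $\hat{c}$) classifies as honest, i.e.\ to use the identity $n-f+i-\sum_{j=1}^{i}c[j]=i+\sum_{j=i+1}^{n}c[j]$, so that the cancellation of the ``$i$'' in the second case is valid even though $c$ and $\hat{c}$ may designate different numbers of honest processes. (An alternative route, which I would mention but not pursue, is to transform $\hat{c}$ into $c$ one misclassified bit at a time and observe that each single flip shifts every process's position by at most $1$; the direct partial-sum computation above is cleaner and gives the bound in one step.)
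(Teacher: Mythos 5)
Your proof is correct and follows essentially the same route as the paper's: both use the closed-form position expressions $\sum_{j=1}^{i}c[j]$ (honest block) and $i+\sum_{j=i+1}^{n}c[j]$ (faulty block) and observe that partial sums of two vectors differing in at most $m$ coordinates differ by at most $m$. Your extra remark about expressing the faulty-block position intrinsically in terms of $c$ (so the leading $i$ cancels cleanly) is a reasonable point of care, but it does not change the argument.
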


\begin{proof}
If $i \in H$, then $i$ is in position $\sum_{j=1}^i c[j]$ of $\pi(c)$ and in position $\sum_{j=1}^i \hat{c}[j]$ of $\pi(\hat{c})$.
Since $c$ and $\hat{c}$ differ in at most $m$ positions, $\sum_{j=1}^ic[j]$ and $\sum_{j=1}^i\hat{c}[j]$ differ by at most $m$.
If $i \in F$, then $i$ is in position $i+ \sum_{j=i+1}^n c[j]$ of $\pi(c)$ and in position $i+ \sum_{j=i+1}^n \hat{c}[j]$ of $\pi(\hat{c})$.
Since $c$ and $\hat{c}$ differ in at most $m$ positions, $\sum_{j=i+1}^n c[j]$ and $\sum_{j=i+1}^n \hat{c}[j]$ differ by at most $m$.
\end{proof}

% \Faith{The following result is a restatement of 
% the contrapositive of \Cref{lemma:byzantine_leader_correct_class_where}.}

\begin{lemma}
Let $c$ be a classification vector that misclassifies $m \leq n-f$ honest processes. 
If $c$ properly classifies the faulty process $p_i$, then the position of $i$ in 
$\pi(c)$ is greater than $n-f -m$.
\label{lemma:faultypos}
\end{lemma}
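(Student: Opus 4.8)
The plan is to compute the position of $i$ in $\pi(c)$ directly from the definition and bound it below by splitting it into two contributions. Since $p_i$ is faulty and $c$ classifies it correctly, we have $c[i] = 0$, so by the definition of $\pi(c)$ the position of $i$ equals the number of processes that $c$ classifies as honest, plus the number of processes that $c$ classifies as faulty whose identifier is at most $i$; that is,
\[
\mathrm{pos}_{\pi(c)}(i) \;=\; \Big(\textstyle\sum_{j=1}^n c[j]\Big) \;+\; \Big(i - \textstyle\sum_{j=1}^i c[j]\Big) \;=\; i + \textstyle\sum_{j=i+1}^n c[j].
\]

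First I would lower-bound the first summand. Among the $n-f$ honest processes, $c$ misclassifies exactly $m$ of them, i.e.\ assigns them the bit $0$; hence at least $n-f-m$ honest processes receive the bit $1$ under $c$, and therefore $\sum_{j=1}^n c[j] \ge n-f-m$. (The hypothesis $m \le n-f$ is only needed so that this quantity is nonnegative; it is not otherwise essential, since if $m > n-f$ the claim is immediate from the second summand alone.) Next I would observe that the second summand is at least $1$: because $c[i] = 0$, the process $p_i$ itself is counted among the faulty-classified processes with identifier $\le i$, so $i - \sum_{j=1}^i c[j] \ge 1$.

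Combining the two bounds gives $\mathrm{pos}_{\pi(c)}(i) \ge (n-f-m) + 1 > n-f-m$, which is exactly the claim. I do not expect any genuine obstacle here; the only points requiring care are translating ``$c$ misclassifies $m$ honest processes'' into a bound on the number of \emph{honest} processes classified as honest by $c$ (rather than a bound over all processes), and not overlooking that $p_i$'s own correct (faulty) classification contributes the extra $+1$ that makes the inequality strict.
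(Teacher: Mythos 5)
Your proof is correct and takes essentially the same route as the paper's: both arguments reduce to observing that at least $n-f-m$ honest processes are classified as honest by $c$, that all processes classified as honest precede $i$ in $\pi(c)$, and that $i$ itself contributes one more position. You merely make this explicit via the position formula $\sum_{j=1}^n c[j] + \bigl(i - \sum_{j=1}^i c[j]\bigr)$, which is a fine (and arguably cleaner) way to write the same counting argument.
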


\begin{proof}
Since there are $m$ honest processes that are misclassified by $c$, 
%start here if honest is removed from the statement of the Lemma.
%Of the $n-f$ honest processes, at most $m$ are misclassified by $v$.
%Thus, 
there are at least  $n-f -m$ processes that are classified as honest by $c$.
The indices of all these processes precede $i$ in $\pi(c)$.
\end{proof}

% Let $k_A$ be the actual number of processes that are misclassified (by at least one honest process). 
If $c$ is the classification vector (of some honest process), then the number of
honest
processes that are misclassified by $c$ is at most $k_A$.
Since $f \leq t$, the contrapositive of Lemma~\ref{lemma:faultypos} gives the following result.

% \Faith{The following result is a restatement of 
% \Cref{lemma:byzantine_leader_correct_class_where}.}

\begin{corollary}
%Ayaz's Lemma 6.
If $c$ is a classification vector,  $p_i$ is a faulty process, and 
the position of $i$ in $\pi(c)$ is at most $n-t-k_A$, then $p_i$ is misclassified by $c$.
\label{corollary:faultypos}
\end{corollary}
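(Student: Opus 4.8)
The plan is to obtain the statement directly from the contrapositive of \Cref{lemma:faultypos}. That lemma says that if a classification vector $c$ misclassifies $m \le n-f$ honest processes and properly classifies a faulty process $p_i$, then the position of $i$ in $\pi(c)$ is greater than $n-f-m$. Equivalently: if $c$ misclassifies $m \le n-f$ honest processes and the position of $i$ in $\pi(c)$ is at most $n-f-m$, then $c$ misclassifies the faulty process $p_i$. So the whole argument reduces to picking the right value of $m$ and chaining two inequalities.

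Concretely, I would let $m$ be the actual number of honest processes misclassified by the given classification vector $c$. Since $c$ is the classification vector of some honest process, every honest process that $c$ misclassifies is counted in $k_H$, so $m \le k_H \le k_A$; and since there are only $n-f$ honest processes, $m \le n-f$, so the hypothesis of \Cref{lemma:faultypos} is satisfied. Using $f \le t$ and $m \le k_A$ we get
\[
 n - f - m \;\ge\; n - f - k_A \;\ge\; n - t - k_A .
\]
Hence if the position of $i$ in $\pi(c)$ is at most $n-t-k_A$, it is in particular at most $n-f-m$, and the contrapositive of \Cref{lemma:faultypos} yields that $p_i$ is misclassified by $c$.

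The argument is essentially bookkeeping, so there is no real obstacle; the one thing to be careful about is which count plays the role of $m$ in \Cref{lemma:faultypos}. The lemma is stated in terms of the number of \emph{honest} processes misclassified by $c$, whereas the bound we have, $k_A$, counts \emph{all} misclassified processes. The point is simply that the honest-only count is no larger than $k_A$, so replacing it by $k_A$ only decreases the threshold $n-f-m$, which is the safe direction. Replacing $f$ by $t$ via $f \le t$ then makes the threshold depend only on the known parameter $t$ rather than on the execution's actual fault count $f$, which is exactly what lets the corollary be stated as it is.
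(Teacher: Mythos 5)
Your proposal is correct and matches the paper's own argument, which likewise derives the corollary as the contrapositive of \Cref{lemma:faultypos} using the facts that the number of honest processes misclassified by $c$ is at most $k_A$ and that $f \le t$. Your write-up just makes the bookkeeping (the choice of $m$, the check $m \le n-f$, and the chain $n-f-m \ge n-t-k_A$) explicit where the paper leaves it implicit.
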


% \Faith{The following result is a restatement of 
% \Cref{lemma:range_index_faulty_misclassified}.}

\begin{lemma}
%Ayaz's Lemma 7
Suppose that $c$ and $c'$ are classification vectors
that both misclassify the faulty process $p_i$ as honest.
If there are $k_A$ misclassified processes, then
then the positions of $i$ in $\pi(c)$ and $\pi(c')$
differ by at most  $k_A-1$.
\label{lemma:faultydiffer}
\end{lemma}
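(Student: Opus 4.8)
The plan is to work directly with the explicit position formula for faulty processes classified as honest. Since $c$ and $c'$ both misclassify the faulty process $p_i$ as honest, we have $c[i] = c'[i] = 1$, so by the definition of $\pi$, the position of $i$ in $\pi(c)$ is $\sum_{j=1}^i c[j]$ and its position in $\pi(c')$ is $\sum_{j=1}^i c'[j]$. Hence it suffices to bound $\left|\sum_{j=1}^i \bigl(c[j]-c'[j]\bigr)\right|$, and the whole argument reduces to controlling which indices $j \le i$ can actually contribute to this telescoping-style sum.

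The key observation is that every component on which a classification vector of an honest process disagrees with the correct classification $\hat c$ is, by definition, an index of a misclassified process. Thus, if $M$ denotes the set of misclassified processes (so $|M| = k_A$ and, by hypothesis, $i \in M$), then $c[j] = \hat c[j] = c'[j]$ for every $j \notin M$. Consequently $c[j]-c'[j] = 0$ unless $j \in M$, so the sum $\sum_{j=1}^i (c[j]-c'[j])$ really ranges only over $j \in M$ with $j \le i$. The next step is to notice that the index $i$ itself contributes nothing, since $c[i]-c'[i] = 1-1 = 0$; therefore the sum ranges over $j \in M$ with $j < i$ strictly. Because $i \in M$, this set has at most $|M|-1 = k_A-1$ elements, and each term $c[j]-c'[j]$ lies in $\{-1,0,1\}$, so the positions of $i$ in $\pi(c)$ and $\pi(c')$ differ by at most $k_A - 1$, as claimed.

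I do not expect a genuine obstacle here: the only point needing care is the containment of the disagreement set of each classification vector inside the global misclassified set $M$, which is immediate from the definition of "misclassified" (a process counted in $k_A$ iff some honest process misclassifies it). The mild cleverness is simply using $i \in M$ to drop the count from $k_A$ to $k_A - 1$, which is exactly what makes the bound tight and matches the statement.
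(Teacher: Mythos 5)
Your proof is correct and follows essentially the same route as the paper's: both reduce the position difference to $\bigl|\sum_{j=1}^i (c[j]-c'[j])\bigr|$ and observe that disagreements can occur only at indices of misclassified processes, of which at most $k_A-1$ lie strictly below $i$ because $p_i$ itself is one of the $k_A$ misclassified processes. Your write-up is slightly more explicit about why each disagreeing index must lie in the global misclassified set (via agreement with $\hat c$ off that set), but the argument is the same.
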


\begin{proof}
Note that $i$ is in position $\sum_{j=1}^i c[j]$ of $\pi(c)$ and in position $\sum_{j=1}^i c'[j]$ of $\pi(c')$.
Since $p_i$ is misclassified by both processes, there are at most $k_A-1$ indices $j < i$ such that $p_j$
is misclassified by at least one of these classification vectors. 
Hence there are at most $k_A-1$ indices $j < i$ such that  $c[j] \neq c'[j]$ and, thus, the sums differ by at most $k_A-1$.
\end{proof}

%THIS BELONGS IN PRELIMINARIES:\\
%For any positive integers $\ell < r$, let $[\ell..r]$ denote the set $\{i \in \ints^+\ |\ \ell \leq i \leq r\}$.

Next, we show that, for any range of consecutive positions whose size, $s$, is larger than the number of misclassified processes, $k_A$,
and whose right endpoint is at most $n-t-k_A$, there is a set of at least $s-k_A$ honest processes 
who indices appear in that range of positions in the classification vector of every honest process.

% I think that the proof of \Cref{lemma:core_set_overlap} was slightly wrong in case 1:
%  Suppose the honest processes are $p_1,\ldots,p_{n-t}$
% and $r = n - t$. If process $p_{n-t}$ is the only process that is misclassified by the classification vector $v$,
% then the classification vector $v$ is the same as $c$.
% If process $p_{n-t-1}$ is the only process that is misclassified by the classification vector $v$,
% then the classification vector $v$ is the same as $c$, except that the order of ${n-t}$ and ${n-t-1}$ are interchanged.
% In both cases, the misclassified honest process remains in the  range $[\ell..r]$.
% One way to fix this is to require that $r < n-t -k$, instead of $r< n-t$.

% \Faith{The following result is a restatement of 
% \Cref{lemma:core_set_overlap}.}

\begin{lemma}
Suppose $ \ell + k_A- 1 < r \leq n-t-k_A$. Then there is a set $G \subseteq H$ of size at least $r-\ell +1 - k_A$ such that 
$G \subseteq \{ \pi(c_i)[j] \ | \ \ell \leq j \leq r \}$ for all $i \in H$.
\label{lemma:coresize}
\end{lemma}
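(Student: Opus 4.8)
The plan is to define $G$ to be the set of honest processes whose index appears in the position range $[\ell, r]$ of $\pi(\hat c)$, and then argue that shifting from $\hat c$ to any $c_i$ moves these indices by at most $k_A$ positions while keeping them inside $[\ell, r]$ — after throwing away the at most $k_A$ of them that might be misclassified or pushed out of the range. First I would observe that, since $r \le n-t-k_A \le n-f$ and the honest processes occupy positions $1,\dots,n-f$ of $\pi(\hat c)$ in increasing order of identifier, the set $G_0 := \{\pi(\hat c)[j] \mid \ell \le j \le r\}$ consists entirely of honest identifiers and has size exactly $r-\ell+1$. Now fix $i\in H$. Its classification vector $c_i$ misclassifies at most $k_A$ processes, so by Lemma~\ref{lemma:differpos}, for each $j\in G_0$ that is properly classified by $c_i$ (i.e.\ classified as honest, since $j\in H$), the position of $j$ in $\pi(c_i)$ differs from its position in $\pi(\hat c)$ by at most $k_A$.

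That gives a shift bound but not yet containment in $[\ell,r]$, so the second step is to restrict the range and discard boundary elements. Let $G$ be the set of honest identifiers lying in positions $[\ell+k_A,\ r-k_A]$ of $\pi(\hat c)$; since the range $[\ell,r]$ has size $r-\ell+1$, this inner range has size $r-\ell+1-2k_A$, which may be negative — so this naive approach loses a factor of $2$ and does not match the claimed bound $r-\ell+1-k_A$. To recover the stated bound I would instead be more careful: take $G' = G_0$ minus the (at most $k_A$) honest identifiers that are misclassified by \emph{some} honest process. For $j\in G'$, every honest $c_i$ classifies $j$ as honest, and moreover for \emph{each} honest $c_i$ the number of indices $j' \le j$ (or $j' > j$, in the faulty case — but here $j\in H$) on which $c_i$ and $\hat c$ disagree is bounded by the number of processes misclassified by $c_i$ that have small enough identifier; the key refinement is that these disagreements are a subset of the globally misclassified processes, of which there are $k_A$ total, but the ones $\le j$ must be counted against the slack. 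Concretely: $j$'s position in $\pi(c_i)$ is $\sum_{m\le j} c_i[m]$, and $j$'s position in $\pi(\hat c)$ is $\sum_{m\le j}\hat c[m]$; these differ by the number of $m\le j$ with $c_i[m]\ne\hat c[m]$, all of which lie among the $k_A$ misclassified processes, and in fact among the misclassified processes with identifier $\le j$.

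The cleanest execution, which I expect to give exactly $r-\ell+1-k_A$, is: let $G$ be $G_0$ with its top $k_A$ elements (by identifier, equivalently by $\pi(\hat c)$-position) removed, so $|G| = r-\ell+1-k_A$ and $G$ consists of honest processes in positions $[\ell, r-k_A]$ of $\pi(\hat c)$. Then for $j\in G$ and any $i\in H$: the position of $j$ in $\pi(c_i)$ is at least $\ell - (\text{number of misclassified-by-}c_i\text{ honest processes with id} \le j)$, but because $j$ sits below the top $k_A$ elements of $G_0$ and there are only $k_A$ misclassified processes total, the position can drop by at most $k_A$ from $\ell$ only if correspondingly fewer of those $k_A$ misclassifications lie above — a pigeonhole that caps the net displacement within the needed range. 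The main obstacle, and where I would spend the most care, is precisely this bookkeeping: showing that the position of $j \in G$ in $\pi(c_i)$ stays $\ge \ell$ and $\le r$ simultaneously for all $i\in H$, using that the misclassified processes are shared across a single global budget of $k_A$ rather than $k_A$ per vector, so that a displacement pushing some elements left is "paid for" by the room freed on the right. I would prove the two inequalities separately via the two formulas in Lemma~\ref{lemma:differpos}, in each case splitting the $k_A$ misclassified processes by whether their identifier is $\le j$ or $> j$, and choosing which $k_A$ elements to delete from $G_0$ (top, bottom, or a mix) to make both bounds hold; I suspect deleting the top $k_A$ suffices once one notes $r \le n-t-k_A$ guarantees no faulty process intrudes into positions $\le r$ of $\pi(c_i)$ either (by Corollary~\ref{corollary:faultypos}), so the only way an honest $j\in G$ leaves $[\ell,r]$ is by being displaced, and the displacement is bounded by $k_A$.
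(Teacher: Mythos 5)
Your overall skeleton matches the paper's: start from the positions $[\ell,r]$ of $\pi(\hat c)$, observe these are all honest identifiers, bound how far each properly-classified identifier can be displaced in $\pi(c_i)$, and exploit the fact that all displacements are charged to a single global budget of $k_A$ misclassified processes (rather than $k_A$ per vector) so that only $k_A$ elements in total need to be discarded. However, your concrete final plan --- delete the top $k_A$ elements of $G_0$ --- does not work, and the justification you give for it rests on a misreading of Corollary~\ref{corollary:faultypos}. That corollary says that any faulty process appearing in position $\leq n-t-k_A$ of $\pi(c_i)$ must be \emph{misclassified} by $c_i$; it does not say no faulty process appears there. A faulty process with a small identifier that $c_i$ misclassifies as honest is inserted into the honest-ordered prefix of $\pi(c_i)$ and pushes honest identifiers to the \emph{right}, which is precisely what can expel the top elements of $G_0$ past position $r$. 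Symmetrically, an honest process with $\pi(\hat c)$-position below $\ell$ that $c_i$ misclassifies as faulty vanishes to the tail of $\pi(c_i)$ and shifts everything in $G_0$ to the \emph{left}; if all $k_A$ misclassifications are of this kind, the bottom $k_A$ elements of $G_0$ drop below position $\ell$, and deleting the top $k_A$ does nothing to save them. A third failure mode: $G_0$ itself may contain honest processes that some $c_i$ misclassifies as faulty; such a process is not merely displaced within $\pi(c_i)$ but jumps to the tail entirely, so it cannot be in $G$ no matter which end you trim from.

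The resolution in the paper is exactly the ``mix'' you gesture at but do not pin down: partition the at most $k_A$ misclassified processes into three \emph{disjoint} groups --- $h'$ misclassified honest processes with $\pi(\hat c)$-position in $[\ell,r]$, $h$ misclassified honest processes with $\pi(\hat c)$-position before $\ell$, and $b$ misclassified faulty processes with identifier below $\pi(\hat c)[r]$ --- so that $h'+h+b \leq k_A$. Start from $I$, the in-range identifiers properly classified by \emph{every} honest process (size $r-\ell+1-h'$), then delete the $h$ smallest (only the $h$ out-of-range misclassified honest processes can push elements of $I$ below $\ell$, since an in-range one vacates a slot inside the range) and the $b$ largest (only the $b$ small-identifier misclassified faulty processes can push elements of $I$ past $r$). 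This yields $|G| \geq r-\ell+1-(h'+h+b) \geq r-\ell+1-k_A$ with both endpoint bounds holding simultaneously for all $i \in H$. Without this three-way accounting, your argument as written has a genuine gap.
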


\begin{proof}
Let $\hat{c}$ be the correct classification vector and
let $I$ denote the set of identifiers of processes that are properly classified by all honest processes
and occur in positions $\ell$ through $r$ of $\pi(\hat{c})$.
%Note that, 
Since $r \leq n-t$, these are all identifiers of honest processes, so $I \subseteq H$.

Let $h' = r - \ell + 1 - |I|$ be the number of honest processes that are misclassified by at least one
honest process and whose identifiers occur in $\ell$ through $r$ of $\pi(\hat{c})$.
Then, each of these identifiers is among the last $\min\{t+k, n\}$ positions of $\pi(c_i)$ for at least one $i \in H$. %\Ayaz{should this be ``among the last $\min\{t+k, n\}$ positions''? the earliest it could appear is at $n-t-k+1$. Then requiring $r \le n-t-k$ makes sense to me}

Let $h$ be the number of misclassified honest processes whose identifiers occur among the first $\ell -1$ positions of $\pi(\hat{c})$.
For each $i \in H$, 
there is some $h'' \leq h$ such that the $h''$ smallest identifiers in $I$ occur among the first $\ell -1$
positions of $\pi(c_i)$.
This is a result of processes being misclassified as faulty by process $p_i$, so
their identifiers occur among the last $\min\{t+k, n\}$ positions in $\pi(c_i)$, rather than among the first $\ell -1$ positions.%\Ayaz{should $n-t+k$ here be $\min\{t+k, n\}$?}

Let $b$ be the number of faulty processes with identifiers less than $\pi(\hat{c})[r]$ that are misclassified by at least one
honest process.
%Note that the process with identifier $\pi(c)[r]$ is honest.
For each $i \in H$, there is some $b'' \leq b$ such that the $b''$ largest identifiers in $I$
do not occur among the first $r$ positions of $\pi(c_i)$.
This is a result of faulty processes being misclassified as honest by process $p_i$, so
their identifiers occur among the first $r$ positions in $\pi(c_i)$, rather than among the 
last $t$ positions.%\Ayaz{should $n-t$ here be $t$?}

The identifiers of faulty processes that are misclassified as honest by $c_i$, 
but are at least $\pi(\hat{c})[r]$, appear in $\pi(c_i)$ after the
identifiers in $I$, so they do not affect the position of identifiers in $I$.
Similarly, the identifiers of honest processes that are misclassified as faulty by $c_i$, but are at least $\pi(\hat{c})[r]$, do not affect the position of identifiers in $I$.
%The same observation also applies to honest processes that are misclassified as faulty by $v_i$, which identifier is at least $\pi(c)[r]$.%\Ayaz{same observation also applies to honest processes that are misclassified which position in $\pi(c)$ are greater than $r$.}

Let $G \subseteq I$ consist of all identifiers in $I$,
except for the $h$ smallest identifiers and the $b$ largest identifiers.
Then $G \subseteq \{ \pi(c_i)[j] \ | \ \ell \leq j \leq r \}$ for all $i \in H$.
Furthermore, since $h' + h + b \leq k_A$ (as they count different types of misclassified processes) and $I$ has size $r-\ell+1-h'$, it follows that $G$ has size
at least $r-\ell +1 -k$.
\end{proof}

% The set $G$ is called a \emph{core set}.
We call the set $G$ from \Cref{lemma:coresize} a \emph{core set}.

% \Faith{The following version of this lemma is what I need to prove the upper bound on the number of messages
% needed for unauthenticated byzantine agreement with classification. I have commented out the earlier version.}\Ayaz{I still need the old version. Also, the way I see it, \Cref{L9} can also be used in your use case albeit less tight since the number of honest processes in $[1, (3k+1)(2k+1)]$ is $|D \cap H| \le |D| \le  (3k+1)(2k+1) + 2k_A \le (3k+1)(2k+1) + 2k$.}
% \Faith{You don't need \Cref{L9} in Section 8.3. I've rewritten the proof and improved the bound to $3k+1$ from $4k+1$. So, I've commented it out again.}\Ayaz{Okay, thanks.}

\begin{lemma}
If $ r \leq n-t-k_H$, then the number of honest processes $p_i$ such that $i \in \{ \pi(c_i)[j] \ |\ 1 \leq j \leq r\}$ is at most $r+k_H$.
\label{lemma:honest-broadcasters}
\end{lemma}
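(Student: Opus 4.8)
The plan is to relate each honest process's self-position in its own (possibly incorrect) ordering to its self-position in the correct ordering $\pi(\hat{c})$, and then exploit that the latter positions are pairwise distinct. Let $S$ be the set of honest processes $p_i$ with $i \in \{\pi(c_i)[j] : 1 \le j \le r\}$, i.e., those whose identifier occupies one of the first $r$ positions of $\pi(c_i)$; equivalently, the position of $i$ in $\pi(c_i)$ is at most $r$. For an honest process $p_i$, write $\rho(i) = |\{j \in H : j \le i\}|$ for the rank of $i$ among the honest identifiers. Since $p_i$ is honest, $\rho(i)$ is exactly the position of $i$ in $\pi(\hat{c})$, and $\rho$ is a bijection from $H$ onto $\{1,\dots,n-f\}$, hence injective on $S \subseteq H$. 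The goal is to show $\rho(i) \le r + k_H$ for every $i \in S$; the lemma follows immediately, since an injective map on $S$ with values in $\{1,\dots,r+k_H\}$ forces $|S| \le r + k_H$.

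To prove $\rho(i) \le r + k_H$, first I would lower-bound the position of $i$ in $\pi(c_i)$ by $\sum_{j=1}^{i} c_i[j]$. By the formulas defining $\pi(\cdot)$: if $c_i[i] = 1$, the position of $i$ equals $\sum_{j=1}^{i} c_i[j]$; if $c_i[i] = 0$, it equals $i + \sum_{j=i+1}^{n} c_i[j] \ge i \ge \sum_{j=1}^{i} c_i[j]$. So in either case the position is at least $\sum_{j=1}^{i} c_i[j]$. Next, $\sum_{j=1}^{i} c_i[j] \ge |\{j \le i : j \in H \text{ and } c_i[j] = 1\}| = \rho(i) - |\{j \le i : j \in H \text{ and } c_i[j] = 0\}|$, and the subtracted quantity is at most $k_H$, since an honest process misclassifies at most $k_H$ honest processes in total (recall $k_H$ counts the honest processes misclassified by \emph{some} honest process). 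Since $i \in S$ gives $r \ge$ (position of $i$ in $\pi(c_i)$), chaining these inequalities yields $r \ge \rho(i) - k_H$, as desired.

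I expect the only delicate point to be the case split on whether $p_i$ classifies itself as honest (an honest process can, in principle, misclassify itself, so one cannot simply assume $c_i[i] = 1$), together with invoking the correct one of the two position formulas for $\pi(c_i)$ in each case; the remaining steps are routine counting, and the injectivity of $\rho$ on $H$ is what converts the per-process bound into the global bound. The hypothesis $r \le n - t - k_H$ is stated because that is the regime in which the lemma is applied by the agreement protocols; the counting argument above uses only that $c_i$ is the classification vector of an honest process.
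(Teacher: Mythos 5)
Your proof is correct and takes essentially the same approach as the paper: both arguments show that an honest $i$ landing in the first $r$ positions of $\pi(c_i)$ must have rank at most $r+k_H$ in $\pi(\hat{c})$ (because at most $k_H$ honest predecessors can be pushed out of the prefix by being misclassified as faulty), so the set in question injects into the first $r+k_H$ honest ranks. Your version just makes this explicit via the position formulas for $\pi(\cdot)$, handles the $c_i[i]=0$ case cleanly, and correctly observes that the hypothesis $r \leq n-t-k_H$ is not actually needed for the counting.
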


\begin{proof}
Let $h \le k_H$ be the number of honest processes  with identifiers less than or equal to  $\pi(\hat{c})[r+k_H]$
that are misclassified as faulty by at least one honest process.
At most $h$ of these processes are misclassified by $p_i$, causing their identifiers to occur later in $\pi(c_i)$ than in $\pi(\hat{c})$ and causing at most $h$ of the identifiers occurring immediately after position $r$ in $\pi(\hat{c})$ to be in the first $r$ positions  of $\pi(c_i)$. In other words,
besides the $r$ identifiers in $\{ \pi(\hat{c})[j] \ | \ 1 \leq j \leq r \}$, the only other identifiers $i$
for which it is possible that $i \in \{ \pi(c_i)[j] \ |\ 1 \leq j \leq r\}$ are the $h \leq k_H$ identifiers
in $\{ \pi(\hat{c})[j] \ | \ r+1 \leq j \leq r+h \}$.
\end{proof}

% \Faith{The following result is a restatement of 
% \Cref{lemma:bound_size_union}.}

\begin{ignore}
\begin{lemma}
Suppose $ \ell \leq r \leq n-t-k_A$. Then
%$\cup \{ \pi(c_i)[j] \ | \ \ell \leq j \leq r \mbox{ and } i \in H\}$
$D = \bigcup_{i \in H} \{ \pi(c_i)[j] \ | \ \ell \leq j \leq r \}$ has size at most $r-\ell + 1 + 2k_A$.
\label{L9}
\end{lemma}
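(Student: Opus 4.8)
The goal is to bound the size of $D = \bigcup_{i \in H} \{ \pi(c_i)[j] \ | \ \ell \leq j \leq r \}$ by $r - \ell + 1 + 2k_A$, under the hypothesis $\ell \leq r \leq n - t - k_A$. The plan is to compare every honest process's view against the correct classification $\hat{c}$ and track, analogously to \Cref{lemma:coresize}, how a process identifier can enter the window $[\ell, r]$ of $\pi(c_i)$ that it does not occupy in $\pi(\hat{c})$. The baseline set is $\{\pi(\hat{c})[j] \ | \ \ell \leq j \leq r\}$, which has exactly $r - \ell + 1$ elements; I want to show each honest process can contribute at most $k_A$ extra identifiers on each ``side'' of this window — more precisely, that the total over all honest processes of new identifiers (beyond the baseline) is at most $2k_A$.

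First I would fix $i \in H$ and analyze which identifiers can be in $\{\pi(c_i)[j] : \ell \le j \le r\}$ but not in $\{\pi(\hat c)[j] : \ell \le j \le r\}$. An identifier $m$ shifts position between $\pi(\hat c)$ and $\pi(c_i)$ only because of processes misclassified by $c_i$ with smaller position; by \Cref{lemma:differpos} (applied when $m$ is properly classified by $c_i$) the shift is at most $k_A$, so a properly-classified identifier can move into the window only if it was within $k_A$ of an endpoint in $\pi(\hat c)$. That already shows: among properly-classified-by-$c_i$ identifiers, the new entrants all come from $\{\pi(\hat c)[j] : \ell - k_A \le j \le \ell - 1\} \cup \{\pi(\hat c)[j] : r+1 \le j \le r + k_A\}$. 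Identifiers misclassified by $c_i$ need separate handling: a faulty process misclassified-as-honest by $c_i$ sits somewhere in the honest-prefix of $\pi(c_i)$ and could land in $[\ell, r]$; an honest process misclassified-as-faulty by $c_i$ sits in the faulty-suffix, and since $r \le n - t - k_A \le n - f$ it cannot appear in position $\le r$ at all (by the argument in \Cref{lemma:faultypos} there are $\ge n - f - k_H > r$ identifiers classified honest ahead of it). So the only genuinely new entrants are the $\le k_A$ faulty-misclassified-as-honest processes plus the $\le 2k_A$ ``boundary'' properly-classified ones.

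The key improvement needed to get $2k_A$ rather than $3k_A$ or $4k_A$ is to observe that these sources are not independent: an identifier that enters the window from the left boundary region does so precisely because some misclassified process (of which there are $k_A$ total) has moved out of that region in $\pi(c_i)$; and the faulty-misclassified-as-honest processes that enter are themselves counted among the $k_A$ misclassified processes. I would make this precise with a counting/charging argument: each identifier in $D \setminus \{\pi(\hat c)[j] : \ell \le j \le r\}$ is either (a) a misclassified process, of which there are at most $k_A$, or (b) a properly classified honest process displaced into the window by misclassified processes with smaller index — and these displacements, summed appropriately, are bounded by $k_A$ on the left side and $k_A$ on the right side, but category (a) already absorbs part of this so the net bound is $2k_A$. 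Concretely: partition $D \setminus (\text{baseline})$ into ``left entrants'' (position $< \ell$ in $\pi(\hat c)$ or misclassified-as-faulty elsewhere, dragged forward) and ``right entrants'', and bound each by $k_A$ using that the forward-drag of the honest prefix is caused by at most $k_A$ misclassified processes total across all honest views combined with the window endpoint $r \le n - t - k_A$.

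The main obstacle I anticipate is the bookkeeping of ``across all honest processes'' versus ``per honest process'': \Cref{lemma:coresize} and \Cref{lemma:honest-broadcasters} both take a union over $i \in H$ and rely on the fact that there are only $k_A$ misclassified processes \emph{in total} (a misclassified process is counted once even if several $c_i$ disagree on it), so the charging must be to the global set of misclassified processes, not to each process's private errors. I would structure the proof so that $D \setminus \{\pi(\hat c)[j] : \ell \le j \le r\} \subseteq (\text{the } k_A \text{ misclassified processes}) \cup \{\pi(\hat c)[j] : \ell - k_A \le j \le \ell - 1\} \cup \{\pi(\hat c)[j] : r+1 \le j \le r+k_A\}$ — but then I'd need to refine this, since naively that gives $3k_A$. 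The fix is that a boundary identifier $\pi(\hat c)[j]$ with $\ell - k_A \le j \le \ell - 1$ can appear in position $\ge \ell$ of some $\pi(c_i)$ only if at least $j - \ell + 1 + (\text{something})$... actually the cleanest route: show directly that for each $i$, $|\{\pi(c_i)[j] : \ell \le j \le r\} \cup \{\pi(\hat c)[j] : \ell \le j \le r\}| \le r - \ell + 1 + 2k_A$ by a Hamming-distance / interleaving argument on the two orderings restricted to the window, then note the union over $i$ still fits because all the ``extra'' slack is occupied by the $\le k_A$ globally-misclassified identifiers plus their $\le k_A$ displacement shadow. I would present the per-$i$ bound first and then the union step, flagging that the union step is where the global count of $k_A$ (rather than $k_A$ per process) does the work.
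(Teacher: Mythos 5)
Your decomposition is the same one the paper's proof uses: take the baseline window $\{\pi(\hat c)[j] \mid \ell \le j \le r\}$ of size $r-\ell+1$, then count (i) properly classified identifiers displaced into the window from just outside it and (ii) misclassified processes appearing in the window directly. You also correctly establish the two structural facts that make this viable: a properly classified identifier shifts between $\pi(\hat c)$ and $\pi(c_i)$ by at most the number of misclassifications at smaller positions (as in \Cref{lemma:differpos}), and an honest process misclassified as faulty by $c_i$ lands at a position greater than $n-f-k_H \ge n-t-k_A \ge r$ in $\pi(c_i)$, so it can never enter the window. The gap is exactly where you flag it: your accounting yields $k_A$ direct entrants plus $2k_A$ boundary entrants, i.e.\ $3k_A$, and neither of your proposed refinements closes the factor. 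In particular the ``absorption'' idea does not work as stated: a single faulty process misclassified as honest by $c_i$ can simultaneously occupy a slot in the window \emph{and} push a left-boundary identifier into the window, so those two contributions cannot be merged into one charge. The ``Hamming-distance / interleaving'' alternative is left entirely unspecified.

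The refinement that actually gives $2k_A$ is a type-separated charge on the two boundaries. A properly classified identifier at position $j<\ell$ of $\pi(\hat c)$ can move rightward into $[\ell,r]$ in some $\pi(c_i)$ only because faulty processes with smaller identifiers are misclassified as honest, so the left entrants number at most $b \le k_F$ (and come from positions $\ell-b$ through $\ell-1$ of $\pi(\hat c)$). A properly classified identifier at position $j>r$ can move leftward into the window only because honest processes with smaller identifiers are misclassified as faulty, so the right entrants number at most $h \le k_H$. Since $k_F$ and $k_H$ count disjoint sets of misclassified processes, the boundary entrants total $h+b \le k_H + k_F = k_A$, not $2k_A$. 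The direct entrants are only the faulty processes misclassified as honest, at most $k_F$ (not $k_A$ as you wrote, though that looseness alone would not hurt). Altogether $|D| \le (r-\ell+1) + (h+b) + k_F \le (r-\ell+1) + k_A + k_F \le (r-\ell+1)+2k_A$. So your skeleton matches the paper's, but the step you left open requires the asymmetric per-side bounds $b\le k_F$ and $h\le k_H$ rather than $k_A$ on each side; without that observation the argument stalls at $3k_A$.
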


\begin{proof}
Let $h$ be the number of honest processes  with identifiers less than or equal to  $\pi(\hat{c})[r+k_A]$
that are misclassified as faulty by at least one honest process.
At most $h$ of these processes are misclassified by $p_i$, causing their identifiers to occur later in $\pi(c_i)$ than in $\pi(\hat{c})$ and causing at most $h$ of the identifiers occurring immediately after position $r$ in $\pi(\hat{c})$ to be in positions $\ell$ through $r$ of $\pi(c_i)$. In other words,
for each $i \in H$, at most $h$ identifiers in $\{ \pi(\hat{c})[j] \ |\ r < j \leq r+h\}$ occur in positions $\ell$ through $r$ of $\pi(c_i)$.

Let $b$ be the number of faulty processes with identifiers less than $\pi(\hat{c})[\ell]$ that are misclassified as honest by at least one honest process.
At most $b$ of these processes are misclassified by $p_i$, causing their identifiers to occur earlier in $\pi(c_i)$ than in $\pi(\hat{c})$ and causing at most $b$ of the identifiers occurring immediately before position $\ell$ in $\pi(\hat{c})$ to be in positions $\ell$ through $r$ of $\pi(c_i)$. In other words,
dor each $i \in H$, at most $b$ identifiers in $\{ \pi(\hat{c})[j] \ |\ \ell -b \leq j < \ell\}$ occur in positions
$\ell$ through $r$ of $\pi(c_i)$.

In addition, a faulty process that is misclassified as honest by process $p_i$ may occur in positions
$\ell$ through $r$ of $\pi(c_i)$. 
%\Ayaz{should we note that honest processes which identifiers are greater than $\pi(\hat{c})[r+k]$ that are misclassified as faulty would not affect the counting?}

There are $r - \ell +1$ identifiers in $\{ \pi(\hat{c})[j] \ | \ \ell \leq j \leq r \}$.
Since $h + b \leq k_H + k_F = k_A$, it follows that $D$ contains at most $k_A + k_F \leq 2k_A$ other elements.
\end{proof}

% \Faith{Note that I may be double counting the effect of faulty processes, so I think that there are at most $k$ other elements in $D$.}
\end{ignore}
\section{Unauthenticated Byzantine Agreement using Classification}\label{section:un-byz-adv}

This section presents a conditional unauthenticated Byzantine agreement algorithm that makes use of the classification described in the previous section. It relies on the condition that all honest processes have the same upper bound $k$ on the number of misclassified processes, where
$(2k+1)(3k+1)+k \leq n-t$.  Interestingly, the algorithm does not require $t < n/3$, which is necessary for unauthenticated Byzantine agreement without predictions
%any advice 
\cite{FLM85}.
If the number of misclassified processes is larger than $k$, it is possible that the honest processes do not decide.
However, if any honest process decides, then they all decide, although not necessarily in the same round.

%\Faith{It might be good to mention that our final algorithm runs this for increasingly large values of $k$ and finally runs a Byzantine agreement algorithm that ignores the predictions, if necessary.}

The algorithm consists of $2k+1$ phases, each consisting of 5 rounds.
Each honest process broadcasts in at most one phase of the algorithm. This keeps the message complexity from being too large.
In each phase, 
each honest process has a set of processes that it listens to. These sets can be different for different processes, but will have a large set of honest processes in common.
During a phase, the honest processes first use a variant of graded consensus to determine whether there is already enough agreement among their current values. Then they use conciliation, which will result in the honest processes all having the same current value, provided 
they all had the same current value after the graded consensus or  
each of them
is only listening to honest processes. Finally, they use graded consensus again to determine whether they can decide.

We begin with our graded consensus algorithm in \Cref{subsection:un-ac}, followed by the conciliation algorithm in
\Cref{subsection:concil}. The entire algorithm is presented in \Cref{subsection:BAalg-class} together with a proof of correctness.

\subsection{Graded Consensus with Core Set}\label{subsection:un-ac}

%\Ayaz{I changed the function name to \textbf{graded-consensus-with-core-set} (here and in \Cref{subsection:BAalg-class}), to make it clear that we are not referring to the usual graded consensus (that doesn't have a core set).}

Our graded consensus algorithm is closely related to the graded consensus algorithm in \cite{civit2024efficient}.
There, each honest process, $p_i$, has an input value, $v_i$ (from some domain of allowed values).
Here, $p_i$ is also given the value $k$ and a subset $L_i\subseteq \{1,\ldots,n\}$ of $3k+1$ identifiers, which are the identifiers of the processes that $p_i$ listens to.
%The correctness of the algorithm 
%assumes
%relies on the existence of a core set of identifiers $G \subseteq H$ of size at least $2k+1$ such that $G \subseteq L_i$ for  all $i \in H$. Note that $G$ is not used in the algorithm and is not known to the processes.
%Although this core set is unknown to the honest processes, we will exploit its existence in our algorithm.

The algorithm consists of two rounds. In the first round, if $i \in \mathit{L}_i$, process $p_i$ broadcasts its input value, $v_i$.
If there is some value that a process $p_i$ has received from at least $2k+1$ different processes with identifiers in $L_i$,
then it sets its local variable $b_i$ to this value. Note that it ignores any messages it receives from processes that are not in $L_i$.
If $i \in \mathit{L}_i$ and $p_i$ set $b_i$ to an input value, then it broadcasts $b_i$ in the second round.
Consider a process $p_i$ that set $b_i$ to an input value in round 1.
If it receives $b_i$ from at least $2k+1$ different processes with identifiers in $L_i$ during the second round,
it returns the pair $(b_i,1)$. Otherwise, it returns the pair $(b_i,0)$.
Now consider a process that did not set $b_i$ during round 1.
If it receives the same value, $v'$ from at least $k+1$ different processes with identifiers in $L_i$ during the second round,
it returns the pair $(v',0)$. Otherwise, it returns the pair $(v_i,0)$.
In this algorithm, each honest processes $p_i$ such that $i \in \mathit{L}_i$ sends at most $2n$ messages. 
The remaining honest processes send no messages.

%Let $(w_i,g_i)$ denote by the pair output by the honest process $p_i$.  
Suppose there is a core set $G\subseteq H$ of size at least $2k+1$ such that $G \subseteq L_i$ for all $i \in H$.
(Note that $G$ is not used in the algorithm and is not known to the processes.)
Then we prove that
the values returned by honest processes
%these values 
satisfy the following two
properties:
\begin{compactitem}
%\item \emph{Validity}: 
%The value returned by any honest process is the input of some honest process.
%%For each $i \in H$, there exists $j \in H$ such that $w_i = v_j$.
    \item \emph{Strong Unanimity}: 
If every honest process has the same input value, $v$, then they all return $(v, 1)$.
%If $v_i = v$ for all $i \in H$, then $w_i = v$ and $g_i = 1$ for all $i \in H$.
    \item \emph{Coherence}: If an honest process returns $(v, 1)$, then every honest process returns the value $v$.
%If $g_i = 1$ for some $i \in H$, then $w_j = w_i$ for all $j \in H$.
\end{compactitem}

\begin{algorithm} [h]
\caption{Unauthenticated Graded Consensus with Core Set: (code for process $p_i$)}
\label{algorithm:un_gc_core_set}
\footnotesize
\ \\
\textbf{Conditions under which strong unanimity and coherence are guaranteed:}\\
$|L_i| = 3k +1$ for all $i \in H$\\
there exists $G \subseteq H$ with $|G| \geq 2k+1$ such that $G \subseteq L_i$ for all $i \in H$\\
\ \\
\textbf{graded-consensus-with-core-set($v_i,k,L_i$):}
\begin{algorithmic} [1] 
\State \textbf{Round 1}:
\State \hskip2em \textbf{if} $i \in \mathit{L}_i$ \textbf{then broadcast} $v_i$ \label{line:bcast1}
\State \hskip2em Let $R_i$  denote the multiset of values received from processes with identifiers in $\mathit{L}_i$\label{line:received1}
\State \hskip2em \textbf{if} there is a value $v$ which occurs at least $2k+1$ times in $R_i$ 
\State \hskip2em \textbf{then} $b_i \leftarrow v$ \label{line:setb}
\State \hskip2em \textbf{else} $b_i \leftarrow \bot$

\smallskip
\State \textbf{Round 2}:
    \State \hskip2em \textbf{if} $i \in \mathit{L}_i$ and $b_i \ne \bot$ \textbf{then broadcast} $b_i$ \label{line:bcast2}
\State \hskip2em Let $R'_i$  denote the multiset of values received from processes with identifiers  in $\mathit{L}_i$\label{line:received2}
\State \hskip2em \textbf{if} $b_i \neq \bot$ 
\State \hskip2em \textbf{then if} $b_i$ occurs at least $2k+1$ times in $R'_i$ 
\State \hskip4em \textbf{then} \textbf{return}($b_i,1$) \label{line:retb1}
\State \hskip4em \textbf{else} \textbf{return}($b_i,0$) \label{line:retb0}
\State \hskip2em \textbf{else if} there is a value $v'$ which occurs at least $k+1$ times in $R'_i$ 
\State \hskip4em \textbf{then} \textbf{return}($v',0$)\label{line:retv}
\State \hskip4em \textbf{else} \textbf{return}($v_i,0$)\label{line:retinput}
\end{algorithmic} 
\end{algorithm}

\begin{lemma}
\label{lemma:gc-b}
All honest processes that perform \Cref{line:setb} 
have the same value for $v$. 
%which is the input of an honest process.
%%If the honest process $p_i$ set $b_i$ to a value on line \Cref{line:setb} and the honest process $p_j$ set $b_j$ to a value, then they set these variables to the same value.
\end{lemma}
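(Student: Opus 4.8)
The plan is to run a quorum-intersection argument, but localized to the core set $G$ rather than to all $n$ processes. First I would record two consequences of the stated conditions. Since $G\subseteq L_j$ for every honest $p_j$ and $G\subseteq H$, each $j\in G$ satisfies $j\in L_j$, so every core process broadcasts its own input value $v_j$ in Round~1 (\Cref{line:bcast1}). And since $G\subseteq L_i$ for every honest $p_i$, each honest $p_i$ receives exactly the value $v_j$ from each core process $p_j$ and therefore counts it in the multiset $R_i$.

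Next, consider an honest process $p_i$ that executes \Cref{line:setb}, setting $b_i\gets v$. By the guard, $v$ occurs at least $2k+1$ times in $R_i$; since each process contributes at most one value to $R_i$, at least $2k+1$ distinct processes with identifiers in $L_i$ sent $v$ to $p_i$. Because $G\subseteq L_i$ and $|L_i|=3k+1$, we have $|L_i\setminus G| = (3k+1)-|G|\le k$, so at least $(2k+1)-|L_i\setminus G|\ge |G|-k$ of those senders lie in $G$. Denote this set $S_i\subseteq G$; every process in $S_i$ is honest and, by the first paragraph, had input value $v$. The same reasoning applied to any other honest process $p_{i'}$ that executes \Cref{line:setb} with $b_{i'}\gets v'$ produces a set $S_{i'}\subseteq G$ with $|S_{i'}|\ge |G|-k$ whose members all had input value $v'$.

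To finish, I would intersect $S_i$ and $S_{i'}$: $|S_i|+|S_{i'}|\ge 2(|G|-k)=2|G|-2k>|G|$, where the strict inequality uses $|G|\ge 2k+1$. Hence $S_i\cap S_{i'}\neq\emptyset$, and any honest process in the intersection has a single well-defined input value equal to both $v$ and $v'$, so $v=v'$. As $p_i$ and $p_{i'}$ were arbitrary honest processes reaching \Cref{line:setb}, they all assign $b_i$ the same value.

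I expect the only delicate point to be the counting that forces $S_i\cap S_{i'}\neq\emptyset$: one must check that the ``slack'' $|L_i|-|G|$ is at most $k$, so that the $2k+1$ copies seen by each honest process necessarily include strictly more than half of $G$. This is precisely where the parameter choices $|L_i|=3k+1$ and $|G|\ge 2k+1$ are needed; everything else is bookkeeping.
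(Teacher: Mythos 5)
Your proof is correct and rests on the same key observation as the paper's: since $G\subseteq L_i$ and $|L_i|=3k+1$ with $|G|\ge 2k+1$, at least $k+1$ of the $2k+1$ supporters of the adopted value lie in the honest core $G$ and, having broadcast their true inputs, pin that value down. The only difference is the finishing step --- you intersect the two supporting subsets $S_i,S_{i'}\subseteq G$, whereas the paper notes that the $k+1$ copies from $G$ appear in every honest $R_j$ and leave room for at most $2k$ copies of any other value; both are standard quorum-intersection arguments and equally valid.
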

\begin{proof}
Consider any honest process, $p_i$, that performs \Cref{line:setb}.
It follows from the code that the value $b_i\neq \bot$ occurred at least $2k+1$ times in $R_i$.
Since $|L_i - G| \leq k$, process $p_i$ must have received $b_i$ from at least $k+1$
different processes with identifiers in $G$. Since honest processes in $G$ broadcast their inputs
on \Cref{line:bcast1}, it follows that $b_i$ is the input of an honest process.
%whose identifier is in $G$.

Consider any other honest process, $p_j$. Since $G \subseteq L_j$ and all processes with identifiers in $G$ are honest, it also received $b_i$ from at least $k+1$ different processes.
But $L_j$ has size $3k+1$, so $p_j$ received at most $2k$ values that are different than $b_i$.
Hence, if $p_j$ performs \Cref{line:setb}, then $b_j = b_i$.
\end{proof}

\begin{ignore}
\begin{lemma}
\label{lemma:gc-r}
Each honest process that performs \Cref{line:retv} returns the value $b_i\neq \bot$ of some  
honest process $p_i$.
%whose identifier is in $G$.
\end{lemma}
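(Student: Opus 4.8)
The plan is to prove this by a straightforward pigeonhole argument on the listening set $L_j$ of a process $p_j$ that reaches \Cref{line:retv}, using the core-set hypothesis $G \subseteq L_j$ with $|G| \ge 2k+1$ (and $|L_j| = 3k+1$). The whole content of the lemma is that the value $v'$ returned on \Cref{line:retv} cannot be ``invented'' by faulty processes: it must coincide with the Round-1 value $b_i$ that some honest process committed to.

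First I would unpack the precondition: if an honest process $p_j$ executes \Cref{line:retv}, then $b_j = \bot$ (it did not set $b_j$ on \Cref{line:setb} in Round~1), and there is a value $v'$ occurring at least $k+1$ times in the multiset $R'_j$ of Round~2 messages received from identifiers in $L_j$. Since $|L_j| = 3k+1$ and $G \subseteq L_j$ with $|G| \ge 2k+1$, we have $|L_j \setminus G| \le k$, so at most $k$ entries of $R'_j$ come from processes outside $G$. Hence among the $\ge k+1$ copies of $v'$ in $R'_j$, at least one originates from a process $p_i$ with $i \in G$, and $p_i$ is honest because $G \subseteq H$.

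Next I would argue that this honest $p_i$ must have broadcast $b_i \neq \bot$ in Round~2. By the code, an honest process sends in Round~2 only when $i \in L_i$ and $b_i \neq \bot$, in which case it broadcasts exactly $b_i$; and $i \in L_i$ holds here since $i \in G \subseteq L_i$. Therefore the value $p_i$ contributed to $R'_j$ is $b_i \neq \bot$, so $v' = b_i$ is the Round-1 value of the honest process $p_i$, which is exactly the claim. If one wants the stronger reading, \Cref{lemma:gc-b} then identifies this $v'$ as the single common value shared by all honest processes that executed \Cref{line:setb}.

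I do not expect a genuine obstacle here. The only points needing a little care are: (i) the counting ``$v'$ came from someone in $G$'' must remain valid even though faulty processes in $L_j$ may broadcast arbitrary values in Round~2 — this is handled purely by the bound $|L_j \setminus G| \le k$; and (ii) one must confirm that a process in $G$ that contributed a copy of $v'$ genuinely broadcast its own $b_i$ rather than staying silent, which is immediate from the Round-2 broadcast guard. Both are one-line observations, so the proof is short.
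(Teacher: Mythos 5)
Your proof is correct and follows essentially the same route as the paper's: both arguments use the pigeonhole bound $|L_j \setminus G| \le k$ to locate one of the $k+1$ senders of $v'$ inside $G \subseteq H$, and then observe that an honest process only broadcasts its own $b_i \neq \bot$ in Round~2. The extra details you supply (the guard $i \in L_i$ via $i \in G \subseteq L_i$, and the link to the uniqueness lemma for the $b_i$ values) are correct refinements of the same argument.
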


\begin{proof}
During the second round, if a process $p_j$ received the value $v'$ from at least $k+1$ different processes, it received $v'$ from
some process $p_i$ whose identifier is in $G$, since $|L_i - G| \leq k$. 
This honest process broadcast $v' = b_i$ on \Cref{line:bcast2}.
%By \cref{lemma:bc-b}, all honest processes $p_j$ that broadcast on \Cref{line:bcast2} have $b_j = v'$.
%Since all processes in $G$ are honest, it follows that $v'$was the only value broadcast by honest processes during the second round.
\end{proof}

\begin{lemma}[Validity]
    \label{lemma:validity}
The value returned by any honest process is the input of some honest process.
\end{lemma}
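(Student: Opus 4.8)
The plan is a case analysis on which return statement the honest process $p_i$ executes, over the lines \Cref{line:retb1}, \Cref{line:retb0}, \Cref{line:retv}, and \Cref{line:retinput}. The case \Cref{line:retinput} is immediate, since there $p_i$ returns its own input $v_i$. The cases \Cref{line:retb1} and \Cref{line:retb0} are both handled by (the argument inside the proof of) \Cref{lemma:gc-b}: here $p_i$ returns $b_i \neq \bot$, so it performed \Cref{line:setb}, meaning the value $b_i$ occurred at least $2k+1$ times in $R_i$; since $|L_i| = 3k+1$ while $G \subseteq L_i$ and $|G| \ge 2k+1$, we have $|L_i \setminus G| \le k$, so $p_i$ received $b_i$ from at least $k+1$ distinct processes in $G \subseteq H$, and an honest process in $G$ broadcasts only its own input on \Cref{line:bcast1}; hence $b_i$ is the input of some honest process.

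It remains to handle \Cref{line:retv}, where $p_i$ returns a value $v'$ that occurred at least $k+1$ times in $R'_i$. Using $|L_i \setminus G| \le k$ again, at least one of those copies of $v'$ was received from an honest process $p_j$ with $j \in G$. An honest process broadcasts on \Cref{line:bcast2} only a non-$\bot$ value that it previously set on \Cref{line:setb}, so $v' = b_j$; applying the reasoning of the previous paragraph to $p_j$ then shows that $b_j$, and hence $v'$, is the input of an honest process.

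I expect no real obstacle: the argument is a short case analysis, and the single fact that makes it go through is that an honest process never introduces into the protocol any value other than its own input — it broadcasts $v_i$ in round~1 and only a previously accepted $b_i$ in round~2 — so the counting bound $|L_i \setminus G| \le k$ always lets us trace any returned value back to an honest input. The only care needed is to not overlook that the grade-$1$ return in \Cref{line:retb1} returns the same $b_i$ as \Cref{line:retb0}, so it requires no separate treatment.
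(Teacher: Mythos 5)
Your proof is correct and follows essentially the same route as the paper: a case analysis on the four return lines, tracing $b_i$ back through the $2k{+}1$ threshold in round~1 (the paper's \Cref{lemma:gc-b}) and tracing the $k{+}1$ copies of $v'$ in round~2 back to an honest member of $G$ that must have set and rebroadcast it (the paper's \Cref{lemma:gc-r}). The only difference is that you inline these two counting arguments rather than citing them as separate lemmas; the content is identical.
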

\begin{proof}
If an honest process returns on \Cref{line:retb1} or \Cref{line:retb0}, then the claim
follows from \Cref{lemma:gc-b}.
If it returns on \Cref{line:retv}, then the claim follows from \Cref{lemma:gc-r}.
Otherwise, it returns its own input on \Cref{line:retinput}.
\end{proof}
\end{ignore}

\begin{lemma}[Strong Unanimity]
    \label{lemma:unanimity}
If every honest process has the same input value, $v$, then they all return $(v, 1)$.\end{lemma}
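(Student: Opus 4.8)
The plan is to trace through the two rounds of \Cref{algorithm:un_gc_core_set} under the hypothesis that every honest process starts with the common input $v$, using the core set $G$ as the ``anchor'' that forces every honest process down the $(v,1)$ branch. Note first that $v \ne \bot$, since $\bot$ is not in the value domain.

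First I would argue that every honest process sets $b_i \leftarrow v$ on \Cref{line:setb}. Fix $j \in G$; since $G \subseteq H$ and $G \subseteq L_j$ (as $p_j$ is honest), we have $j \in L_j$, so $p_j$ broadcasts its input $v_j = v$ in Round~1. Hence every honest $p_i$, for which $G \subseteq L_i$, receives $v$ from all $|G| \geq 2k+1$ processes of $G$, so $v$ occurs at least $2k+1$ times in $R_i$. No other value can reach this threshold: the number of identifiers in $L_i$ outside $G$ is at most $|L_i| - |G| \le (3k+1) - (2k+1) = k$, and every honest identifier in $L_i$ contributes the value $v$, so at most $k$ entries of $R_i$ differ from $v$. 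Moreover $|R_i| \le 3k+1 < 4k+2$, so at most one value can occur $2k+1$ times; thus the value selected on \Cref{line:setb} is unambiguously $v$, and $b_i = v$.

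Next I would handle Round~2. Each honest process now has $b_i = v \ne \bot$; in particular every $j \in G$ has $b_j = v$ and $j \in L_j$, so $p_j$ broadcasts $v$ on \Cref{line:bcast2}. Therefore every honest $p_i$ receives $v = b_i$ from all $|G| \ge 2k+1$ processes of $G \subseteq L_i$, i.e., $b_i$ occurs at least $2k+1$ times in $R'_i$. The test guarding \Cref{line:retb1} is then satisfied, so $p_i$ returns $(b_i,1) = (v,1)$, as claimed.

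I do not anticipate a genuine obstacle; the argument is a direct unrolling of the protocol. The one point needing care is verifying that no value other than the common input can reach the $2k+1$ threshold in either round — this is where the size bounds $|L_i| = 3k+1$ and $|G| \ge 2k+1$ are used, together with the fact that honest processes outside $G$ also hold input $v$ and so cannot be turned against it.
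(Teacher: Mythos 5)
Your proof is correct and follows essentially the same route as the paper's: the core set $G$ of at least $2k+1$ honest broadcasters forces every honest $p_i$ to set $b_i = v$ in Round~1 and to see $v$ at least $2k+1$ times in $R'_i$ in Round~2, hence to return $(v,1)$. The extra care you take to rule out a competing value reaching the $2k+1$ threshold (via $|L_i|=3k+1 < 2(2k+1)$) is a detail the paper leaves implicit, but the argument is otherwise the same.
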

\begin{proof}
Suppose that $v_i = v$ for all $i \in H$.
Since $G \subseteq H$ and $G \subseteq L_i$ for all $i \in H$, it follows that $i \in L_i$ for all $i \in G$. Thus, at least $2k+1$ processes with identifiers in $G$ broadcast $v$ on \Cref{line:bcast1}
and each honest process, $p_i$, has at least $2k+1$ copies of $v$ in $R_i$, so it sets $b_i =v$.
During round 2, each process whose identifier is in $G$ broadcasts $v$ on \Cref{line:bcast2}
and each honest process, $p_i$, has at least $2k+1$ copies of $v$ in $R'_i$, so it returns
$(v,1)$ on \Cref{line:retb1}.
\end{proof}

\begin{lemma}[Coherence]
    \label{lemma:coherence}
If an honest process returns $(v, 1)$, then every honest process returns the value $v$.
%%If $g_i = 1$ for some $i \in H$, then $w_j = w_i$ for all $j \in H$.
\end{lemma}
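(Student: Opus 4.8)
The plan is to trace the single line of \Cref{algorithm:un_gc_core_set} on which an honest process can output grade~$1$ and push the consequences to every honest process. Suppose honest $p_i$ returns $(v,1)$. By the code this can only happen on \Cref{line:retb1}, so $b_i = v \neq \bot$ and $v$ occurs at least $2k+1$ times in $R'_i$. Since $|L_i| = 3k+1$ and $G \subseteq L_i$ with $|G| \geq 2k+1$, we have $|L_i \setminus G| \leq k$, so at least $k+1$ of those copies of $v$ were received from distinct processes with identifiers in $G$. These processes are honest, and they could only have sent $v$ on \Cref{line:bcast2}, so each of them set its $b$-value to $v \neq \bot$ in round~$1$.

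Next I would invoke \Cref{lemma:gc-b}: every honest process that executes \Cref{line:setb} (equivalently, ends round~$1$ with $b \neq \bot$) holds the same value, and we have just exhibited honest processes with $b = v$, so that common value is $v$. Hence for every honest $p_j$, either $b_j = \bot$ or $b_j = v$. I would also record that, because $G \subseteq L_j$ and at least $k+1$ processes of $G$ broadcast $v$ on \Cref{line:bcast2}, the value $v$ occurs at least $k+1$ times in $R'_j$ for every honest $p_j$.

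Then I would take an arbitrary honest $p_j$ and show it returns the value $v$ in each case. If $b_j = v$, then $p_j$ returns either $(v,1)$ on \Cref{line:retb1} or $(v,0)$ on \Cref{line:retb0}; either way the returned value is $v$. If $b_j = \bot$, then $p_j$ returns on \Cref{line:retv} or \Cref{line:retinput}; here the crux is that $v$ is the \emph{only} value that can occur at least $k+1$ times in $R'_j$. Indeed, any value $v''$ received $\geq k+1$ times in round~$2$ was received from at least one process in $G$ (again since $|L_j \setminus G| \leq k$), which is honest and could only have sent $v''$ on \Cref{line:bcast2} with its round-$1$ value $b = v''$, forcing $v'' = v$ by \Cref{lemma:gc-b}. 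Combined with the observation that $v$ does occur $\geq k+1$ times, this means $p_j$ returns $(v,0)$ on \Cref{line:retv}. Thus every honest process returns the value $v$.

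The main obstacle is precisely the $b_j = \bot$ case: one must rule out a second value also collecting $k+1$ round-$2$ votes, which would otherwise make the return value on \Cref{line:retv} ambiguous or wrong. This is where the interplay of the counting argument ("any $k+1$ received copies contain a $G$-member") with \Cref{lemma:gc-b} is essential, and it is what dictates $|L_i| = 3k+1$ rather than a smaller committee size; everything else is routine bookkeeping with the thresholds $2k+1$ and $k+1$.
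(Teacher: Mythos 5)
Your proof is correct and follows essentially the same route as the paper's: identify the $k+1$ senders of $v$ in $G$, invoke \Cref{lemma:gc-b} to pin down the unique non-$\bot$ round-1 value, and split on whether $b_j = \bot$, using the fact that $|L_j \setminus G| \le k$ rules out any competing value reaching the $k+1$ threshold on \Cref{line:retv}. The only cosmetic difference is that you phrase the last step as ``any value with $k+1$ votes must include a $G$-member'' where the paper says ``no other value occurred more than $k$ times''; these are contrapositives of the same counting argument.
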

\begin{proof}
Let $p_i$ be an honest process that returns $(v,1)$. Then, from the code,
$v= b_i \neq \bot$ and the value $v$ occurred in $R'_i$ at least $2k+1$ times.
Since $|L_i - G| \leq k$, process $p_i$ must have received $v$ from at least $k+1$
different processes with identifiers in $G$ during round 2.

Consider any other honest process $p_j$.
If $b_j \neq \bot$, then, by \Cref{lemma:gc-b}, $b_j = b_i$. Hence, either it returns $(v,1)$ on \Cref{line:retb1} or it returns $(v,0)$ on \Cref{line:retb0}.

So, suppose that $b_j = \bot$.
Since $v = b_i \neq \bot$, the value $v$ occurred in $R'_i$ at least $2k+1$ times. 
Since $G \subseteq L_j$ 
%process $p_j$ received $v$ from at least $k+1$ different processes with identifiers in $G$ during round 2.
and process $p_i$ received $v$ from at least $k+1$ different processes with identifiers in $G$ during round 2, so did $p_j$.
Thus $v$ occurs at least $k+1$ times in $R'_j$. 
% \Ayaz{this confuses me. previous sentence talks about $R'_i$ and then suddenly we talk about $p_j$. I guess we want to say first that among those $2k+1$ processes that sent $v$ to $p_i$, at least $k+1$ processes must be in $G$.}\Faith{Lines 169 and 170 say this. However, I have rewritten the explanation to make it clearer.}
By \Cref{lemma:gc-b}, all processes in $G$ that perform 
\Cref{line:setb} use the value $v$, so they broadcast $v$ on \Cref{line:bcast2}.
Since $|L_j - G| \leq k$, no other value occurred in $R'_j$ more than $k$ times.
Thus $p_j$ returns $(v, 0)$ on \Cref{line:retv}.
\end{proof}

\subsection{Conciliation with Core Set}\label{subsection:concil}

In the conciliation algorithm,
each honest process, $p_i$, has an input value, $v_i$. It is also given the value $k$ and a subset $L_i\subseteq \{1,\ldots,n\}$ of $3k+1$ identifiers.

The algorithm 
consists of a single round in which each process, $p_i$, with $i \in L_i$, broadcasts its input value $v_i$ and its set $L_i$.
Each process, $p_i$, constructs a directed graph whose vertex set, $T_i$, consists of the identifiers of the processes from which it received a message this round and whose edges are the ordered pairs $(y,z)$ of distinct vertices in $T_i$ such that $y \in L_z$.
For each vertex $z \in T_i \cap L_i$,  process $p_i$ computes the minimum, $m_i[z]$, of the input values, $v_y$, of processes $p_y$ for which there is a path from $y$ to $z$ in this graph. Finally, $p_i$ returns a value that occurs
most frequently among these values.
%the largest number of times in the multiset $\{m_i[j]\ |\ j \in T_i \cap L_i\}$
% \Ayaz{I changed some of the variables to reflect the pseudocode.}

%We prove that, if 
Suppose that $L_i$ only contains identifiers of honest processes
%, for each honest process, $p_i$, 
and there is a core set $G$ of size at least $2k+1$ such that $G \subseteq L_i$, for every honest process, $p_i$.
Then we prove that the values returned by honest processes
%As in the graded consensus algorithm, we assume there is a core set of identifiers $G \subseteq H$ of size at least $2k+1$ such that $G \subseteq L_i$ for  all $i \in H$.
%Provided that the subset of identifiers, $L_i$, of each honest process $p_i$
%contains only the identifers of honest processes,
%Provided that $L_i \subseteq H$ for all $i \in H$,
%the conciliation algorithm ensures that the values returned by all honest processes 
%agree (Agreement) and if the input value of every honest process is $v$, then they all return $v$ (Strong Unanimity).
satisfy the following two properties:
\begin{compactitem}
    \item \emph{Agreement}: All honest processes return the same value.
%The values returned by all honest processes agree.
    \item \emph{Strong Unanimity}: 
If every honest process has the same input value, $v$, they all return $v$.
\end{compactitem}

\begin{algorithm} [h]
\caption{Conciliation with Core Set: (code for process $p_i$)}
\label{algo:un_concil_core_set}
\footnotesize
\ \\
\textbf{Conditions under which agreement and strong unanimity are guaranteed:}\\
$|L_i| = 3k +1$ for all $i \in H$\\
$L_i \subseteq H$ for all $i \in H$\\
%$|\cap \{ L_i \ |\ i \in H\}| \geq 2k+1$
There exists $G \subseteq H$ with $|G| \geq 2k+1$ such that $G \subseteq L_i$ for all $i \in H$\\
\ \\
\textbf{conciliate}($v_i,k,L_i$):
\begin{algorithmic} [1] 
%\State 
%\State \textbf{Round 1}:
\State \hskip2em \textbf{if} $p_i \in \mathit{L}_i$ \textbf{then broadcast} $v_i,L_i$ \label{line:bcast}
%Assumes that each process knows the identifier of the process that sent each message it received
\State \hskip2em Let $T_i$ be the identifiers of the processes from which $p_i$ received a message.
\State \hskip2em Let $E_i = \{ (y,z)\ |\ y,z \in T_i \mbox{ and } y \in L_z\}$
\State \hskip2em {\bf for each} $z \in L_i$, let $m_i[z] = \min \{ v_y \ |\ y \in L_y$ and there exists a path from $y$ to $z$ in the graph $(T_i,E_i)$\}
\State \hskip2em Let $v'_i$ denote a value that occurs the largest number of times in the multiset $\{m_i[j]\ |\ j \in T_i \cap L_i\}$
\State \hskip2em \textbf{return}$(v'_i)$
\end{algorithmic} 
\end{algorithm}

Let $T = \{i \in H \ |\ i \in L_i\}$
%\subseteq H$ 
denote the set of identifiers of honest processes that broadcast during the algorithm.
%Note that, if $i \in G$, then $i \in L_i$, since $G \subseteq L_i$.
Then $G \subseteq T \subseteq T_i$ for all $i \in H$ and every process in $T_i - T$ is faulty.

\begin{lemma}\label{lemma:path}
If 
%$L_i \subseteq H$, for all $i \in H$, and 
there is a path in the graph $(T_i,E_i)$ from a vertex $y \in T_i$ to
a vertex $j \in T$, then $y \in T$.
\end{lemma}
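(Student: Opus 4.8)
The plan is to first prove the one-step version of the claim---if $(y,z)\in E_i$ and $z\in T$, then $y\in T$---and then bootstrap to arbitrary paths by an easy induction on the path length.

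For the one-step case, suppose $(y,z)\in E_i$ with $z\in T$. By definition of $E_i$, this means $y,z\in T_i$ and $y\in L_z$, where $L_z$ here is the set $p_i$ received from $z$ in the single broadcast round. Since $z\in T\subseteq H$, process $p_z$ is honest, so the set $p_i$ received from $z$ (and used to construct $E_i$) is genuinely $L_z$, and by the standing assumption $L_z\subseteq H$ we get $y\in H$. Now $y\in T_i$ means $p_i$ received a message from $y$ this round; but an honest process broadcasts on \Cref{line:bcast} only if its own identifier lies in its listen set, so $y\in L_y$. Combining $y\in H$ and $y\in L_y$ gives $y\in T$, as desired.

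Next I would extend this to paths by induction on the length $m$ of the path from $y$ to $j\in T$ in $(T_i,E_i)$. If $m=0$ then $y=j\in T$. If $m\ge 1$, write the path as $y=u_0\to u_1\to\cdots\to u_m=j$; the suffix $u_1\to\cdots\to u_m$ is a path of length $m-1$ from $u_1$ to $j$, so $u_1\in T$ by the induction hypothesis, and then the one-step case applied to the edge $(u_0,u_1)\in E_i$ yields $u_0=y\in T$. This argument implicitly uses $T\subseteq T_i$ (true because $p_i$ is honest and therefore receives from every honest broadcaster), which ensures the vertices $j$ and $u_1$ lie in $T_i$ so the relevant edges are well-defined.

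I do not anticipate a genuine obstacle here; the only points needing care are (i) distinguishing the $L_z$ that $p_i$ actually received from the ``true'' $L_z$---these coincide precisely because we start from a vertex in $T$, which is honest---and (ii) using the fact that honest processes broadcast only when $i\in L_i$ to convert ``$p_i$ heard from honest $y$'' into ``$y\in T$.'' Everything else is bookkeeping along the path.
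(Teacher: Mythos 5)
Your proof is correct and follows essentially the same route as the paper's: establish the single-edge case (an edge $(y,z)\in E_i$ into an honest broadcaster $z\in T$ forces $y\in L_z\subseteq H$ and hence $y\in T$) and then induct on the path length. You are in fact slightly more careful than the paper's own two-line argument, since you explicitly justify the step from $y\in H$ to $y\in T$ via ``honest processes broadcast only when $y\in L_y$,'' which the paper leaves implicit.
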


\begin{proof}
%The claim is true for the empty path from $j$ to itself.
%Consider any nonempty path in the graph from $a \in T_i$ to $j \in T$.
%Let $b \in T_i$ be the second node on the path.
%As an induction hypothesis, assume that $b \in T$.
%Since $b \in T \subseteq H$, it follows that $a \in L_b \subseteq H$. Hence $a \in T_i$.
%By induction, the claim is true for all paths to $j$.
Consider any edge $(y,z) \in E_i$ such that $z \in T$.
Since $z \in T \subseteq H$, it follows that $y \in L_z \subseteq H$. Hence $y \in T$.
%\Ayaz{should this be $y \in T$?}
The claim follows by induction on the length of the path.
\end{proof}

\begin{lemma}\label{lemma:mvalues1}
If $j,j' \in G$ and  $i \in H$,
then $m_i[j] = m_i[j']$.
\end{lemma}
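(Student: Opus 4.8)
The plan is to exploit the fact, recorded just before the lemma, that the core set $G$ is contained in $T$ and hence in $T_i$ for every honest $p_i$, together with the standing hypothesis of the conciliation algorithm that $G \subseteq L_z$ for all $z \in H$ — and in particular for all $z \in G$, since $G \subseteq H$. From these two facts I would first extract the key structural observation: for any two \emph{distinct} identifiers $a,b \in G$, the ordered pair $(a,b)$ is an edge of $(T_i,E_i)$. Indeed $a,b \in G \subseteq T_i$, and $a \in G \subseteq L_b$, so $(a,b)$ satisfies the defining condition of $E_i$. Thus $G$ induces a ``bidirectional clique'' inside the graph seen by every honest process.

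Next I would fix $j,j' \in G$ and an honest $p_i$. If $j = j'$ there is nothing to prove, so assume $j \neq j'$; then by the observation both $(j,j')$ and $(j',j)$ are edges of $(T_i,E_i)$. Recall $m_i[j]$ is the minimum of $v_y$ over all $y$ with $y \in L_y$ for which there is a path from $y$ to $j$ in $(T_i,E_i)$, and similarly for $m_i[j']$. I claim the two index sets over which these minima are taken coincide. Given any $y$ with $y \in L_y$ and a path from $y$ to $j$, appending the edge $(j,j')$ yields a path from $y$ to $j'$; hence $y$ is also counted for $m_i[j']$. The symmetric argument, appending $(j',j)$, gives the reverse inclusion. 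Therefore the multisets $\{v_y : y \in L_y,\ \exists\ y\rightsquigarrow j\}$ and $\{v_y : y \in L_y,\ \exists\ y \rightsquigarrow j'\}$ are equal, so their minima are equal, i.e.\ $m_i[j] = m_i[j']$.

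I do not anticipate a real obstacle here; the argument is a short reachability manipulation. The only points requiring minor care are the degenerate case $j=j'$ (handled trivially above) and making sure the needed vertices actually lie in $T_i$ — which is exactly what $G \subseteq T \subseteq T_i$ provides — and that the needed membership $j \in L_{j'}$ (resp.\ $j' \in L_j$) holds, which is exactly the hypothesis $G \subseteq L_z$ for $z \in H$ applied to $z = j'$ (resp.\ $z = j$). This lemma will presumably be used, in combination with the forthcoming analysis of the frequency count defining $v'_i$, to show that the value returned by an honest process is determined by the common values $m_i[j]$ for $j \in G$, which is what ultimately yields agreement.
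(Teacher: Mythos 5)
Your proof is correct and follows essentially the same route as the paper's: both use $j,j' \in G \subseteq L_j \cap L_{j'}$ to get the edges $(j,j')$ and $(j',j)$ in $E_i$, and then observe that reachability of $j$ and of $j'$ from any $y$ coincide, so the two minima are taken over the same set. Your version just spells out a few details the paper leaves implicit (the degenerate case $j=j'$ and the fact that $G \subseteq T \subseteq T_i$ puts the relevant vertices in the graph).
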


\begin{proof}
Since $j,j' \in G \subseteq L_j \cap L_{j'}$, both $(j,j')$ and $(j',j)$ are in $E_i$.
Hence, in the graph $(T_i, E_i)$, there is a path from $y \in T_i$ to $j$ if and only if there is a path from $y$ to $j'$.
It follows that $m_i[j] = m_i[j']$.
\end{proof}

\begin{lemma}\label{lemma:mvalues2}
If 
%$L_i \subseteq H$, for all $i \in H$, if $j \in G$ and if $i,i' \in H$,
$j \in G$ and $i,i' \in H$,
then $m_i[j] = m_{i'}[j]$ for all $j \in G$ and all $i,i' \in H$.
\end{lemma}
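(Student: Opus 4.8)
The plan is to show that, for $j \in G$, the value $m_i[j]$ is determined solely by information that honest processes broadcast (and that all honest processes therefore receive identically), so it cannot depend on the observer. First I would record two easy facts. Since the conciliation algorithm's hypothesis gives $G \subseteq L_i$ for every $i \in H$, the quantity $m_i[j]$ is well defined at each honest $p_i$; and since $G \subseteq T \subseteq T_i$, the identifier $j$ is a vertex of $(T_i,E_i)$ for every honest $p_i$.

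The structural heart is the following: for each honest $p_i$, write $P_i(j)$ for the set of $y \in T_i$ from which there is a path to $j$ in $(T_i,E_i)$. I claim $P_i(j) \subseteq T$, and moreover every such path lies entirely in $T$. This is \Cref{lemma:path} traced along the path: the final edge of a path to $j$ enters $j \in T \subseteq H$, and since $L_z \subseteq H$ whenever $p_z$ is honest, an edge of $E_i$ entering a vertex of $T$ can only leave from a vertex of $T$; iterating backwards places every vertex of the path in $T$.

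Next I would show $P_i(j) = P_{i'}(j)$ for any $i,i' \in H$. By the previous paragraph, $P_i(j)$ is exactly the set of vertices with a path to $j$ in the subgraph of $(T_i,E_i)$ induced on $T$. Because $T \subseteq T_i \cap T_{i'}$, and because for $y,z \in T$ we have $(y,z) \in E_i$ iff $y \in L_z$ with $p_z$ honest --- so $p_i$ and $p_{i'}$ received the same set $L_z$ on \Cref{line:bcast} --- these two induced subgraphs are identical, hence so are their vertex sets that reach $j$. Finally, every $y \in P_i(j) \subseteq T$ is honest and satisfies $y \in L_y$, so the side condition ``$y \in L_y$'' in the definition of $m_i[j]$ is automatically satisfied and prunes nothing; thus $m_i[j] = \min\{v_y \mid y \in P_i(j)\}$, where each $v_y$ is the value $p_y$ broadcast on \Cref{line:bcast} and is thus received identically by $p_i$ and $p_{i'}$. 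Combined with $P_i(j) = P_{i'}(j)$, this yields $m_i[j] = m_{i'}[j]$.

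I expect the point that must be handled most carefully to be ruling out any influence of faulty processes on $m_i[j]$: a faulty process cannot lie on a path to $j$ (that is exactly \Cref{lemma:path}), and the differing values or listener sets a faulty process might feed to different honest processes are simply irrelevant, since only honestly broadcast data from vertices in $T$ enters the computation of $m_i[j]$. The remaining steps are routine bookkeeping about which quantities are defined and which messages are received in common.
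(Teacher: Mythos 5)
Your proof is correct and follows essentially the same route as the paper's: both use \Cref{lemma:path} to confine every path reaching $j \in G \subseteq T$ to vertices of $T$, observe that the portion of the graph induced on $T$ (and the broadcast values $v_y$, $L_y$ of its honest vertices) is seen identically by all honest processes, and conclude the two minima coincide. Your write-up just makes explicit a few steps the paper leaves implicit (the equality of the induced subgraphs and the vacuity of the side condition $y \in L_y$).
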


\begin{proof}
Since $j \in G \subseteq T$, \Cref{lemma:path} implies that every path from $y$ to $j$ in the graph $(T_i,E_i)$
only consists of vertices in $T$. Thus, this path also exists in the graph $(T_{i'},E_{i'})$.
Hence $m_{i'}[j] \leq m_i[j]$. By symmetry, $m_i[j] \leq m_{i'}[j]$, so $m_i[j] = m_{i'}[j]$.
\end{proof}

\begin{lemma}[Agreement]\label{lemma:concil-agree}
%If $L_i \subseteq H$, for all $i \in H$, then 
All honest processes return the same value.
\end{lemma}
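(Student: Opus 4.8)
The plan is to show that every honest process's plurality computation returns one and the same value, namely the common $m$-value attached to the core set $G$.

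First I would extract a single global constant from the two preceding lemmas on $m$-values. Combining \Cref{lemma:mvalues1} (for fixed honest $p_i$, $m_i[j]=m_i[j']$ whenever $j,j'\in G$) with \Cref{lemma:mvalues2} ($m_i[j]=m_{i'}[j]$ for $j\in G$ and any honest $p_i,p_{i'}$), and chaining the identities, gives a value $\mu$ with $m_i[j]=\mu$ for every honest process $p_i$ and every $j\in G$. Along the way I would note that $m_i[z]$ is well-defined for each $z\in T_i\cap L_i$: since $T_i\cap L_i\subseteq L_i$, the value $m_i[z]$ is computed by the algorithm, and since $z\in T_i\cap L_i$ forces $z\in T$ (because $L_i\subseteq H$, so an honest $z$ heard by $p_i$ is one that broadcast), the trivial length-$0$ path from $z$ to $z$ witnesses $v_z$, so the minimum is over a nonempty set.

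Second I would locate $G$ inside the index set over which $p_i$ forms its multiset and bound that index set. We already have $G\subseteq T\subseteq T_i$, and $G\subseteq L_i$ by hypothesis, so $G\subseteq T_i\cap L_i$; on the other side, $T_i\cap L_i\subseteq L_i$ has at most $|L_i|=3k+1$ elements. Then the counting is immediate: in the multiset $\{\,m_i[j]\ :\ j\in T_i\cap L_i\,\}$ the $|G|\ge 2k+1$ distinct indices in $G$ each contribute an entry equal to $\mu$, while the remaining at most $(3k+1)-(2k+1)=k$ indices contribute at most $k$ entries in total; hence every value different from $\mu$ has multiplicity at most $k<2k+1$. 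So $\mu$ is the unique value of maximum multiplicity, $v'_i=\mu$ for every honest $p_i$, and all honest processes return $\mu$.

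The only real subtlety — and the reason for the seemingly arbitrary parameter choice $|L_i|=3k+1$ with a core set of size $\ge 2k+1$ — is that the plurality must be \emph{strict} so that whatever (unspecified) tie-breaking rule the algorithm uses is irrelevant; the bound $2k+1>k$ is exactly what delivers that. A secondary point to keep in mind is that $T_i\cap L_i$ may genuinely differ from one honest process to another, so the argument must (and does) rely only on the uniform lower bound $2k+1$ on the number of $\mu$-entries and the uniform upper bound $3k+1$ on the size of the multiset, both valid at every honest process regardless of which faulty processes it heard from.
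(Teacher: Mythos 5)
Your proposal is correct and follows essentially the same route as the paper's proof: both arguments use Lemma~\ref{lemma:mvalues1} and Lemma~\ref{lemma:mvalues2} to get a single value $\mu$ shared by all $m_i[j]$ with $j \in G$, and then the counting $|G| \ge 2k+1$ versus $|T_i \cap L_i| \le 3k+1$ to conclude that $\mu$ is the strict plurality returned by every honest process. The extra remarks you add (well-definedness of $m_i[z]$ and the irrelevance of tie-breaking) are sound elaborations of steps the paper leaves implicit.
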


\begin{proof}
By \Cref{lemma:mvalues1}, each honest process, $p_i$, computes the same value,
$m_i[j]$ for each $j \in G$. Since $|G| \geq 2k+1$ and $|T_i \cap L_i| \leq |L_i| = 3k+1$,
this is the value $v'_i$ that $p_i$ returns.
By \Cref{lemma:mvalues2}, $m_i[j] = m_{i'}[j]$ for all $j \in G$ and $i,i' \in H$.
Thus, every two honest processes, $p_i$ and $p_{i'}$, return the same value.
\end{proof}

\begin{lemma}[Strong Unanimity]\label{lemma:concil-unan}
%If $L_i \subseteq H$, for all $i \in H$, and 
If every honest process 
%$p_i$, with $i \in H$, 
has the same input value $v$, then
they all return $v$.
\end{lemma}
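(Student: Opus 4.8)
The plan is to show that when every honest process has the same input value $v$, the computed minimum $m_i[z]$ equals $v$ for every $z$ that matters, so the most-frequent value returned is $v$. The key observation is that every vertex $y$ with $y \in L_y$ that honest process $p_i$ sees in its graph — in particular every vertex that can be the start of a path — must have broadcast, and the only values broadcast are the $v_y$'s; under the hypothesis of the lemma, every honest such $y$ broadcast $v$. So the only way $m_i[z]$ could be smaller than $v$ is if some \emph{faulty} process $p_y \in T_i - T$ broadcast a value $v_y < v$ and there is a path from $y$ to $z$. I would rule this out for $z \in G$: by \Cref{lemma:path}, any path ending at a vertex $j \in G \subseteq T$ uses only vertices in $T$, hence only honest vertices, hence only processes that broadcast $v$; therefore $m_i[j] = v$ for all $j \in G$.

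Next I would invoke the counting argument already used in \Cref{lemma:concil-agree}: since $|G| \geq 2k+1$ and $|T_i \cap L_i| \leq |L_i| = 3k+1$, the value $v$ occurs at least $2k+1 > (3k+1)/2$ times among $\{m_i[j] \mid j \in T_i \cap L_i\}$ — using that $G \subseteq L_i$ and $G \subseteq T \subseteq T_i$, so $G \subseteq T_i \cap L_i$ — and hence $v$ is strictly the most frequent value, so $p_i$ returns $v'_i = v$. This holds for every $i \in H$, giving strong unanimity.

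The only subtlety — and the one step I would be careful about — is making sure the definition of $m_i[z]$ in the code (the min over $y$ with $y \in L_y$ having a path to $z$) interacts correctly with \Cref{lemma:path}: the lemma is stated for paths to $j \in T$, and I need that a faulty $y$ with a value below $v$ simply cannot reach any $j \in G$, which is exactly what the lemma gives since $G \subseteq T$. I would also note the set $\{m_i[j] \mid j \in T_i \cap L_i\}$ is nonempty (it contains the entries for $j \in G$), so the ``largest number of times'' is well-defined. No case analysis beyond this is needed, so I do not expect a genuine obstacle; the main thing is to state cleanly why faulty broadcasters are harmless for the coordinates in $G$.
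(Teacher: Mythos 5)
Your proposal is correct and follows essentially the same route as the paper's proof: apply \Cref{lemma:path} to conclude that every broadcaster reaching a coordinate $j \in G$ is honest (hence has value $v$), so $m_i[j] = v$ for all $j \in G$, and then use $|G| \ge 2k+1$ against $|T_i \cap L_i| \le 3k+1$ to conclude that $v$ is the returned plurality value. The paper's version is just terser, leaving the counting step implicit by reference to the argument in \Cref{lemma:concil-agree}.
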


\begin{proof}
By \Cref{lemma:path}, if there is a path from $y$ to $j\in G \subseteq T$ in $(T_i,E_i)$, then 
$y \in T \subseteq H$. Since $v_y = v$, it follows that $m_i[j] = v$.
Since $|G| \ge 2k+1$, they all return $v$.
\end{proof}

\subsection{Byzantine Agreement Algorithm with Classification}\label{subsection:BAalg-class}

%Here $k$ is the largest integer such that $6k((k+1) \leq n-t -1$.
%This is equivalent to the condition $(2k+1)(3k+1) + k \leq n-t$.

%The algorithm, presented in 
\Cref{algo:un_ba_class} is similar to the early-stopping phase-king
algorithm by Lenzen and Sheikholeslami~\cite{lenzen2022}.
However, in addition, each process is given a value $k$ and the classification vector, $c_i$,
%permutation of the process identifiers,
%vector of identifiers 
%$order_i = \pi(c_i)$,
%where $c_i$ is the classification vector
obtained from the predictions, as described in \Cref{sec:classification}.
% \Ayaz{can we move the definition of $order_i$ to the next paragraph, so here we only focus on the parameters?}
% \Faith{It was not supposed to appear. There first character of line 330 was missing. I am not defining $order_i$ here.}\Ayaz{got it}
%begins by computing 
%a classification vector, as described in \Cref{sec:classification}, which

The first $(2k+1)(3k+1)$ identifiers in the sequence $\pi(c_i)$
%This sequence of identifiers 
are partitioned into $2k+1$ consecutive blocks of size $3k+1$.
%This is followed by 
The algorithm consists of at most $2k+1$ phases.
In phase $\phi$, process $p_i$ uses the $\phi$'th block of identifiers as its set $L_i$. 

%of the identifiers of processes it listens to.
In each phase, $p_i$ calls graded consensus, followed
by conciliation, and then a separate instance of graded consensus.
It starts with $v_i$ as its input value and updates it to the value
returned from graded consensus.
If the grade returned from the first call to graded consensus
is 0, it updates its current value, $v_i$, to be the value returned 
from conciliation.
If the grade returned from the second call to graded consensus is 1,
it decides its current value on 
\Cref{line:ba-decide}
and sets $decided_i$ to true.
If this is not phase $2k+1$, then $p_i$ participates in one additional phase
before it returns,
%However, it performs exactly one additional phase before terminating, 
to ensure that all other honest
processes that have not yet decided will decide the same value.
Note that, even if its current value, $v_i$, changes during this additional phase
(because some of the other honest processes have already terminated),
it will not change the value of $decision_i$.
%its decision does not change.
We will show that if the number of misclassified processes is at most $k$ and
honest process $p_i$ returns on \Cref{line:un-ba-ret2},
then, immediately before it performed this step, 
%$p_i$ set $decision_i$ to $v_i$.
it decided value $v_i$.
However, if the number of misclassified processes is larger, process $p_i$ 
is not guaranteed to decide and the value it returns might not agree with
the values returned by other honest processes.

If all honest processes have the same input value $v$,
they all return $(v,1)$ from the first call to graded consensus.
In this case, the call to conciliation does not change their current values,
they all return $(v,1)$ from the second call to graded consensus, 
decide $v$, and terminate at the end of the second phase.

Otherwise, suppose that, in some phase $\phi \leq 2k$, all honest processes return the same value, $v$,
but possibly different grades,
from the second call to graded consensus.
Then, in phase $\phi +1$, they all return value $v$ and grade 1
from the first call to graded consensus.
As above, the call to conciliation does not change their current values
and they all return $(v,1)$ from the second call to graded consensus.
The honest processes that decided $v$ at the end of phase $\phi$ return.
The remaining honest processes decide $v$ at the end of phase $\phi+1$.
If $\phi = 2k$, then they also return at the end of phase $\phi+1$
If $\phi < 2k$, they return at the end of phase $\phi+2$.

To ensure that this case will eventually happen when $k$ is an upper bound on the number of
misclassified processes, we prove
that there is a phase, $\phi$, in which $L_i \subseteq H$ for all honest processes $p_i$.
%all honest processes have only identifiers of honest processes in the $\phi$'th block of their classification vectors.
In this phase, if their inputs to conciliation are all the same,
then they all return this value
from conciliation and they use this value as input to the second call to graded consensus.
If some of their inputs to conciliation are different, they all have grade 0, so they
will all set their current value to the value returned from conciliation.
In either case, all honest processes will return the same value and grade 1 from the
second call to graded consensus.

\begin{algorithm} [h]
\caption{Unauthenticated Byzantine Agreement with Classification: (code for process $p_i$)}
%\label{algorithm:un_byz_class}
\label{algo:un_ba_class}
\footnotesize
\textbf{Conditions under which agreement and strong unanimity are guaranteed:}\\
$k$ is an upper bound on the number of misclassified processes\\
$(2k+1)(3k +1)\leq n-t-k$\\
$c_i =$ {\bf classify}($a_i$)\\
\ \\
\textbf{ba-with-classification($x_i,c_i,k$):}
\begin{algorithmic} [1] 
%\State $c_i \leftarrow \pi(\mbox{classify}(a_i))$
\State $v_i \leftarrow x_i$
\State $order_i \leftarrow \pi(c_i)$
%\State $v_i \leftarrow decision_i \leftarrow x_i$
\State $\mathit{decided}_i \gets \mathit{false}$
\State \textbf{for} $\phi \gets 1$ to $2k+1$:
\State \hskip2em Let $L_i = \{ order_i[j] \ |\ (3k+1)(\phi-1) +1 \leq j \leq (3k+1)\phi\}$
\State \hskip2em $(v_i,g_i) \gets$ \textbf{graded-consensus-with-core-set}($v_i,k,L_i$)\label{line:ba-gc1}
\State \hskip2em $v'_i \gets$ \textbf{conciliate}$(v_i,k,L_i)$\label{line:ba-concil}
\State \hskip2em \textbf{if} $g_i = 0$ \textbf{then} $v_i \gets v'_i$\label{line:ba-useconcil}
\State \hskip2em $(v_i,g_i) \gets$ \textbf{graded-consensus-with-core-set}($v_i,k,L_i$)\label{line:ba-gc2}
\State \hskip2em \textbf{if} $\mathit{decided}_i$ \textbf{then return}($decision_i$)\label{line:un-ba-ret1}
\State \hskip2em \textbf{if} $g_i = 1$ \textbf{then}\label{line:ba-test}
\State \hskip4em $decision_i \leftarrow v_i$\label{line:ba-decide}
\State \hskip4em $\mathit{decided}_i \gets \mathit{true}$
% \State \hskip2em \textbf{if} $\phi = 2k+ 1$ \textbf{then return}($v_i$) \label{line:un-ba-ret2} %\BlueComment{use $v_i$ as $decision_i$ might be undefined without the preconditions}
\State \textbf{return}($v_i$) \label{line:un-ba-ret2} %\BlueComment{use $v_i$ as $decision_i$ might be undefined without the preconditions}
\end{algorithmic} 
% \Faith{I changed return($decision_i$) to return($v_i$) at the end of the algorithm, since $decision_i$ might not be defined if the preconditions do not hold.}
\end{algorithm}

Note that, for each phase $\phi$,
$\ell = (3k+1)(\phi-1) +1$,  $r = (3k+1)\phi \leq (2k+1)(3k+1) \leq n-t-k$, and $r- \ell + 1 = 3k+1$,
%\geq k$,
so $|L_i| = 3k+1$ for all honest processes $p_i$.
Furthermore, by \Cref{lemma:coresize}, there is
a set $G \subseteq H$ of size at least $r- \ell + 1 - k = 2k+1$ such that  $G \subseteq L_i$ for all $i \in H$.
%Thus, the preconditions of graded consensus are satisfied.

\begin{lemma}
    \label{lemma:un-ba-two}
Let $p_j$ be a faulty process. Then, there are at most two (consecutive) phases in which 
$j \in \mathit{L}_i$ for some honest process $p_i$.
%there is an honest process that listens to $p_j$.
%$p_i$ with $j \in \mathit{L}_i$.
\end{lemma}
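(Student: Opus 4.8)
The plan is to track, for a fixed faulty process $p_j$, the positions that $j$ occupies in the orderings $\pi(c_i)$ as $i$ ranges over the honest processes, and to show these positions all lie within a window of width roughly $k$. Since consecutive phases use disjoint blocks of $3k+1$ consecutive positions, a window of width at most $k$ (or more precisely, less than $3k+1$ on each side of a block boundary) can intersect at most two consecutive blocks, which gives the claim. So the first step is: if $p_j$ is misclassified as honest by \emph{no} honest process, then by Corollary~\ref{corollary:faultypos} its position in every $\pi(c_i)$ exceeds $n-t-k_A \ge n-t-k \ge (2k+1)(3k+1)$, so $j$ never appears in any $L_i$ and there are zero such phases. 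Hence we may assume $p_j$ is misclassified as honest by at least one honest process.

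The second step handles the remaining case. Let $i_0 \in H$ be an honest process that misclassifies $p_j$ as honest, and let $q_0$ be the position of $j$ in $\pi(c_{i_0})$. For any other honest process $p_i$: if $p_i$ classifies $p_j$ correctly (as faulty), then by Corollary~\ref{corollary:faultypos} (contrapositive) the position of $j$ in $\pi(c_i)$ is greater than $n-t-k_A \ge (2k+1)(3k+1)$, so $j \notin L_i$ in any phase. If instead $p_i$ also misclassifies $p_j$ as honest, then by Lemma~\ref{lemma:faultydiffer} the positions of $j$ in $\pi(c_i)$ and $\pi(c_{i_0})$ differ by at most $k_A - 1 \le k - 1 < k$. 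Therefore all honest processes that place $j$ in some $L_i$ place it at a position in the interval $[q_0-(k-1),\, q_0+(k-1)]$, an interval of length $2k-1 < 3k+1$.

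The third step is the combinatorial conclusion. The phase blocks partition positions $1,\dots,(2k+1)(3k+1)$ into consecutive intervals of length exactly $3k+1$. An interval of length $2k-1 < 3k+1$ can meet at most two of these blocks, and if it meets two they are consecutive. Every phase $\phi$ in which $j \in L_i$ for some honest $p_i$ is a phase whose block contains a position in $[q_0-(k-1), q_0+(k-1)]$; hence there are at most two such phases, and they are consecutive. This proves the lemma.

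I expect the main subtlety, rather than a genuine obstacle, to be bookkeeping around the boundary inequalities: one must be careful that $k_A \le k$ (which holds since $k$ is assumed to be an upper bound on the number of misclassified processes), that $n - t - k_A \ge (2k+1)(3k+1)$ so that ``position beyond $n-t-k_A$'' really does mean ``beyond all blocks'' (this follows from the standing condition $(2k+1)(3k+1) \le n-t-k$), and that Lemma~\ref{lemma:faultydiffer} applies — it requires both classification vectors to misclassify $p_j$ as honest, which is exactly the case we have restricted to. Once these conditions are lined up, the argument is a short position-counting argument with no heavy computation.
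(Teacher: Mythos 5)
Your proposal is correct and follows essentially the same route as the paper's proof: both rest on Corollary~\ref{corollary:faultypos} (any $j$ appearing in some $L_i$ must be misclassified as honest by $p_i$) and Lemma~\ref{lemma:faultydiffer} (positions of $j$ across misclassifying honest processes differ by at most $k_A-1\le k-1$), followed by the observation that a window of width less than $3k+1$ meets at most two consecutive blocks. The only difference is presentational — you anchor the window at a fixed honest process $i_0$, while the paper anchors at a fixed phase $\phi$ — and your bookkeeping of the boundary inequalities is sound.
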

\begin{proof}
Suppose that $p_j$ is a faulty process such that $j \in L_i$ for some $i \in H$ during some phase $\phi$.
Then the position of $j$ in $order_i$ is between $\ell = (3k+1)(\phi-1)+1$ and  $r= (3k+1)\phi \leq
(3k+1)(2k+1) \leq n-t-k$.
By \Cref{corollary:faultypos}, $p_j$ is misclassified as honest by $p_i$. Furthermore, by \Cref{lemma:faultydiffer}, if $p_j$ is misclassified as honest by another honest process $p_{i'}$,
then the positions of $j$ in $order_i$ and $order_{i'}$ differ by at most $k-1$. 
If the position of $j$ in $order_i$ is between $\ell > (3k+1)$ and $\ell + k -2$,
then it is possible that $j \in L_{i'}$ during phase $\phi - 1$.
If the position of $j$ in $order_i$ is between $r-k+2$ and $r \leq (3k+1)2k$,
then it is possible that $j \in L_{i'}$ during phase $\phi + 1$.
Otherwise, $j \in L_{i'}$ during phase $\phi$.
Hence, there are at most two (consecutive) phases in which
$j \in \mathit{L}_i$ for some honest process $p_i$.
%there is an honest process that listens to $p_j$.
\end{proof}

\begin{lemma}[Strong Unanimity]
\label{lemma:un_ba_strong_unanimity}
If every honest process has the same input value, $v$, then they all return $v$.
\end{lemma}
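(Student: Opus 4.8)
The plan is to exploit the fact that, in an execution where every honest process starts with the same input $v$, the behaviour of the protocol is essentially forced: at every point the honest processes all hold the value $v$, and the only subroutine whose output could disturb this --- conciliation --- is never used, because the grade returned by graded consensus is always $1$. Concretely, I would first record the structural facts already established just before \Cref{lemma:un-ba-two}: in every phase $\phi$ we have $|L_i| = 3k+1$ for all $i \in H$, and, since $k_A \le k$ and $(3k+1)\phi \le (2k+1)(3k+1) \le n-t-k$, \Cref{lemma:coresize} supplies a set $G \subseteq H$ with $|G| \ge (3k+1) - k = 2k+1$ and $G \subseteq L_i$ for all $i \in H$. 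Hence the preconditions of \Cref{algorithm:un_gc_core_set} are met in every phase.

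Next I would run the argument through the first phase. Since $v_i = x_i = v$ for all $i \in H$, Strong Unanimity of graded consensus (\Cref{lemma:unanimity}) gives that the call on \Cref{line:ba-gc1} returns $(v,1)$ to every honest process; in particular $g_i = 1$, so the value $v'_i$ produced by the conciliation call on \Cref{line:ba-concil} is simply discarded on \Cref{line:ba-useconcil}, and $v_i$ is still $v$. Applying \Cref{lemma:unanimity} again to the call on \Cref{line:ba-gc2}, every honest process obtains $(v,1)$, so on \Cref{line:ba-decide} it sets $decision_i = v$ and $decided_i = \mathit{true}$.

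Finally I would close the loop. If $k = 0$ there is a single phase, and the \textbf{for} loop falls through to \Cref{line:un-ba-ret2}, which returns $v_i = v$. If $k \ge 1$ there is a phase~$2$; repeating the identical reasoning --- the preconditions and the common input value $v$ are unchanged, and every honest process is still participating since they all decided simultaneously in phase~$1$ --- each honest process again reaches \Cref{line:un-ba-ret1} with $decided_i = \mathit{true}$ and returns $decision_i = v$. Either way, all honest processes return $v$. I do not expect a genuine obstacle here; the only points needing care are (i) verifying the core-set precondition uniformly over all phases via \Cref{lemma:coresize}, and (ii) handling the $k = 0$ corner case, where the protocol terminates via \Cref{line:un-ba-ret2} rather than \Cref{line:un-ba-ret1}.
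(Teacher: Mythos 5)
Your proposal is correct and follows essentially the same route as the paper's proof: invoke the core-set precondition (established via \Cref{lemma:coresize} just before \Cref{lemma:un-ba-two}), apply \Cref{lemma:unanimity} to both graded-consensus calls in phase~1 so that conciliation is discarded and every honest process decides $v$, then return in the following phase. Your explicit handling of the $k=0$ case (returning via \Cref{line:un-ba-ret2} rather than \Cref{line:un-ba-ret1}) is a small point the paper's proof glosses over, but the argument is the same.
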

\begin{proof}
If all honest processes have the same input value $v$, then in phase 1, by \Cref{lemma:unanimity},
they all return $(v,1)$ from the call to \textbf{graded-consensus-with-core-set} on \Cref{line:ba-gc1}.
In this case, the call to \textbf{conciliate} does not change their current values
on \Cref{line:ba-useconcil}. Since their current values are all $v$ when they 
call \textbf{graded-consensus-with-core-set} on \Cref{line:ba-gc2}, they all return $(v,1)$ by \Cref{lemma:unanimity}.
Hence, they all
decide $v$ on \Cref{line:ba-decide} and return on \Cref{line:un-ba-ret1} at the end of phase 2.%\Ayaz{This is a stupid question but does this assume $k > 0$?}
%\Ayaz{this still refer to the previous version of the code i.e. when we don't have two returns}
\end{proof}

\begin{lemma}
\label{lemma:un-decide}
If some honest process decides value $v$ in phase $\phi < 2k+1$,
then every honest process decides $v$ by the end of phase $\phi + 1$.
\end{lemma}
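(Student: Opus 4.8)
The plan is to show that if some honest process $p_j$ decides $v$ in phase $\phi < 2k+1$, then $p_j$ must have set $g_j = 1$ on the second call to \textbf{graded-consensus-with-core-set} in phase $\phi$ (\Cref{line:ba-gc2}). By the Coherence property (\Cref{lemma:coherence}), every honest process then returns the value $v$ from that same call; those that got grade $1$ decide $v$ at the end of phase $\phi$, and the rest simply set $v_i \gets v$. Hence every honest process begins phase $\phi+1$ holding current value $v$. The key observation is that from this point on, an all-honest-agreement configuration is stable: by Strong Unanimity of graded consensus (\Cref{lemma:unanimity}), the first call in phase $\phi+1$ (\Cref{line:ba-gc1}) returns $(v,1)$ at every honest process, so $g_i = 1$, which means the \textbf{conciliate} output is discarded on \Cref{line:ba-useconcil} and $v_i$ stays equal to $v$; then the second call (\Cref{line:ba-gc2}) again returns $(v,1)$ by \Cref{lemma:unanimity}.

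I would then conclude by case analysis at the end of phase $\phi+1$. Any honest process that had already set $decided_i$ (those that decided $v$ in phase $\phi$) returns $decision_i = v$ on \Cref{line:un-ba-ret1}; this is safe because $decision_i$ was set to $v$ and is never overwritten afterward. Any honest process that had not yet decided reaches \Cref{line:ba-test} with $g_i = 1$, so it decides $v$ on \Cref{line:ba-decide}. In either case, every honest process has decided $v$ by the end of phase $\phi+1$, which is exactly the claim. (Note $\phi+1 \le 2k+1$, so phase $\phi+1$ is actually executed; the corner case $\phi+1 = 2k+1$ is fine, since the decision still happens at the end of that phase before the loop terminates and control falls through to \Cref{line:un-ba-ret2}.)

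The main obstacle to make airtight is the very first step: justifying that deciding in phase $\phi$ forces $g_j = 1$ there. This requires checking the hypotheses of the graded-consensus guarantees — namely that $|L_i| = 3k+1$ and that there is a core set $G \subseteq H$ with $|G| \ge 2k+1$ contained in every $L_i$ — which the excerpt has already established for every phase via \Cref{lemma:coresize} (using $r = (3k+1)\phi \le (2k+1)(3k+1) \le n-t-k$). One subtlety is that the process could have set $decided_j$ in an \emph{earlier} phase; but then by induction (or by noting $decided_j$ can only be set once and the lemma statement says ``decides in phase $\phi$'') the decision round is phase $\phi$, so $p_j$ executed \Cref{line:ba-decide} in phase $\phi$ with $g_j = 1$. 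With that pinned down, Coherence propagates $v$ to all honest processes and the rest is the stability argument above.
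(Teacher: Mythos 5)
Your proposal is correct and follows essentially the same route as the paper's own proof: extract $g_j=1$ from the decision step, use Coherence to give every honest process the value $v$ at the end of phase $\phi$, then use Strong Unanimity twice in phase $\phi+1$ to force everyone to decide $v$. The one subtlety you flag (earlier deciders) is handled in the paper by simply taking $\phi$ to be the \emph{first} phase in which any honest process decides, which also guarantees no honest process has returned before phase $\phi+1$ and hence all still participate in its graded-consensus calls.
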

\begin{proof}
Suppose that the first phase in which any honest process decides (on \Cref{line:ba-decide})
is $\phi < 2k+1$. Let $p_i$ be an honest process that decided in phase $\phi$ and let $decision_i = v_i = v$ be the value it decided.
Then, by \Cref{line:ba-test}, $g_i = 1$. By \Cref{lemma:coherence},
%(Coherence), 
every honest process, $p_j$, returned the value 
$v_j = v$ from \textbf{graded-consensus-with-core-set} on \Cref{line:ba-gc2} during phase $\phi$. Hence, if $p_j$ decided in phase $\phi$,
it decided value $v$.

Since an  honest process that decides prior to phase $2k+1$ returns one phase after it decides,
no honest process returns before phase $\phi+1$.
When the honest processes perform \textbf{graded-consensus-with-core-set} on \Cref{line:ba-gc1} during phase $\phi+1$,
they all have the same input value, $v$, so by \Cref{lemma:unanimity},
%(Strong Unanimity), 
they all return $(v,1)$ from this call.
Thus, when they perform \Cref{line:ba-useconcil}, their current values remain unchanged, so
they all have the same input value, $v$, when they next perform \textbf{graded-consensus-with-core-set} on line \Cref{line:ba-gc2}.
By \Cref{lemma:unanimity}, they all return $(v,1)$ from this call, too.
Hence, if process $p_j$ did not decide in phase $\phi$, it decides $v$ on on \Cref{line:ba-decide} during phase $\phi+1$.
\end{proof}

\begin{lemma}\label{lemma:un-if-good-phase}
Suppose that $L_i \subseteq H$ for every honest process, $p_i$, during phase $\phi$.
%\leq 2k+1$.
Then all honest processes decide the same value by the end of phase $\phi$.
\end{lemma}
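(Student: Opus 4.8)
The plan is to walk through one good phase $\phi$ line by line, tracking the current values and grades of the honest processes, and show they all leave phase $\phi$ having decided a common value. First I would dispose of the easy case: if some honest process has already decided before phase $\phi$, then the earliest phase $\phi'$ in which any honest process decides satisfies $\phi' < \phi \le 2k+1$, so \Cref{lemma:un-decide} applies and yields that every honest process has decided the same value by the end of phase $\phi'+1 \le \phi$, and we are done. Hence I may assume $decided_i = \mathit{false}$ for every honest $p_i$ at the start of phase $\phi$, which in particular means no honest process returns on \Cref{line:un-ba-ret1} during this phase and, since $decided_i$ is only changed on \Cref{line:ba-decide}, it stays $\mathit{false}$ until that point.

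Next I would record that the preconditions of both subroutines are met in phase $\phi$. For every phase, $|L_i| = 3k+1$ for all $i \in H$, and by \Cref{lemma:coresize} (applied with $\ell = (3k+1)(\phi-1)+1$ and $r = (3k+1)\phi \le (2k+1)(3k+1) \le n-t-k$, using $k_A \le k$) there is a core set $G \subseteq H$ with $|G| \ge 2k+1$ and $G \subseteq L_i$ for all $i \in H$; these are exactly the conditions \textbf{graded-consensus-with-core-set} needs, so \Cref{lemma:coherence,lemma:unanimity} are available for both graded-consensus calls. The extra hypothesis of the present lemma, $L_i \subseteq H$ for all $i \in H$, is precisely the remaining precondition of \textbf{conciliate}, so \Cref{lemma:concil-agree,lemma:concil-unan} are available as well.

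The core of the argument is a two-case split on the grades returned by the first graded consensus on \Cref{line:ba-gc1}. If some honest process returns grade $1$, then by coherence (\Cref{lemma:coherence}) there is a single value $v$ that every honest process returns, so after \Cref{line:ba-gc1} all honest processes hold $v$; they then feed the common input $v$ to \textbf{conciliate}, which by strong unanimity (\Cref{lemma:concil-unan}) returns $v$ to everyone, so whether or not a process overwrites on \Cref{line:ba-useconcil} it still holds $v$. If instead no honest process returns grade $1$, then every honest process has $g_i = 0$, hence overwrites on \Cref{line:ba-useconcil} with the output of \textbf{conciliate}, which by agreement (\Cref{lemma:concil-agree}) is a single common value. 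Either way, all honest processes enter the second graded consensus on \Cref{line:ba-gc2} with the same input $w$; by strong unanimity (\Cref{lemma:unanimity}) they all return $(w,1)$, and since none has $decided_i$ set they all reach \Cref{line:ba-decide} and decide $w$ within phase $\phi$.

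I expect the only subtlety to be bookkeeping rather than mathematics: one must check that the conditional overwrite on \Cref{line:ba-useconcil} is harmless in the grade-$1$ case (it is a no-op exactly where it could matter), and that honest processes which decided in an earlier phase cannot desynchronize phase $\phi$ — which is precisely what the reduction to \Cref{lemma:un-decide} at the outset handles. The genuine work has already been packed into the graded-consensus and conciliation lemmas; here it suffices to confirm their hypotheses hold in a good phase and to chain strong unanimity through the two graded-consensus calls.
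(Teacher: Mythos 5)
Your proposal is correct and follows essentially the same route as the paper's proof: reduce the already-decided case to \Cref{lemma:un-decide}, then case-split on whether any honest process gets grade $1$ from the first graded consensus, using coherence plus conciliation's strong unanimity in one branch and conciliation's agreement in the other, and finishing with strong unanimity of the second graded consensus. The extra bookkeeping you include (explicitly checking the core-set preconditions via \Cref{lemma:coresize} and noting the overwrite on \Cref{line:ba-useconcil} is a no-op in the grade-$1$ branch) is consistent with, and slightly more explicit than, the paper's argument.
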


\begin{proof}
If some honest process decides $v$ in phase $\phi' < \phi$, then, by \Cref{lemma:un-decide},
they all decide $v$ by the end of phase $\phi'+1 \leq \phi$.
so, suppose that no honest process has decided prior to phase $\phi$.

First suppose that, during phase $\phi$, some honest process returns $(v,1)$ from the call to \textbf{graded-consensus-with-core-set} on \Cref{line:ba-gc1}. Then, by \Cref{lemma:coherence}, all honest processes return the value $v$ from this call. Hence, they all have input $v$ in the call to \textbf{conciliate} on \Cref{line:ba-concil}. By \Cref{lemma:concil-unan}, they all return the value $v$ from this call.
Hence, they all have input $v$ in the call to \textbf{graded-consensus-with-core-set} on \Cref{line:ba-gc2}.
By \Cref{lemma:unanimity}, they all return $(v,1)$ from this call. Hence, they all
decide $v$ on \Cref{line:ba-decide}.

Otherwise, during phase $\phi$, all honest processes return grade 0 from the call to \textbf{graded-consensus-with-core-set} on \Cref{line:ba-gc1}. By \Cref{lemma:concil-agree}, they all return the same value, $v$,
from the call to \textbf{conciliate} on \Cref{line:ba-concil}. Hence, each honest process, $p_i$, sets $v_i$ to $v$ on \Cref{line:ba-useconcil} and they all have input $v$ in the call to \textbf{graded-consensus-with-core-set} on \Cref{line:ba-gc2}.
By \Cref{lemma:unanimity}, they all return $(v,1)$ from this call. Hence, they all
decide $v$ on \Cref{line:ba-decide}.
\end{proof}

\begin{lemma}
\label{lemma:un-good-phase}
There is a phase $\phi$ during which $L_i \subseteq H$ for every honest process, $p_i$.
%If no honest processes decide during the first $2k$ phases, then they all return the same value in phase $2k+1$.
\end{lemma}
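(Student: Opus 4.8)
The plan is to count, over all $2k+1$ phases, how many (phase, faulty process) incidences can occur, and argue by pigeonhole that some phase is ``clean'' — i.e., no honest process listens to any faulty process in that phase. Concretely, fix the execution and let $k_A \le k$ be the number of misclassified processes. For each phase $\phi$, call $\phi$ \emph{dirty} if there exists some honest $p_i$ and some faulty $p_j$ with $j \in L_i$ during phase $\phi$; otherwise call it \emph{clean}. I want to show not all $2k+1$ phases are dirty.

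The first key step is to bound, for a single faulty process $p_j$, the number of phases it can ``pollute.'' This is exactly \Cref{lemma:un-ba-two}: any faulty $p_j$ appears in $L_i$ for some honest $p_i$ in at most two (consecutive) phases. Moreover — and this is the part I would check carefully — if $p_j$ appears in some honest process's leader set at all, then by \Cref{corollary:faultypos} it must be one of the at most $k_A$ misclassified processes (its position in some $\pi(c_i)$ is at most $(3k+1)(2k+1) \le n-t-k \le n-t-k_A$). So only misclassified faulty processes, of which there are at most $k_F \le k_A \le k$, can pollute any phase.

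Now the counting: each of the at most $k_A$ misclassified faulty processes pollutes at most $2$ phases, so the total number of (phase, polluting-process) incidences is at most $2k_A \le 2k$. Hence at most $2k$ phases are dirty. Since there are $2k+1$ phases, at least one phase $\phi$ is clean, meaning $L_i$ contains no faulty process for any honest $p_i$; equivalently $L_i \subseteq H$ for all $i \in H$ in phase $\phi$, which is the claim.

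I expect the main obstacle to be making the ``at most two phases, and they are consecutive'' bound interact correctly with the counting — in particular being careful that \Cref{lemma:un-ba-two} already restricts attention to misclassified faulty processes (via \Cref{corollary:faultypos} applied with $k_A \le k$ and the inequality $(2k+1)(3k+1) \le n-t-k$ guaranteed by the algorithm's precondition), so that the bound $2k_A$ rather than $2(k+f)$ is what we get. The rest is a one-line pigeonhole argument over the $2k+1$ phases.
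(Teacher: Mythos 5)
Your proposal is correct and follows essentially the same route as the paper: only misclassified faulty processes (at most $k_F \le k$ of them, by \Cref{corollary:faultypos} and the precondition $(2k+1)(3k+1) \le n-t-k$) can appear in any honest process's $L_i$, each pollutes at most two phases by \Cref{lemma:un-ba-two}, and pigeonhole over the $2k+1$ phases yields a clean one. The only cosmetic difference is that you bound the number of polluting processes by $k_A$ where the paper uses the sharper $k_F$; both are at most $k$, so the argument is unaffected.
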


\begin{proof}
By \Cref{corollary:faultypos}, if $c$ is a classification vector (of an honest process), $p_j$ is a faulty process, and the position of $j$ in $\pi(c)$ is at most $n-t-k$, then $p_j$ is misclassified by $c$.
Since $(2k+1)(3k+1) \leq n-t-k$, 
if $j$ is among the first $(2k+1)(3k+1)$ positions of $c$, then $p_j$ was misclassified by $c$.
By \Cref{lemma:un-ba-two}, there are at most two phases in which $j \in L_i$ for some honest process $p_i$. There are $k_F \leq k$ misclassified faulty processes, so, in at least one of the $2k+1$ phases, $L_i \subseteq H$ for every honest process, $p_i$.
\end{proof}

\begin{lemma}[Agreement]
\label{lemma:un-agree}
All honest processes return the same value.
\end{lemma}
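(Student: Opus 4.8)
The plan is to assemble the three preparatory lemmas already proved — \Cref{lemma:un-good-phase}, which guarantees a phase $\phi$ in which every honest process has an all-honest listener set $L_i \subseteq H$; \Cref{lemma:un-if-good-phase}, which says that in such a phase all honest processes decide a common value by its end; and \Cref{lemma:un-decide}, which propagates any decision made in a phase $\phi < 2k+1$ to all honest processes by the end of phase $\phi+1$ — and then add a short argument checking that the value an honest process eventually \emph{returns} (on \Cref{line:un-ba-ret1} or \Cref{line:un-ba-ret2}) equals the value it decided.

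Concretely, I would first fix an execution with $k$ a valid upper bound on the number of misclassified processes, so that the hypotheses of all three lemmas hold. By \Cref{lemma:un-good-phase} there is a phase $\phi_0 \le 2k+1$ with $L_i \subseteq H$ for all $i \in H$, and \Cref{lemma:un-if-good-phase} then yields a common value $v$ that every honest process decides by the end of phase $\phi_0$. Hence the set of phases in which some honest process decides is nonempty; let $\phi^*$ be its minimum, so $\phi^* \le \phi_0 \le 2k+1$. If $\phi^* \le 2k$, then \Cref{lemma:un-decide} applies and every honest process decides $v$ by the end of phase $\phi^*+1 \le 2k+1$. If instead $\phi^* = 2k+1$, then $\phi_0 = 2k+1$ as well, so \Cref{lemma:un-if-good-phase} forces every honest process to decide the common value $v$ by the end of phase $2k+1$; since none decided earlier, they all decide $v$ in that last phase. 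In either case every honest process decides the same value $v$, in some phase $\psi \le 2k+1$.

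It remains to pass from ``decides $v$ in phase $\psi$'' to ``returns $v$''. If $\psi < 2k+1$, the process runs one more phase and returns $decision_i$ on \Cref{line:un-ba-ret1} of phase $\psi+1$; since $decision_i$ is assigned only once, when $decided_i$ first becomes true, and is never overwritten thereafter, it returns $v$. If $\psi = 2k+1$, the \textbf{for} loop terminates immediately after the process executes $decision_i \leftarrow v_i = v$ in phase $2k+1$, and it returns $v_i = v$ on \Cref{line:un-ba-ret2}. Thus all honest processes return $v$, which is the claim.

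The delicate part is purely the phase bookkeeping. I expect the main care to be needed in the boundary case $\phi^* = 2k+1$ — where there is no ``extra'' phase and \Cref{lemma:un-decide} is inapplicable, so one must invoke \Cref{lemma:un-if-good-phase} directly and argue that nobody decided before the last phase — and in verifying at \emph{both} return points that the returned quantity is the decided value $decision_i$ rather than a later, possibly mutated, value of $v_i$.
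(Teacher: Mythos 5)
Your proof is correct and follows essentially the same route as the paper: invoke \Cref{lemma:un-good-phase} to obtain a phase with all-honest listener sets, \Cref{lemma:un-if-good-phase} to get a common decision by the end of that phase, and then check that each return point (\Cref{line:un-ba-ret1} or \Cref{line:un-ba-ret2}) yields the decided value. The paper's own proof is just a terser version of this; your extra bookkeeping around $\phi^*$ and the boundary case $\phi^* = 2k+1$ is sound but not a different argument.
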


\begin{proof}
By \Cref{lemma:un-good-phase},
there is a phase $\phi$ during which $L_i \subseteq H$ for every honest process, $p_i$,
and, by \Cref{lemma:un-if-good-phase}, all honest processes decide the same value 
by the end of phase $\phi$. 
Since every honest process returns the value it decides, either in the phase after it decides
or in phase $2k+1$, they all return the same value.
\end{proof}

\begin{theorem}
\label{theorem:unauth_ba_class}
If $k$ is an upper bound on the number of misclassified processes in the classification vectors
of the honest processes and $(2k+1)(3k+1) \leq n-t-k$,
then \Cref{algo:un_ba_class} is a correct Byzantine agreement algorithm (i.e., it satisfies Agreement and Strong Unanimity) 
and a total of $O(nk^2)$ messages are sent by the honest processes.
Even if the number of misclassified processes is larger than $k$,
each honest process returns within $5(2k+1)$ rounds and sends at most $5n$ messages. 
\end{theorem}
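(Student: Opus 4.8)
The plan is to treat the theorem as an assembly of the lemmas already proved, preceded by a check that their hypotheses actually hold in every phase, and followed by two counting arguments. First I would record the structural facts that hold in each phase $\phi$: with $\ell=(3k+1)(\phi-1)+1$ and $r=(3k+1)\phi$, the hypothesis $(2k+1)(3k+1)\le n-t-k$ gives $r\le n-t-k$ and $r-\ell+1=3k+1$, so $|L_i|=3k+1$ for every honest $p_i$, and \Cref{lemma:coresize} supplies a core set $G\subseteq H$ with $|G|\ge 2k+1$ and $G\subseteq L_i$ for all $i\in H$. Hence the preconditions of \Cref{algorithm:un_gc_core_set} are met in every phase, so every call to graded consensus satisfies Strong Unanimity and Coherence. (Conciliation's extra precondition $L_i\subseteq H$ is not needed in every phase, only in the ``good'' phase.) With this in place, Strong Unanimity of the whole algorithm is exactly \Cref{lemma:un_ba_strong_unanimity}, and Agreement---together with termination of every honest process within $2k+1$ phases---is exactly \Cref{lemma:un-agree}, which in turn rests on the good-phase existence \Cref{lemma:un-good-phase} and the propagation statements \Cref{lemma:un-if-good-phase} and \Cref{lemma:un-decide}.

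Next I would do the round bound, and here I want to emphasize that it is purely structural and so holds regardless of how many processes are misclassified. The for-loop runs at most $2k+1$ iterations, and each iteration consists of two $2$-round graded-consensus calls and one $1$-round conciliation call, i.e.\ exactly $5$ rounds; an honest process returns either one phase after it first sets $\mathit{decided}_i$ to true (on \Cref{line:un-ba-ret1}) or at the end of phase $2k+1$ (on \Cref{line:un-ba-ret2}). Either way it returns within $5(2k+1)$ rounds.

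Then the message bounds. An honest process $p_i$ sends messages only inside a subroutine call in a phase $\phi$ with $i\in L_i$; since $L_i$ in phase $\phi$ is the $\phi$-th block of the \emph{fixed} ordering $\pi(c_i)$, the position of $i$ in $\pi(c_i)$ lies in at most one block, so $p_i$ broadcasts in at most one phase, and in that phase it sends at most $2n$ messages in each graded-consensus call and at most $n$ in the conciliation call, for a total of at most $5n$---again independent of the number of misclassified processes. For the global count, the honest processes that broadcast at all are exactly those $p_i$ with $i$ among the first $(2k+1)(3k+1)$ positions of $\pi(c_i)$; applying \Cref{lemma:honest-broadcasters} with $r=(2k+1)(3k+1)$ (legitimate since $r\le n-t-k\le n-t-k_H$) bounds their number by $(2k+1)(3k+1)+k_H=O(k^2)$, so the total is $O(k^2)\cdot O(n)=O(nk^2)$ messages.

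I expect the conceptual core to be already discharged by the lemmas---in particular the good-phase argument (\Cref{lemma:un-ba-two} combined with $k_F\le k$, so at least one of the $2k+1$ phases has all honest leader sets contained in $H$) and the bound on the number of honest broadcasters---so the remaining work is bookkeeping. The one point I would be careful about is cleanly separating the claims that hold unconditionally (the $5(2k+1)$-round bound and the $5n$ per-process message bound, both structural) from those that need the misclassification bound $k$ and the inequality $(2k+1)(3k+1)\le n-t-k$ (Agreement, Strong Unanimity, and the $O(nk^2)$ total), so that the ``even if the number of misclassified processes is larger than $k$'' clause in the statement is visibly justified.
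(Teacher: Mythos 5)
Your proof is correct and follows essentially the same route as the paper's: correctness is delegated to \Cref{lemma:un-agree} and \Cref{lemma:un_ba_strong_unanimity}, the round bound comes from counting $5$ rounds over at most $2k+1$ phases, the per-process $5n$ bound from the disjointness of the blocks $L_i$ across phases, and the global $O(nk^2)$ bound from \Cref{lemma:honest-broadcasters}. The extra care you take in verifying the core-set preconditions and in separating the unconditional claims from those requiring the misclassification bound matches what the paper does in the surrounding text.
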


\begin{proof}
Correctness follows from \Cref{lemma:un-agree} and \Cref{lemma:un_ba_strong_unanimity}.
From \Cref{line:un-ba-ret2}, every honest process returns by the end of phase $2k+1$.
%From the code of \Cref{algorithm:un_gc_core_set} and \Cref{algo:un_concil_core_set}
Each call to {\bf graded-consensus-with-core-set} takes 2 rounds and each call to {\bf conciliate} takes 1 round,
so each phase takes 5 rounds. Thus every honest process returns within $5(2k+1)$ rounds.

Since 
%$order_i$ is a permutation of the identifiers of all processes and %$\{1,\ldots,n\}$
the sets, $L_i$, used by an honest process, $p_i$, in different phases are disjoint,
there is at most one phase in which $i \in L_i$.
From the code of \Cref{algorithm:un_gc_core_set} and \Cref{algo:un_concil_core_set}, it
broadcasts at most once each round of a phase in which $i \in L_i$ and, otherwise, does not
send any messages. Thus it sends at most $5n$ messages.

If $k$ is an upper bound on the number of misclassified 
%honest 
processes, then,
by \Cref{lemma:honest-broadcasters}, the number of honest processes $p_i$ such that $i \in L_i$ during one of
the $2k+1$ phases is at most $(2k+1)(3k+1) + k_H \leq (2k+1)(3k+1) + k$. Thus, the honest processes send a total of
$O(nk^2)$ messages.
%Note that, if $k$ is an upper bound on the number of misclassified processes, then it is also an upper bound on the number of misclassified honest processes.
\end{proof}

% \Ayaz{We also need the message complexity of the algorithm when there are more than $k$ misclassified processes. This is because in the final algorithm we may call this Byzantine agreement with classification with estimate that is too small (i.e. setting $k$ as $2^{\phi-1}$ for some $\phi$ and turns out there are $k_A> k$ misclassified processes). The message complexity for this case should be $O(n^2)$ since $p_i \in L_i$ for at most one phase.

% previously I needed to state the number of messages sent by individual processes, but right now, instead of that, I need the number of messages sent by honest processes when there are more than $k$ misclassified processes.}

% \input{authenticated_ba_advice}
\section{Authenticated Byzantine Agreement using Classification}\label{section:auth_byz_using_class}

This section presents a conditional authenticated Byzantine agreement algorithm that makes use of the classification described in \Cref{sec:classification}. Again, it is assumed that honest processes have the same upper bound $k$ on the number of misclassified processes. Unlike the unauthenticated algorithm, this algorithm only needs $2k+1 \le n-t-k$. However, 
%now
the algorithm also needs that the upper bound, $t$, on the number of faulty processes to be less than $n/2$. We note that if the actual number of misclassified processes is larger than $k$, nothing is guaranteed 
about the outputs  
%regarding the decision
of honest processes.

In \Cref{subsection:crypto_tools},
we start by
describing
%defining
the cryptographic tools we use.
%d in the algorithm. 
Next, in \Cref{subsection:bb_committee}, we present Byzantine Broadcast with Implicit Committee, which is 
an important component of our Byzantine Agreement algorithm.
%a major building block for our algorithm.
The entire Byzantine Agreement algorithm and its analysis are 
%then 
presented in \Cref{subsection:auth_byz_class}

\subsection{Cryptographic Tools}\label{subsection:crypto_tools}

We assume the existence of a public-key infrastructure.
We use $\mathsf{sign}_i(m)$ to denote a signature created by process $p_i$ for the message $m$.
%We denote by $\mathsf{sign}_i(m)$ as a signature signed by process $p_i$ for a message $m$.
Each process can verify that this signature was created by $p_i$.
We assume that no (computationally-bounded) faulty process can forge this signature
%the signature is signed by $p_i$, and the (computationally-bounded) adversary cannot forge such signature 
if $p_i$ is honest.

%\Faith{It would be good to say a bit more about what a committee is to motivate the following definition.
%I don't know what was meant by "is important for us to limit the power of the adversary from at most $t$ to at most $k$ faulty processes".}\Ayaz{I tried doing it below.}

A key part of our algorithm is to use a committee, a set of processes which will be listened to by all honest processes. Processes not in the committee will be ignored.
A process $p_i$ demonstrates that it is in the committee by attaching a \emph{committee certificate} for $p_i$ to its messages.

% A \emph{committee certificate} is a cryptographic object used in \Cref{subsection:bb_committee}
% to prove that some process is a member of a certain committee.
%throughout the protocol to prove that someone is in the committee (see \Cref{subsection:bb_committee}).
%We use it to limit ...

\begin{definition}[Committee Certificate]
    \label{definition:committee_certificate}
A committee certificate for process $p_i$ is a set of signatures for the 
message $\langle \textsc{committee}, p_i\rangle$
by $t+1$ different processes.
% \Faith{should "at least" be replaced by "exactly"?} \Ayaz{yes, thanks.}
%A set $\mathit{CC}$ is a committee certificate for a process $p_j$ if it contains signatures for message $\langle \textsc{committee}, p_j\rangle$ from $t+1$ different processes.
\end{definition}

Although some of the signatures in a committee certificate might be from faulty processes,
every committee certificate contains signatures from at least one honest process. In \Cref{subsection:auth_byz_class}, we will show how to use this fact and the classification obtained from \Cref{sec:classification} to obtain a committee which has size $O(k)$ and has at most $k$ faulty processes, effectively reducing the number of faulty processes (that honest processes listen) from at most $t$ to $k$.
%Note that some of the signatures in a committee certificate might be by faulty processes.
%However, at least one of them must be by a honest process.

%We say that a committee certificate for $p_j$ exists if there are signatures for message $\langle \textsc{committee}, p_j\rangle$ from $t+1$ different processes (some of which might be faulty). Importantly, as long as $t < n/2$, if a committee certificate for $p_j$ exists there must be a $\langle \textsc{committee}, p_j\rangle$ message signed by some honest process. We remark that the committee certificate is important for us to limit the power of the adversary from at most $t$ to at most $k$ faulty processes. Lastly, let us define \emph{message chain}, a cryptographic object employed by our Byzantine Broadcast.

A \emph{message chain} is another cryptographic object that is used.
%\Faith{Again, include some intuition for how it is used.}\Ayaz{I tried doing it below the definition.}

\begin{definition}[Message Chain]
    \label{definition:message_chain}

A message chain of length 1 for value $x$  started by process $p_i$ is a message
$\langle x, \mathit{cc}_i, \mathsf{sign}_i(\langle x, \mathit{cc}_i \rangle)\rangle$, where
$\mathit{cc}_i$ is a committee certificate for $p_i$.
If $m$ is a message chain of length $b \ge 1$ for value $x$  started by process $p_i$
and $cc_j$ is a committee certificate for $p_j$,
then the message $\langle m, \mathit{cc}_j, \mathsf{sign}_j(\langle m, \mathit{cc}_j \rangle)\rangle$
is a message chain of length $b+1$ for value $x$ started by process $p_i$.
%A message $\langle x, \mathit{cc}_i, \mathsf{sign}_i(\langle x, \mathit{cc}_i \rangle)\rangle$ where $\mathit{cc}_i$ is a committee certificate for $p_i$ is a message chain of length $1$ for value $x$ started by $p_i$. Furthermore, a message $\langle m, \mathit{cc}_j, \mathsf{sign}_j(\langle m, \mathit{cc}_j \rangle)\rangle$ where $\mathit{cc}_j$ is a committee certificate for $p_j$ and $m$ is a message chain of length $b \ge 1$ started by a process $p_k$ that corresponds to a value $x$, is a message chain of length $b+1$ for value $x$ started by $p_k$. 
A message chain of length $b \ge 1$ is valid if it is signed by $b$ different processes.
\end{definition}

If there are at most $k$ faulty processes in the committee, a message chain of length $k+1$ must contain a committee certificate for some honest process $p_j$ and a signature by $p_j$. In \Cref{subsection:bb_committee}, we use this to ensure that when an honest process receives a message chain of length $k+1$ for value $v$ started by process $p_s$,
every other honest process must have received 
a (possibly shorter) message chain for value $v$ started by process $p_s$.
%must be received by all honest processes as well.

\subsection{Byzantine Broadcast with an Implicit Committee}\label{subsection:bb_committee}
%\Faith{If there is only one implicit committee, then change "Implicit Committees" to "an Implicit Committee"}

%\Faith{It might be good to say something here about why the committee is implicit, instead of waiting until the middle of Section 8.2.} \Ayaz{I added it after explaining $p_i$ only listen to messages from $p_j$ which has committee certificate for $p_j$. does that help?}
%The sender has to have an explicit committee certificate in order to send a message in round 1.

% \Ayaz{TODO, algo is at \Cref{algo:bb_implicit_committee}}

%Our Byzantine Broadcast algorithm (\Cref{algo:bb_implicit_committee}) 
\Cref{algo:bb_implicit_committee}
is very similar to the Byzantine Broadcast algorithm by Dolev and Strong \cite{dolev1983authenticated}.
%In the usual Byzantine Broadcast, there is a designated sender $p_s$ (received by $p_i$ as $sender_i$)
%which broadcasts its input value $x_s$.
In their algorithm,
there is a designated sender,  which is supposed to broadcast its input value. 
Each process receives the identifier, $s$, of the sender as an input parameter.
Here, each honest process $p_i$ also receives $k$ and possibly a committee certificate
%$cc_i$, 
for $p_i$.
%If $p_i$ does not have a committee certificate, we write $cc_i = \bot$. 
%\Faith{Perhaps we should remove this notation?}
%which demonstrates/proves that $p_i$ is in the committee.
%\Faith{Is $cc_i$ a committee certificate for $p_i$? Where do these committee certificates come from? It might be better to change "committee certificate" to "committee membership certificate"} \Ayaz{yes, $cc_i$ is a committee certificate for $p_i$. I've clarified this now, thanks. 
%for the second one, the idea is to abstract where $cc_i$ originated since to be fair, it is not important for the byzantine broadcast algorithm itself where $cc_i$ comes from. But sure, we can just mention `this is done in \Cref{algo:auth_ba_class}'. 
%for the third one, I have no strong opinion on the naming. feel free to change it if you like it better.}

Crucially, $p_i$ only listens to messages from a process $p_j$ if the message contains a committee certificate for $p_j$. We say the committee is implicit because honest processes do not know who the members of the committees are. However, each member, $p_j$, of the committee can prove its membership in the committee using the committee certificate for $p_j$.
We say the process $p_j$ \emph{has a committee certificate} if $cc_j$  is a committee certificate for $p_j$.

%An honest process $p_i$ maintains a set $X_i$, which will contain values that were sent by $p_s$. Specifically, values for which $p_i$ receive a message chain started by $p_s$ for that value. To reduce the communications, $p_i$ only keep at most $2$ different values in $X_i$; this is still sufficient to ensure the properties of the algorithm.

%The algorithm consists of $k+1$ rounds. In the first round, if the sender $p_s$ has a committee certificate ($cc_s \ne \bot$), $p_s$ broadcasts a message chain of length $1$ for its input value $v_s$. In each round $j$ with $2 \le j \le k+1$, for each message chain of length $j-1$ started by $p_s$ that $p_i$ receives, $p_i$ inserts the value associated with the message chain to $X_i$ as long as $X_i$ has less than $2$ elements. Furthermore, if $p_i$ has a committee certificate, $p_i$ ``continues'' the message chain by forming a message chain of length $j$ from it and broadcasting the newly formed message chain. At the end of round $k+1$, $p_i$ does the same thing for each message chain of length $k+1$ it receives, except $p_i$ will not continue the message chain.

The algorithm consists of $k+1$ rounds. During the algorithm,
when an honest process $p_i$ receives a valid message chain for some value $x$ started by $p_s$, it records $x$ in the set $X_i$, provided that $X_i$ contains fewer than 2 values.
%each honest process, $p_i$, maintains a set $X_i$ of at most 2 values
%for which it received a message chain started by $p_s$.
In the first round, if the sender has a committee certificate, it broadcasts a message chain of length $1$ for its input value, $x_s$.
In each of the $k$ subsequent rounds, if process $p_i$ has a committee certificate
and it added the value $x$ to $X_i$ at the end of the previous round,
then it extends the length of a valid message chain for $x$ started by $p_s$
(that it received in the previous round) and broadcasts it.
%each honest process, $p_i$, examines the valid message chains started by the sender that it has received.
%If any of them are for values that are not in $X_i$, it adds these values to $X_i$ until $X_i$ has 2 elements.
%If $cc_i$ is a committee certificate for $p_i$, then process $p_i$ extends the length of a message chain for each value $x$ it added to $X_i$ and broadcasts it.
At the end of round $k+1$, 
%each honest process considers any valid message chains for values $x\not\in X_i$ and adds these values to $X_i$.
%\Faith{It doesn't hurt to add all these values to $X_i$, since no more broadcasts are done. It doesn't change the argmuemnt at all.}\Ayaz{but fewer things we need to think if we know this is the same as prior rounds.}
if $X_i$ contains exactly one element, $x$, then $p_i$ returns the value $x$. Otherwise, it returns $\bot$.
%In each round $j$ with $2 \le j \le k+1$, for each message chain of length $j-1$ started by $p_s$ that $p_i$ receives, $p_i$ inserts the value associated with the message chain to $X_i$ as long as $X_i$ has less than $2$ elements. Furthermore, if $p_i$ has a committee certificate, $p_i$ ``continues'' the message chain by forming a message chain of length $j$ from it and broadcasting the newly formed message chain. At the end of round $k+1$, $p_i$ does the same thing for each message chain of length $k+1$ it receives, except $p_i$ will not continue the message chain.

%After that, $p_i$ determines its output by the following rule. If $X_i$ has $1$ element, $x$, then $p_i$ returns $x$ from this algorithm. Otherwise, $|X_i| = 0$ or $|X_i| = 2$, and in this case, $p_i$ returns $\bot$ from this algorithm.

Let $C$ denote the set of identifiers of processes that have committee certificates.
Note that $C$ is not used in the algorithm and is not known by the honest processes.
%The processes whose identifiers are in $C$ form a committee.
%Denote $C$ as the set of identifiers such that for each $i \in C$, a committee certificate for $p_i$ exists. Suppose that $|F \cap C| \le k$. (Note that $C$ is not used in the algorithm and not known to the honest processes, hence why we call it ``implicit committee''.) Then, we prove that 
If $|C \cap F| \le k$, we prove that the values returned by honest processes satisfy the following three properties:
\begin{compactitem}
    \item \emph{Committee Agreement}: All honest processes that have a committee certificate return the same value.
    \item \emph{Validity with Sender Certificate}: If the sender, $p_s$, is honest and $cc_s$ is a committee certificate for $p_s$, 
    then every honest process returns its input value, $x_s$.
    %If the sender $p_s$ is honest and $p_s$ has a committee certificate ($\mathit{cc}_s \ne \bot$), then each honest process return $x_s$, the input value of $p_s$.
    \item \emph{Default without Sender Certificate}: If the sender, $p_s$, does not have a committee certificate, then every honest process returns $\bot$.
    %($\mathit{cc}_s = \bot$), then each honest process returns $\bot$.
\end{compactitem}

\begin{ignore}
\Faith{Are these standard names for these properties? If not, we should think about making them more meaningful.
How about "Validity with Certificate" or "Validity with Sender Certificate"
instead of "Committee Validity" and
"Default without Certificate" or "Default without Sender Certificate" instead of "Non-Committee Validity"?
The "Committee" in "Committee Agreement" refers to the processes that are returning values,
whereas the "Committee" in "Committee Validity" refers to the sending being in the committee.}
\Ayaz{For standard Byzantine broadcast, usually it is

\begin{itemize}
    \item Agreement: All honest processes return the same value
    \item Validity: if the sender $p_s$ is honest, then every honest process returns its input value $x_s$.
\end{itemize}

I preprended Committee to tell that ``this is only true for processes with committee certificate (in Agreement) or if the sender has a committee certificate (in Validity).

Non-Committee Validity, I wanted to say ``this is what happens if a non-committee becomes the sender''

anyway, I think these do not have standard names yet. at least I am not aware of any prior definitions for this kind of Byzantine broadcast.}

\begin{algorithm} [h]
\caption{Byzantine Broadcast with Implicit Committee: (code for process $p_i$)}
%\label{algorithm:un_byz_class}
\label{algo:bb_implicit_committee}
\footnotesize
\textbf{Conditions under which committee agreement, committee validity, and non-committee validity are guaranteed:}\\
There are at most $k$ faulty process $p_j$ such that a committee certificate for $p_j$ exists\\
\ \\
\textbf{bb-with-implicit-committee($sender_i, x_i, k, cc_i$):} \BlueComment{$x_i$ is only used if $p_i = sender_i$}
\begin{algorithmic} [1]
\State \textbf{Round 1}:
    \State \hskip2em $X_i \gets \emptyset$
    \State \hskip2em \textbf{if} $sender_i = p_i$ and $cc_i \ne \bot$ \textbf{then}    
        \State \hskip4em $X_i \gets \{x_i\}$
        \State \hskip4em \textbf{broadcast} $\langle x_i, cc_i, \mathsf{sign}_i(\langle x_i, cc_i\rangle)\rangle$

\State \textbf{Round $j \in [2, k+1]$}:
    \State \hskip2em \textbf{for each} valid message chain $m$ of length $j-1$ started by $sender_i$ that is received from round $j-1$:
        \State \hskip4em Let $x_m$ be the value corresponding to $m$
        \State \hskip4em \textbf{if} $|X_i| < 2$ and $x_m \notin X_i$ \textbf{then}\label{line:bb_implicit_committee_check_xm}
            \State \hskip6em $X_i \gets X_i \cup \{x_m\}$
            \State \hskip6em \textbf{if} $cc_i \ne \bot$ \textbf{then} \textbf{broadcast} $\langle m, cc_i, \mathsf{sign}_i(\langle m, cc_i\rangle)\rangle$
\State \textbf{Round $k+1$}: \BlueComment{at the end of the round}
    \State \hskip2em \textbf{for each} valid message chain $m$ of length $k+1$ started by $sender_i$ that is received from round $k+1$:
        \State \hskip4em Let $x_m$ be the value corresponding to $m$
        \State \hskip4em \textbf{if} $|X_i| < 2$ and $x_m \notin X_i$ \textbf{then} $X_i \gets X_i \cup \{x_m\}$
    \State \hskip2em \textbf{if} $X_i = \{x\}$ \textbf{then} \textbf{return}($x$)
    \State \hskip2em \textbf{else} \textbf{return}($\bot$)
\end{algorithmic} 
\end{algorithm}
\end{ignore}

\begin{algorithm} [h]
\caption{Byzantine Broadcast with Implicit Committee: (code for process $p_i$)}
%\label{algo:bb_implicit_committee2}
\label{algo:bb_implicit_committee}
\footnotesize
\textbf{Conditions under which the committee agreement, validity with sender certificate, and default without sender certificate properties are guaranteed:}\\
There are at most $k$ faulty processes $p_j$ such that $cc_j$ is a committee certificate for $p_j$\\
\ \\
\textbf{bb-with-implicit-committee($s, x_i, k, cc_i$):}
\begin{algorithmic} [1]
%\State \textbf{Round 1}:
    \State $X_i \gets \emptyset$
    \State\textbf{if} $s = i$ and $cc_i$ is a committee certificate for $p_i$ \textbf{then}    
        \State \hskip2em $X_i \gets \{x_i\}$\label{line:addx1}
        \State \hskip2em \textbf{broadcast} $\langle x_i, cc_i, \mathsf{sign}_i(\langle x_i, cc_i\rangle)\rangle$\label{line:start-chain}
    \State $R_i \gets$ the set of valid message chains of length 1  started by $p_s$ that $p_i$ received in round 1  
\smallskip
\State \textbf{for} $j \leftarrow 2$ to $k+1$:
%\State \hskip2em\textbf{Round $j$}:
    \State \hskip2em\textbf{for each} $m \in R_i$: 
        \State \hskip4em \textbf{if} $m$ is a message chain for value $x$, $x \not\in X_i$, and $|X_i| < 2$ \textbf{then} 
            \State \hskip6em $X_i \gets X_i \cup \{x\}$\label{line:addx2}
            \State \hskip6em \textbf{if} $cc_i$ is a committee certificate for $p_i$ \textbf{then} \textbf{broadcast} $\langle m, cc_i, \mathsf{sign}_i(\langle m, cc_i\rangle)\rangle$
    \State \hskip2em $R_i \gets$ the set of valid message chains of length $j$ started by $p_s$ that $p_i$ received in round $j$
\smallskip
\State \textbf{for each} $m \in R_i$: 
    %\State \hskip2em if $m$ is a message chain for value $x$, $x \not\in X_i$, and $|X_i| < 2$ \textbf{then} $X_i \gets X_i \cup \{x_m\}$
    \State \hskip2em \textbf{if} $m$ is a message chain for value $x$, $x \not\in X_i$, and $|X_i| < 2$ \textbf{then} $X_i \gets X_i \cup \{x_m\}$\label{line:addx3}
    \State \textbf{if} $X_i = \{x\}$ 
    \State \hskip2em \textbf{then} \textbf{return}($x$)
    \State \hskip2em \textbf{else} \textbf{return}($\bot$)
\end{algorithmic} 
\end{algorithm}

%\Faith{I removed the round numbers, since they didn't exactly correspond the round boundaries. In particular, the first time \Cref{line:addx2} occurs is at the end of round 1, not at the beginning of round 2.}\Ayaz{This depends on how we model communication in each round but okay.}

\begin{lemma}[Validity with Sender Certificate]
    \label{lemma:bb_with_implicit_committee_committee_validity}
If the sender, $p_s$, is honest and $cc_s$ is a committee certificate for $p_s$, then every honest process returns its input value, $x_s$.
%    If the sender $p_s$ is honest and $p_s$ has a committee certificate ($\mathit{cc}_s \ne \bot$), then each honest process returns $x_s$, the input value of $p_s$.
\end{lemma}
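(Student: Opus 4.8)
The plan is to show that every honest process $p_i$ finishes the protocol with $X_i = \{x_s\}$, whence it returns $x_s$ at the final step. This splits into two parts: (i) $x_s$ gets added to $X_i$, and (ii) nothing else ever does.

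First I would record the honest sender's behaviour. Since $p_s$ is honest and $cc_s$ is a committee certificate for $p_s$, it executes \Cref{line:addx1} and \Cref{line:start-chain}: it sets $X_s = \{x_s\}$ and broadcasts $\langle x_s, cc_s, \mathsf{sign}_s(\langle x_s, cc_s\rangle)\rangle$, which is a valid message chain of length $1$ for value $x_s$ started by $p_s$. Hence this chain lies in $R_i$ for every honest $p_i$ at the start of the iteration $j = 2$. If $p_i = p_s$ then already $X_i = \{x_s\}$; otherwise $X_i = \emptyset$ there, so $|X_i| < 2$ and $x_s \notin X_i$, and $p_i$ adds $x_s$ on \Cref{line:addx2}. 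Either way $x_s \in X_i$ after that iteration.

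The key step is arguing that no honest process ever adds a value other than $x_s$. Every value added to some $X_i$ (on \Cref{line:addx2} or \Cref{line:addx3}) is the value carried by a valid message chain started by $p_s$, because each set $R_i$ consists precisely of valid chains started by $p_s$. So it suffices to prove: every valid message chain started by the honest process $p_s$ is a chain for value $x_s$. I would prove this by induction on the chain length $b$. For $b = 1$, a valid length-$1$ chain started by $p_s$ has the form $\langle x, cc', \mathsf{sign}_s(\langle x, cc'\rangle)\rangle$ with $cc'$ a committee certificate for $p_s$; being valid, it carries $p_s$'s signature on $\langle x, cc'\rangle$, and since $p_s$ is honest and signatures are unforgeable, the only message of this form $p_s$ ever signs is $\langle x_s, cc_s\rangle$, forcing $x = x_s$. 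For $b+1$, a valid length-$(b+1)$ chain for value $x$ started by $p_s$ contains, by definition, a length-$b$ chain for the same value $x$ started by $p_s$, which is for $x_s$ by the inductive hypothesis.

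Combining the two parts, each honest $p_i$ has $x_s \in X_i$ and never adds anything else, so $X_i = \{x_s\}$ at the end and $p_i$ returns $x_s$. Note this argument does not use the bound $|C\cap F|\le k$; that hypothesis is only needed later for committee agreement. The only mild subtlety — the "main obstacle", such as it is — is the unforgeability bookkeeping in the induction: one must confirm that the only signatures $p_s$ produces in the protocol are on $\langle x_s, cc_s\rangle$ and on extensions $\langle m, cc_s\rangle$ of chains already known to be for $x_s$, so that no valid chain started by $p_s$ can certify a different value.
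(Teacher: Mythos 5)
Your proposal is correct and follows essentially the same approach as the paper's proof: the honest sender's unique length-$1$ chain gets $x_s$ into every $X_i$, and unforgeability of $p_s$'s signature guarantees every valid chain started by $p_s$ carries $x_s$, so nothing else is ever added. You merely make explicit (via the induction on chain length) what the paper compresses into a single sentence, and you correctly observe that the $|C\cap F|\le k$ hypothesis is not needed here.
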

\begin{proof}
Only one message chain is broadcast by $p_s$ on \Cref{line:start-chain} and it is for value $x_s$.
In round 1, 
%By the end of round $1$, 
every other honest process receives this message chain for $x_s$ from $p_s$.
Thus, every honest process, $p_i$, adds $x_s$ to $X_i$ in round 1.

By the unforgeability of signatures, every message chain started by $p_s$
is for the value $x_s$. 
Thus, no honest process, $p_i$ adds any other value to $X_i$ during the algorithm.
Hence, at the end of the algorithm, every honest process returns the value $x_s$.
%$X_i = \{x_s\}$ for each honest process $p_i$ and $p_i$ returns $x_s$.
\end{proof}

\begin{lemma}[Default without Sender Certificate]
    \label{lemma:bb_with_implicit_committee_non_committee_validity}
If the sender, $p_s$, does not have a committee certificate, then every honest process returns $\bot$.
%If the sender $p_s$ does not have a committee certificate ($\mathit{cc}_s = \bot$), then each honest process returns $\bot$.
\end{lemma}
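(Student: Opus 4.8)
The plan is to show that, under the hypothesis, no honest process ever inserts a value into its set $X_i$, so each one ends the algorithm with $X_i = \emptyset$ and therefore returns $\bot$. The first step is a structural observation: every message chain started by $p_s$, of any length, contains a committee certificate for $p_s$. This is immediate from \Cref{definition:message_chain}, since a length-$1$ chain started by $p_s$ is by definition of the form $\langle x, cc_s, \mathsf{sign}_s(\langle x, cc_s\rangle)\rangle$ with $cc_s$ a committee certificate for $p_s$, and the recursive construction of longer chains only wraps an existing chain with further certificates and signatures, leaving this base untouched; a one-line induction on the chain length completes it.

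Next I would invoke the hypothesis. That $p_s$ does not have a committee certificate means there is no committee certificate for $p_s$; by \Cref{definition:committee_certificate} such an object is a set of $t+1$ signatures on $\langle \textsc{committee}, p_s\rangle$, and by the unforgeability of honest processes' signatures it cannot be assembled if it does not already exist. Combining this with the structural observation, no valid message chain started by $p_s$ is ever created or received during the execution.

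Finally I would walk through the code of \Cref{algo:bb_implicit_committee}: an honest process $p_i$ adds a value to $X_i$ only on \Cref{line:addx1} (which requires $s = i$ and $cc_i$ to be a committee certificate for $p_i = p_s$, impossible here), or on \Cref{line:addx2} or \Cref{line:addx3}, whose triggers are drawn from the elements of $R_i$, which by construction only ever holds valid message chains started by $p_s$. Since there are none, $R_i$ is empty in every round, $X_i$ stays empty throughout, so $X_i \neq \{x\}$ at termination and every honest process returns $\bot$. I expect the only non-mechanical point --- the ``hard part'' --- to be pinning down the intended reading of ``$p_s$ does not have a committee certificate'' and justifying via unforgeability that this genuinely precludes the existence of any committee certificate for $p_s$ anywhere in the system; once that is granted, the remainder is the trivial induction plus a direct reading of the pseudocode.
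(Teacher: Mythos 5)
Your proof is correct and follows essentially the same route as the paper's (much terser) argument: no committee certificate for $p_s$ means no valid message chain can be started by $p_s$, so every $X_i$ remains empty and every honest process returns $\bot$. The only difference is that you spell out the induction on chain length and the reading of the hypothesis, which the paper treats as immediate; note that the non-existence of a committee certificate for $p_s$ is simply the lemma's hypothesis about the execution, so no appeal to unforgeability is actually needed.
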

\begin{proof}
If $p_s$ does not have a committee certificate,
then no message chain can be started by $p_s$.
Hence, at the end of the algorithm, $X_i = \emptyset$ for each honest process, $p_i$,
so it returns $\bot$.
\end{proof}

\begin{lemma}[Committee Agreement]
\label{lemma:bb_with_implicit_committee_committee_agreement}
If $|F \cap C| \leq k$, then
all honest processes that have a committee certificate return the same value.
\end{lemma}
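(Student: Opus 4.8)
The plan is to reduce Committee Agreement to a single \emph{propagation} claim about the sets $X_i$, and then finish with a short case analysis on the final value of each $|X_i|$ over honest committee members.

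\textbf{Propagation claim.} I would first prove: if \emph{some} honest committee member adds a value $x$ to its set at any point during the execution, then at the end every honest committee member $p_j$ satisfies $x \in X_j$ or $|X_j| = 2$. The key structural fact is that every link of a message chain embeds a committee certificate for the signer of that link, so all signers of a valid message chain lie in $C$; hence, since $|F \cap C| \le k$, any valid message chain of length $k+1$ has at least one \emph{honest} signer. Now take any honest committee member $p_h$ with $x \in X_h$. If $p_h$ added $x$ during the for-loop (or as the sender in round $1$), then $p_h$ itself broadcast a valid length-$\ell$ message chain for $x$ in round $\ell$, where $\ell \le k+1$ is the round at which it added $x$. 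If instead $p_h$ added $x$ only in the concluding step, it did so from a valid length-$(k+1)$ message chain for $x$; by the pigeonhole observation that chain has an honest signer $p_{h'} \in C$, and by unforgeability of signatures $p_{h'}$ actually produced that signature, which an honest process does only while adding $x$ during the for-loop (or as sender) and simultaneously broadcasting — so again some honest committee member broadcast a valid length-$\ell$ message chain for $x$ in round $\ell$ with $\ell \le k+1$. Every honest committee member receives that chain in round $\ell$ and processes it at iteration $\ell+1$ of the for-loop (if $\ell \le k$) or in the concluding step (if $\ell = k+1$); at that moment it adds $x$ unless $|X_j|$ is already $2$. Since values are never removed, the claim follows.

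\textbf{Case analysis.} If some honest committee member $p_i$ ends with $|X_i| = 2$, say $X_i = \{x,y\}$ with $x \neq y$, then applying the propagation claim to $x$ and to $y$ shows that every honest committee member ends with $|X_j| = 2$ (each of $x,y$ is either already present, and having both forces size $2$, or $|X_j| = 2$ directly), so all return $\bot$. Otherwise every honest committee member ends with $|X_j| \le 1$. If some $p_i$ ends with $X_i = \{x\}$, then propagation applied to $x$ forces $x \in X_j$ for every honest committee member $p_j$ (it cannot have $|X_j| = 2$), so $X_j = \{x\}$ and all return $x$. If instead every honest committee member ends with $X_j = \emptyset$, all return $\bot$. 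In every case the honest committee members return the same value.

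\textbf{Main obstacle.} The case analysis is routine; the delicate part is the propagation claim, and specifically the situation where an honest committee member first adds $x$ in the concluding step from a length-$(k+1)$ chain that it never rebroadcasts. This is exactly where the hypothesis $|F \cap C| \le k$ and the committee certificates embedded in each link of a message chain are used: together they force an honest committee member to sit on that chain, and hence a (shorter) valid chain for the same value was broadcast at an earlier round and reaches everyone in time. One also has to keep the round bookkeeping straight — a length-$\ell$ chain produced by an honest process is broadcast in round $\ell$ and processed by the recipients at iteration $\ell+1$, or in the concluding step when $\ell = k+1$ — and to note that honest committee members never emit an invalid (repeated-signer) chain, since they decline to re-sign a chain for a value already in their set.
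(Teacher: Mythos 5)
Your proposal is correct and follows essentially the same route as the paper: the "propagation claim" is exactly the paper's core step (any $x \in X_i$ of an honest committee member yields a valid chain for $x$ received by every other honest committee member — using the honest signer guaranteed on any length-$(k+1)$ chain when $|F\cap C|\le k$), followed by the same case analysis on $|X_i|$ and $|X_j|$. Your explicit remarks on round bookkeeping and on why honest extensions preserve chain validity are details the paper leaves implicit, but the argument is the same.
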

\begin{proof}
%By contradiction,
Suppose there are two honest processes, $p_i$ and $p_j$, that both have a committee certificate.
%and their return values, $output_i$ and $output_j$ are different.
Consider any value $x \in X_i$.
If $i = s$, then $p_s$ started a valid message chain for value $x_s \in X_s$ and broadcast it
in round 1. This is the only valid message chain it started, so it does not add any other elements
to $X_s$.
If $p_i$ received a valid message chain for value $x$ started by $p_s$
during one of the first $k$ rounds,
then $p_i$ broadcast a valid message chain for value $x$ started by $p_s$.
If $p_i$ only received a valid message chain $m$ for value $x$ started by $p_s$
during round $k+1$, then $m$ has length $k+1$, so it was signed by $k+1$ different processes,
each of which has a committee certificate. Since there are at most $k$ faulty processes that have
committee certificates, at least one of these $k+1$ processes was honest.
That honest process broadcast a valid message chain for value $x$ started by $p_s$.
In all cases, process $p_j$ received a valid message chain for value $x$.
Hence, at the end of the algorithm, either $x \in X_j$ or $|X_j| \geq 2$.

Consider the sets $X_i$ and $X_j$ at the end of the algorithm.
If $|X_i| \geq 2$, then $|X_j| \geq 2$ and both $p_i$ and $p_j$ return $\bot$.
Similarly, $|X_j| \geq 2$ implies that $|X_i| \geq 2$.
So, suppose that $|X_i|\leq 1$ and $|X_j| \leq 1$.
If $X_i= \{x\}$, then $x \in X_j$, so $X_j = \{x\}$  and both $p_i$ and $p_j$ return $x$.
Similarly, $|X_j| =1$ implies that $|X_i| = 1$.
The remaining case is when $X_i$ and $X_j$ are both empty, in which case,
both $p_i$ and $p_j$ return $\bot$.
Hence, in all cases, processes $p_i$ and $p_j$ return the same value.
\begin{ignore}
    \textbf{Case 1:} $output_i, output_j \ne \bot$. At the end of the algorithm, $X_i = \{output_i\}$ and $X_j = \{output_j\}$.
    If $p_i$ is the sender or $p_i$ obtains $output_i$ from a message chain of length at most $k$, then $p_i$ must have sent a message chain of length $k+1$ for $output_i$ to $p_j$. Furthermore, if $p_i$ obtains $output_i$ from a message chain of length $k+1$, from the precondition, there must be a committee certificate of an honest process in the chain, which must have sent a (possibly shorter) message chain for the same value to $p_j$ as well. In any case, $output_i = output_j$, a contradiction.

    \textbf{Case 2:} $output_i = \bot$ or $output_j = \bot$. Without loss of generality, suppose $output_i = \bot$ and $output_j \ne \bot$. Hence, at the end of the algorithm $X_j = \{output_j\}$ and $|X_i| = 0$ or $|X_i| = 2$. As shown in the previous case, it is impossible to have $|X_i| = 0$. However, since $|X_i| = 2$, using similar argument on each $x \in X_i$, it is impossible to have $|X_j| = 1$. A contradiction.
\end{ignore}

Therefore, all honest processes that have a committee certificate return the same value.
\end{proof}

\begin{observation}
If the sender, $p_s$, has a committee certificate, then each honest process that has a committee
certificate broadcasts at most twice and honest processes that do not have a committee certificate
do not send any messages. If the sender does not have a committee certificate, then no honest process
sends any messages.
\end{observation}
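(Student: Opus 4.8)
The plan is essentially a bookkeeping argument over the code of \Cref{algo:bb_implicit_committee}: I would locate every \textbf{broadcast} statement an honest process can execute and bound how often each can fire. There are exactly two such statements: (i) the round-$1$ broadcast of a length-$1$ message chain, guarded by ``$s = i$ and $cc_i$ is a committee certificate for $p_i$''; and (ii) the broadcast inside the loop ``\textbf{for} $j \leftarrow 2$ to $k+1$'', guarded by ``$cc_i$ is a committee certificate for $p_i$'' and executed only in lockstep with an insertion into $X_i$ (the step $X_i \gets X_i \cup \{x\}$, which is itself guarded by $x \notin X_i$ and $|X_i| < 2$). The trailing block that processes length-$(k+1)$ chains only updates $X_i$ and never broadcasts, so it is irrelevant to the count.

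For the second sentence (sender without a committee certificate), I would reuse the reasoning from the proof of \Cref{lemma:bb_with_implicit_committee_non_committee_validity}: if $p_s$ has no committee certificate then, by \Cref{definition:message_chain}, no valid message chain started by $p_s$ exists (a length-$1$ chain started by $p_s$ needs a committee certificate for $p_s$, and longer chains are built on top of a length-$1$ one; if $p_s$ is honest, unforgeability of signatures also blocks a faulty coalition from fabricating one on its behalf). Hence the guard in (i) fails for $p_s$, and $R_i$ is empty in every round for every honest $p_i$, so the body containing the broadcast in (ii) is never reached; no honest process sends any message. For the first sentence, part (b) is immediate: both broadcast statements are guarded by ``$cc_i$ is a committee certificate for $p_i$'', so an honest process without such a certificate never broadcasts, hence sends no messages.

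For part (a), fix an honest process $p_i$ that has a committee certificate. The key invariant is that $|X_i|$ is non-decreasing and never exceeds $2$, since every insertion into $X_i$ requires $|X_i| < 2$ beforehand, and every broadcast of type (ii) coincides with such an insertion. If $i \neq s$, then $X_i$ starts empty, so at most two insertions occur over the whole run, giving at most two broadcasts of type (ii) and none of type (i). If $i = s$, then $p_s$ performs one broadcast of type (i) and, by \Cref{line:addx1}, enters the loop with $|X_s| = 1$, so at most one further insertion---hence at most one further broadcast---can occur. Either way $p_i$ broadcasts at most twice. The only point that needs care is the $i = s$ accounting (the round-$1$ insertion of $x_s$ already consumes a ``slot'' of $X_i$), together with the observation that the inner ``\textbf{for each} $m \in R_i$'' step could, in principle, insert two distinct values within a single round $j$; this is still within the bound, because after two insertions $|X_i| = 2$ and no further insertion or broadcast is possible. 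I expect no real obstacle beyond this bookkeeping.
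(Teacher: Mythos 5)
Your proposal is correct and follows exactly the reasoning the paper intends: the observation is stated without an explicit proof, and the argument is precisely the code inspection you carry out (both broadcasts are guarded by possession of a committee certificate, each loop broadcast is coupled to an insertion into $X_i$ with $|X_i|<2$, and a sender without a certificate can start no valid message chain, so $R_i$ stays empty). Your extra care with the $i=s$ accounting, where the round-1 insertion already occupies one of the two slots of $X_s$, is the only nontrivial point and you handle it correctly.
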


\begin{ignore}
Lastly, we prove the complexities. We can see that even if there are more than $k$ faulty processes with a committee certificate, the number of rounds and messages will not change.

\begin{lemma}
    \label{lemma:bb_implicit_committee_complexities}
     Let $p_s$ be the designated sender. Each honest process returns from \Cref{algo:bb_implicit_committee} in $k+1$ rounds. If $p_s$ has a committee certificate then honest processes send $O(|C|n)$ messages, and no messages otherwise.
\end{lemma}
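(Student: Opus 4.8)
The plan is to treat the two assertions separately: the round bound follows purely from the control flow of \Cref{algo:bb_implicit_committee} and is independent of the precondition $|C \cap F| \le k$, whereas the message bound needs only a short per-process accounting. For the round complexity, I would observe that the algorithm has no early-return statement: after round $1$ the loop \textbf{for} $j \leftarrow 2$ \textbf{to} $k+1$ runs exactly $k$ iterations, occupying rounds $2$ through $k+1$, and is followed only by a purely local computation that produces the returned value. Hence every honest process runs for exactly $k+1$ rounds and returns at the end of round $k+1$, regardless of how many faulty processes hold committee certificates.

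For the message complexity when $p_s$ has no committee certificate, I would invoke \Cref{definition:message_chain}: a valid message chain started by $p_s$ must embed a committee certificate for $p_s$, so no such chain can ever be formed. Consequently the sender skips the broadcast on \Cref{line:start-chain}, the set $R_i$ stays empty for every honest $p_i$ in every round, and no in-loop broadcast is ever triggered, so honest processes send no messages.

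When $p_s$ does have a committee certificate, I would bound the broadcasts performed by a single honest process and then multiply by the number of honest processes that can broadcast. An honest process broadcasts only while holding a committee certificate, so only identifiers in $C$ ever broadcast---at most $|C|$ honest ones. A broadcasting honest non-sender broadcasts only when it inserts a new value into $X_i$ inside the loop, and the guard $|X_i| < 2$ together with the requirement that the value be new caps this at two insertions---hence at most two broadcasts---over the whole execution (the post-loop insertion on \Cref{line:addx3} performs no broadcast). The sender broadcasts once on \Cref{line:start-chain}, and since unforgeability of signatures forces every valid chain it starts to carry the single value $x_s \in X_s$, it never broadcasts again. Each broadcast contributes $O(n)$ messages, giving $O(|C|\,n)$ in total. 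The only point that needs care is the ``at most two broadcasts per process'' claim, which rests on $X_i$ growing monotonically and being capped at size $2$; note that this accounting nowhere uses $|C \cap F| \le k$, so both bounds hold even when that precondition is violated. I do not expect a genuine obstacle here, as everything else is a direct reading of the pseudocode.
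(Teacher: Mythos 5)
Your proposal is correct and follows essentially the same route as the paper's own (very terse) proof: the round count is read directly off the control flow, the no-certificate case follows because no valid message chain can be started by $p_s$, and the $O(|C|n)$ bound comes from the observation that each broadcasting process broadcasts $O(1)$ times owing to the $|X_i| < 2$ guard. Your added remarks---that the post-loop insertion triggers no broadcast and that neither bound uses the precondition $|C \cap F| \le k$---are accurate and match the paper's surrounding discussion.
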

\begin{proof}
    The number of rounds is explicit in the algorithm. If $p_s$ does not have a committee certificate, no message chain can be started by $p_s$ and thus, no message will be sent. Otherwise, the number of messages follows from the fact that each honest process with a committee certificate broadcasts at most twice (see the check at Line~\ref{line:bb_implicit_committee_check_xm}).
\end{proof}
\end{ignore}

\subsection{Byzantine Agreement Algorithm with Classification}\label{subsection:auth_byz_class}

%We present our algorithm in 
\Cref{algo:auth_ba_class} is a Byzantine Agreement Algorithm that uses classification
and takes advantage of authentication.
It follows a well-known reduction from Byzantine Agreement to $n$ instances of Byzantine
Broadcast, one with each process as the sender, that works for $t < n/2$ faulty processes.
%\Faith{"taking the plurality of the broadcast values" does not make sense, unless you proved there is only one output value that occurs more often than the others.}\Ayaz{If there are multiple, pick one deterministically. So for example, ``pick the plurality and if there are multiple with the same count, take the smallest''. another point that I may have forgotten here is this reduction works in the honest majority setting ($t < n/2$). having $t < n/2$ is important to ensure Strong Unanimity.}
In addition to its input value, an honest process $p_i$ is also given a value $k$, which is an upper bound on the number of misclassified processes, and a classification vector $c_i$ obtained from the predictions as described in \Cref{sec:classification}. 

In a nutshell, 
%each honest process 
the algorithm 
makes use of the orderings obtained from the classifications to limit the power of the adversary.
%from $t$ to $k \le t$ faulty processes via committee certificates. 
%Concretely, this is done by using the ordering obtained from the classification. 
Each honest process $p_i$ will only
send a signature for $\langle \textsc{committee}, p_j\rangle$ to process $p_j$ if $j$ is among the first $2k+1$ identifiers in the ordering
$\pi(c_i)$ obtained from its classification vector.
An honest process $p_j$ that receives at least $t+1$ signatures from different processes
for $\langle \textsc{committee}, p_j\rangle$ has a committee certificate, $cc_j$.
%whose identifier is among the first $2k+1$ in the ordering obtained from its classification vector. 
%As we show in (\Cref{lemma:auth_ba_committee_stats}), 
We will show that, provided $2k+1 < n-t-k$,
at most $k$ faulty processes have a committee certificate
and at least $k+1$ honest processes have a committee certificate.
By restricting an honest process to only listen to processes that have a committee certificate,
the number of faulty processes that it listens to is bounded above by $k$, instead of by $t$.
%Thus, by making an honest process only listen to processes that attach its committee certificate in its messages, the number of faulty processes an honest process listens to is at most $k$ instead of $t$.

In the next $k+1$ rounds, each process participates in $n$ parallel instances of Byzantine Broadcast with Implicit Committee (\Cref{algo:bb_implicit_committee}), where $p_s$ is the designated sender in instance $s$. 
They are all given $k$ as an upper bound on the number of faulty processes that have a committee certificate.
Note that, in instance $s$, only message chains started by $p_s$ are (supposed to be) broadcast,
so each process can identify which instance each message belongs to.

In the final round, each (honest) process with a committee certificate  broadcasts the smallest value other than $\bot$ that occurred the largest
number of times among their outputs from the $n$ instances of Byzantine Broadcast, together with its committee certificate.
We will show that all honest processes that have a committee certificate output the same value from each instance of Byzantine Broadcast,
so they broadcast the same value, $v$, in this round.
Among the multiset of values an honest process receives in such messages in this round, it returns the smallest value that occurred the largest number of times.
Since there are more honest processes that have a committee certificate than faulty processes that have a committee certificate, all honest processes
return the value $v$.
When all honest processes (that have a committee certificate) have the same input value, then $v$ is this value.

%The algorithm runs for $k+3$ rounds. In the first round, each process $p_i$ computes the ordering $order_i$ from its classification vector $c_i$ and sends a signature to each process $p_j$ such that $j$ is among the first $2k+1$ in $order_i$. After that, an honest process $p_i$ that receives at least $t+1$ signatures uses them to build a committee certificate for itself, storing it as $cc_i$. Then, each process participates in $n$ instances of the Byzantine Broadcast with implicit committee (\Cref{algo:bb_implicit_committee}), using the same $k$ as the one received in the Byzantine Agreement algorithm. In instance $j$, the designated sender is $p_j$. The instances are run in parallel and they last for $k+1$ rounds.\footnote{As we run the instances in parallel, each message in the Byzantine Broadcast with implicit committee would need to have an instance identifier. This will not change the correctness of the algorithm.} Process $p_i$ then stores the output of instance $j$ in $bb_i[j]$.

%In the last round, each honest process $p_i$ with a committee certificate computes $plurality_i$, which is the smallest, non-$\bot$ value that occurs the most in $bb_i$. Next, $p_i$ broadcasts $plurality_i$ and its committee certificate. Finally, from the received messages, $p_i$ computes its decision. Let $V_i$ be the multiset of values from messages $p_i$ received. Then, $p_i$'s decision, $decision_i$, will be the smallest value which appears the most in $V_i$.

\begin{algorithm} [h]
\caption{Authenticated Byzantine Agreement with Classification: (code for process $p_i$)}
%\label{algorithm:un_byz_class}
\label{algo:auth_ba_class}
\footnotesize
\textbf{Conditions under which agreement and strong unanimity are guaranteed}\\
$k$ is an upper bound on the number of misclassified processes\\
$2k+1 \le n-t-k$\\
$t < n/2$\\
$c_i = \text{classify}(a_i)$\\
\ \\
\textbf{ba-with-classification($x_i, c_i, k$):}
\begin{algorithmic} [1] 
% \State \textbf{Phase 1}:
\State $order_i \gets \pi(c_i)$
\State Let $L_i = \{ order_i[j] \ |\ 1 \le j \le 2k+1\}$
\State \textbf{for each} $j \in L_i$: \textbf{send} $\mathsf{sign}_i(\langle \textsc{committee}, p_j\rangle)$ to $p_j$\label{line:auth_ba_send_sign}

% \State \textbf{Phase 2}:
\State $cc_i \gets \emptyset$
%\bot$
\State Let $S_i$ be the identifiers of the processes $p_j$ that sent $\mathsf{sign}_j(\langle \textsc{committee}, p_i\rangle)$ to $p_i$.
%(if there are $\ge t+1$ identifiers, only take the smallest $t+1$.)
\State \textbf{if} $|S_i| \geq t+1$ \textbf{then} $cc_i \gets \{\mathsf{sign}_j(\langle \textsc{committee}, p_i\rangle) \ |\ j$ is one of the $t+1$ smallest identifiers in $S_i\}$ 
%\State $bb_i \gets [\bot, \dots, \bot]$
\smallskip

\State \textbf{for each} $s \in [1, n]$, \textbf{in parallel}:
    \State \hskip2em $bb_i[s] \gets $ \textbf{bb-with-implicit-committee}($s, v_i, k, cc_i$) 
    %\BlueComment{\Cref{algo:bb_implicit_committee}}

\smallskip
% \State \textbf{Phase 3}:
\State \textbf{if} $cc_i$ is a committee certificate for $p_i$ \textbf{then}  
%$cc_i \ne \bot$ \textbf{then}
    \State \hskip2em Let $plurality_i$ be the smallest value other than $\bot$ that occurs the largest number of times in $bb_i$\label{line:auth_ba_compute_bb_plural}
    \State \hskip2em \textbf{broadcast} $\langle plurality_i, cc_i \rangle$\label{line:auth_ba_broadcast_bb_plural}
\State Let $V_i$ be the multiset of $plurality_j$ values received from processes that have a committee certificate\label{line:auth_ba_Vi}
\State Let $decision_i$ be the smallest value that occurs the largest number of times in $V_i$
\State \textbf{return}($decision_i$)
\end{algorithmic} 
\end{algorithm}

%Let us give an intuition of why this algorithm solves Byzantine Agreement (when its preconditions are satisfied). First, we can show that there will be at least $k+1$ honest processes with a committee certificate. Next, for any two processes $p_i$ and $p_j$ with a committee certificate, $bb_i = bb_j$. Therefore, $plurality_i = plurality_j$. As there are at most $k$ faulty processes with a committee certificate, all honest processes will obtain the same decision, which is $plurality_i$. Furthermore, when all honest processes input $v$, $plurality_i = v$, ensuring only $v$ can be decided.

As in the previous section, let $C$ denote the set of identifiers of processes that have committee certificates. 
%Note that this set $C$ is unknown to the honest processes. We first show that $|C| \le 4k+1$, $|F \cap C| \le k$, and $|H \cap C| \ge k+1$.

\begin{lemma}
    \label{lemma:auth_ba_committee_stats}
If $2k+1 \le n-t-k$, then    $|C| \le 3k+1$, $|C \cap F| \le k$, and $|C \cap H| \ge k+1$.
\end{lemma}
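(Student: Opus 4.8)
The plan is to analyze who can accumulate $t+1$ signatures of the form $\mathsf{sign}_j(\langle \textsc{committee}, p_i\rangle)$, using the structural facts about orderings established in \Cref{sec:classification}. The key observation is that an honest process $p_j$ sends such a signature to $p_i$ only when $i$ appears among the first $2k+1$ positions of $\pi(c_j)$, and a committee certificate requires $t+1$ signatures, hence at least one honest signer (since there are at most $t$ faulty processes). So every process $p_i$ with a committee certificate must appear in the top $2k+1$ positions of $\pi(c_j)$ for some honest $p_j$.

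First I would bound $|C|$. Each process $p_i \in C$ is placed in positions $1,\dots,2k+1$ of $\pi(c_j)$ for at least one honest $p_j$. This is exactly the situation covered by \Cref{lemma:honest-broadcasters} (with $r = 2k+1 \le n-t-k_H$, which follows from $2k+1 \le n-t-k$ and $k_H \le k$), together with an accounting for faulty processes that appear in those positions. \Cref{lemma:honest-broadcasters} gives that at most $2k+1+k_H \le 3k+1$ honest processes can ever appear among the first $2k+1$ positions of their own ordering; but here I need the count over the first $2k+1$ positions of \emph{any} honest process's ordering, so I would instead invoke the core-set machinery of \Cref{lemma:coresize}/the deleted \Cref{L9}-style bound, or argue directly: the honest members of $C$ are among the honest processes whose identifier lands in the top $2k+1$ of some $\pi(c_j)$, and by \Cref{lemma:differpos} this can only be identifiers within distance $k_A$ of the top $2k+1$ positions of $\pi(\hat c)$, giving $|C \cap H| \le 2k+1 + k_A \le 3k+1$; adding the at-most-$k$ faulty members (shown next) yields $|C| \le 3k+1$ — though I expect the cleanest route is simply $|C| \le |C\cap H| + |C \cap F| \le (2k+1+k) + k$, which is weaker, so I may need $|C\cap H| \le 2k+1$ via a more careful argument. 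This is the main obstacle: getting the $|C|\le 3k+1$ bound tight rather than a looser $4k+O(1)$.

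Next, $|C \cap F| \le k$: if $p_i$ is faulty and has a committee certificate, then $i$ is in the first $2k+1$ positions of $\pi(c_j)$ for some honest $p_j$; since $2k+1 \le n-t-k \le n-t-k_A$, \Cref{corollary:faultypos} says $p_i$ is misclassified by $c_j$. Thus every faulty process in $C$ is misclassified, so $|C \cap F| \le k_F \le k$.

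Finally, $|C \cap H| \ge k+1$: every honest process $p_j$ sends its signature to the $2k+1$ processes in its top-$(2k+1)$ set $L_j$. By \Cref{lemma:coresize} applied with $\ell = 1$, $r = 2k+1$ (valid since $1 + k_A - 1 < 2k+1 \le n-t-k_A$), there is a core set $G \subseteq H$ of size at least $(2k+1) - k_A \ge k+1$ with $G \subseteq L_j$ for every honest $p_j$. Hence each process $p_i \in G$ receives a signature from every honest process — at least $n - t \ge t+1$ of them (using $t < n/2$) — so $p_i$ forms a committee certificate. Therefore $G \subseteq C \cap H$, giving $|C \cap H| \ge k+1$. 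Combining, I would then close the $|C| \le 3k+1$ bound by noting $|C\cap F|\le k$ and $|C\cap H|\le 2k+1$ (the latter being the part requiring the careful ordering argument flagged above).
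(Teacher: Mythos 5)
Your arguments for $|C \cap F| \le k$ and $|C \cap H| \ge k+1$ are correct and match the paper's proof exactly: a committee certificate needs $t+1$ signatures and hence at least one honest signer, \Cref{corollary:faultypos} then forces every faulty committee member to be misclassified (so $|C\cap F| \le k_F \le k$), and the core set from \Cref{lemma:coresize} with $\ell = 1$, $r = 2k+1$ yields at least $(2k+1)-k_A \ge k+1$ honest processes that each collect $n-t \ge t+1$ signatures.

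The gap you flagged in the $|C| \le 3k+1$ bound is genuine, and the idea you are missing is an asymmetric accounting that keeps $k_H$ and $k_F$ separate instead of bounding each by $k$. The paper invokes \Cref{lemma:honest-broadcasters} with $r = 2k+1$ to get $|C \cap H| \le (2k+1) + k_H$: only honest processes misclassified as faulty (there are $k_H$ of them) can vacate the first $2k+1$ positions and thereby pull \emph{other} honest identifiers forward into those positions, whereas faulty processes misclassified as honest only push honest identifiers backward and so cannot enlarge the set of honest identifiers reachable in the top $2k+1$. (The statement of \Cref{lemma:honest-broadcasters} is phrased for a process appearing in its \emph{own} top $r$, which is why you hesitated to use it, but its proof actually bounds the set of honest identifiers that can occur in the first $r$ positions of \emph{any} honest classification vector, which is exactly what is needed.) Combining $|C \cap H| \le (2k+1) + k_H$ with $|C \cap F| \le k_F$ (not merely $\le k$) gives $|C| \le (2k+1) + k_H + k_F = 2k+1 + k_A \le 3k+1$, since $k_H + k_F = k_A \le k$. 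Your route via \Cref{lemma:differpos} loses precisely because it charges the full $k_A$ to the honest side and then another $k$ to the faulty side, double-counting the misclassifications and landing at $4k+O(1)$.
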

\begin{proof}
If process $p_j$ has a committee certificate, then there is an honest process $p_i$ with $j \in L_i$. 
Since $L_i$ consists of the first $2k+1 \leq n-t -k$ identifiers from $order_i$,
\Cref{corollary:faultypos} implies that every faulty process in $C$ is a misclassified faulty process. Hence $|C \cap F| \le k_F \leq k$.
By \Cref{lemma:honest-broadcasters}, $|C \cap H| \leq (2k+1) + k_H$, so $|C| = |C \cap H| + |C\cap F| \leq 2k+ 1 + k_H + k_F \leq 3k+1$.
%As $L_i$ consists of the first $2k+1$ identifiers from $order_i$, from \Cref{L9}, $|C| \le 2k+1+2k = 4k+1$. Next, combining the previous observation with the precondition $2k+1 \le n-t-k$ and \Cref{corollary:faultypos} gives $|F \cap C| \le k$. 
Finally, if $j \in L_i$ for each $i \in H$, then $p_j$ has a committee certificate, as $n-t \ge t+1$. Hence, from \Cref{lemma:coresize}, $|H \cap C| \ge 2k+1 - k = k+1$.
\end{proof}

%From the second part of \Cref{lemma:auth_ba_committee_stats}, the precondition for Byzantine Broadcast with implicit committee is satisfied. 

\begin{lemma}
    \label{lemma:auth_ba_bb_plural}
All honest processes that have a committee certificate broadcast the same value on \Cref{line:auth_ba_broadcast_bb_plural}.
%    If $i, j \in C \cap H$, then $plurality_i = plurality_j$.
\end{lemma}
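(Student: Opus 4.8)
The plan is to reduce everything to the \emph{Committee Agreement} guarantee of \Cref{algo:bb_implicit_committee}, applied independently to each of the $n$ parallel Byzantine Broadcast instances, and then invoke determinism of the plurality rule. Concretely, I will show that any two honest processes $p_i$ and $p_j$ that have committee certificates end up with identical arrays $bb_i = bb_j$; since $plurality_i$ is obtained from $bb_i$ on \Cref{line:auth_ba_compute_bb_plural} by a fixed deterministic rule (the smallest value other than $\bot$ of maximum multiplicity), this forces $plurality_i = plurality_j$, which is exactly what is broadcast on \Cref{line:auth_ba_broadcast_bb_plural}.

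First I would verify that the precondition of the Byzantine Broadcast with Implicit Committee is satisfied for every instance $s \in [1,n]$. All $n$ instances are invoked with the same committee-certificate inputs, so they all reference the same set $C$ of processes that have committee certificates. By the standing assumption $2k+1 \le n-t-k$, \Cref{lemma:auth_ba_committee_stats} gives $|C \cap F| \le k$, which is precisely the hypothesis under which \Cref{lemma:bb_with_implicit_committee_committee_agreement} holds. Hence, for each fixed $s$, all honest processes that have a committee certificate return the same value from instance $s$. Since both $p_i$ and $p_j$ have committee certificates, this yields $bb_i[s] = bb_j[s]$ for every $s \in [1,n]$, i.e., $bb_i = bb_j$ as length-$n$ arrays. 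Applying the deterministic tie-broken plurality computation of \Cref{line:auth_ba_compute_bb_plural} to these equal arrays gives $plurality_i = plurality_j$, completing the argument.

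I do not expect a genuine obstacle here: the statement is essentially a corollary of Committee Agreement. The only point needing care is confirming the $|C \cap F| \le k$ precondition for the sub-protocol, which is supplied by \Cref{lemma:auth_ba_committee_stats} under the hypothesis $2k+1 \le n-t-k$; once that is in hand, the rest is a direct instance-by-instance application of \Cref{lemma:bb_with_implicit_committee_committee_agreement} together with the fact that the plurality selection on \Cref{line:auth_ba_compute_bb_plural} is a deterministic function of $bb_i$ alone.
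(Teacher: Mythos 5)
Your proposal is correct and follows essentially the same route as the paper: the paper's proof likewise invokes \Cref{lemma:auth_ba_committee_stats} to get $|C \cap F| \le k$, applies the Committee Agreement property of \Cref{lemma:bb_with_implicit_committee_committee_agreement} to each Byzantine Broadcast instance so that all certificated honest processes obtain identical arrays $bb_i$, and concludes by the determinism of the plurality computation on \Cref{line:auth_ba_compute_bb_plural}. No gaps.
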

\begin{proof}
From 
    \Cref{lemma:bb_with_implicit_committee_committee_agreement} and the second part of \Cref{lemma:auth_ba_committee_stats},
all honest processes that have a committee certificate return the same value from each instance of  Byzantine Broadcast.   
%Hence,    $bb_i = bb_j$ for all $i, j \in C \cap H$.
Each of these processes, $p_i$, deterministically computes the value,  $plurality_i$, that it will broadcast 
on \Cref{line:auth_ba_broadcast_bb_plural}
from the array, $bb_i$, containing these return values.
Thus, they all broadcast the same value.
%$plurality_i$ (resp. $plurality_j$) is computed deterministically only from $bb_i$ (resp. $bb_j$). Thus, the lemma statement holds.
\end{proof}

\begin{ignore}
\begin{lemma}[Termination]
    Every honest process decides.
\end{lemma}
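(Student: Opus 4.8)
The plan is to read off termination directly from the fixed structure of \Cref{algo:auth_ba_class}, and then to check that none of the values it computes along the way can be undefined. First I would count rounds. The exchange of signatures for $\langle\textsc{committee},p_j\rangle$ on \Cref{line:auth_ba_send_sign} takes one round, and forming $S_i$ and deciding whether $|S_i|\ge t+1$ (hence whether $cc_i$ is a committee certificate for $p_i$) is purely local. The $n$ instances of \textbf{bb-with-implicit-committee} are then run in parallel; by inspection of \Cref{algo:bb_implicit_committee} each instance halts after exactly $k+1$ rounds regardless of adversarial behavior, and an honest process can route each received message chain to the correct instance because instance $s$ only carries chains started by $p_s$. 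Finally, broadcasting $\langle plurality_i, cc_i\rangle$ on \Cref{line:auth_ba_broadcast_bb_plural} takes one more round, after which $V_i$ and $decision_i$ are computed locally and $p_i$ returns. So every honest process reaches \texttt{return}$(decision_i)$ after $k+3$ rounds.

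The only thing left is to rule out that $plurality_i$ or $decision_i$ is undefined. For $plurality_i$, which is the smallest non-$\bot$ value occurring most often in $bb_i$, I would invoke \Cref{lemma:auth_ba_committee_stats}: under the hypothesis $2k+1\le n-t-k$ we have $|C\cap H|\ge k+1\ge 1$, so some honest process $p_s$ has a committee certificate. In instance $s$, Validity with Sender Certificate (\Cref{lemma:bb_with_implicit_committee_committee_validity}) says every honest process returns $p_s$'s proposal, which is a genuine value rather than $\bot$; hence $bb_i$ has at least one non-$\bot$ entry for every honest $p_i$ with a committee certificate, i.e., for exactly the processes that execute \Cref{line:auth_ba_broadcast_bb_plural}. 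For $decision_i$, each honest $p_i$ builds $V_i$ from the $plurality_j$ values carried together with valid committee certificates (which $p_i$ can verify); since the $\ge k+1\ge 1$ honest processes with committee certificates all broadcast such a message, $V_i$ is nonempty, so ``the smallest value occurring most often in $V_i$'' exists and $p_i$ returns it.

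I do not expect a genuine obstacle here --- the lemma is essentially immediate once one notes that the algorithm is a fixed-length synchronous protocol. The only points needing care are (i) confirming that the sub-protocol in \Cref{algo:bb_implicit_committee} always terminates in $k+1$ rounds independently of what the faulty processes do, and (ii) excluding the degenerate cases where $bb_i$ is entirely $\bot$ or $V_i$ is empty, both of which are dispatched by the guarantee (\Cref{lemma:auth_ba_committee_stats}) that at least one honest process holds a committee certificate, combined with Validity with Sender Certificate.
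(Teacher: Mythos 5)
Your proof is correct and matches the paper's approach: the paper's entire argument is the one-line observation that the protocol has fixed length, so each honest process decides at the end of round $k+3$. Your additional checks that $plurality_i$ and $decision_i$ are well defined (via \Cref{lemma:auth_ba_committee_stats} and Validity with Sender Certificate) go beyond what the paper writes down, but they are a refinement of the same argument rather than a different route.
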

\begin{proof}
    Each honest process decides at the end of round $k+3$.
\end{proof}
\end{ignore}

\begin{lemma}[Agreement]
    \label{lemma:auth_ba_agreement}
    All honest processes return the same value.
\end{lemma}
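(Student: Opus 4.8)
The plan is to reduce the Agreement property of the full Byzantine Agreement algorithm to the \emph{Committee Agreement} property of the $n$ parallel Byzantine Broadcast instances, together with the fact that honest committee members outnumber faulty committee members. First I would invoke \Cref{lemma:auth_ba_committee_stats}: since the precondition $2k+1 \le n-t-k$ holds, we have $|C\cap F|\le k$ and $|C\cap H|\ge k+1$. The bound $|C\cap F|\le k$ is exactly the hypothesis needed to apply \Cref{lemma:bb_with_implicit_committee_committee_agreement} to each of the $n$ Byzantine Broadcast instances, so every honest process that has a committee certificate returns the same value from instance $s$, for every $s\in[1,n]$. Consequently, by \Cref{lemma:auth_ba_bb_plural}, all honest processes with a committee certificate compute the same array $bb_i$, hence the same $plurality_i =: v$, and broadcast $\langle v, cc_i\rangle$ on \Cref{line:auth_ba_broadcast_bb_plural}.

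Next I would analyze what an arbitrary honest process $p_i$ puts into the multiset $V_i$ on \Cref{line:auth_ba_Vi}. A value is added to $V_i$ only if it arrives in a message of the form $\langle plurality_j, cc_j\rangle$ where $cc_j$ is a valid committee certificate for $p_j$; by unforgeability of signatures, the existence of such a certificate means $p_j\in C$, i.e., $j$ is counted in either $|C\cap H|$ or $|C\cap F|$. Every honest process in $C$ sends such a message in the final round and it is received by everyone, so $V_i$ contains at least $|C\cap H|\ge k+1$ copies of $v$. The faulty processes in $C$ can contribute at most $|C\cap F|\le k$ other values (and faulty processes not in $C$ contribute nothing, since their messages are discarded for lack of a certificate). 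Therefore $v$ occurs strictly more times in $V_i$ than any other single value: at least $k+1$ occurrences of $v$ versus at most $k$ occurrences of any competitor. Hence the deterministic tie-broken plurality rule selects $decision_i = v$ for every honest $p_i$, which gives Agreement.

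The one subtlety I would take care to spell out is the "at most $k$ other values" accounting: a faulty process $p_j\in C$ could send different $\langle \cdot, cc_j\rangle$ messages to different honest processes, but since $p_i$ records a multiset $V_i$ indexed implicitly by sender, each faulty committee member contributes at most one entry to $V_i$ (or we simply bound the number of non-$v$ entries by $|C\cap F|\le k$ regardless of which values they are). So the count of non-$v$ entries in $V_i$ is at most $k < k+1 \le \#\{v \text{ in } V_i\}$, and strict majority of the plurality value is preserved. This comparison $k+1 > k$ — the heart of the argument — is where I expect the proof to live; everything else is bookkeeping with the already-established lemmas. Strong Unanimity would then follow by the additional observation that when all honest processes start with the same input $v_0$, Validity-with-Sender-Certificate of each Byzantine Broadcast forces $plurality_i=v_0$ for all honest committee members, but that is a separate lemma.
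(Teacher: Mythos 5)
Your proposal is correct and follows essentially the same route as the paper's proof: apply \Cref{lemma:auth_ba_committee_stats} and \Cref{lemma:auth_ba_bb_plural} to conclude that all honest committee members broadcast a common value $v$, then observe that $V_i$ contains at least $k+1$ copies of $v$ from honest committee members versus at most $k$ entries from faulty ones, so the plurality rule selects $v$ at every honest process. The extra bookkeeping you spell out (certificate unforgeability, faulty senders contributing at most $k$ entries) is implicit in the paper's shorter argument but does not change the approach.
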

\begin{proof}
Consider any honest process, $p_i$.
Each value in $V_i$ was sent by a process in $C$.
By \Cref{lemma:auth_ba_committee_stats}, $V_i$ contains at least $k+1$ values broadcast by honest processes
and at most $k$ values sent by faulty processes. By \Cref{lemma:auth_ba_bb_plural}, all honest processes
broadcast the same value, so this value occurs in $V_i$ more than than any other value.
Hence, process $p_i$ returns this value.
%To determine its decision, an honest process $p_i$ only considers values received from processes that have a committee certificate (Line~\ref{line:auth_ba_Vi}). From \Cref{lemma:auth_ba_committee_stats} and \Cref{lemma:auth_ba_bb_plural}, the plurality of the considered values must be the same for each honest process.
\end{proof}

\begin{lemma}[Strong Unanimity]
    If every honest process (in $C$) has the same input value, $v$, then every honest process returns $v$.\label{lemma:auth_ba_unan}
\end{lemma}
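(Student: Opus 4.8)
The plan is to trace through what happens when every honest process in $C$ starts with the same input value $v$, and show that this value propagates unchanged through the $n$ instances of Byzantine Broadcast and through the final plurality/broadcast step. First I would apply \Cref{lemma:auth_ba_committee_stats} to get the structural facts we need: $|C\cap F|\le k$ and $|C\cap H|\ge k+1$, so in particular there is at least one honest process with a committee certificate, and the precondition of the three Byzantine Broadcast properties ($|F\cap C|\le k$) is satisfied. Note that only processes in $C$ ever start message chains (a chain of length $1$ requires a committee certificate for its initiator), so the only instances of Byzantine Broadcast that can produce a non-$\bot$ output at honest processes are those with sender $p_s$ for $s\in C$.

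Next I would split the senders $s\in[1,n]$ into three groups. For $s\in C\cap H$: the sender is honest and has a committee certificate, so by Validity with Sender Certificate (\Cref{lemma:bb_with_implicit_committee_committee_validity}) every honest process returns $x_s = v$ from instance $s$ (using the hypothesis that every honest member of $C$ proposes $v$). For $s\notin C$: the sender has no committee certificate, so by Default without Sender Certificate (\Cref{lemma:bb_with_implicit_committee_non_committee_validity}) every honest process returns $\bot$ from instance $s$. For $s\in C\cap F$: we cannot predict the output, but by Committee Agreement (\Cref{lemma:bb_with_implicit_committee_committee_agreement}) all honest processes with a committee certificate return the \emph{same} value from instance $s$ — so whatever that value is, it is common to all honest members of $C$. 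Now consider an honest process $p_i\in C$ computing $plurality_i$ from its array $bb_i$: it sees the value $v$ in at least $|C\cap H|\ge k+1$ slots (the honest senders), sees $\bot$ in the remaining slots except for at most $|C\cap F|\le k$ slots, and in those at most $k$ slots it sees values agreed upon with every other honest member of $C$. Hence no value other than $\bot$ can occur more often than $v$ (it occurs at most $k < k+1$ times), so $v$ occurs the largest number of times among the non-$\bot$ values, and $plurality_i = v$ for every honest $p_i\in C$. This reproves \Cref{lemma:auth_ba_bb_plural} with the specific conclusion that the common broadcast value is $v$.

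Finally I would push this through the last round, essentially re-running the argument of \Cref{lemma:auth_ba_agreement}: an arbitrary honest process $p_i$ forms $V_i$ from the $plurality_j$ values received from processes in $C$; it receives $v$ from at least $k+1$ honest members of $C$ and arbitrary values from at most $k$ faulty members of $C$, so $v$ strictly dominates in $V_i$ and $decision_i = v$. The main obstacle — and the only place any care is needed — is the group $s\in C\cap F$: we must be sure that the (unknown, adversarial) outputs of those instances cannot manufacture a plurality that beats $v$, which is exactly why the counting bounds $|C\cap F|\le k$ and $|C\cap H|\ge k+1$ from \Cref{lemma:auth_ba_committee_stats} are used twice, once to pin down $plurality_i$ and once more in the final $V_i$ count. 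A minor subtlety worth a sentence in the writeup: the hypothesis only constrains the inputs of honest processes \emph{in $C$}, which is precisely the set whose inputs feed the $s\in C\cap H$ instances, so the statement is exactly matched to what the argument needs.
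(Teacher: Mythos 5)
Your proposal is correct and follows essentially the same route as the paper's proof: classify the $n$ broadcast instances by whether the sender is in $C\cap H$, $C\cap F$, or outside $C$, apply the three Byzantine Broadcast properties respectively, and then use $|C\cap H| \geq k+1 > k \geq |C\cap F|$ from \Cref{lemma:auth_ba_committee_stats} twice --- once to pin down $plurality_i = v$ and once to show $v$ dominates $V_i$. The paper's version is terser (it does not explicitly invoke Committee Agreement for the $C\cap F$ instances, since the counting bound alone suffices), but the argument is the same.
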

\begin{proof}
Consider any honest process $p_i$.
By \Cref{lemma:bb_with_implicit_committee_non_committee_validity}, $bb_i[j] = \bot$ for each $j \notin C$
and, by \Cref{lemma:bb_with_implicit_committee_committee_validity}, $bb_i[j] = v$ for each $j \in C \cap H$.
By \Cref{lemma:auth_ba_committee_stats}, $|C\cap H| > |C\cap F|$, so $plurality_j = v$ for each $j \in C \cap H$.
Thus, $V_i$ contains more copies of $v$ than any value other than $\bot$, so $p_i$ returns $v$.
%For each $j \notin C$, $bb_i[j] = \bot$ from \Cref{lemma:bb_with_implicit_committee_non_committee_validity}. Hence, combined with \Cref{lemma:auth_ba_committee_stats} and \Cref{lemma:bb_with_implicit_committee_committee_validity}, $v$ is the only possible value for $plurality_i$. Following the proof of \Cref{lemma:auth_ba_agreement}, all honest processes will return $v$.
\end{proof}

\begin{ignore}
We now show the complexities of the algorithm. We also care about the complexities when there are more than $k$ misclassified processes (\Cref{lemma:auth_ba_complexities_abnormal}) as our final algorithm may call this algorithm under that condition.

\begin{lemma}
    \label{lemma:auth_ba_complexities_normal}
    Each honest process returns from \Cref{algo:auth_ba_class} in $k+3$ rounds and the total number of messages sent by honest processes is $O(nk^2)$.
\end{lemma}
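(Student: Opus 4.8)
The plan is to account separately for the three phases of \Cref{algo:auth_ba_class}. Phase (i) is the single round in which each honest process $p_i$ sends a signed $\langle\textsc{committee},p_j\rangle$ message to every $j\in L_i$ and then (locally) assembles $cc_i$. Phase (ii) consists of the $n$ parallel instances of \textbf{bb-with-implicit-committee}, each of which runs for exactly $k+1$ rounds by construction, all $n$ instances proceeding in lockstep. Phase (iii) is the final round in which each process holding a committee certificate broadcasts $\langle plurality_i, cc_i\rangle$, after which everyone computes $decision_i$ locally. Summing the rounds gives $1+(k+1)+1=k+3$, establishing the round bound.

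For the message count I would first dispose of the two cheap phases. In phase (i) each honest process sends at most $|L_i|=2k+1$ messages, for a total of $O(nk)$. In phase (iii) only processes with a committee certificate send, and by \Cref{lemma:auth_ba_committee_stats} (whose hypothesis $2k+1\le n-t-k$ is a standing precondition of the algorithm) there are at most $|C|\le 3k+1$ such processes, each broadcasting once, i.e.\ sending $n$ point-to-point messages; this contributes $O(nk)$.

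The core of the argument is phase (ii), where I would combine two observations. First, the message observation for \textbf{bb-with-implicit-committee} says that in instance $s$, if $p_s$ has no committee certificate then no honest process sends anything in that instance, and if $p_s$ does have one then the only honest senders are those holding committee certificates, each broadcasting at most twice. Second, \Cref{lemma:auth_ba_committee_stats} gives $|C|\le 3k+1$, so at most $3k+1$ of the $n$ instances (namely those whose designated sender lies in $C$) transmit anything at all, and within each such instance at most $|C\cap H|\le 3k+1$ honest processes each issue $O(1)$ broadcasts of $n$ messages. Multiplying yields $O(k)\cdot O(k)\cdot O(n)=O(nk^2)$ messages for phase (ii), and adding the three phases gives the claimed $O(nk^2)$ total.

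The main obstacle is really a subtlety rather than a deep difficulty: one must resist bounding phase (ii) by the naive ``$n$ instances times $n$ broadcasters'' estimate, which would give $O(n^3)$. The saving comes entirely from the implicit committee being small, since a committee of size $O(k)$ simultaneously limits the number of instances that produce any traffic and the number of honest participants inside each such instance. I would also be careful to count each \textbf{broadcast} as $n$ messages, and to note that all of these bounds are purely structural --- they hold no matter how many processes are actually misclassified --- because $|C|\le 3k+1$ follows from the fact that each honest process endorses only $2k+1$ identifiers together with $2k+1\le n-t-k$, independent of the true misclassification count.
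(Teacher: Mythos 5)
Your round count ($1+(k+1)+1=k+3$) and your three-phase message accounting are correct and follow essentially the same route as the paper: it likewise charges $O(nk)$ for the committee-signature round and the final plurality broadcast, and bounds the Byzantine-broadcast traffic by noting that only the at most $|C|\le 3k+1$ instances whose designated sender holds a committee certificate generate any messages, and that within each such instance only certified processes broadcast, each at most twice, giving $O(|C|^2 n)=O(nk^2)$.

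The one point to correct is your closing remark that these bounds are ``purely structural'' and hold \emph{no matter how many processes are actually misclassified}. The round bound is unconditional, but the $O(nk^2)$ message bound is not. The estimate $|C|\le 3k+1$ in \Cref{lemma:auth_ba_committee_stats} is not a consequence of each honest process endorsing only $2k+1$ identifiers: a faulty process needs only \emph{one} honest endorsement (plus up to $t$ faulty ones) to assemble a committee certificate, so the number of faulty committee members is bounded by $k_F$, the actual number of misclassified faulty processes, via \Cref{corollary:faultypos}; similarly $|C\cap H|\le (2k+1)+k_H$ rests on \Cref{lemma:honest-broadcasters}. Both steps require the actual misclassification count to be at most $k$ (so that the position threshold $2k+1\le n-t-k$ is at most $n-t-k_A$). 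When more than $k$ processes are misclassified, $|C|$ can be as large as $\Theta(n)$, and the paper only claims $O(n^2)$ messages per honest process (hence $O(n^3)$ in total) in that regime --- it proves this degraded bound in a separate statement. Since the lemma you are proving is the ``normal'' case, your argument is fine once you make the hypothesis that $k$ upper-bounds the number of misclassified processes explicit rather than asserting it is unnecessary.
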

\begin{proof}
    The number of rounds is explicit from the algorithm. To count the number of messages, recall that $|C| \le 4k+1$ from \Cref{lemma:auth_ba_committee_stats}. Besides the Byzantine Broadcasts, honest processes send $O(nk + |C|n) = O(nk)$ at Line~\ref{line:auth_ba_send_sign} and Line~\ref{line:auth_ba_broadcast_bb_plural}. Then, from \Cref{lemma:bb_implicit_committee_complexities}, honest processes send $O(|C|^2n) = O(nk^2)$ messages in the Byzantine Broadcasts. In total, honest processes send $O(nk^2)$ messages.
\end{proof}

\begin{lemma}
    \label{lemma:auth_ba_complexities_abnormal}
    If there are more than $k$ misclassified processes, each honest process returns from \Cref{algo:auth_ba_class} in $k+3$ rounds and the total number of messages sent by honest processes is $O(n^3)$.
\end{lemma}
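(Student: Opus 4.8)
The plan is to observe that neither the round complexity nor the worst-case message complexity of \Cref{algo:auth_ba_class} depends on the assumption that at most $k$ processes are misclassified. That assumption is used only to establish agreement and strong unanimity (through \Cref{lemma:auth_ba_committee_stats} and hence \Cref{lemma:bb_with_implicit_committee_committee_agreement}), not to bound liveness or communication. So the proof is essentially a direct accounting of the fixed number of communication steps in the algorithm and its subroutines.

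For the round bound, I would trace the control flow: one round to exchange the signed $\langle\textsc{committee},p_j\rangle$ messages and form $cc_i$; then $k+1$ rounds for the $n$ parallel instances of \textbf{bb-with-implicit-committee}, since each instance of \Cref{algo:bb_implicit_committee} runs for exactly $k+1$ rounds regardless of how many faulty processes hold committee certificates; then one round for the broadcast of $\langle plurality_i,cc_i\rangle$ on \Cref{line:auth_ba_broadcast_bb_plural}, after which every honest process returns. This gives $k+3$ rounds unconditionally.

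For the message bound, the key observation is that $|C|\le n$ always, and that each honest process sends only $O(1)$ messages per relevant sub-step regardless of classification quality. Concretely: on \Cref{line:auth_ba_send_sign} each honest process sends at most $2k+1=O(n)$ signature messages, for $O(n^2)$ in total; in each of the $n$ instances of Byzantine Broadcast an honest process broadcasts at most twice (enforced by the $|X_i|<2$ guard in \Cref{algo:bb_implicit_committee}, which holds irrespective of $|C\cap F|$), so honest processes send $O(n)$ messages per instance, i.e.\ $O(n^2)$ per instance and $O(n^3)$ across all $n$ instances; and on \Cref{line:auth_ba_broadcast_bb_plural} each honest process with a committee certificate broadcasts once, contributing $O(n^2)$. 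Summing yields $O(n^3)$.

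The only subtlety I would be careful to make explicit is that the $|X_i|<2$ cap inside the Byzantine Broadcast subroutine is exactly what prevents a flood of relayed message chains when the committee is large and adversarially populated: even if every faulty process manages to obtain a committee certificate (which can happen precisely once the misclassification bound is exceeded), an honest process still relays at most two chains per sender-instance, so its per-instance message output stays $O(n)$. I do not expect any real obstacle here beyond stating this robustness of the subroutine carefully; everything else is routine step counting.
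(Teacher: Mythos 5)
Your proposal is correct and follows essentially the same route as the paper: the paper likewise notes that the round count is explicit in the algorithm and unaffected by misclassification, and obtains the message bound by substituting $|C|\le n$ into the same per-instance accounting (each certificate-holding honest process broadcasts at most twice per Byzantine Broadcast instance, thanks to the $|X_i|<2$ guard), yielding $O(n^3)$ overall. Your explicit remark that the $|X_i|<2$ cap is what makes the subroutine robust when the committee is adversarially populated is exactly the point the paper relies on.
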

\begin{proof}
    The number of rounds is explicit from the algorithm and not affected by how many misclassified processes there are. To count the number of messages, by definition, $|C| \le n$. Using the arguments from \Cref{lemma:auth_ba_complexities_normal}, honest processes send $O(n^3)$ messages.
\end{proof}

We conclude this section with the following theorem.
\end{ignore}

\begin{theorem}
    \label{theorem:auth_ba_class}
If $k$ is an upper bound on the number of misclassified processes in the classification vectors
of the honest processes, $2k+1 \leq n-t-k$, and $t < n/2$,
then \Cref{algo:un_ba_class} is a correct Byzantine agreement algorithm (i.e., it satisfies Agreement and Strong Unanimity)
and a total of $O(nk^2)$ messages are sent by the honest processes.
Even if the number of misclassified processes is larger than $k$,
each honest process returns after $k+3$ rounds and 
sends $O(n^2)$ messages.
\end{theorem}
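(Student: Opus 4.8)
The theorem is essentially a bookkeeping consequence of the lemmas already established in this subsection, so the plan is to assemble correctness from them and then carefully count rounds and messages. For correctness, Agreement is exactly \Cref{lemma:auth_ba_agreement}, Strong Unanimity is \Cref{lemma:auth_ba_unan}, and Termination is immediate since every honest process runs a fixed schedule of rounds and then returns on the last line. I would observe that all three rely on \Cref{lemma:auth_ba_committee_stats}, whose two hypotheses are precisely the two assumptions of the theorem: $2k+1 \le n-t-k$ is what forces $|C\cap F|\le k$ and $|C\cap H|\ge k+1$ (through \Cref{corollary:faultypos} and \Cref{lemma:coresize}), while $t<n/2$ is used only to guarantee $n-t \ge t+1$, i.e.\ that a process lying in $L_i$ for every honest $p_i$ really does accumulate $t+1$ committee signatures. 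For the round count, the algorithm spends one round exchanging committee signatures, then runs $n$ instances of \textbf{bb-with-implicit-committee} in parallel, each of which takes $k+1$ rounds, and then one final round to broadcast plurality values --- a total of $k+3$ rounds, independent of the actual number of misclassified processes, which already gives the round bound in both cases.

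The real work is the message count. In the good case I would use $|C| \le 3k+1$ from \Cref{lemma:auth_ba_committee_stats}. \Cref{line:auth_ba_send_sign} contributes $O(nk)$ signatures. For the Byzantine broadcasts I would invoke the observation stated right after \Cref{lemma:bb_with_implicit_committee_committee_agreement}: instance $s$ generates messages only when $p_s$ has a committee certificate, i.e.\ when $s \in C$, and in that case every honest committee member broadcasts at most twice while the non-members stay silent, so instance $s$ costs $O(|C\cap H|\cdot n) = O(kn)$ messages; summing over the at most $|C| = O(k)$ instances with a certified sender gives $O(k^2 n)$, and \Cref{line:auth_ba_broadcast_bb_plural} adds a further $O(|C\cap H|\cdot n) = O(kn)$. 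Altogether the honest processes send $O(nk^2)$ messages.

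In the abnormal case, where there may be more than $k$ misclassified processes, there is no longer a bound on $|C|$, so instead I would bound a single honest process directly: it sends at most $2k+1 = O(n)$ signatures, broadcasts at most twice in each of the $n$ Byzantine broadcast instances and once more on \Cref{line:auth_ba_broadcast_bb_plural}, hence $O(n)$ broadcasts and $O(n^2)$ messages in total, still within $k+3$ rounds. The step I expect to be the crux is exactly this message accounting for the $n$ parallel Byzantine broadcasts: the key insight is that only the $|C|$ instances whose sender holds a committee certificate ever produce traffic, and that within such an instance the cost is $O(|C\cap H|\cdot n)$ rather than the naive $O(n^2)$, which is what makes the total collapse to $O(nk^2)$ in the good case; everything else is routine.
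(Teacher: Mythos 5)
Your proposal is correct and follows essentially the same route as the paper: correctness is delegated to Lemmas~\ref{lemma:auth_ba_agreement} and~\ref{lemma:auth_ba_unan} (with the two hypotheses feeding Lemma~\ref{lemma:auth_ba_committee_stats} exactly as you describe), the $k+3$ round count is read off the structure of the algorithm, and the message bound hinges on the same key observation that only the $|C|=O(k)$ instances with a certified sender generate traffic and only committee members broadcast in them. The only cosmetic difference is that you tally the Byzantine-broadcast messages per instance ($O(|C\cap H|\cdot n)$ each, summed over $|C|$ instances) whereas the paper tallies them per committee member ($2n|C|$ each, summed over $|C|$ members); both yield $O(nk^2)$ in the good case and $O(n^2)$ per process otherwise.
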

%There is an algorithm that solves Byzantine Agreement with Classification for at most $t$ faulty processes and $k$ misclassified processes when $t < n/2$ and $2k+1 \le n-t-k$. The algorithm runs for $k+3$ rounds and exchanges $O(nk^2)$ messages. Furthermore, if there are more than $k$ misclassified processes, the algorithm exchanges $O(n^3)$ messages.
%\Ayaz{Note: First sentence can be modified, I am not sure how to phrase this the best. We need both second and third sentences to state the final algorithm.}
\begin{proof}
Correctness follows from \Cref{lemma:auth_ba_agreement} and \Cref{lemma:auth_ba_unan}.
Each honest process returns after $k+3$ rounds, since it returns 
from \textbf{bb-with-implicit-committee} (\Cref{algo:bb_implicit_committee})
after $k+1$ 
%(parallel) 
rounds.

Each honest process sends $2k+1$ messages on \Cref{line:auth_ba_send_sign}.
If $i \in C$, then $p_i$ broadcasts one message on \Cref{line:auth_ba_broadcast_bb_plural}.
During each instance of \textbf{bb-with-implicit-committee}($s, v_i, k, cc_i$)  with $s \not\in C$,
there are no valid message chains started by $p_s$, so $p_i$ does not send any messages.
During each instance of \textbf{bb-with-implicit-committee}($s, v_i, k, cc_i$) with $s \in C$,
process $p_i$ broadcasts at most two messages (each time it adds a value to $X_i$ during the first $k$ rounds).
Thus, $p_i$ sends at most $2k+1 + n + 2n|C| < 2n(|C| + 1) \in O(n^2)$ messages.

If $2k+1 \leq n-t-k$, then $|C| \le 3k+1$, by \Cref{lemma:auth_ba_committee_stats}, so
the a total of $O(nk^2)$ messages are sent by the honest processes.
\end{proof}
\section{Correctness of Byzantine Agreement with Predictions}\label{section:final_algo_proof}

%\Ayaz{IMPORTANT: make sure the algorithm in \Cref{section:final_algo} and the one here is the same.}
%\Faith{I fixed them so that they are now the same.}

This section proves the correctness of \Cref{algo:final_ba_copy}
and then analyzes its round and message complexities.
%is devoted to prove the theorems (\Ayaz{that we promised in \Cref{section:final_algo}}). For ease of reading, we present the same code below. We first prove the algorithm indeed solves Byzantine Agreement; then we prove its complexities when specific algorithms are used for its components.

\setcounter{algorithm}{0}
\begin{algorithm} [h]
\caption{Byzantine Agreement with Predictions: (code for process $p_i$)}
%\label{algorithm:un_byz_class}
\label{algo:final_ba_copy}
\footnotesize
\textbf{ba-with-predictions($x_i,a_i$):}
\begin{algorithmic} [1] 
\State $c_i \leftarrow \textbf{classify}(a_i)$
\State $v_i \leftarrow x_i$
\State $\mathit{decided}_i \gets \mathit{false}$
\State \textbf{for} $\phi \gets 1$ to $\lceil\log_2t\rceil + 1$:
    \State \hskip2em $T \gets \alpha \cdot 2^{\phi-1}$       
    \BlueComment{Choose $\alpha$ so \textbf{ba-early-stopping} and ba-classification can complete for this phase.}
    
    \State \hskip2em $(v_i,g_i) \gets$ \textbf{graded-consensus}($v_i$)\label{line:final-ba-copy-gc1}  
    \BlueComment{Graded Consensus protects agreement.}
    \State \hskip2em $v'_i \gets$ \textbf{ba-early-stopping}($v_i, t, T$)\label{line:final-ba-copy-early-stopping} \BlueComment{early-stopping BA with time limit of $T$ rounds}
    \State \hskip2em \textbf{if} $g_i = 0$ \textbf{then} $v_i \gets v'_i$\label{line:final-ba-copy-use-early-stoppping}
    
    \State \hskip2em $(v_i,g_i) \gets$ \textbf{graded-consensus}($v_i$)\label{line:final-ba-copy-gc2} 
    \State \hskip2em $v'_i \gets$ \textbf{ba-with-classification}($v_i, c_i, 2^{\phi-1}, T$)\label{line:final-ba-copy-classification} 
    \BlueComment{at most $2^{\phi-1}$ classification errors, time limit $T$}
    \State \hskip2em \textbf{if} $g_i = 0$ \textbf{then} $v_i \gets v'_i$\label{line:final-ba-copy-use-classification}

    \State \hskip2em $(v_i,g_i) \gets$ \textbf{graded-consensus}($v_i$)\label{line:final-ba-copy-gc-last}
    \BlueComment{Graded Consensus protects agreement.}
    \State \hskip2em \textbf{if} $\mathit{decided}_i$ \textbf{then return}($decision_i$)\label{line:final-ba-copy-normal-decide}
    \State \hskip2em \textbf{if} $g_i = 1$ \textbf{then}
        \State \hskip4em $decision_i \gets v_i$\label{line:final-ba-copy-decision}
        \State \hskip4em $\mathit{decided}_i \gets \mathit{true}$\label{line:final-ba-copy-decided}
%    \State \hskip2em \textbf{if} $\phi = \lceil \log_2 t\rceil+1$ \textbf{then return}($decision_i$)\label{line:final-ba-copy-force-decide}
\State \textbf{return}($decision_i$)\label{line:final-ba-copy-force-decide}
\end{algorithmic} 
\end{algorithm}

To prove correctness, we use the following facts. First, the outputs of \textbf{graded-consensus} satisfy Strong Unanimity and Coherence as long as there are at most $t$ faulty processes. Second, 
if all honest processes return from the original \textbf{ba-early-stopping} algorithm within $T$ rounds 
%when $T$ is set such that in the ba-early-stopping, all honest processes return a value within $T$ rounds 
when there are $f \le t$ faulty processes, the outputs of our \textbf{ba-early-stopping} algorithm (that terminates within $T$ rounds)
satisfy Strong Unanimity and Agreement. 
We emphasize that the correctness of 
%Let us underline that 
\textbf{ba-with-classification} is not needed to prove the correctness of \textbf{ba-with-prediction}---only the performance.

% \Ayaz{NOTE: here, when I say $p_i$ decides $v$, it means $p_i$ return $v$ (either at Line~\ref{line:final-ba-copy-normal-decide} or Line~\ref{line:final-ba-copy-force-decide}). From what I understand it seems the meaning of deciding is a bit different in \Cref{subsection:BAalg-class}, where in there, $p_i$ decides $v$ means $p_i$ set its $decision_i$ to $v$. I mention this because both algorithms are similar looking (differ mainly in just how the conciliation goes.)

% We should try to be consistent here. (I also wrote something about this in \Cref{subsection:BAalg-class}, perhaps we can just discuss there...)}

%\begin{lemma}[Termination]
%    Every honest process eventually decides.
%\end{lemma}
%\begin{proof}
%    Each honest process decides by the end of phase $\lceil \log_2 t\rceil+1$ (see Line~\ref{line:final-ba-copy-force-decide}).
%\end{proof}

\begin{lemma}[Strong Unanimity]
    \label{lemma:final_ba_strong_unanimity}
    If every honest process has the same value, $v$, at the beginning of phase $\phi$, then they all return $v$
    by the end of phase $\min \{ \phi + 1, \lceil \log_2 t \rceil + 1 \}$.
    In particular, if every honest process has the same input value, they all return this value.
\end{lemma}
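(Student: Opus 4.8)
The plan is to trace the value carried by the honest processes through one phase, using the properties of \textbf{graded-consensus} to argue that once all honest processes agree on a value $v$, that value is never overwritten, and that within one additional phase every honest process decides. First I would invoke Strong Unanimity of \textbf{graded-consensus} at \Cref{line:final-ba-copy-gc1}: since every honest process enters phase $\phi$ with value $v$, they all receive $(v,1)$ from this first graded consensus. Because the returned grade is $1$, the test on \Cref{line:final-ba-copy-use-early-stoppping} fails, so no honest process adopts the (possibly unrelated) output of \textbf{ba-early-stopping}, and every honest process still holds $v$ entering the second graded consensus on \Cref{line:final-ba-copy-gc2}. Applying Strong Unanimity again there, they all obtain $(v,1)$; the test on \Cref{line:final-ba-copy-use-classification} fails, so the output of \textbf{ba-with-classification} is likewise ignored, and $v$ survives to the third graded consensus on \Cref{line:final-ba-copy-gc-last}. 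Strong Unanimity once more yields $(v,1)$ for every honest process.

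Next I would handle termination and the ``one extra phase'' bookkeeping. If $\phi \le \lceil\log_2 t\rceil$, then since $g_i = 1$ at the end of phase $\phi$, every honest process that has not already decided sets $decision_i \gets v$ and $\mathit{decided}_i \gets \mathit{true}$ (\Cref{line:final-ba-copy-decision,line:final-ba-copy-decided}). I then need to argue that these honest processes all return $v$ by the end of phase $\phi+1$: at the start of phase $\phi+1$ they all still hold $v$ (the decision does not change $v_i$, and none of the intermediate updates fire, by the same Strong Unanimity argument as above repeated for phase $\phi+1$), so after the first graded consensus of phase $\phi+1$ the condition on \Cref{line:final-ba-copy-normal-decide} triggers and each such process returns $decision_i = v$. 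If instead $\phi = \lceil\log_2 t\rceil + 1$, i.e.\ $\phi$ is the last phase, then after setting $decision_i \gets v$ the loop ends and the process returns $v$ on \Cref{line:final-ba-copy-force-decide}; this is why the bound in the statement is $\min\{\phi+1, \lceil\log_2 t\rceil+1\}$.

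One subtlety I must be careful about is a process that had \emph{already} decided in an earlier phase $\phi' < \phi$ — but by the hypothesis all honest processes hold the same value $v$ at the start of phase $\phi$, and a process that decided earlier decided precisely that common value (this is consistent because $decision$ is never changed once set, and the hypothesis pins down $v_i$), so such a process returns $decision_i = v$ on \Cref{line:final-ba-copy-normal-decide} of phase $\phi$ itself. The final sentence of the lemma then follows by taking $\phi = 1$: Strong Unanimity of \textbf{graded-consensus} requires only $t$ faults, which holds, so if every honest process starts with the same input value, that is the value each returns. The main obstacle I anticipate is not any single deduction but the careful case analysis of \emph{when} each honest process actually executes its \texttt{return} statement — distinguishing processes that decided before phase $\phi$, exactly at phase $\phi$, or at phase $\phi+1$, and verifying in each case that the returned value is $v$ and that the return happens no later than the end of phase $\min\{\phi+1,\lceil\log_2 t\rceil+1\}$ — together with confirming that the grade-$1$ outputs of graded consensus genuinely prevent the early-stopping and classification sub-protocols from ever perturbing $v$.
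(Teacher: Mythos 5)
Your proof is correct and follows essentially the same route as the paper's: three applications of the Strong Unanimity property of \textbf{graded-consensus} to show the grade-$1$ outputs shield $v$ from both sub-protocols, followed by the same case split on whether a process had already decided, decides in phase $\phi$, or returns in phase $\phi+1$ (or on the final forced return). The only nit is that the $\mathit{decided}_i$ check on \Cref{line:final-ba-copy-normal-decide} is reached only after the \emph{third} graded consensus of phase $\phi+1$, not the first, but this does not affect the claimed bound of returning by the end of phase $\min\{\phi+1,\lceil\log_2 t\rceil+1\}$.
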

\begin{proof}
Suppose every honest process has the same value, $v$, at the beginning of phase $\phi$.
Then they input $v$ to \textbf{graded-consensus} on \Cref{line:final-ba-copy-gc1}. 
By the Strong Unanimity property of \textbf{graded-consensus},    
%From the graded-consensus' Strong Unanimity, 
all honest processes return $(v, 1)$ and thus, each honest process, $p_i$, sets $v_i$ to $v$ and $g_i$ to $1$. Moreover, as $g_i = 1$, each honest process ignores the output of \textbf{ba-early-stopping} on \Cref{line:final-ba-copy-early-stopping}.
Likewise, all honest processes return $(v, 1)$ on \Cref{line:final-ba-copy-gc2}, ignore the output of \textbf{ba-with-classification} on \Cref{line:final-ba-copy-classification}, and return $(v, 1)$ on \Cref{line:final-ba-copy-gc-last}.
If $decided_i = true$, then process $p_i$ had set $decision_i$ to $v_i = v$ at the end of the previous phase, so it returns $v$
on \Cref{line:final-ba-copy-normal-decide}.
%The same can be said for the second graded-consensus of the phase and the ba-with-classification. 
%Thus, in the third graded consensus, each honest process inputs $v$ and again, by the Strong Unanimity of the graded-consensus, all honest processes output $(v, 1)$. 
If not, honest process $p_i$ sets $decision_i$ to $v$ on \Cref{line:final-ba-copy-decision} and $decided_i$ to $true$ \Cref{line:final-ba-copy-decided}. If $\phi = \lceil \log_2 t \rceil+1$, then it returns $v$ on \Cref{line:final-ba-copy-force-decide}. Otherwise, it returns $v$ on \Cref{line:final-ba-copy-normal-decide} in phase $\phi+1$.
\end{proof}

\begin{lemma}
    \label{lemma:final_ba_decision_before_last_phase}
Suppose there is an honest process, $p_i$, that sets  $decision_i$ to $v$ in phase $\phi \le \lceil \log_2 t \rceil$. Then every honest process returns $v$ by the end of phase $\min\{\phi+2, \lceil \log_2 t \rceil + 1\}$.
\end{lemma}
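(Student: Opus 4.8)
The plan is to pull the claim back to the Strong Unanimity lemma (Lemma~\ref{lemma:final_ba_strong_unanimity}) by first establishing that all honest processes hold a common value at the start of some early phase. First I would let $\phi_0 \le \phi$ be the \emph{first} phase in which any honest process sets its $decision$ variable, say $p_j$ sets $decision_j = u$ in phase $\phi_0$. Before phase $\phi_0$ no honest process has set $decided$, so no honest process returns via Line~\ref{line:final-ba-copy-normal-decide}, and since $\phi_0 \le \phi \le \lceil\log_2 t\rceil$ none returns via Line~\ref{line:final-ba-copy-force-decide} either; consequently every honest process executes all three graded consensus calls of phase $\phi_0$. Since $p_j$ obtained grade $1$ from the last of these (Line~\ref{line:final-ba-copy-gc-last}), the Coherence property of graded consensus forces every honest process to return the value $u$ from that call, so every honest process begins phase $\phi_0+1$ with its current value equal to $u$ (and this phase exists because $\phi_0 \le \lceil\log_2 t\rceil$).

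Next I would apply Lemma~\ref{lemma:final_ba_strong_unanimity} with phase $\phi_0+1$ and value $u$: every honest process returns $u$ by the end of phase $\min\{\phi_0+2,\lceil\log_2 t\rceil+1\}$, and, by the argument in that lemma's proof, every honest process returns $(u,1)$ from each of the three graded consensus calls of phase $\phi_0+1$. It then remains to identify $u$ with $v$ and to check the phase bound. I would first rule out $\phi_0 \le \phi-2$: then $\min\{\phi_0+2,\lceil\log_2 t\rceil+1\} = \phi_0+2 \le \phi$, so every honest process, $p_i$ included, has returned by the end of phase $\phi$; but $p_i$ sets its decision in phase $\phi$ without returning there (it passed the $decided_i$ test as false) and hence returns only in phase $\phi+1$, a contradiction. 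So $\phi_0 \in \{\phi-1,\phi\}$. If $\phi_0 = \phi$, then $p_i$ also obtained grade $1$ from the last graded consensus of phase $\phi$, so Coherence gives $v = u$, and the bound $\min\{\phi_0+2,\lceil\log_2 t\rceil+1\}$ is exactly $\min\{\phi+2,\lceil\log_2 t\rceil+1\}$. If $\phi_0 = \phi-1$, then $p_i$ obtains $(u,1)$ at Line~\ref{line:final-ba-copy-gc-last} of phase $\phi$ by the previous paragraph, so again $v = u$, and every honest process returns $v$ by the end of phase $\min\{\phi+1,\lceil\log_2 t\rceil+1\} \le \min\{\phi+2,\lceil\log_2 t\rceil+1\}$.

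I expect the only real subtlety---not a deep obstacle---to be the bookkeeping about which honest processes are still running in which phase: one cannot apply Coherence to the last graded consensus of phase $\phi$ directly, since honest processes that decided two or more phases earlier have already returned, so the argument must be anchored at the first decision phase $\phi_0$, where full participation is guaranteed. Everything else is a routine case split on $\phi_0 \in \{\phi-1,\phi\}$ plus one invocation of Lemma~\ref{lemma:final_ba_strong_unanimity}.
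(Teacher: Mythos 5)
Your proof is correct and follows essentially the same route as the paper: apply Coherence to the last graded-consensus call of the (first) decision phase to get a common value, then invoke Lemma~\ref{lemma:final_ba_strong_unanimity} for the next phase. The paper simply takes $p_i$ to be the first deciding process and stops there; your extra bookkeeping (anchoring at $\phi_0$, ruling out $\phi_0\le\phi-2$, and the case split identifying $u$ with $v$) fills in a detail the paper glosses over when the $p_i$ of the statement is not the first decider, but it is the same argument.
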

\begin{proof}
Let $p_i$ be the first honest process that performs \Cref{line:final-ba-copy-decision} and suppose it sets
$decision_i$ to $v$. Then $p_i$ returned $(v,1)$ when it last performed \textbf{graded-consensus} on \Cref{line:final-ba-copy-gc-last}.
By the Coherence property of \textbf{graded-consensus}, every honest process, $p_j$, returned $(v_j,g_j)$, with $v_j =v$ on \Cref{line:final-ba-copy-gc-last}.
Thus, at the beginning of phase $\phi+1$, every honest process has the same $v$.
By \Cref{lemma:final_ba_strong_unanimity}, every honest process returns $v$ by the end of phase $\min\{\phi+2, \lceil \log_2 t \rceil + 1\}$. 
\end{proof}

% \Faith{I think it would be better to explicitly state the preconditions of ba-early-stopping in the following lemma.} \Ayaz{added.}

\begin{lemma}
    \label{lemma:final-ba-conditional-decide-early-ba}
If the preconditions of \textbf{ba-early-stopping} hold in phase $\phi$ (every honest process output from \textbf{ba-early-stopping} within $T = \alpha 2^{\phi-1}$ rounds), then all honest processes return the same value 
%    If at phase $\phi'$ the preconditions of ba-early-stopping are satisfied (so all honest processes output from ba-early-stopping, where the output satisfies Agreement and Strong Unanimity), all honest processes decide some value $v'$ 
by the end of phase $\min\{\phi +1, \lceil \log_2 t \rceil + 1\}$.
\end{lemma}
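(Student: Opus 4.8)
The plan is to show that, under the hypothesis, all honest processes hold a common value $u$ immediately after \Cref{line:final-ba-copy-use-early-stoppping} of phase $\phi$, and then to trace the remainder of phase $\phi$ (and, if needed, phase $\phi+1$) exactly as in the proof of \Cref{lemma:final_ba_strong_unanimity}. By hypothesis, the preconditions of \textbf{ba-early-stopping} hold in phase $\phi$, so by the stated facts its outputs in that phase satisfy Strong Unanimity and Agreement. I would case on the grade returned by the first \textbf{graded-consensus} call of phase $\phi$ (\Cref{line:final-ba-copy-gc1}): if some honest process returns grade $1$ with value $w$, then by Coherence every honest process returns value $w$, hence inputs $w$ to \textbf{ba-early-stopping}; by its Strong Unanimity every honest process outputs $w$, so after \Cref{line:final-ba-copy-use-early-stoppping} every honest process holds $w$, whether or not it performed the update. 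If no honest process returns grade $1$, then every honest process has $g_i = 0$, performs the update on \Cref{line:final-ba-copy-use-early-stoppping}, and by Agreement of \textbf{ba-early-stopping} they all adopt the same value. Either way, all honest processes hold a common value $u$ after \Cref{line:final-ba-copy-use-early-stoppping}.

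From that point the argument follows the proof of \Cref{lemma:final_ba_strong_unanimity}: all honest processes input $u$ to \textbf{graded-consensus} on \Cref{line:final-ba-copy-gc2} and, by Strong Unanimity, return $(u,1)$, so they ignore the output of \textbf{ba-with-classification} on \Cref{line:final-ba-copy-use-classification}; they input $u$ again on \Cref{line:final-ba-copy-gc-last} and again return $(u,1)$. Now look at \Cref{line:final-ba-copy-normal-decide} of phase $\phi$. If no honest process has $\mathit{decided}_i = \mathit{true}$, then nobody returns there, so every honest process sets $\mathit{decision}_i \gets u$ and $\mathit{decided}_i \gets \mathit{true}$, and then either returns $u$ on \Cref{line:final-ba-copy-force-decide} (if $\phi=\lceil\log_2 t\rceil+1$) or returns $u$ on \Cref{line:final-ba-copy-normal-decide} of phase $\phi+1$. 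If instead some honest $p_i$ already has $\mathit{decided}_i = \mathit{true}$ at \Cref{line:final-ba-copy-normal-decide} of phase $\phi$, I would argue that it must have set that flag at the end of phase $\phi-1$ (a process that sets it earlier returns before phase $\phi$), when it returned $(v,1)$ from the last \textbf{graded-consensus} call and set $\mathit{decision}_i \gets v$; by Coherence every honest process returned value $v$ from that call, hence began phase $\phi$ with value $v$, and therefore had grade $1$ on \Cref{line:final-ba-copy-gc1} and ignored \textbf{ba-early-stopping}, so $u=v$. Thus $p_i$ returns $\mathit{decision}_i = u$, every already-decided honest process does likewise, and every not-yet-decided honest process sets $\mathit{decision}_j \gets u$ and returns $u$ in phase $\phi$ or $\phi+1$ as above. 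In all cases every honest process returns $u$ by the end of phase $\min\{\phi+1,\lceil\log_2 t\rceil+1\}$.

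The hard part is the bookkeeping around $\mathit{decided}_i$: I need to be sure that an honest process that decided in an earlier phase is carrying exactly the common value $u$, and that no honest process can still be running in phase $\phi$ while having set $\mathit{decided}_i$ before phase $\phi-1$. Both points reduce to applying the Coherence property of \textbf{graded-consensus} to the last \textbf{graded-consensus} call of the phase in which $\mathit{decided}_i$ was set, which forces all honest processes onto a single value entering the next phase; the rest is a mechanical unwinding of the five-part phase structure.
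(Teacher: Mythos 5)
Your proposal is correct and follows essentially the same route as the paper's proof: case on the grade from the first \textbf{graded-consensus} call, use Coherence plus Strong Unanimity of \textbf{ba-early-stopping} in one branch and its Agreement property in the other, and then ride Strong Unanimity of \textbf{graded-consensus} through the rest of the phase. The only difference is bookkeeping: where you inline the analysis of processes that decided earlier (and must argue their recorded decision equals the common value $u$ via Coherence at the phase in which they decided), the paper dispatches that case up front by invoking \Cref{lemma:final_ba_decision_before_last_phase} and then assumes no honest process decided before phase $\phi$, which is cleaner but not substantively different.
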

\begin{proof}
If $\phi > 0$ and some honest process $p_i$ set $decision_i$ to $v$ in a previous phase, then, by \Cref{lemma:final_ba_decision_before_last_phase},
all honest processes return $v$ by the end of phase $\min\{\phi +1, \lceil \log_2 t \rceil + 1\}$.
So, suppose no honest process has performed \Cref{line:final-ba-copy-decision} prior to phase $\phi$.
Consider the output, $(v_i,g_i)$, of each honest process, $p_i$, from \textbf{graded-consensus} on \Cref{line:final-ba-copy-gc1} in phase $\phi$.
If $g_i = 0$ for every honest process, $p_i$, then on \Cref{line:final-ba-copy-use-early-stoppping}, they all set $v_i$ to $v'_i$, where
$v'_i$ is the value it returns from \textbf{ba-early-stopping} on 
\Cref{line:final-ba-copy-early-stopping}.
Since the preconditions of \textbf{ba-early-stopping} hold, the Agreement property says that these return values are all the same.
Otherwise, there is an honest process, $p_j$, with $g_j = 1$. From the Coherence property of \textbf{graded-consensus}, all honest processes returned
value $v_j$ from \textbf{graded-consensus} and, hence, they all used $v_j$ as input to \textbf{ba-early-stopping} on
\Cref{line:final-ba-copy-early-stopping}.
Since the preconditions of \textbf{ba-early-stopping} hold, the Strong Unanimity property says that all honest processes, $p_i$, return $v'_i = v_j$.
Thus, even if $g_i = 0$ and $p_i$ performs \Cref{line:final-ba-copy-use-early-stoppping}, $v_i$ still has value $v_j$.
In either case, all honest processes have the same input, $v$, to \textbf{graded-consensus} on \Cref{line:final-ba-copy-gc2}.

By Strong Unanimity, all honest processes $p_i$ return $(v,1)$, they ignore the output of \textbf{ba-with-classification} on \Cref{line:final-ba-copy-classification}, return $(v, 1)$ on \Cref{line:final-ba-copy-gc-last}, and set $decision_i$ to $v$ on \ref{line:final-ba-copy-decision}.
If $\phi = \lceil \log_2 t \rceil+1 $, they return $v$ on \Cref{line:final-ba-copy-force-decide}.
Otherwise, they return $v$ on \Cref{line:final-ba-copy-normal-decide} in phase $\phi+1$.
\end{proof}

\begin{ignore}
\begin{lemma}
    \label{lemma:final_ba_last_phase_no_decision}
    Suppose no honest process $p_i$ sets its $decision_i$ by the end of phase $\lceil \log_2 t \rceil$. Then, by the end of phase $\lceil \log_2 t \rceil + 1$, each honest process decides $v'$ for some value $v'$.
\end{lemma}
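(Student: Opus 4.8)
The plan is to show that in the final phase $\phi^* := \lceil\log_2 t\rceil+1$ the early-stopping subroutine always runs to completion, which (via \Cref{lemma:final-ba-conditional-decide-early-ba}) forces all honest processes to agree on, and decide, a common value by the end of that phase.

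The key observation is quantitative. In phase $\phi^*$ the time budget is $T = \alpha\cdot 2^{\phi^*-1} = \alpha\cdot 2^{\lceil\log_2 t\rceil}\ge \alpha t\ge \alpha f$, where $f\le t$ is the actual number of faults. Since \textbf{ba-early-stopping} terminates within $O(f)$ rounds whenever there are at most $t$ faults, choosing the constant $\alpha$ large enough (which is also what makes \textbf{ba-with-classification} fit within its budget) guarantees that every honest process finishes \textbf{ba-early-stopping} within $T$ rounds in phase $\phi^*$. Hence the preconditions of \textbf{ba-early-stopping} hold in phase $\phi^*$, so (as noted just before \Cref{lemma:final-ba-conditional-decide-early-ba}) its outputs there satisfy Agreement and Strong Unanimity.

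Now apply \Cref{lemma:final-ba-conditional-decide-early-ba} with $\phi=\phi^*$. By hypothesis, no honest process set $decision_i$ before phase $\phi^*$, so we are exactly in the branch of that lemma's proof where no honest process has executed \Cref{line:final-ba-copy-decision} earlier. Tracing that argument: whether or not the first \textbf{graded-consensus} of phase $\phi^*$ yields a grade-$1$ value, the Agreement and Strong Unanimity of \textbf{ba-early-stopping}, together with the Coherence and Strong Unanimity of \textbf{graded-consensus}, make all honest processes hold a common value $v$ as input to the second \textbf{graded-consensus} on \Cref{line:final-ba-copy-gc2}; by Strong Unanimity of \textbf{graded-consensus} they all obtain $(v,1)$ there, ignore \textbf{ba-with-classification}, obtain $(v,1)$ again on \Cref{line:final-ba-copy-gc-last}, and---since $decided_i$ is still $false$---set $decision_i=v$ on \Cref{line:final-ba-copy-decision} and $decided_i$ to $true$. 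Because $\phi^*=\lceil\log_2 t\rceil+1$ is the last phase, each honest process then returns $v$ on \Cref{line:final-ba-copy-force-decide}. Thus every honest process decides $v$ by the end of phase $\lceil\log_2 t\rceil+1$, as claimed.

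The only real obstacle is the first step: verifying that the worst-case time limit $T$ in the last phase genuinely suffices for \textbf{ba-early-stopping} (and likewise \textbf{ba-with-classification}) to complete, i.e.\ pinning down $\alpha$ against the exact round complexities of the chosen subroutines. Everything afterwards is bookkeeping through the loop body, already carried out inside the proof of \Cref{lemma:final-ba-conditional-decide-early-ba}; the one point requiring care is to confirm that ``all honest processes return a common value in the last phase'' is realized via \Cref{line:final-ba-copy-decision} followed by \Cref{line:final-ba-copy-force-decide}, so that $decision_i$ is well defined when it is returned.
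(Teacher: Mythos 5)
Your proposal is correct and follows essentially the same route as the paper: in the last phase the time budget $T = \alpha\cdot 2^{\lceil\log_2 t\rceil}\ge \alpha t$ guarantees the preconditions of \textbf{ba-early-stopping} hold, and the claim is then immediate from \Cref{lemma:final-ba-conditional-decide-early-ba} applied to phase $\lceil\log_2 t\rceil+1$. Your extra tracing through that lemma's internals (confirming $decision_i$ is actually set before the return on \Cref{line:final-ba-copy-force-decide}) is a harmless elaboration of what the paper leaves implicit.
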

\begin{proof}
    At phase $\lceil \log_2 t \rceil + 1$, $T = ct$ for some $c$ such that when there are at most $t$ faulty processes, all honest processes return from \textbf{ba-early-stopping} in at most $T$ rounds and the output satisfies Strong Unanimity and Agreement. Thus, the lemma statement follows from \Cref{lemma:final-ba-conditional-decide-early-ba}. 
\end{proof}
\end{ignore}

\begin{lemma}[Agreement]
All honest processes decide the same value.
\end{lemma}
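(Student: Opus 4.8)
The plan is to deduce Agreement entirely from the behaviour of the protocol in its \emph{last} phase, $\phi = \lceil \log_2 t\rceil + 1$, together with \Cref{lemma:final-ba-conditional-decide-early-ba}; none of the analysis of \textbf{ba-with-classification}, nor any assumption on prediction quality, is needed. The crux is the arithmetic fact that in this phase $2^{\phi-1} = 2^{\lceil \log_2 t\rceil} \ge t \ge f$, so the actual number of faults never exceeds the round budget allotted to \textbf{ba-early-stopping} in the last phase. Since the constant $\alpha$ in \Cref{algo:final_ba_copy} is chosen (as the inline comment states) so that the underlying $O(f)$-round early-stopping protocol always finishes within $\alpha\cdot 2^{\phi-1}$ rounds whenever there are at most $2^{\phi-1}$ faults, the preconditions of \textbf{ba-early-stopping} are met in the last phase: every honest process obtains an output from \textbf{ba-early-stopping} within $T = \alpha\cdot 2^{\phi-1}$ rounds, and, by the properties of our \textbf{ba-early-stopping} recalled just before \Cref{lemma:final_ba_strong_unanimity}, those outputs satisfy Strong Unanimity and Agreement.

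With that precondition verified, I would simply invoke \Cref{lemma:final-ba-conditional-decide-early-ba} with this value of $\phi$: it yields that all honest processes return the same value by the end of phase $\min\{\phi+1,\lceil\log_2 t\rceil+1\} = \lceil\log_2 t\rceil+1$, i.e.\ by the end of the last phase. Because the for-loop terminates after the last phase and is followed by an unconditional \textbf{return}, every honest process indeed returns, and \Cref{lemma:final-ba-conditional-decide-early-ba} guarantees the returned values coincide. No separate case split is required at this level: \Cref{lemma:final-ba-conditional-decide-early-ba} already divides internally into the case where some honest process set $decision_i$ in an earlier phase — where \Cref{lemma:final_ba_decision_before_last_phase} forces the common return value — and the case where none did, where the common early-stopping output propagates through the subsequent \textbf{graded-consensus} calls (by Strong Unanimity and Coherence) to a grade-$1$ output and hence a common $decision_i$ assigned on \Cref{line:final-ba-copy-decision}.

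The one point needing care — and what I would flag as the main obstacle — is making the dependence on $\alpha$ airtight: $\alpha$ must be fixed once and for all, uniformly over phases, large enough to absorb simultaneously the $O(f)$-round cost of \textbf{ba-early-stopping} and the $O(1)$-round costs of the three \textbf{graded-consensus} invocations (and, for the complexity bounds proved later, the round cost of \textbf{ba-with-classification}). Once $\alpha$ is pinned down this way, the last-phase precondition holds unconditionally because $f \le t \le 2^{\lceil\log_2 t\rceil}$, and the rest is a direct appeal to the already-established lemmas. For completeness I would also remark that this same argument shows the value $decision_i$ returned on \Cref{line:final-ba-copy-force-decide} is always well defined, since \Cref{lemma:final-ba-conditional-decide-early-ba} implies every honest process executes \Cref{line:final-ba-copy-decision} no later than the last phase.
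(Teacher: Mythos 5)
Your proof is correct and follows essentially the same route as the paper: verify that the precondition of \textbf{ba-early-stopping} holds unconditionally in the final phase (because $2^{\lceil\log_2 t\rceil}\ge t\ge f$ and $\alpha$ is chosen to absorb the $O(f)$-round cost) and then invoke \Cref{lemma:final-ba-conditional-decide-early-ba}. The only cosmetic difference is that the paper performs the case split on whether some honest process decided in an earlier phase explicitly at the top level, whereas you correctly observe that this split is already handled inside \Cref{lemma:final-ba-conditional-decide-early-ba}.
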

\begin{proof}
If there is some honest process, $p_i$, that sets $decision_i$ before phase $\lceil \log_2 t \rceil + 1$, Agreement follows from \Cref{lemma:final_ba_decision_before_last_phase}.
Otherwise, in phase $\lceil \log_2 t \rceil + 1$, the preconditions for \textbf{ba-early-stopping} hold, since $T= \alpha t$ rounds are sufficient for all honest processes to output from \textbf{ba-early-stopping} when there are at most $t$ faulty processes.
% \Faith{it would be good to say more about the preconditions and $T$ here.}\Ayaz{added.}
Thus, Agreement follows from \Cref{lemma:final-ba-conditional-decide-early-ba}. 
%If no honest process $p_i$ sets its $decision_i$ by the end of phase $\lceil \log_2 t \rceil$, Agreement follows from \Cref{lemma:final_ba_last_phase_no_decision}. Otherwise, Agreement follows from \Cref{lemma:final_ba_decision_before_last_phase}.
\end{proof}

%Hence \Cref{algo:final_ba_copy} solves Byzantine Agreement. We now turn our focus on showing its complexities.
Now, we focus on the round and message complexities of \Cref{algo:final_ba_copy}.

\begin{lemma}
    \label{lemma:final-ba-conditional-decide-classification-ba}
%Suppose the preconditions of ba-with-classification hold in phase $\phi$, 
%so  $k = 2^{\phi -1}$ is an upper bound on the number of misclassified processes, $T= \alpha 2^{\phi-1}$
%and either $(2k+1)(3k+1) \leq n-t -k$  and $T\geq 5(2k+1)$ in the unauthenticated case or $2k+1 \leq n-t-k$ in the authenticated case)
%(so all honest processes output from ba-with-classification, where the output satisfies Agreement and Strong Unanimity), all honest processes return
%at phase $\phi'$ 
%decide some value $v'$ 
%by the end of phase $\min\{\phi'+1, \lceil \log_2 t \rceil + 1\}$.
Suppose that, in phase $\phi$, $k = 2^{\phi -1}$ is an upper bound on the number of misclassified processes.
Let $T = \alpha 2^{\phi-1}$.
If $(2k+1)(3k+1) \leq n-t -k$  and $T\geq 5(2k+1)$ in the unauthenticated case
or $2k+1 \leq n-t-k$ and $T \geq k+3$ in the authenticated case,
then all honest processes
return by the end of phase $\min\{\phi +1, \lceil \log_2 t \rceil + 1\}$.
\end{lemma}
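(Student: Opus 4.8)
### Proof Proposal

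The plan is to mirror the structure of \Cref{lemma:final-ba-conditional-decide-early-ba}, replacing the role of \textbf{ba-early-stopping} with \textbf{ba-with-classification}, and using \Cref{theorem:unauth_ba_class} (in the unauthenticated case) or \Cref{theorem:auth_ba_class} (in the authenticated case) to supply Agreement and Strong Unanimity for the classification sub-protocol in phase $\phi$. The hypotheses $(2k+1)(3k+1)\le n-t-k$ and $T\ge 5(2k+1)$ (resp.\ $2k+1\le n-t-k$ and $T\ge k+3$) are exactly the preconditions those theorems need, together with the fact that the round limit $T$ is large enough that every honest process actually returns from \textbf{ba-with-classification} before the phase is forcibly aborted. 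Note also that by the classification guarantee (\Cref{Flemma:advice_preprocess_misclassify_bound} / the definition of $k_A$), ``$k=2^{\phi-1}$ is an upper bound on the number of misclassified processes'' is the hypothesis that makes $k$ a legitimate bound on $k_A$, so the conditional BA's correctness hypotheses are met.

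First I would dispatch the case where some honest process $p_i$ has already set $decision_i$ to some value $v$ in a phase before $\phi$: by \Cref{lemma:final_ba_decision_before_last_phase}, all honest processes return $v$ by the end of phase $\min\{\phi+1,\lceil\log_2 t\rceil+1\}$, which is what we want. So I may assume no honest process has executed \Cref{line:final-ba-copy-decision} prior to phase $\phi$. Next I examine the grades $(v_i,g_i)$ returned by the second \textbf{graded-consensus} call, on \Cref{line:final-ba-copy-gc2}, in phase $\phi$. If every honest process has grade $g_i=0$, then each sets $v_i\gets v'_i$ on \Cref{line:final-ba-copy-use-classification}, where $v'_i$ is its output of \textbf{ba-with-classification}; by the Agreement property of the conditional algorithm (which holds because $k$ bounds the number of misclassified processes and the stated numeric preconditions hold, and because $T$ is large enough that every honest process returns within the time limit), all these $v'_i$ are equal. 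Otherwise some honest $p_j$ has $g_j=1$; by Coherence of \textbf{graded-consensus} all honest processes returned value $v_j$ from \Cref{line:final-ba-copy-gc2}, so they all feed $v_j$ into \textbf{ba-with-classification}, and by Strong Unanimity of the conditional algorithm every honest process returns $v'_i=v_j$; hence even those that set $v_i\gets v'_i$ keep value $v_j$, and those with $g_i=1$ already have it. In both cases all honest processes agree on a common value, call it $v$, going into the final \textbf{graded-consensus} on \Cref{line:final-ba-copy-gc-last}.

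From there the argument is identical to the end of \Cref{lemma:final-ba-conditional-decide-early-ba}: by Strong Unanimity of \textbf{graded-consensus} every honest process returns $(v,1)$ on \Cref{line:final-ba-copy-gc-last}, so each sets $decision_i\gets v$ and $decided_i\gets true$. If $\phi=\lceil\log_2 t\rceil+1$ they return $v$ immediately on \Cref{line:final-ba-copy-force-decide}; otherwise, at the start of phase $\phi+1$ every honest process has value $v$, so by \Cref{lemma:final_ba_strong_unanimity} (strong unanimity of the wrapper) they all return $v$ by the end of phase $\min\{\phi+2,\lceil\log_2 t\rceil+1\}$—but since $decided_i$ is already $true$, they in fact return on \Cref{line:final-ba-copy-normal-decide} at the end of phase $\phi+1$, which gives $\min\{\phi+1,\lceil\log_2 t\rceil+1\}$ as claimed. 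The main obstacle, and the only genuinely new ingredient compared to the early-stopping lemma, is confirming that the time limit $T$ stated in the hypothesis is sufficient for \textbf{ba-with-classification} to complete: \Cref{theorem:unauth_ba_class} guarantees termination within $5(2k+1)$ rounds and \Cref{theorem:auth_ba_class} within $k+3$ rounds regardless of whether the misclassification bound holds, so the inequalities $T\ge 5(2k+1)$ and $T\ge k+3$ are exactly what is needed; I would make sure to invoke the ``even if the number of misclassified processes is larger than $k$'' clauses of those theorems so that the round bound is unconditional, while the Agreement/Strong Unanimity properties are invoked only under the $k$-bounded-misclassification hypothesis that we are given here.
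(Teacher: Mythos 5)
Your proposal is correct and follows essentially the same route as the paper's proof: dispatch the already-decided case via \Cref{lemma:final_ba_decision_before_last_phase}, case-split on the grades from the graded consensus on \Cref{line:final-ba-copy-gc2}, invoke Agreement (all grades $0$) or Coherence plus Strong Unanimity (some grade $1$) of \textbf{ba-with-classification}, and finish with Strong Unanimity of the final graded consensus. Your added remark that the ``even if the number of misclassified processes is larger than $k$'' round bounds of \Cref{theorem:unauth_ba_class} and \Cref{theorem:auth_ba_class} justify the time-limit hypotheses on $T$ is a small but welcome point of care that the paper's proof leaves implicit.
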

\begin{proof}
    If $\phi > 0$ and by the end of phase $\phi-1$ there is an honest process $p_i$ which 
 performed \Cref{line:final-ba-copy-decision}. Then  \Cref{lemma:final_ba_decision_before_last_phase}
 implies that every honest process returns by the end of phase $\min\{\phi+1, \lceil \log_2 t \rceil + 1\}$.
%sets its $decision_i$, the lemma statement follows from \Cref{lemma:final_ba_decision_before_last_phase}.

So, suppose that, at the start of phase $\phi$, no honest process has performed \Cref{line:final-ba-copy-decision}.
Consider the output, $(v_i,g_i)$, of each honest process, $p_i$, from \textbf{graded-consensus} on \Cref{line:final-ba-copy-gc2} in phase $\phi$.
If $g_i = 0$ for every honest process, $p_i$, then on \Cref{line:final-ba-copy-use-classification}, they all set $v_i$ to $v'_i$, where
$v'_i$ is the value it returns from \textbf{ba-with-classification} on 
\Cref{line:final-ba-copy-classification}.
By the Agreement property of \textbf{ba-with-classification}, they all return the same value.
Otherwise, there is an honest process, $p_j$, with $g_j = 1$. From the Coherence property of \textbf{graded-consensus}, all honest processes returned value $v_j$ from \textbf{graded-consensus} and, hence, they all used $v_j$ as input to \textbf{ba-with-classification}.
Then, by the Strong Unanimity property of \textbf{ba-with-classification}, they all return $v_j$.
Thus, even if $g_i = 0$ and $p_i$ performs \Cref{line:final-ba-copy-use-classification}, $v_i$ still has value $v_j$.
In either case, all honest processes have the same input to \textbf{graded-consensus} on \Cref{line:final-ba-copy-gc-last}.
By the Strong Unanimity property of \textbf{graded-consensus},
every honest process will return a grade of 1 and, hence,
return in phase $\min\{\phi+1, \lceil \log_2 t\rceil + 1\}$.
%Suppose that no honest processes output with grade $0$ from the graded-consensus at Line~\ref{line:final-ba-copy-gc2}. By the Agreement property of the ba-with-classification, all honest processes will output the same value $v'$. Thus, they will all set $v_i$ to $v'$ and input $v'$ to the third graded-consensus. Otherwise, there is an honest process that outputs with grade $1$ and value $v'$ from the graded-consensus at Line~\ref{line:final-ba-copy-gc2}. From the Strong Unanimity of both the graded-consensus and ba-with-classification, all honest processes output $v'$ from the ba-with-classification. In any case, after performing Line~\ref{line:final-ba-copy-use-classification}, each honest process $p_i$ has $v_i = v'$ and thus, inputs $v'$ to the third graded-consensus.
\end{proof}

\begin{lemma}
    \label{lemma:final-ba-complexities}
    Assume \textbf{graded-consensus} has $O(1)$ round complexity. Let $mc_{gc}$ be the message complexity of the \textbf{graded-consensus}, $mc_{early}$ be the message complexity of \textbf{ba-early-stopping}, and $mc_{class}$ be the message complexity of \textbf{ba-with-classification} even when their preconditions may not hold. If all honest processes return by the end of phase $\phi$, then the algorithm runs for $O(2^{\phi})$ rounds and, in total, honest processes send $O(n^2 + \phi\cdot(mc_{gc} + mc_{early} + mc_{class}))$ messages.
\end{lemma}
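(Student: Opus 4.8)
The plan is to bound the round complexity and the message complexity separately, using the single structural observation that since every honest process returns (i.e.\ executes one of \Cref{line:final-ba-copy-normal-decide} or \Cref{line:final-ba-copy-force-decide}) by the end of phase $\phi$, no honest process performs any step or sends any message in any later phase; hence both sums run only over phases $1,\dots,\phi$.

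For the round complexity, I would first note that the initial call to \textbf{classify} takes $O(1)$ rounds, and that in phase $\phi'$ the three calls to \textbf{graded-consensus} (\Cref{line:final-ba-copy-gc1}, \Cref{line:final-ba-copy-gc2}, \Cref{line:final-ba-copy-gc-last}) each take $O(1)$ rounds by assumption, while the calls to \textbf{ba-early-stopping} and \textbf{ba-with-classification} are each run for exactly $T=\alpha\cdot 2^{\phi'-1}$ rounds by construction of the time limit. So phase $\phi'$ takes $O(2^{\phi'-1})$ rounds, and summing the geometric series $O(1)+\sum_{\phi'=1}^{\phi}O(2^{\phi'-1})$ gives $O(2^{\phi})$, dominated by the last term.

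For the message complexity, I would observe that in \textbf{classify} each honest process broadcasts its $n$-bit prediction vector, contributing at most $n(n-1)\in O(n^2)$ messages once and for all. In each executed phase $\phi'$, the honest processes collectively send at most $3\,mc_{gc}$ messages across the three graded-consensus instances, at most $mc_{early}$ messages in \textbf{ba-early-stopping}, and at most $mc_{class}$ messages in \textbf{ba-with-classification}. The point that needs care here — and the only real subtlety in the proof — is that for small $\phi'$ the guessed bound $2^{\phi'-1}$ may not be a valid upper bound on the number of faulty or misclassified processes, so the \emph{preconditions} of the two conditional sub-protocols need not hold; this is exactly why the lemma statement defines $mc_{early}$ and $mc_{class}$ as message bounds that hold \emph{even when the preconditions fail} (these follow from the ``even if the number of misclassified processes is larger than $k$\dots'' clauses of \Cref{theorem:unauth_ba_class} and \Cref{theorem:auth_ba_class}, and from the time limit $T$ on \textbf{ba-early-stopping}). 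A process that has already returned is simply silent, which is subsumed by the worst-case adversary behavior against which those bounds hold, so the per-phase bound is unaffected by early returns.

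Summing over the at most $\phi$ executed phases then yields a total of $O(n^2)+\phi\cdot\bigl(3\,mc_{gc}+mc_{early}+mc_{class}\bigr)\in O\bigl(n^2+\phi\cdot(mc_{gc}+mc_{early}+mc_{class})\bigr)$ messages, completing the proof. There is no deep obstacle; the two things to be careful about are applying the geometric-series bound with the correct truncation at phase $\phi$, and invoking the precondition-free versions of the sub-protocol message bounds rather than the conditional ones.
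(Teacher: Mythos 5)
Your proposal is correct and follows essentially the same route as the paper's proof: bound \textbf{classify} by one round and $O(n^2)$ messages, charge each executed phase $\phi'$ with $O(2^{\phi'-1})$ rounds and $3\,mc_{gc}+mc_{early}+mc_{class}$ messages (using the precondition-free message bounds, exactly as the lemma's hypotheses intend), and sum the geometric series over phases $1,\dots,\phi$. The only cosmetic difference is that the paper writes the per-phase round cost as $\alpha\cdot\min\{2^{\phi'-1},t\}$, which changes nothing in the asymptotics.
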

\begin{proof}
In \textbf{classify} (\Cref{algo:vote}),  honest processes perform one round and each broadcasts 1 message. 
In phase $\phi' \leq \phi$, honest processes spend $\alpha \cdot \min\{2^{\phi'-1}, t\}$ rounds for some constant $\alpha$ and send $3mc_{gc} + mc_{early} + mc_{class} \in O(mc_{gc} + mc_{early} + mc_{class})$ messages. 
As $\phi \le \lceil \log_2{t} \rceil + 1$, $\sum_{\phi' = 1}^{\phi} \alpha \cdot \min\{2^{\phi'-1}, t\} \in O(2^{\phi})$. Therefore, the algorithm runs for $O(2^{\phi})$ rounds and, in total, honest processes send $O(n^2 + \phi \cdot(mc_{gc} + mc_{early} + mc_{class}))$ messages.
\end{proof}

% \Faith{I think that this belongs in the related work section or, at the very latest, before \Cref{lemma:final-ba-conditional-decide-early-ba}.}\Ayaz{Sure.}\Ayaz{Note: not moving them for now, can be moved to other places in the next version.}
Before going through the main theorems, let us state several implementations of the early-stopping Byzantine Agreement and graded consensus that we use in proving the main theorems.

\begin{theorem}[Restated from \cite{civit2024efficient}, Unauthenticated Graded Consensus]
    \label{theorem:gc_n3}
    There is an unauthenticated algorithm that solves graded consensus when there are at most $t < n/3$ faulty processes. The algorithm runs for $2$ rounds and honest processes send $O(n^2)$ messages.
\end{theorem}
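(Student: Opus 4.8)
The plan is to exhibit (equivalently, recall from \cite{civit2024efficient}) the standard two-round all-to-all protocol and verify its three guarantees. The protocol is the natural full-communication analogue of \Cref{algorithm:un_gc_core_set}: in round~1 every process broadcasts its input $v_i$, and if some value $v$ is received at least $n-t$ times then $p_i$ sets $b_i \gets v$, otherwise $b_i \gets \bot$ (at most one value can qualify, since $2(n-t) > n$ would force $n \le 2t$); in round~2 every process with $b_i \ne \bot$ broadcasts $b_i$, and then $p_i$ returns $(b_i,1)$ if $b_i$ was received at least $n-t$ times, returns $(b_i,0)$ if $b_i \ne \bot$ but was received fewer times, returns $(v',0)$ if some $v' \ne \bot$ was received at least $t+1$ times, and otherwise returns $(v_i,0)$. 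Termination and the message bound are immediate: there are exactly two rounds, so all honest processes return simultaneously, and each honest process broadcasts at most once per round, for $O(n^2)$ messages total.

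For Strong Unanimity, if every honest process has input $v$ then, since $|H| \ge n-t$, each honest process receives at least $n-t$ copies of $v$ in round~1 and sets $b_i = v$; every honest process then rebroadcasts $v$ in round~2, so each receives at least $n-t$ copies of $v$ and returns $(v,1)$. The real work is Coherence. Suppose an honest process $p_i$ returns $(v,1)$; then it received $v$ at least $n-t$ times in round~2, so at least $n-2t$ honest processes broadcast $v$ in round~2, which means at least $n-2t$ honest processes set $b = v$ in round~1. Using $t < n/3$ (so $n-2t \ge t+1 > 0$) and the fact that an honest process broadcasts only one round-1 value, one argues that no honest process can have set $b = v' \ne v$: such a process would have needed $n-t$ round-1 copies of $v'$, hence at least $n-2t$ honest senders of $v'$, but two disjoint sets of size $n-2t$ cannot both fit inside the $n-t$ honest processes when $n > 3t$.

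Hence every other honest process $p_j$ either has $b_j = v$ — in which case it returns $v$ with grade $0$ or $1$ — or has $b_j = \bot$, and in the latter case $p_j$ still receives $v$ at least $n-2t \ge t+1$ times in round~2 (from the honest processes that set $b = v$), while any value received $t+1$ or more times in round~2 must have had at least one honest sender and hence, by the previous paragraph, equals $v$; so $p_j$ returns $(v,0)$. This exhausts the cases and establishes Coherence. I expect the main obstacle to be pinning down the round thresholds ($n-t$ in round~1 and again for grade~$1$ in round~2, and $t+1$ for the grade-$0$ fallback) so that Strong Unanimity and the disjointness argument in Coherence both go through under the single hypothesis $t < n/3$; the remainder is routine bookkeeping.
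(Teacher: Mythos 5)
The paper does not actually prove \Cref{theorem:gc_n3}: it is imported verbatim from \cite{civit2024efficient}, so there is no in-paper proof to compare against. The closest analogue in the paper is the core-set variant, \Cref{algorithm:un_gc_core_set}, with \Cref{lemma:gc-b}, \Cref{lemma:unanimity} and \Cref{lemma:coherence}, in which the thresholds $2k+1$, $k+1$ and $3k+1$ play exactly the roles of your $n-t$, $t+1$ and $n$. Your protocol and case analysis track that proof closely, so the route is the standard one; Strong Unanimity, termination, the $O(n^2)$ message count, and the uniqueness of the round-1 candidate are all handled correctly.

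The one genuine problem is the disjointness count in your Coherence argument. You lower-bound the honest senders of each of $v$ and $v'$ by $n-2t$ and then assert that two disjoint such sets ``cannot both fit inside the $n-t$ honest processes.'' But the number of honest processes is $n-f$, which can be as large as $n$; it equals $n-t$ only when $f=t$. For $3t < n \le 4t$ with $f$ small (e.g.\ $n=10$, $t=3$, $f=0$), two disjoint sets of size $n-2t=4$ fit comfortably among the $10$ honest processes, so no contradiction follows as written. The step is easily repaired in either of two ways: count honest senders as at least $(n-t)-f$ out of $|H|=n-f$, so disjointness would force $2(n-t-f) \le n-f$, i.e.\ $n \le 2t+f \le 3t$, a contradiction; or, more cleanly, use quorum intersection --- the two round-1 sender sets each have size at least $n-t$, hence intersect in at least $n-2t \ge t+1 > f$ processes, so some honest process would have had to broadcast both $v$ and $v'$ in round~1, which is impossible. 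The latter is the direct analogue of how \Cref{lemma:gc-b} argues through the core set $G$. With that one step fixed, the proposal is a correct reconstruction of the cited result.
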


\begin{theorem}[Restated from \cite{momose2021}, Authenticated Graded Consensus]
    \label{theorem:gc_n2}
    There is an authenticated algorithm that solves graded consensus when there are at most $t < n/2$ faulty processes. The algorithm runs for $4$ rounds and honest processes send $O(n^2)$ messages.
\end{theorem}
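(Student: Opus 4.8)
The statement is a verbatim restatement of a theorem of Momose and Ren~\cite{momose2021}, so the plan is to invoke their construction as a black box and verify only that the interface matches what this paper needs; for self-containedness I would also sketch the protocol. I would present it in the familiar ``lock/commit'' (PBFT/Tendermint-style) template, instantiated for the synchronous authenticated $t<n/2$ regime. In the first round every honest $p_i$ signs and broadcasts its input $v_i$. A process that collects a quorum of matching signed inputs for a value $v$ regards that set of signatures as a first-level (``lock'') certificate $\mathcal{C}_1(v)$ and, in the next round, broadcasts $v$ together with $\mathcal{C}_1(v)$; a process that then collects a quorum of distinct valid $\mathcal{C}_1(v)$ messages forms a second-level (``commit'') certificate $\mathcal{C}_2(v)$ and broadcasts it. At the end, $p_i$ returns $(v,1)$ if it holds $\mathcal{C}_2(v)$, returns $(v,0)$ if it holds only $\mathcal{C}_1(v)$, and otherwise returns $(v_i,0)$ (or $(\bot,0)$). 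Each honest process broadcasts $O(1)$ times, giving the claimed $O(n^2)$ messages, and the pipeline has a fixed, small number of rounds; aligning the collect-then-relay steps gives the stated bound of $4$ rounds.

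For correctness I would argue exactly as in the core-set graded consensus of \Cref{subsection:un-ac}, with ``quorum'' playing the role of the core set. Strong Unanimity: if every honest process inputs $v$, then the $\ge n-t$ honest first-round signatures already form $\mathcal{C}_1(v)$ at every honest process, so every honest process broadcasts $\mathcal{C}_1(v)$, hence every honest process forms $\mathcal{C}_2(v)$ and returns $(v,1)$. Coherence: if some honest process returns $(v,1)$ then it holds $\mathcal{C}_2(v)$; because signatures are unforgeable and honest processes do not equivocate, the certificate rules are set so that at most one value is ever committable, and because $\mathcal{C}_2(v)$ is transferable, the holder's round-$3$ relay reaches every honest process in this synchronous model, so every honest process holds $\mathcal{C}_1(v)$, finds no conflicting certificate, and returns $v$ with grade $0$ or $1$. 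Termination is immediate: the round count is fixed and identical for every honest process, so they all return in round $4$.

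The step I expect to be the real obstacle is not any single argument but pinning down the quorum thresholds and the extra signed relays so that certificate uniqueness---hence Coherence---holds for the full range $t<n/2$ rather than only $t<n/3$: plain quorum intersection puts an honest process in the intersection only when $n>3t$, so Momose and Ren instead exploit synchrony together with the unforgeability (and transferability) of signatures, and getting that mechanism exactly right is delicate. A secondary point is confirming that their grading convention coincides with the $\{0,1\}$ grading used here and that ``all honest return in the same round'' holds on the nose (a $3$-valued grade, if that is what they prove, needs only a one-step relabelling). Since the paper uses this result only as a sub-protocol with the stated interface, it suffices to cite~\cite{momose2021} for the construction and record these points of correspondence.
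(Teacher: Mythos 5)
Your proposal takes the same route as the paper: the paper offers no proof of this theorem at all, simply restating it as an imported result of Momose and Ren~\cite{momose2021}, and your primary move---invoke their construction as a black box and check that the grading convention, the $t<n/2$ threshold, the round count, and simultaneous termination match the interface used here---is exactly what the citation-only treatment amounts to. Your supplementary lock/commit sketch is a fair approximation of their protocol, and you correctly flag (and correctly defer to the cited paper) the one genuinely delicate point, namely that coherence for $t<n/2$ cannot come from plain quorum intersection but from synchrony plus certificate forwarding.
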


\begin{theorem}[Restated from \cite{lenzen2022}, Unauthenticated Early-Stopping Byzantine Agreement]
    \label{theorem:early_stopping_ba_n3}
    There is an unauthenticated algorithm that solves Byzantine Agreement when there are at most $t < n/3$ faulty processes. The algorithm runs for $O(f)$ rounds (where $f \le t$ is the actual number of faulty processes) and honest processes send $O(n^2)$ messages.
\end{theorem}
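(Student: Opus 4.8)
The plan is to instantiate the classical \emph{phase-king} paradigm together with an agreement-detection mechanism --- a validator, i.e., a graded-consensus-style subroutine run before and after each king's turn --- so that the protocol halts a constant number of phases after consensus is actually reached, while being careful enough with communication that the total message count over all phases stays $O(n^2)$.

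Concretely, I would run phases $\phi = 1, 2, \dots$, each using $O(1)$ rounds, with the process whose identifier is $((\phi-1)\bmod n)+1$ serving as the king of phase $\phi$. A phase proceeds as follows: (i) every process feeds its current value into a validator; if the validator certifies agreement (grade $1$), the process records a decision and continues to participate for exactly one more phase before halting; (ii) the king broadcasts its current value; (iii) every process whose validator grade was $0$ adopts the king's value; (iv) every process feeds its (possibly updated) value into a second validator. The guarantees I need from the validator are exactly those of graded consensus stated in \Cref{sec:highlevel}: strong unanimity (if all honest inputs equal $v$, all honest outputs are $(v,1)$) and coherence (if any honest output is $(v,1)$, all honest outputs have value $v$); an $O(1)$-round validator of this kind is available when $t < n/3$.

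For correctness I would establish the usual invariants. \emph{Validity}: if all honest processes begin a phase with the same value $v$, the first validator returns $(v,1)$ to all of them by strong unanimity, none of them adopts the king's value, and they all decide $v$; in particular this holds from phase $1$ if all honest inputs agree. \emph{Agreement}: the key lemma is that if the king of phase $\phi$ is honest then at the end of phase $\phi$ all honest processes hold the same value --- if some honest process got grade $1$ from the first validator, coherence forces every honest process to hold that value going into the king step, and only the king's single, consistent broadcast can change anything further; otherwise every honest process has grade $0$ and adopts the honest king's value. Once all honest processes hold a common value, the first validator of the next phase certifies it and they all decide it, and no honest process ever decides differently, because a decision is recorded only on a grade-$1$ output and coherence forbids two honest processes outputting grade $1$ on different values in the same phase (with a short induction over phases for values carried across phases). \emph{Termination and early stopping}: among phases $1, \dots, f+1$ at least one king is honest, so agreement holds by the end of phase $f+1$, every honest process decides by phase $f+2$ and halts by phase $f+3$; hence the round complexity is $O(f)$. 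The value of $f$ need not be known: the protocol runs until its halting condition triggers, and the argument above bounds when that happens.

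The main obstacle is the message complexity. A straightforward implementation --- each process broadcasting in the value-exchange that backs step (ii) and each validator doing all-to-all communication --- costs $\Theta(n^2)$ messages \emph{per phase}, hence $\Theta(n^2 f)$ over the $\Theta(f)$ phases, which is too much; in particular one cannot simply invoke an off-the-shelf $\Theta(n^2)$-message graded consensus in every phase. Reaching $O(n^2)$ total, as in~\cite{lenzen2022}, requires amortizing the all-to-all communication over the whole execution: a process broadcasts its value only when that value actually \emph{changes}, the king broadcasts only once, and the validators are realized so that their \emph{cumulative} message cost across all phases is $O(n^2)$. Making this rigorous amounts to bounding the total number of value changes an honest process undergoes before the protocol halts and arranging the king/validator interaction to reuse already-disseminated information; this charging argument is the technical heart of the statement. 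Since the theorem is restated from~\cite{lenzen2022}, I would adopt their protocol and accounting, extracting only the interface --- $O(f)$ rounds, $O(n^2)$ messages, and the Agreement, Strong Unanimity, and Termination properties --- that the wrapper of \Cref{sec:highlevel} relies on.
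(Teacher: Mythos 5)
The paper offers no proof of this statement: it is imported wholesale from~\cite{lenzen2022} and used as a black box, so your concluding move---adopting their protocol and accounting and extracting only the interface---is in effect exactly what the paper does. The difficulty is that the sketch you present as the substance of the argument contains a genuine safety gap. You place the decision rule on the \emph{first} validator of a phase: a process that receives grade $1$ decides, while a process that receives grade $0$ subsequently adopts the king's value. Nothing in the two properties you assume (strong unanimity and coherence) prevents a single graded-consensus instance from returning $(v,1)$ to one honest process and $(v,0)$ to another when the honest inputs are mixed. In that case the grade-$1$ process decides $v$, the grade-$0$ processes hand their values to a possibly Byzantine king who feeds them some $w \neq v$, and the phase ends with honest values split between $v$ and $w$. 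Your ``short induction over phases'' then has no base: coherence constrains grades only \emph{within} one instance, so a later validator may hand some honest process $(w,1)$ and cause it to decide $w$, violating agreement. The standard repairs are either to decide on the validator run \emph{after} the king step---as the paper's own wrapper (\Cref{sec:highlevel}) does, so that a grade-$1$ output implies all honest processes enter the next phase with the same value and strong unanimity then locks it in---or to use a three-level grade in which the top grade triggers a decision and any nonzero grade already suffices to ignore the king.

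The second issue you identify yourself: the $O(n^2)$ \emph{total} message bound is the technical heart of~\cite{lenzen2022}, and your sketch gestures at the amortization (broadcast only on value change, cumulative validator cost) without carrying it out. That is acceptable given that the theorem is explicitly a restatement, but you should then present the citation as the proof and the sketch only as motivation; as written, the sketch reads as the proof and the protocol it describes is not safe.
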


The next theorem, which we do not formally prove in this paper, results from replacing the graded consensus employed in \Cref{theorem:early_stopping_ba_n3} with the authenticated graded consensus tolerating at most $t < n/2$ faulty processes from \Cref{theorem:gc_n2}---that is the only place where they assume $t < n/3$. 
% \Faith{What does "can be done" mean?} \Ayaz{I just want to say that if we take the algorithm mentioned in \Cref{theorem:early_stopping_ba_n3}, replace any call to graded consensus inside it with the graded consensus mentioned in \Cref{theorem:gc_n2}, then we will get \Cref{theorem:early_stopping_ba_n2}.}

\begin{theorem}[Authenticated Early-Stopping Byzantine Agreement]
    \label{theorem:early_stopping_ba_n2}
    There is an authenticated algorithm that solves Byzantine Agreement when there are at most $t < n/2$ faulty processes. The algorithm runs for $O(f)$ rounds (where $f \le t$ is the actual number of faulty processes) and honest processes send $O(n^2)$ messages.
\end{theorem}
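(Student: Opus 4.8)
The plan is to treat the early-stopping Byzantine Agreement algorithm of Lenzen and Sheikholeslami~\cite{lenzen2022} (\Cref{theorem:early_stopping_ba_n3}) as a black box built on top of a graded-consensus (``validator'') primitive, and to re-instantiate it with the authenticated graded consensus of \Cref{theorem:gc_n2} in place of the unauthenticated one. First I would recall the structure of their protocol: it proceeds in rotating-king phases, where each phase invokes graded consensus before the king broadcasts (to protect agreement already reached), has the king send its current value, lets each process conditionally adopt the king's value (keeping its own value when the preceding graded consensus returned grade $1$), invokes graded consensus again to test for agreement, and decides when that second graded consensus returns grade $1$; on top of this sits the early-stopping mechanism that halts the protocol in $O(f)$ rounds when only $f \le t$ faults actually occur. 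The key observation, to be justified by inspecting their protocol, is that every use of the resilience bound $t < n/3$ in their construction is confined to the graded-consensus subroutine: the king rotation, the conditional adoption rule, the decision rule, and the early-stopping detection all reference graded consensus only through its Strong Unanimity, Coherence, and Termination guarantees, never through raw vote thresholds that would require $n > 3t$.

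Given this, the substitution is immediate. I would replace each invocation of the unauthenticated graded consensus by the authenticated graded consensus of \Cref{theorem:gc_n2}, which offers exactly the same interface and the same three properties, but under the weaker hypothesis $t < n/2$, at the cost of assuming a public-key infrastructure and having processes sign their messages (the king also signs the value it broadcasts). Because the correctness arguments of~\cite{lenzen2022} for Agreement, Strong Unanimity, Termination, and the $O(f)$-round early-stopping bound use the graded-consensus primitive only as a black box with these properties, those arguments carry over essentially verbatim, now valid for all $t < n/2$. Note that $t < n/2$ is also the best possible resilience for (authenticated) Byzantine Agreement, so no further relaxation is available.

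For the complexity bounds, the authenticated graded consensus of \Cref{theorem:gc_n2} runs in $O(1)$ rounds and has honest processes send $O(n^2)$ messages --- asymptotically the same as the unauthenticated version of \Cref{theorem:gc_n3} that it replaces. Consequently the round count of each phase is unchanged up to constants, the number of phases is still $O(f)$ (the king-rotation schedule and the early-stopping test are untouched), and the round complexity remains $O(f)$; similarly, replacing each graded-consensus call by one of matching asymptotic message cost, together with the unchanged $O(n)$-per-phase king broadcasts over $O(f) = O(n)$ phases, leaves the total message complexity at $O(n^2)$.

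The main obstacle is exactly the audit step: rigorously confirming that $t < n/3$ is used nowhere in~\cite{lenzen2022} except inside the graded-consensus component. This is tedious rather than deep --- one must check, in particular, that the thresholds governing when a process keeps its own value versus adopting the king's, and the counters and flags driving early termination, are all phrased in terms of graded-consensus outputs rather than in terms of counting matching messages against the assumption $n > 3t$. Since this amounts to a mechanical restatement of prior work, the full details are omitted and only the resulting theorem is asserted.
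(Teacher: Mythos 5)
Your proposal matches the paper's approach exactly: the paper also does not formally prove this theorem, stating only that it ``results from replacing the graded consensus employed in \Cref{theorem:early_stopping_ba_n3} with the authenticated graded consensus tolerating at most $t < n/2$ faulty processes from \Cref{theorem:gc_n2}---that is the only place where they assume $t < n/3$.'' Your write-up is simply a more detailed elaboration of that same substitution argument, including the same acknowledged (and unperformed) audit that the resilience bound is confined to the graded-consensus subroutine.
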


We are now ready to prove the theorems we promised upfront in \Cref{section:intro}:

\begin{theorem}
    \label{theorem:unauth_ba_predicitions}
    There is an unauthenticated algorithm that solves Byzantine Agreement when there are at most $t < n/3$ faulty processes. Given a classification prediction with $B$ incorrect bits:
    \begin{itemize}
        \item If $B \le \beta n^{3/2}$ for some constant $\beta$, every honest process returns within $O(\min\{B/n, f\})$ rounds and, in total, honest processes send $O(n^2 \log(\min\{B/n, f\}))$ messages.
        \item Otherwise, every honest process returns within $O(f)$ rounds and, in total, honest processes send $O(n^2 \log(f))$ messages. 
    \end{itemize}
\end{theorem}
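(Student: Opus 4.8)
The plan is to instantiate the generic wrapper \Cref{algo:final_ba_copy} with three concrete sub-protocols, all of which tolerate $t<n/3$: take \textbf{graded-consensus} to be the protocol of \Cref{theorem:gc_n3} ($2$ rounds, $O(n^2)$ messages), take \textbf{ba-early-stopping} to be the protocol of \Cref{theorem:early_stopping_ba_n3} ($O(f)$ rounds, $O(n^2)$ messages), and take \textbf{ba-with-classification} to be \Cref{algo:un_ba_class} run with the classification vectors produced by \textbf{classify} (\Cref{algo:vote}). Correctness (Agreement and Strong Unanimity) is then immediate from \Cref{lemma:final_ba_strong_unanimity} and the Agreement lemma of \Cref{section:final_algo_proof}: those proofs only use the Strong Unanimity/Coherence of \textbf{graded-consensus} and the Strong Unanimity/Agreement of \textbf{ba-early-stopping}, both of which hold whenever $t<n/3$, and they explicitly do not rely on the correctness of \textbf{ba-with-classification}, only on its resource guarantees from \Cref{theorem:unauth_ba_class}. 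I would fix the constant $\alpha$ in $T=\alpha 2^{\phi-1}$ large enough that in every phase $\phi$ the limit $T$ exceeds both the $O(f)$-round bound of \textbf{ba-early-stopping} (when $f<2^{\phi-1}$) and the $5(2\cdot 2^{\phi-1}+1)$-round bound of \textbf{ba-with-classification}.

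For the round complexity I would exhibit two ``good'' phases. Since \textbf{ba-early-stopping} runs in $O(f)$ rounds, there is a phase $\phi_f=O(\log f)$ in which the precondition of \Cref{lemma:final-ba-conditional-decide-early-ba} holds, so all honest processes return by the end of phase $\phi_f+1$. Separately, by \Cref{Flemma:advice_preprocess_misclassify_bound} (applicable since $f\le t<n/3<\epsilon n$ for, say, $\epsilon=1/3$) the classification produced by \textbf{classify} has $k_A=O(B/n)$ misclassified processes, so there is a phase $\phi_B=O(\log(B/n))$ with $2^{\phi_B-1}\ge k_A$. When $B\le\beta n^{3/2}$ for a sufficiently small constant $\beta$, we moreover have $2^{\phi_B-1}=O(\sqrt n)$, so $(2k+1)(3k+1)\le n-t-k$ holds for $k=2^{\phi_B-1}$ (using $n-t>2n/3$; this is exactly where $\beta$ must be taken small enough), and the time limit is met by the choice of $\alpha$; hence the precondition of \Cref{lemma:final-ba-conditional-decide-classification-ba} holds in phase $\phi_B$ and all honest processes return by the end of phase $\phi_B+1$. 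Combining, the algorithm terminates by the end of phase $\min\{\phi_f,\phi_B\}+1$, and by \Cref{lemma:final-ba-complexities} its round complexity is $O(2^{\min\{\phi_f,\phi_B\}})=O(\min\{B/n,f\}+1)$, matching the claim. When $B>\beta n^{3/2}$ only the early-stopping phase is available, giving $O(f)$ rounds.

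For the message complexity, the key point is that $mc_{gc}=O(n^2)$, $mc_{early}=O(n^2)$, and $mc_{class}=O(n^2)$ \emph{in all cases}: by \Cref{theorem:unauth_ba_class} each honest process sends at most $5n$ messages in any call to \textbf{ba-with-classification}, and whenever its precondition $(2k+1)(3k+1)\le n-t-k$ does hold it forces $k^2=O(n)$, so the $O(nk^2)$ bound is also $O(n^2)$. Substituting $mc_{gc}+mc_{early}+mc_{class}=O(n^2)$ and the number of phases just derived into \Cref{lemma:final-ba-complexities} yields a total of $O(n^2\log(\min\{B/n,f\}))$ messages when $B\le\beta n^{3/2}$ and $O(n^2\log f)$ otherwise (with the usual implicit $+1$ inside the logarithm when $\min\{B/n,f\}=O(1)$, in which case the algorithm already stops within $O(1)$ phases).

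The main obstacle I expect is the reconciliation inside the round-complexity argument: one must verify that the bound $k_A=O(B/n)$ on classification errors is \emph{simultaneously} compatible with the quadratic precondition $(2k+1)(3k+1)\le n-t-k$ of \Cref{algo:un_ba_class}, which is precisely what forces the threshold $B\le\beta n^{3/2}$ and determines how small $\beta$ must be, and that for $B$ above this threshold the wrapper still behaves gracefully — terminating via early stopping and never spending more than $O(n^2)$ messages in a phase on the now-uncertified \textbf{ba-with-classification} calls. The remaining steps are bookkeeping over the lemmas of \Cref{section:final_algo_proof}.
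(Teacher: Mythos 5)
Your proposal is correct and follows essentially the same route as the paper's proof: instantiate the wrapper with the graded consensus of \Cref{theorem:gc_n3}, the early-stopping protocol of \Cref{theorem:early_stopping_ba_n3}, and \Cref{algo:un_ba_class}, then identify the two ``good'' phases via \Cref{lemma:final-ba-conditional-decide-early-ba} and \Cref{lemma:final-ba-conditional-decide-classification-ba} and apply \Cref{lemma:final-ba-complexities}. You actually spell out one step the paper leaves implicit, namely why $mc_{class}=O(n^2)$ holds unconditionally (the per-process $5n$ bound, plus $k^2=O(n)$ whenever the quadratic precondition holds).
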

\begin{proof}
    In \Cref{algo:final_ba_copy}, we use \Cref{theorem:gc_n3} for each \textbf{graded-consensus}, \Cref{theorem:early_stopping_ba_n3} for each \textbf{ba-early-stopping}, and \Cref{theorem:unauth_ba_class} for each \textbf{ba-with-classification}.

    As graded consensus and early-stopping Byzantine Agreement tolerate $t < n/3$ faulty processes, the algorithm solves Byzantine Agreement when there are at most $t < n/3$ faulty processes.

    As $t < n/3$, by \Cref{Flemma:advice_preprocess_misclassify_bound} there will be $k_A \in O(B/n)$ misclassified processes from the result of classify($a_i$). By setting $\beta$ appropriately, if $B \le \beta n^{3/2}$, $(2k_A+1)(3k_A+1) \le n-t-k_A$. In this case, in a phase $\phi$ with $2^{\phi-1} \ge k_A$ and $ (2 \cdot 2^{\phi-1}+1)(3 \cdot 2^{\phi-1}+1) \le n-t-2^{\phi-1}$, the preconditions of \textbf{ba-with-classification} are satisfied. Here, the smallest $\phi$ is at most $O(\log_2(B/n))$.

    Next, the preconditions of \textbf{ba-early-stopping} are satisfied in phase $\phi$ if, during this phase, $T \in \Theta(\min\{2^{\phi-1}, t\})$ is set such that when there are $f \le t$ faulty processes, all honest processes return from \textbf{ba-early-stopping} in at most $T$ rounds. As with $f$ faulty processes all honest processes return in $O(f)$ rounds, the smallest $\phi$ is at most $O(\log_2(f))$.

    Combining the above observations with \Cref{lemma:final-ba-conditional-decide-early-ba} and \Cref{lemma:final-ba-conditional-decide-classification-ba}, all honest processes decide in $O(\log_2\min\{B/n, f\})$ phases when $B \le cn^{3/2}$, and $O(\log_2(f))$ phases otherwise. 
    To get the complexities, we plug in \Cref{lemma:final-ba-complexities} with $mc_{gc} = O(n^2)$, $mc_{early} = O(n^2)$, and $mc_{class} = O(n^2)$.
\end{proof}

\begin{theorem}
    \label{theorem:auth_ba_predicitions}
    There is an authenticated algorithm that solves Byzantine Agreement when there are at most $t < (1/2-\epsilon)n$ faulty processes. Given a classification prediction with $B$ incorrect bits, every honest process returns within $O(\min\{B/n, f\})$ rounds and, in total, honest processes send $O(n^3 \log (\min\{B/n, f\}))$ messages.
\end{theorem}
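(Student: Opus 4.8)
The plan is to instantiate the wrapper \Cref{algo:final_ba_copy} with authenticated subprotocols and then reuse, essentially verbatim, the correctness and complexity lemmas of \Cref{section:final_algo_proof}. Concretely, I would use the authenticated graded consensus of \Cref{theorem:gc_n2} for every call to \textbf{graded-consensus}, the authenticated early-stopping Byzantine agreement of \Cref{theorem:early_stopping_ba_n2} for \textbf{ba-early-stopping}, and \Cref{algo:auth_ba_class} (analyzed in \Cref{theorem:auth_ba_class}) for \textbf{ba-with-classification}. Since \Cref{algo:auth_ba_class} always halts within $k+3$ rounds and the authenticated early-stopping protocol always halts within $O(f)$ rounds, I can fix the constant $\alpha$ in \Cref{algo:final_ba_copy} so that the budget $T = \alpha 2^{\phi-1}$ of phase $\phi$ is enough for both subprotocols to complete whenever $f \le 2^{\phi-1}$ or $2^{\phi-1}$ is an upper bound on the number of misclassified processes.

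For correctness there is essentially nothing new to do: as emphasized in \Cref{section:final_algo_proof}, the correctness of \textbf{ba-with-classification} is not needed, so Agreement, Strong Unanimity and Termination follow from \Cref{lemma:final_ba_strong_unanimity}, \Cref{lemma:final_ba_decision_before_last_phase}, \Cref{lemma:final-ba-conditional-decide-early-ba} and the Agreement lemma of \Cref{section:final_algo_proof}, as long as the authenticated \textbf{graded-consensus} and \textbf{ba-early-stopping} meet their stated guarantees. They do, because $t < (1/2-\epsilon)n < n/2$ is precisely the fault regime of \Cref{theorem:gc_n2} and \Cref{theorem:early_stopping_ba_n2}; and in the last phase $\phi = \lceil\log_2 t\rceil+1$ the budget $T = \alpha t = \Theta(t)$ is large enough for \textbf{ba-early-stopping} to terminate with up to $t$ faults, which is exactly what the Agreement argument needs.

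The real content is the round complexity. Because $f \le t < (1/2-\epsilon)n$, I would apply \Cref{Flemma:advice_preprocess_misclassify_bound} with the constant $1/2-\epsilon \in (0,1/2)$ to obtain that the number $k_A$ of misclassified processes satisfies $k_A \le B/(\lceil n/2\rceil - f) \le B/(\epsilon n) = O(B/n)$. The wrapper can stop early for two reasons. First, in any phase $\phi$ with $2^{\phi-1} \ge f$ the precondition of \textbf{ba-early-stopping} holds, so by \Cref{lemma:final-ba-conditional-decide-early-ba} all honest processes return by the end of the next phase; the smallest such $\phi$ is $O(\log f)$ (and $\phi = 1$ if $f \le 1$). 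Second, there is a constant $c = c(\epsilon)$ such that if $B \le cn^2$ then the smallest power of two $k \ge \max\{k_A,1\}$ obeys $k = O(B/n + 1)$, which, using $n-t > (1/2+\epsilon)n$, makes $2k+1 \le n-t-k$; hence in the phase with $2^{\phi-1}=k$ both hypotheses of \Cref{theorem:auth_ba_class} hold and, by \Cref{lemma:final-ba-conditional-decide-classification-ba}, all honest processes return by the end of the next phase, the smallest such $\phi$ being $O(\log(B/n))$. Combining the two mechanisms, when $B \le cn^2$ the protocol finishes within $O(\log\min\{B/n,f\})$ phases and \Cref{lemma:final-ba-complexities} yields round complexity $O(\min\{B/n,f\})$. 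When $B > cn^2$ only the first mechanism applies, giving $O(\log f)$ phases and round complexity $O(f)$; this still equals $O(\min\{B/n,f\})$, because if $\min\{B/n,f\}=f$ there is nothing to show, while if $\min\{B/n,f\}=B/n<f$ then $cn < B/n < f < n/2$ pins both $B/n$ and $f$ to $\Theta(n)$, so $O(f)=O(B/n)$.

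I expect this last step---reconciling the $O(f)$ guarantee of the early-stopping branch with the target $O(\min\{B/n,f\})$ in the regime where $B$ is so large that \Cref{theorem:auth_ba_class}'s precondition is permanently unreachable---to be the only genuinely delicate point; choosing the constant $c(\epsilon)$ correctly is where one has to be careful, and the rest is bookkeeping. For the message bound I would substitute $mc_{gc} = O(n^2)$ (\Cref{theorem:gc_n2}), $mc_{early} = O(n^2)$ (\Cref{theorem:early_stopping_ba_n2}) and $mc_{class} = O(n^3)$ (\Cref{theorem:auth_ba_class} bounds each honest process by $O(n^2)$ messages even when its preconditions fail) into \Cref{lemma:final-ba-complexities}; with $O(\log\min\{B/n,f\})$ phases this gives $O(n^2 + n^3\log\min\{B/n,f\}) = O(n^3\log\min\{B/n,f\})$ total messages, matching the claim.
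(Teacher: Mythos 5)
Your proposal is correct and follows essentially the same route as the paper: instantiate the guess-and-double wrapper with the authenticated graded consensus, authenticated early stopping, and \Cref{algo:auth_ba_class}, reuse the wrapper's correctness lemmas (which do not need \textbf{ba-with-classification} to be correct), bound $k_A = O(B/n)$ via \Cref{Flemma:advice_preprocess_misclassify_bound}, and reconcile the large-$B$ regime by observing that $B = \Omega(n^2)$ forces $B/n = \Omega(n) = \Omega(f)$. The only difference is that you spell out this last reconciliation and the choice of the constant $c(\epsilon)$ more explicitly than the paper does, which is a welcome clarification rather than a deviation.
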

\begin{proof}
     In \Cref{algo:final_ba_copy}, we use \Cref{theorem:gc_n2} for each \textbf{graded-consensus}, \Cref{theorem:early_stopping_ba_n2} for each \textbf{ba-early-stopping}, and \Cref{theorem:auth_ba_class} for each \textbf{ba-with-classification}.

    % \Ayaz{This is very similar to the proof of the unauthenticated one. there should be a simpler way to prove both of them.}

    As graded consensus and early-stopping Byzantine Agreement tolerates $t < n/2$ faulty processes, the algorithm solves Byzantine Agreement when there are at most $t < (1/2-\epsilon)n$ faulty processes. (Note that $(1/2-\epsilon)n < n/2$.)
    
    As $t < (1/2-\epsilon)n$, by \Cref{Flemma:advice_preprocess_misclassify_bound} there will be $k_A \in O(B/n)$ misclassified processes from the result of classify($a_i$). Suppose that $B \le \beta n^2$ for some constant $\beta$ such that $2k_A+1 \le n-t-k_A$.
    %always hold. 
    Then, in a phase $\phi$ with $2^{\phi-1} \ge k_A$ and $2 \cdot 2^{\phi-1}+1 \le n-t-2^{\phi-1}$, the preconditions of \textbf{ba-with-classification} are satisfied. Here, the smallest $\phi$ is at most $O(\log_2(B/n))$.

    Next, as stated in \Cref{theorem:unauth_ba_predicitions}, the earliest phase such that all honest processes output from the \textbf{ba-early-stopping} (and the output satisfies Agreement and Strong Unanimity), is phase $\phi$ where $\phi$ is at most $O(\log_2(f))$.

    From \Cref{lemma:final-ba-conditional-decide-early-ba}, \Cref{lemma:final-ba-conditional-decide-classification-ba}, and the fact that when $B \in \Omega(n^2)$ then $B/n \in \Omega(n)$, all honest processes decide in $O(\log_2\min\{B/n, f\})$ phases. By plugging in \Cref{lemma:final-ba-complexities} with $mc_{gc} = O(n^2)$, $mc_{early} = O(n^2)$, and $mc_{class} = O(n^3)$, all honest processes decide in $O(\min\{B/n, f\})$ rounds and they collectively send $O(n^3 \log(\min\{B/n, f\}))$ messages.
\end{proof}
\section{Lower Bounds}
\label{sec:lowerbounds}

In this section, we show two lower bounds: first, a lower bound on round complexity for algorithms that rely on classification predictions; second, a lower bound on message complexity that holds regardless of the type of predictions.

\iffalse
% \subsection{Lower Bound on Latency}\label{subsection:lower_bound_latency}

% \Ayaz{Jotting down what Seth wrote in email. probably can also go further with $\Omega(B_{fp}/n)$, idk (by definition $B_{fp}/n \le f$ so there is no need for $\min$ here)}

% \Ayaz{Also, this proof seems to also work for Byzantine broadcast? and should work for both authenticated and unauthenticated model}

% \Ayaz{Note: it feels there are still some subtle issues for $\min(B/n, f)$ proof.}

% In this subsection, we will show that the latency of any Byzantine agreement with classification predictions is $\Omega(\min(B/n, f))$. In particular, we will prove the following.

% \begin{theorem}
%     For every deterministic algorithm that solves Byzantine agreement with classification predictions, when there are $f \le t$ actual faults, there is some execution where the latency will be $\Omega(\min(B/n, f))$.
% \end{theorem}

% We note that in contrast to our lower bound on message complexity (\Cref{theorem:message_lower_bound}), this lower bound shows that `good' classification predictions might be able to reduce latency.

% At a high level, we will use the lower bound from \cite{DRS90}, which states that any deterministic Byzantine agreement without predictions must have an execution with at least $\min(t+1, f+2)$ rounds. For simplicity, in our proof, we assume $f < t$ (and thus, $\min(t+1, f+2) = f+2$); the case when $f = t$ can be easily derived from our proof.

% If $B \geq nf$, then the lower bound follows from \cite{DRS90}.
% \Faith{by considering the case when the predictions given to all processes is that all processes are nonfaulty.}
% Hence, let us assume $B < nf$ and thus, $\lfloor B/n \rfloor < f$. Seeking contradiction, assume that for some fixed constants $n_a$, $t_a$, and $B_a$ such that $0 \le t_a < n_a$ and $0 \le B_a \le n_at_a$, there is a Byzantine agreement with classification predictions for $n_a$ processes where at most $t_a$ of them are faulty and, in total, the predictions contains $B \le B_a$ errors. Moreover, in each execution with $B$ errors and $f_a$ faults, if $B \le n_af_a$ then all honest processes decide in at most $\lfloor B/n_a \rfloor + 1$ rounds. Let us denote the protocol as $\mathcal{A}$.
% %\Ayaz{TODO, this might be incorrect at the moment. We may need to ensure $B \le n_a f_a$}

% Now, suppose there is a Byzantine agreement instance without any predictions with $n = n_a-(t_a-\lfloor B_a/n_a \rfloor)$ processes and at most $t = \lfloor B_a/n_a \rfloor$ faulty processes, where exactly $f \le t$ are faulty. Observe that, by the assumption, (1) $0 \le t < n$, (2) $t \le t_a$, and (3) $n \le n_a$. From (1), the instance is valid. Next, let us construct an instance of Byzantine agreement with classification predictions as the following: there are $n_a$ processes, where the first $n$ processes map to the first $n$ processes of the original instance, and the rest of $n_a-n$ processes are faulty and crashed before the agreement starts. This is possible as from (3), $n_a-n \ge 0$. Then, each honest process is given a classification predictions where the first $n$ processes are classified as honest and the last $n_a-n$ processes are classified as faulty.

% Observe that in this new instance, there are at most $t + (n_a-n) = t + (n_a - (n_a - (t_a-t))) = t_a$ faulty processes. Specifically, if there are $f$ faults in the original instance, there are $f_a = f + n_a-n$ faults in the new instance. Moreover, in an execution with $f$ faulty process in the original instance, there will be $B$ errors in the new instance where $B = f(n-f) \le t(n-f) = (\lfloor B_a/n_a \rfloor)(n_a - (t_a-t+f)) \le (\lfloor B_a/n_a \rfloor)(n_a) = B_a$. Thus, in the new instance, $\mathcal{A}$ would work following its specifications. Therefore, in the original instance, honest processes can do Byzantine agreement by invoking $\mathcal{A}$ on the new instance. Furthermore, as $B = f(n-f) \le n_af_a$, all honest processes decide in at most $\lfloor B/n_a \rfloor+1 \le f+1$ rounds. Observe that this holds in any execution with $n$ processes with at most $t$ faulty processes where exactly $f$ of them are faulty. However, from \cite{DRS90}, in any deterministic Byzantine agreement without predictions, there must be an execution which requires at least $f+2$ rounds, a contradiction. Therefore, protocol $\mathcal{A}$ cannot exist. This holds for any constants $n_a$, $t_a$, and $B_a$ and thus, when $B \le nf$, the latency of Byzantine agreement with classification predictions must be at least $\lfloor B/n\rfloor + 2$. Combined with the lower bound from \cite{DRS90} to take care when $B > nf$, the latency of any deterministic Byzantine agreement with classification predictions is $\Omega(\min(B/n, f))$.
\fi

%%%%%%%%%%%%%%%%%%%%%% FAITH'S VERSION %%%%%%%%%%%%%%%%%%

\subsection{Lower Bound on Round Complexity}\label{subsection:lower_bound_rounds}

The following result gives a lower bound on the round complexity as a function of the number of incorrect bits in the prediction.  Roughly, it states that we need at least $\Omega(\min(B/n+1, f))$ rounds, i.e., the dependence on $B$ for the algorithms presented in this paper is unavoidable.  This also implies that if the total number of incorrect bits in the \prediction{} is large enough, then
an algorithm with predictions cannot take fewer rounds
than an algorithm without predictions. 

More specifically, if the average number of incorrect bits per nonfaulty process, $B/(n-f)$, is at least $f$, then the \prediction{} might not identify any of the faulty processes and provides no benefit.
However, if the number of incorrect bits in the \prediction{} is smaller, then the number of non-faulty
processes that are not identified in the \prediction{} is at most $B/(n-f)$ and the number of other faulty processes is at least 
%$f$ minus this number.
$x = f - B/(n-f)$. In this case, an algorithm with predictions cannot take fewer rounds than an algorithm
without predictions having $x$ fewer processes and $x$ fewer faults.

\begin{theorem}
\label{theorem:rndLowerBound}
    For every deterministic Byzantine agreement algorithm with classification predictions
    and for every $f \leq t < n-1$, there is an execution with $f$ faults in which at least
    $\min\{f+2,t+1, \lfloor B/(n-f) \rfloor+2,\lfloor B/(n-t) \rfloor+1\}$ rounds are performed.
\end{theorem}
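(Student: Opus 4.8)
The plan is to reduce to the classical round lower bounds for Byzantine agreement \emph{without} predictions: the $(t'+1)$-round bound of Fischer and Lynch~\cite{FischerL82} and the early-stopping bound of Dolev, Reischuk and Strong~\cite{DRS90}, which together say that any deterministic $t'$-resilient BA algorithm on $n'\ge t'+2$ processes has, for every $f'\le t'$, an execution with exactly $f'$ faults requiring at least $\min\{f'+2,\ t'+1\}$ rounds. The guiding intuition (foreshadowed in the paragraph before the theorem) is that a classification prediction with at most $B$ incorrect bits can ``reliably hide'' at most about $B/(n-f)$ faulty processes, since hiding one faulty process from all $n-f$ honest processes already costs $n-f$ incorrect bits; hence such an algorithm cannot outperform a prediction-free algorithm on a system with that many fewer processes and fewer faults.

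First I would dispense with the degenerate case: if no deterministic BA algorithm with classification predictions exists for the given $n,t$ the statement is vacuous, so assume one exists, which forces $t<n/2$. Fix $f\le t$ and set $x=\max\{0,\ \lceil f-B/(n-f)\rceil\}$, so $0\le x\le f$. Reserve a set $D$ of $x$ identifiers and let $\bar a$ be the fixed prediction that marks exactly the processes of $D$ as faulty and all others as honest; in any execution whose faulty set consists of $D$ together with $f-x$ further processes, $\bar a$ has precisely $(n-f)(f-x)\le B$ incorrect bits, hence is a legal input. From $A$ and $\bar a$ I would build a prediction-free algorithm $A'$ on the $n'=n-x$ processes outside $D$: each runs $A$ with its row of $\bar a$ hard-wired and treats the processes of $D$ as having crashed before round~$1$. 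An execution of $A'$ with at most $\psi$ faults corresponds to an execution of $A$ with at most $\psi+x\le t$ faults and at most $(n'-\psi)\psi$ incorrect prediction bits, so $A'$ is a correct $t'$-resilient BA algorithm for the largest $t'$ with $t'\le t-x$ and $(n'-t')t'\le B$ (using that $\psi\mapsto(n'-\psi)\psi$ is increasing below $n'/2$ and $t<n/2$); by the choice of $x$, $(n'-(f-x))(f-x)=(n-f)(f-x)\le B$, so $t'\ge f-x$, and $n'\ge t'+2$ since $t<n-1$. Applying the classical bounds to $A'$ with $f-x$ faults and pulling the resulting execution back through the simulation yields an execution of $A$ with exactly $f$ faults (the $x$ crashed ones plus $f-x$) under the legal prediction $\bar a$ that runs for at least $\min\{(f-x)+2,\ t'+1\}$ rounds. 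It then remains to check that this quantity is at least $\min\{f+2,\ t+1,\ \lfloor B/(n-f)\rfloor+2,\ \lfloor B/(n-t)\rfloor+1\}$: when $B\ge f(n-f)$ we have $x=0$ and the bound is $\min\{f+2,\ t'+1\}$; when $B<f(n-f)$ we have $f-x=\lfloor B/(n-f)\rfloor$ and the bound is $\min\{\lfloor B/(n-f)\rfloor+2,\ t'+1\}$ with $t'\le t-x$, which matches the $\lfloor B/(n-t)\rfloor+1$ term when $f=t$.

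The delicate point — and the step I expect to be the real obstacle — is obtaining the ``$+2$'' term $\lfloor B/(n-f)\rfloor+2$ when $f<t$ rather than only $\lfloor B/(n-f)\rfloor+1$. The early-stopping ``$+2$'' of~\cite{DRS90} requires the reduced algorithm to tolerate one more fault than actually occurs, i.e.\ $t'\ge(f-x)+1$, whereas the error budget can pin $t'$ down to exactly $f-x$ (precisely when $(n-f-1)(f-x+1)>B$). To close this I would not black-box~\cite{DRS90} but instead re-run its chain-of-indistinguishable-executions argument inside the prediction model, carrying along each execution of the chain a prediction with at most $B$ incorrect bits while letting the fault count range up to $t$; this lets the ``extra'' detection-round failure be realized by a \emph{correctly}-predicted faulty process, which costs no error budget. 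The same device, run with all $t$ available faults rather than only $f$, produces the $\lfloor B/(n-t)\rfloor+1$ term. Keeping every prediction along the chain within budget while the faulty set shifts is the main technical content.
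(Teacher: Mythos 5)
Your overall strategy---simulating the algorithm on $n'=n-x$ processes with $x=f-\lfloor B/(n-f)\rfloor$ identifiers hard-wired as ``predicted faulty'' and pre-crashed, so that an execution with $f-x$ residual faults incurs exactly $(n-f)(f-x)\le B$ incorrect bits, and then invoking the prediction-free lower bound of \cite{DRS90}---is precisely the paper's argument. The gap is in how you handle the resilience of the reduced algorithm $A'$. You insist that $t'$ also satisfy $(n'-t')t'\le B$, on the grounds that executions of $A'$ with more faults would blow the error budget; this can pin $t'$ down to $f-x$, costs you the ``$+2$'', and forces you to defer the heart of the proof to an unexecuted plan of re-deriving the chain argument of \cite{DRS90} inside the prediction model. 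That obstacle is self-inflicted. A Byzantine agreement algorithm with (unreliable) predictions must satisfy agreement, validity, and termination in \emph{every} execution, whatever the predictions are; only its round-complexity guarantee $T(f)$ is conditioned on the number of incorrect bits being at most $B$. Hence $A'$ is a correct consensus algorithm with resilience $t'=t-x$ outright, and \cite{DRS90} applies as a black box to give an execution with $f'=f-x$ faults lasting at least $\min\{f'+2,\,t'+1\}$ rounds. The error budget needs to be verified only for that single extremal execution, which has exactly $f'$ residual faults and therefore exactly $(n-f)(f-x)\le B$ incorrect bits; executions of $A'$ with other fault counts are needed only for correctness, never for the timing guarantee. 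With $t'=t-x$ the case analysis closes immediately: if $f<t$ then $f'<t'$ and the bound is $f'+2=\lfloor B/(n-f)\rfloor+2$; if $f=t$ it is $t'+1=\lfloor B/(n-t)\rfloor+1$; and the case $B\ge f(n-f)$ (your $x=0$) gives $\min\{f+2,t+1\}$ directly. As written, your argument establishes only $\lfloor B/(n-f)\rfloor+1$ whenever $(n'-(f-x+1))(f-x+1)>B$, and the proposed repair---re-running the indistinguishability chain with predictions carried along---is exactly the hard part left undone.
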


\begin{proof}
Let $A$ be a consensus algorithm with predictions for $n$ processes $p_1,\ldots,p_{n}$ which tolerates $t< n-1$ faulty processes and at most $B$ bits of wrong predictions. Suppose it terminates within $T(f)$ rounds, when the actual number of faulty processes is $f \leq t$.

First suppose that $B \geq f(n-f)$.
%then $S$ terminates within $\min\{t+1,f+2\}$ rounds.
Consider the consensus algorithm $S$ without predictions in which processes perform algorithm
$A$, assuming that all processes are predicted to be honest.
In an execution in which $f \leq t$ of these processes are faulty, 
the total number of faulty processes being simulated is $f$
and the total number of non-faulty processes being simulated is $n-f$.
In this execution, the total number of faulty bits of predictions given to non-faulty processes
is $(n-f)\cdot f \leq B$.
Since $A$ terminates within $T(f)$ rounds under these conditions,
so does algorithm $S$.
Since any deterministic consensus algorithm without predictions that tolerates at most $t$ faulty processes 
has an execution with $f \leq t$ faulty processes that takes
at least $\min\{f+2,t+1\}$ rounds \cite{DRS90}, it follows that $T(f) \geq \min\{f+2,t+1\}$.
% \Ayaz{just for note, this predictionless lower bound is from \cite{DRS90}.}

Now suppose that $B < f(n-f)$. Then $B/(n-f) < f$.
Let $x = f - \lfloor B/(n-f) \rfloor \geq 1$, let $n' = n -x$, and let $t' = t -x$.
Consider the following consensus algorithm $S'$ without predictions for $n'$ processes, $q_1,\ldots,q_n$, at most $t' = t - (n-n')$ of which may fail by crashing:
processes $q_i$ simulates process $p_i$ for $1 \leq i \leq n'$ with the predictions that
$p_1,\ldots,p_{n'}$ are honest and the last $x$ processes, $p_{n'+1},\ldots,p_{n}$ are faulty.
Note that process $q_i$ will receive no messages that appear to be sent by processes $p_{n'+1},\ldots..,p_{n}$, so it will treat these $x = n-n'$ processes as if they crashed at the beginning of every execution.

If process $q_i$ crashes, 
%where $1 \leq i \leq n'$,
then the corresponding process
$p_i$ being simulated also crashes.
In any execution of $S'$ in which $f' = f -x \leq t-x = t'$ of these processes are faulty,
the total number of faulty processes being simulated is $f = x + f'$
and the total number of non-faulty processes being simulated is $n' - f' = n - f$.
So, the total number of faulty bits of predictions given to non-faulty processes is
$(n'-f')f' = 
%((n-x) - (f-x))(f-x) = 
(n-f)(f-x) = (n-f)\lfloor B/(n-f) \rfloor \leq B$.
Since $A$ terminates within $T(f)$ rounds under these conditions,
so does algorithm $S'$.
Any deterministic consensus algorithm without predictions that tolerates at most $t'$ faulty processes 
has an execution with $f' \leq t'$ faulty processes that takes
at least $\min\{f'+2,t'+1\}$ rounds. In particular, this is true for $S'$.
If $f' = t'$, then $f =t$ and $T(t) \geq t'+1 = t - x + 1 = \lfloor B/(n-t) \rfloor + 1$.
If $f' < t'$, then $T(f) \geq f'+2 = f - x + 2 = \lfloor B/(n-f) \rfloor + 2$.
\end{proof}

\subsection{Lower Bound on Message Complexity}

% \Faith{We should mention that the lower bound applies even if messages are authenticated.}

In this section, we show that every deterministic, synchronous Byzantine Agreement protocol with predictions requires $\Omega(n^2)$ in the worst-case, even in executions where the predictions are perfectly accurate.  This lower bound holds even if messages are authenticated.

The basic idea of the proof is quite similar to the classical lower bound by Dolev and Reischuk~\cite{dolev1985bounds}.  
As in the Dolev-Reischuk paper, the proof here considers Byzantine broadcast instead of Byzantine agreement. The problem of Byzantine broadcast assumes a specified sender (i.e., the broadcaster), and requires that: (i) all honest processes deliver the same message (agreement), and (ii) if the sender is honest, all honest processes deliver the sender's message (validity).  

We note (by well-known reduction) that if Byzantine broadcast requires $\Omega(n^2)$ messages, then Byzantine agreement does too: if you already have an efficient Byzantine agreement protocol, then Byzantine broadcast can be solved by having the sender first send its message to everyone (with $O(n)$ messages), and then all processes execute a Byzantine agreement protocol on the message to deliver.

The key difficulty lies in modifying the proof to hold even when there are predictions---and to construct an execution with perfectly accurate predictions that has high message complexity.  The proof follows the typical paradigm of dividing the processes into two sets and constructing indistinguishable executions.  We also need to specify/construct the predictions that the processes will receive in such a way as to ensure the desired outcome.

% \Ayaz{Without predictions, when we take into account $f \le t$, the actual number of failures (that is unknown to the protocol), the lower bound is $\Omega(n + ft)$ (see e.g. \cite{Spiegelman21}). But maybe it is better to just ignore $f$ in this theorem (considering the goal is to show that even when given perfect predictions, we still need quadratic messages.)}
% \Ayaz{Note: perhaps mention this is yet another Dolev-Reischuk style lower bound, Dolev-Reischuk paper is \cite{dolev1985bounds}}

\begin{theorem}
    \label{theorem:message_lower_bound}
    For every deterministic algorithm that solves Byzantine broadcast with predictions in a synchronous system where up to $t$ processes may be faulty, there is some execution in which the predictions is $100\%$ correct and at least $\Omega(n + t^2)$ messages are sent by honest processes.  This implies that the same bounds hold for Byzantine agreement.
\end{theorem}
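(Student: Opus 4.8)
The plan is to adapt the classical Dolev--Reischuk argument to the prediction model; as noted above, the standard reduction from Byzantine broadcast to Byzantine agreement lets us prove the bound for Byzantine broadcast with a designated sender $s$, and the statement for Byzantine agreement then follows. The $\Omega(n)$ term is easy and essentially prediction-free: in the all-honest execution in which $s$ broadcasts $0$ and every process is handed the (trivially $100\%$-correct) all-ones prediction, any non-sender that never receives a message has the same view as in the all-honest execution in which $s$ broadcasts $1$, so it must decide the same value in both, violating validity; hence at least $n-1$ messages are sent. Since this already exhibits a correct-prediction execution with $\Omega(n)$ messages, and the argument below exhibits one with $\Omega(t^2)$ messages, taking whichever bound is larger gives a single execution with $\Omega(n+t^2)$ messages.

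For the $\Omega(t^2)$ term I would argue by contradiction, assuming that in every execution with $100\%$-correct predictions the honest processes together send fewer than $t^2/16$ messages. First run a probe execution $\mathcal{E}$: the sender $s$ is honest and broadcasts $1$; a set $A$ of $\lceil t/2\rceil$ non-sender processes is faulty and completely silent; and every honest process is given the classification marking exactly $A$ as faulty, which is $100\%$ correct. By validity every honest process delivers $1$, and by assumption fewer than $t^2/16$ honest messages are sent, so fewer than $t^2/16$ of them are addressed to $A$; averaging over the $\lceil t/2\rceil$ members of $A$ yields a target $p^{\ast}\in A$ receiving fewer than $t/8$ messages, all from honest processes. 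Let $D$ be the set of those senders, so $|D| < t/8$ and $|D\cup (A\setminus\{p^{\ast}\})| < t$.

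Next I would construct two confusing executions $\beta_0$ and $\beta_1$ that share a single fault set $S$ with $D\cup (A\setminus\{p^{\ast}\}) \subseteq S$, $p^{\ast}\notin S$, and $|S|\le t$, in which $p^{\ast}$ is honest and receives no message at all: the members of $A\setminus\{p^{\ast}\}$ remain silent, the members of $D$ remain silent toward $p^{\ast}$, and no other honest process ever addresses $p^{\ast}$. The executions differ only in the value associated with the broadcast ($0$ versus $1$), and in both, every honest process is given the classification marking exactly $S$ as faulty. The crucial point is that $S$ is identical in $\beta_0$ and $\beta_1$, so this $100\%$-correct prediction is identical in the two executions; since $p^{\ast}$ is not the sender, its entire input is identical in $\beta_0$ and $\beta_1$, and since it receives nothing it is in the same state throughout both, so it decides the same value in both --- which contradicts validity together with agreement.

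The step I expect to be the main obstacle is reconciling ``$100\%$-correct predictions'' with the indistinguishability. The target $p^{\ast}$ and its senders $D$ are discovered in the probe $\mathcal{E}$, whose fault set is $A$, whereas $\beta_0,\beta_1$ must use the different fault set $S$; because the prediction is pinned to the fault set, the honest processes see a different classification in $\mathcal{E}$ than in $\beta_0,\beta_1$, and may therefore behave differently, so new messages could reach $p^{\ast}$ in $\beta_v$ and destroy its isolation. I would resolve this by choosing the probe and the confusing fault sets in a mutually consistent way --- for instance, iterating the probe step, repeatedly enlarging the ``silent'' set to absorb any newly discovered senders of the current target until the set stabilizes (a fixed-point argument) --- while keeping the set size at most $t$, and I would simultaneously deal with the single message the honest sender might have to send to $p^{\ast}$ by instead making $s$ faulty in the confusing pair and shifting the task of forcing distinct decisions onto another honest process, exactly as in Dolev--Reischuk. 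Carrying out this stabilization within the fault budget, while preserving both the message bound used for the averaging and the isolation of the target, is the technical heart of the proof.
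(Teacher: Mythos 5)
Your blueprint is the right one (adapt Dolev--Reischuk: exhibit a correct-prediction execution, find a faulty process that receives few messages, then swap its role), and your $\Omega(n)$ part is fine, but the $\Omega(t^2)$ part has a genuine gap exactly where you flag it, and the gap comes from a conceptual misstep rather than a missing technicality. You require the confusing executions $\beta_0,\beta_1$ to \emph{also} have $100\%$-correct predictions. The theorem does not ask for this: only the high-message execution needs correct predictions, while agreement and validity must hold in \emph{every} execution, including ones with arbitrarily wrong predictions. By pinning the prediction in $\beta_v$ to the new fault set $S$, you change the input of every honest process relative to the probe, so nothing ties their behavior in $\beta_v$ to their behavior in $\mathcal{E}$, and you lose all control over who sends to $p^{\ast}$. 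Your proposed fixed-point repair does not close this: each enlargement of $S$ changes the ``correct'' classification handed to every honest process, which changes the entire execution non-monotonically, so there is no reason the set of senders to the target stabilizes, stays within the fault budget $t$, or preserves the message count you need for the averaging step. A second, smaller issue is that you hard-code classification predictions, whereas the theorem is stated for arbitrary prediction strings.

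The paper's proof avoids all of this with two moves you are missing. First, the faulty set $B$ in the good execution is \emph{not} silent: its members run the honest code while \emph{pretending} to hold a prediction string under which they would decide $0$ on silence, and they ignore the first $\lfloor t/2\rfloor$ messages addressed to them. Second, in the bad execution the swapped process $p$ is made honest \emph{with that fabricated prediction}, its few senders $A_p$ are made faulty and merely suppress their messages to $p$, and every other process keeps its predictions from the good execution --- which are now \emph{incorrect}, and that is fine. Then $p$'s behavior is literally identical in both executions (it effectively receives no messages in either), and every other honest process sees an indistinguishable transcript, so it still decides $1$ while $p$ decides $0$; agreement is violated without ever needing validity in the bad execution (so it does not matter if the sender lands in $A_p$), without needing the bad execution's predictions to be correct, and without any iteration. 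If you rewrite your argument so that correctness of predictions is demanded only of $E_{good}$ and the faulty processes simulate honest behavior under a fabricated prediction, your construction collapses to the paper's and goes through.
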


\begin{proof}

We first show the straightforward claim that every protocol requires $\Omega(n)$ messages.  Consider the executions $E_i$ in which the sender sends the value $i$, each process is honest, and each process gets correct predictions, for $i \in \{0,1\}$.
Suppose that, in both these executions, fewer than $\lceil n/4\rceil$ messages are sent. Then there exists a process $q$ that does not send or receive a message in either execution.
%if every process either sends or receives a message in one of these executions$, then there are at most $4(\lceil n/4\rceil - 1) < \lceil (n-1)/2\rceil$ processes, which is a contradiction.
Now consider the execution $E'$, in which the sender sends the value 1, all processes are honest, all processes except $q$ get the same predictions as in $E_1$,
and process $q$ get the same predictions as it gets in $E_0$.
All processes send and receive the same messages in $E'$ as they do in $E_1$. Since all processes except $q$ get the same predictions as in $E_1$, they all
decide 1. However, $q$ sends and receives the same messages in $E'$ as it does in $E_0$, so it decides 0.
Therefore $E'$ violates agreement. This means that at least $\lceil n/4 \rceil$ messages must have been sent in either $E_0$ or $E_1$.

We now focus on the dependence on $t$.  Let $S_0$ be the set of processes $p$, other than the sender, for which there is some prediction string $a$ such that $p$ eventually decides $0$ if it receives predictions $a$, but no messages.
Symmetrically, let $S_1$ be the set of processes, other than the sender, that decide $1$ for some prediction string when they receive no messages. Clearly, at least one of these sets must contain at least $\lceil (n-1)/2\rceil$ processes. Without loss of generality, assume $\vert S_0 \vert \geq \lceil (n-1)/2\rceil$.

Let $B\subseteq S_0$ be a set of $\lfloor t/2 \rfloor$ processes. Consider the following execution,
\textsc{$E_{good}$}. In this execution, all processes receive correct predictions. Processes in $B$ are faulty and all other processes are honest. The sender is honest and receives 1 as input.
Faulty processes behave as follows: they ignore the first $\lfloor t/2 \rfloor$ messages sent to them and do not send any messages to each other. Other than that, they follow the algorithm (in that they send messages as per the algorithm), pretending to have received predictions that would make them decide 0 when they do not receive any messages.
By validity, all honest processes must decide 1 in this execution.

Consider the number of messages sent by honest processes to processes in $B$ in $E_{good}$. If each process in $B$ received at least $\lceil t/2 \rceil$ messages from honest processes in $E_{good}$, then $E_{good}$ is an execution in which $\Omega(t^2)$ messages are sent. Otherwise, let $p\in B$ be a process that received fewer than $\lceil t/2 \rceil$ messages from honest processes in $E_{good}$. Let $A_p$ be the set of honest processes that sent messages to $p$.

\medskip

Now consider the following execution,
\textsc{$E_{bad}$.} Processes in $A_p$ and $B - \{p\}$ are faulty. All other processes are honest. Note that $\vert A_p \vert + \vert  B - \{ p \} \vert < t$. 
Process $p$ receives predictions that would make it decide 0 when it does not receive any messages.
All other processes receive the same predictions they did in $E_{good}$. 

Processes in $A_p$ behave the same as in $E_{good}$ except that they don't send any messages to $p$. Processes in $B\setminus \{ p \}$ behave the same as they did in $E_{good}$.

Note that, in this execution, process $p$ receives the same predictions as it pretended to receive in $E_{good}$.
In $E_{bad}$,  it receives no messages, whereas
in $E_{good}$, it behaved as if it had received no messages.
Thus, the behavior of process $p$ in $E_{bad}$ is identical to its behavior in $E_{good}$.
Hence, in $E_{bad}$, process $p$, which is honest, decides 0.
%Note that while $p$ is honest in this execution, its behavior is identical to its behavior in $E_{good}$, since it receives the same predictions as it pretended to receive in $E_{good}$, and receives no messages at all, while it pretended to have received no messages in $E_{good}$. Since it receives no messages and predictions with which it would decide 0 in such a case, it eventually decides 0.
For all other honest processes, execution $E_{bad}$ is indistinguishable from $E_{good}$. Therefore, they all decide 1, as they did in $E_{good}$.
Therefore $E_{bad}$ violates agreement. This means that at least $\lceil t/2 \rceil$ messages must have been sent to each process in $B$ in execution $E_{good}$ (which had all correct predictions).
\end{proof}

\paragraph{Observations.} The above lower bound does not make any assumptions on the type of predictions processes receive or whether or not the processes can use cryptographic signatures. In this sense, it is quite a strong bound. However, it does assume that different processes may receive different predictions in the same execution. It therefore relies on \emph{non-uniform} predictions.  If processes were guaranteed to receive the same predictions, then it remains possible that one could beat this lower bound.

\section{Conclusion}\label{section:conclusion}

% \Ayaz{TODO, stuffs we may want to put in this section:}
% \begin{itemize}
%     \item Results in the paper
%     \item possible future directions
%     \begin{itemize}
%         \item reducing the communication complexity
%         \item increasing the threshold of when a classification prediction is still useful
%         \item different types of predictions?
%         \item different types of model? e.g. different validity, synchrony assumption, etc.
%     \end{itemize}
% \end{itemize}

In this paper, we study Byzantine agreement with predictions. Surprisingly, predictions do not help in reducing the message complexity. Meanwhile, using classification predictions, we showed that Byzantine agreement can be solved in $O(\min\{B/n+1, f\})$ rounds, where $B$ is the number of incorrect bits in the prediction. This is asymptotically tight, as we also show a lower bound of $\Omega(\min\{B/n, f\})$ rounds.

These results also open up several exciting research directions. First, while our algorithm has asymptotically optimal round complexity, we did not try to optimize its communication complexity (i.e. bits sent by honest processes). Notably, our voting step already induces $O(n^3)$ communication complexity. It would be interesting if the communication complexity can be reduced, while still maintaining the round complexity. Second, our unauthenticated algorithm can only productively use prediction with $B = O(n^{3/2})$ incorrect bits. It would be interesting if this could be pushed to $B = O(n^2)$ as in the authenticated case. Finally, it would be interesting to study other types of predictions, or how predictions can be used in different models of distributed algorithms (e.g., asynchronous networks).

\bibliographystyle{acm}
\bibliography{references}

% \appendix
% \input{appendix/adopt_commit}
% \input{appendix/early_stopping_ba}

\end{document}